\author{Dennis Rohde}
\title{Coresets for $(k,l)$-Clustering under the Fréchet Distance}
\DeclareMathOperator{\eqdef}{\stackrel{\text{def}}{=}}
\DeclareMathOperator{\aqdef}{\stackrel{\text{def}}{\Leftrightarrow}}
\DeclareMathOperator{\R}{\mathbb{R}}
\DeclareMathOperator{\N}{\mathbb{N}}
\DeclareMathOperator{\pcost}{cost}
\DeclareMathOperator{\pfrechet}{d_F}
\DeclareMathOperator{\peucl}{d_E}
\DeclareMathOperator{\pdfrechet}{d_{dF}}
\DeclareMathOperator{\FF}{\mathcal{F}}
\DeclareMathOperator*{\argmax}{arg\,max}
\DeclareMathOperator*{\argmin}{arg\,min}
\DeclareMathOperator{\affinespace}{\mathbb{A}}
\DeclareMathOperator{\euclideanspace}{\mathbb{E}}
\DeclareMathOperator{\dist}{dist}
\DeclareMathOperator{\E}{Ex}
\DeclareMathOperator{\Var}{Var}
\DeclareMathOperator{\Reach}{Reach}
\DeclareMathOperator{\pball}{B}
\DeclareMathOperator{\pcluster}{\mathcal{C}}
\DeclareMathOperator{\pdeg}{deg}
\DeclareMathOperator{\pindeg}{in-deg}
\DeclareMathOperator{\poutdeg}{out-deg}
\DeclareMathOperator{\pswap}{sw}
\DeclareMathOperator{\pgrid}{gr}
\DeclareMathOperator{\pcube}{cu}
\DeclareMathOperator{\pcell}{cl}
\DeclareMathOperator{\prange}{ra}
\DeclareMathOperator{\pangle}{ang}
\DeclareMathOperator{\protate}{Ro}
\DeclareMathOperator{\ptrans}{Tr}
\DeclareMathOperator{\paxisangle}{ax-ang}
\DeclareMathOperator{\ppoly}{poly}
\renewcommand{\P}{\ensuremath{P}}
\newcommand{\range}[2]{\ensuremath{\prange\left(#1,#2\right)}}
\newcommand{\cell}[2]{\ensuremath{\pcell\left(#1,#2\right)}}
\newcommand{\cube}[1]{\ensuremath{\pcube\left(#1\right)}}
\newcommand{\grid}[1]{\ensuremath{\pgrid\left(#1\right)}}
\newcommand{\eucl}[2]{\ensuremath{\peucl\left(#1,#2\right)}}
\newcommand{\norm}[1]{\ensuremath{\lVert#1\rVert}}
\newcommand{\normlz}[1]{\ensuremath{\norm{#1}_{2}}}
\newcommand{\normlzs}[1]{\ensuremath{\norm{#1}_{2}^2}}
\newcommand{\centroid}[1]{\ensuremath{\mu\left(#1\right)}}
\newcommand{\Tau}{\ensuremath{\mathrm{T}}}
\newcommand{\eqcfre}[1]{\ensuremath{\Delta_{#1}}}
\newcommand{\frechet}[2]{\ensuremath{\pfrechet\left(#1,#2\right)}}
\newcommand{\dfrechet}[2]{\ensuremath{\pdfrechet(#1,#2)}}
\newcommand{\cost}[2]{\ensuremath{\pcost(#1,#2)}}
\newcommand{\optcost}[1]{\ensuremath{\pcost(#1)}}
\newcommand{\On}[1]{\ensuremath{\mathcal{O}\left(#1\right)}}
\newcommand{\Om}[1]{\ensuremath{\Omega\left(#1\right)}}
\newcommand{\kk}{\ensuremath{k}}
\newcommand{\el}{\ensuremath{l}}
\newcommand{\klcenter}[1][ ]{\ensuremath{(k,l)}-\textsc{Center}#1}
\newcommand{\klmedian}[1][ ]{\ensuremath{k}-\textsc{Median}#1}
\newcommand{\klmeans}[1][ ]{\ensuremath{(k,l)}-\textsc{Means}#1}
\newcommand{\approxcost}[1]{\ensuremath{\widehat{\pcost}(#1)}}
\newcommand{\scalarm}[2]{\ensuremath{\left\langle #1, #2 \right\rangle}}
\newcommand{\Rd}{\ensuremath{\R^d}}
\newcommand{\powerset}[1]{\ensuremath{2^{\left(#1\right)}}}
\newcommand{\ball}[2]{\ensuremath{\pball\left(#1,#2\right)}}
\newcommand{\cluster}[3]{\ensuremath{\pcluster(#1,#2,#3)}}
\newcommand{\degree}[1]{\ensuremath{\pdeg(#1)}}
\newcommand{\indegree}[1]{\ensuremath{\pindeg(#1)}}
\newcommand{\outdegree}[1]{\ensuremath{\poutdeg(#1)}}
\newcommand{\nearest}[2]{\ensuremath{\eta\left(#1,#2\right)}}
\newcommand{\swap}[3]{\ensuremath{\pswap\left(#1,#2,#3\right)}}
\newcommand{\angl}[2]{\ensuremath{\pangle\left(#1,#2\right)}}
\newcommand{\rotate}[3]{\ensuremath{\protate\left(#1,#2,#3\right)}}
\newcommand{\trans}[2]{\ensuremath{\ptrans\left(#1,#2\right)}}
\newcommand{\axisangle}[2]{\ensuremath{\paxisangle\left(#1,#2\right)}}
\newcommand{\conferre}{cf.~}
\newcommand{\ie}{i.e.,~}
\newcommand{\coreset}[1][ ]{\ensuremath{\epsilon}-coreset#1}
\newcommand{\poly}[1]{\ensuremath{\ppoly\left(#1\right)}}
\def\multiset#1#2{\ensuremath{\left(\kern-.3em\left(\genfrac{}{}{0pt}{}{#1}{#2}\right)\kern-.3em\right)}}
\let\epsilon\relax
\newcommand{\epsilon}{\varepsilon}
\newcounter{theo}
\theoremstyle{plain}
\newtheorem{theorem}[theo]{Theorem}
\newtheorem{lemma}[theo]{Lemma}
\newtheorem{proposition}[theo]{Proposition}
\newtheorem{corollary}[theo]{Corollary}
\newtheorem{observation}[theo]{Observation}
\theoremstyle{definition}
\newtheorem{definition}[theo]{Definition}
\algrenewcommand{\algorithmiccomment}[1]{// #1}
\begin{document}
	\setlength{\parindent}{0pt}
	\pagenumbering{roman}
	\begin{titlepage}
\definecolor{TUGreen}{rgb}{0.517,0.721,0.094}
\vspace*{-2cm}
\newlength{\links}
\setlength{\links}{-1.5cm}
\sffamily
\hspace*{\links}
\begin{minipage}{12.5cm}
\includegraphics[width=8cm]{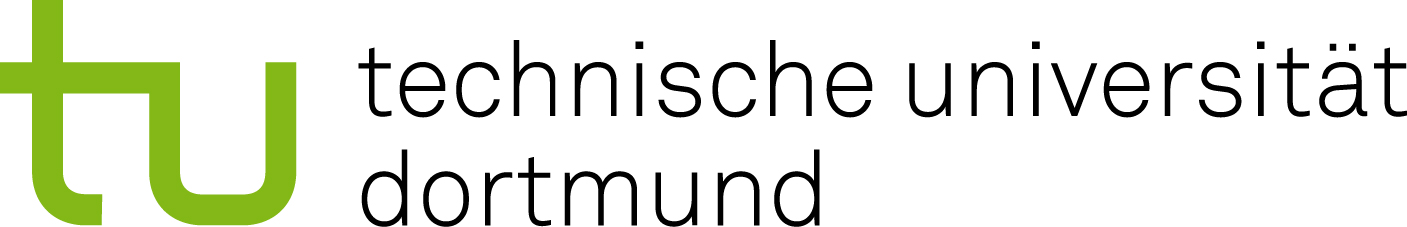}
\end{minipage}

\vspace*{4cm}

\hspace*{\links}
\hspace*{-0.2cm}
\begin{minipage}{9cm}
\large
\begin{center}
{\Large Masterthesis} \\
\vspace*{1cm}
\textbf{Coresets for $(k,l)$-Clustering under the Fréchet Distance} \\
\vspace*{1cm}
Dennis Rohde\\
December 2018
\end{center}
\end{minipage}
\normalsize
\vspace*{5.5cm}

\vspace*{2.1cm}

\hspace*{\links}
\begin{minipage}[b]{5cm}
\raggedright
Supervisors: \\
Jun.-Prof. Dr. Maike Buchin \\
M.Sc. Hendrik Fichtenberger \\
\end{minipage}

\vspace*{2.5cm}
\hspace*{\links}
\begin{minipage}[b]{10cm}
\raggedright
Technical University Dortmund \\
Department of Computer Science \\
Chair of Efficient Algorithms and Complexity Theory \\
\url{http://ls2-www.cs.tu-dortmund.de}
\end{minipage}

\end{titlepage}

	\begin{abstract}
		Clustering is the task of partitioning a given set of geometric objects. This is thoroughly studied when the objects are points in the euclidean space. There are also several approaches for points in general metric spaces. In this thesis we consider clustering polygonal curves, \ie curves composed of line segments, under the Fréchet distance. We obtain clusterings by minimizing an objective function, which yields a set of centers that induces a partition of the input.
		
		The objective functions we consider is the so called \klcenter[], where we are to find the $k$ center-curves that minimize the maximum distance between any input-curve and a nearest center-curve and the so called \klmedian[], where we are to find the $k$ center-curves that minimize the sum of the distances between the input-curves and a nearest center-curve.
		
		Given a set of $n$ polygonal curves, we are interested in reducing this set to an \coreset[], \ie a notably smaller set of curves that has a very similar clustering-behavior. We develop a construction method for such \coreset[s] for the \ensuremath{(k,2)}-\textsc{Center}, that yields \coreset[s] of size of a polynomial of $\nicefrac{1}{\epsilon}$, in time linear in $n$ and a polynomial of $\nicefrac{1}{\epsilon}$, for line segments. Also, we develop a construction technique for the \klcenter that yields \coreset[s] of size exponential in $m$ with basis $\nicefrac{1}{\epsilon}$, in time sub-quadratic in $n$ and exponential in $m$ with basis $\nicefrac{1}{\epsilon}$, for general polygonal curves, if the given curves are of ``good'' structure. Finally, we develop a construction method for the \klmedian[], that yields \coreset[s] of size polylogarithmic in $n$ and a polynomial of $\nicefrac{1}{\epsilon}$, in time linear in $n$ and a polynomial of $\nicefrac{1}{\epsilon}$.
	\end{abstract}
	This thesis is dedicated to my mother Antonia and my wife Lia who encouraged (and sometimes pushed) me to live out my curiosity in the scientific disciplines, which lead to this thesis and thus to the completion of my studies, eventually.
	\vspace{1em}
	
	Also, I want to thank Maike and Hendrik, not only for their good supervision, but also for their general support. Furthermore, I owe them the opportunity to begin my doctoral studies, for which I am very grateful.
	\tableofcontents
	\chapter{Introduction}
\pagenumbering{arabic}
Clustering is an old topic that appears in many shapes and variations. For example, classification may be one of the oldest tasks that relates to clustering. One of the first scientific classification tasks is due to Aristotle\footnote{This claim originates from \citet{hansen_aristotle}.}, \conferre \cite{aristotle_biology}, who classified the ``living things'' into animals with and without blood, according to the number of their legs and other observables. According to \citet{jain_50_years}, clustering is the task of discovering the ``natural grouping(s) of a set of patterns, points, or objects''. These groupings are the \emph{clusters} we are looking for. 

Clearly, the family of clustering problems stems from human intuition. For most of the real-world problems the human approach is non-uniform, non-general and very adaptive, \ie it almost completely depends on the given setting, the experience with such problems and the goal that is to be achieved. This introduces a fuzziness, \ie there are no exact instructions on how to obtain a clustering, which makes it hard to define a general approach that can be used by computers. Therefore, it is not very surprising that an exact definition of what clustering generally is, is nowhere to be found. 

Although clustering has a rich history to be told, we focus on mathematical formulations for clustering-related problems, which we call \emph{objective functions} or short \emph{objectives}. Our goal is to minimize the objective function at hand, \ie we are treating clustering problems as optimization problems. Minimizing one of the objective function in the focus of this work yields a set of \emph{centers}, which are the representatives of the clusters. One of the first and surely the most popular objective function is the \emph{$k$-means} objective, \conferre \cite{k_means_origin}. The aim of the $k$-means problem is to find a set of centers, such that the squared distances of the members of the clusters to their respective centers are minimal. Other formulations are a few decades young, \conferre \cite{rao_cluster, hansen_aristotle}. Since the introduction of $k$-means, \emph{computational} clustering had its triumphant march to one of the most fundamental topics of data analysis, \conferre \cite{jain_50_years}. It is now thoroughly studied and the approaches are widely employed. 

There are numerous applications for clustering in a broad range of fields, such as image segmentation, \conferre \cite{image_seg_chuang, image_seg_wu, images_seg_pappas, image_seg_coleman}, document hierarchization/information retrieval, \conferre \cite{docment_steinbach, document_fung, document_larsen, document_kummamuru, document_leuski, document_jardine}, and genome informatics, \conferre \cite{genome_kirzhner, genome_hatfull, genome_arima, genome_michaels}. Moreover, particle physicists are currently utilizing clustering techniques when the ouput-data of large particle detectors, such as ATLAS, CMS\footnote{Most commonly known for the fact that in 2012 scientists at ATLAS and CMS had a break-through that spread beyond the scientific community when they discovered the Higgs boson.} or ALICE, \conferre \cite{cluster_chekanov, cluster_higgs, cluster_jet, cluster_jet_origin}, shall be interpreted. Here interpreting means distinguishing signals from events to be observed from background noise. 

\begin{figure}
	\centering
	\includegraphics[width=\textwidth]{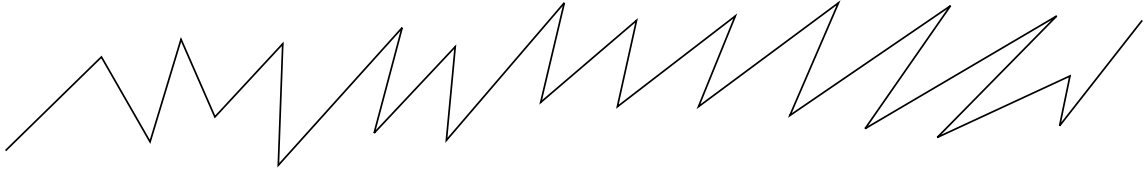}
	\caption{Simple polygonal curve in two-dimensional euclidean space that may arise from various applications. E.g., this could be an interpolation of a time-series of a physical quantity that was being measured, a simplified GPS-trajectory or a simplified backbone of a protein.}
	\label{fig:simple_poly_curve}
\end{figure}

\paragraph{Motivation for studying \coreset[s]} At this point the main topic of this work comes into play, which are \emph{\coreset[s]}. \coreset[s] -- as the name suggests -- are, generally speaking, sets of objects that are \emph{notably} smaller than a given set of objects, but reflect the core of the given set. To be precise, we allow an error of at most an $\epsilon$-fraction of the value of the objective function at hand, \ie we are dealing with \coreset[s] that have a similar clustering behavior. Our aim will be to obtain \coreset[s] of small cardinality, preferably independent of the input-set.

In the current time, with machines as the ATLAS detector producing data at about 3.2 Petabytes per year, \conferre \cite{cluster_atlas_data}, we reached the point where we may be able\footnote{Sometimes we are not!} to keep the whole data, but are not able to analyze it efficiently. These enormous amounts of data do not only arise in a few scientific fields like in the previous example, moreover it has become common to collect these amounts of data. The term for this phenomenon has established well: Big Data. Here we will face clustering problems under the aspect that the input is big data.

The reader may wonder why there seems to be no interest yet in polynomial-time approximation schemes for the clustering problems to be considered, or even in linear-time approximation schemes, for after all these problems are NP-hard, \conferre \cite{approx_k_l_center, clustering_time_series}. Also, one has to look at every input instance either way. 

In fact for geometric problems (quasi-)polynomial-time approximation schemes, or even (quasi-)linear-time approximation schemes, are often realized by constructing an \coreset for the problem at hand and analyzing it with an exact algorithm or by brute force\footnote{Normally this would be too expensive, but if the cardinality of the \coreset is independent of the input(-set) it can become cheap enough, though.}, \conferre \cite{kmeans_ptas, kmeans_ptas2}. There are other approaches, \conferre \cite{clustering_time_series, ptas_local, ptas_lloyd}. Nevertheless, these approaches are unique and have little in common while \coreset[s] yield a general approach for obtaining approximation schemes. Also, such ''direct`` approaches are mostly non-intuitive and non-trivial, while the techniques for constructing \coreset[s] can usually be derived from the problem at hand, thus are easier and therefore more beneficial to study.

\begin{figure}
	\centering
	\includegraphics[width=0.85\textwidth]{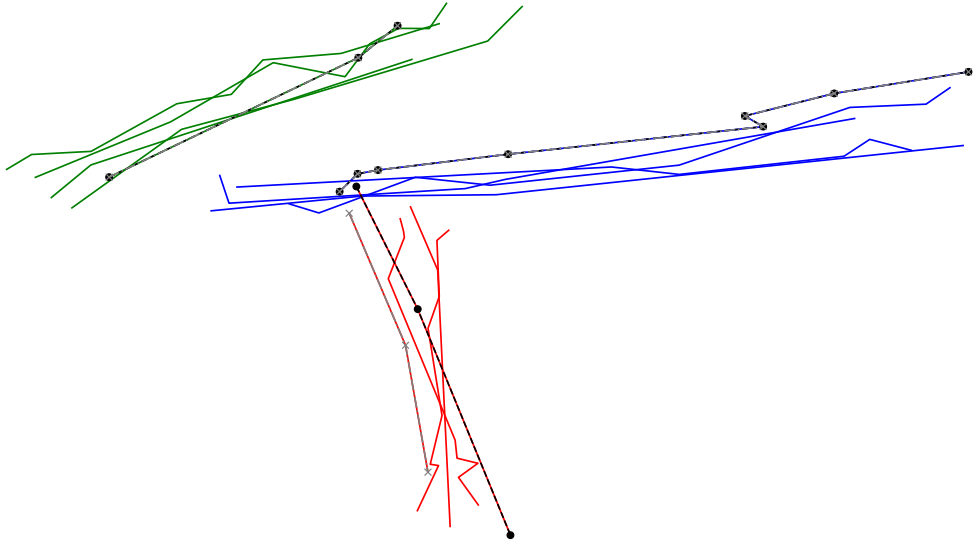}
	\caption{Randomly generated set of curves with three clusters in green, blue and red. Every cluster consists of five curves out of which the black one got selected by an implementation of \cref{algo:klcenter_approx_mod} and the gray one got selected by an implementation of \cref{algo:klmedian_approx} as center. Both implementations utilize the discrete Fréchet distance. Here both algorithms chose nearly the same centers, only the red cluster has different centers under the different objectives.}
	\label{fig:example1}
\end{figure}

\paragraph{The objects under investigation} As we already mentioned, clustering techniques are thoroughly studied -- in terms of points in the euclidean space\footnote{There are also approaches for general metric spaces, but the number of these is outweighed by the number of approaches for points in the euclidean space. We will use some of the approaches for metric spaces in this work, though.}. Here we are dealing with \emph{polygonal curves}, \conferre \cref{fig:simple_poly_curve}, \ie one dimensional objects in the $d$-dimensional euclidean space, with the restriction that these curves are composed of line segments. In every problem we study, we are given a set of $n$ polygonal curves of \emph{complexity} at most $m$ each, \ie every curve is composed of at most $m-1$ line segments. 

There are several similarity measures for curves, such as the Hausdorff distance or the Fréchet distance. We study the latter, because additional to the shape of the curves, this measure takes the order of the points on the curves into account -- contrary to the former, \conferre \cite{alt_godau_frechet}. The reason why we restrict our studies to polygonal curves is because mostly the data is given in such a representation, additionally the Fréchet distance between two polygonal curves can be computed efficiently, \conferre \cite{alt_godau_frechet}.

\begin{figure}
	\centering
	\includegraphics[width=0.85\textwidth]{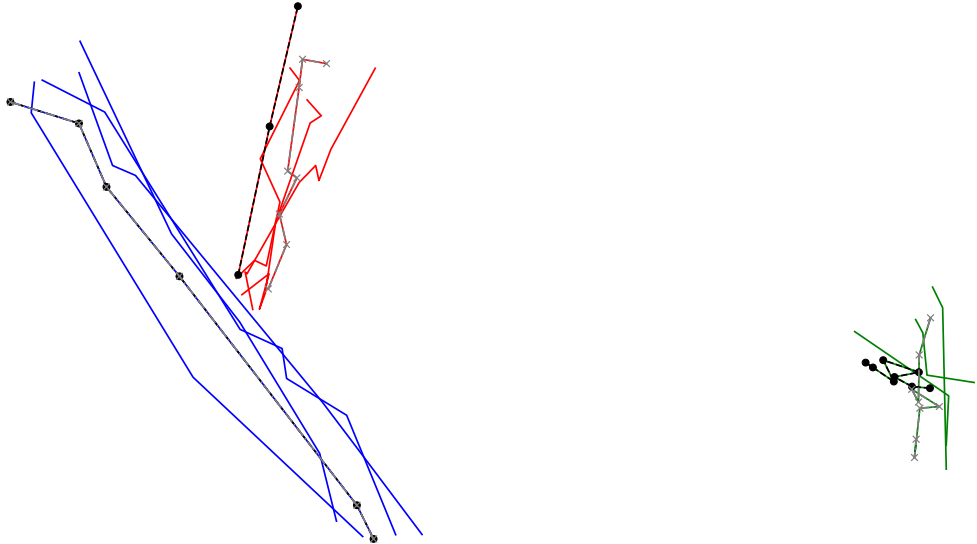}
	\caption{Randomly generated set of curves with three clusters in green, blue and red. Every cluster consists of five curves out of which the black one got selected by an implementation of \cref{algo:klcenter_approx_mod} and the gray one got selected by an implementation of \cref{algo:klmedian_approx} as center. Both implementations utilize the discrete Fréchet distance. Here we have two clusters, red and green, that have different centers under the different objectives.}
	\label{fig:example2}
\end{figure}

\paragraph{The objective functions of interest} We are mainly interested in two objective functions. The first objective, which we call \klcenter[], is to find a set of $k$ center-curves of complexity at most $l$ each, such that the maximum distance from any curve in the input-set to a nearest curve in the center-set is minimal. Usually we will refer to the center-set as the \emph{clustering}. In \cref{fig:example1}, \cref{fig:example2} and \cref{fig:example3} the black center-curves were computed by \cref{algo:klcenter_approx_mod} which yields an approximate solution to the center objective. The second objective, which we call \klmedian[], is to find a set of $k$ center-curves, such that the sum of the distances of the curves in the input-set to a nearest center in the center-set is minimal. Here we work with the restriction that the center-set is a subset of the input-set to make the problem less hard to compute. Thus, the center-curves have complexity at most $m$\footnote{This is the maximum complexity of the input-curves.} each. We will see that this restriction allows the application of sophisticated sampling techniques. In \cref{fig:example1}, \cref{fig:example2} and \cref{fig:example3} the gray center-curves were computed by \cref{algo:klmedian_approx} which yields an approximate solution to the median objective. Lastly we take a look at an objective, which we call \klmeans[], that is to find a set of $k$ centers of complexity at most $l$ each, such that the sum of the squared distances of the curves in the input-set to a nearest center in the center-set is minimal. This objective is derived from the already mentioned $k$-means, which was defined with respect to point-sets. With respect to point-sets this objective is particularly efficient to compute, but we will show that this may not be the case for polygonal curves.

\begin{figure}
	\centering
	\includegraphics[width=0.85\textwidth]{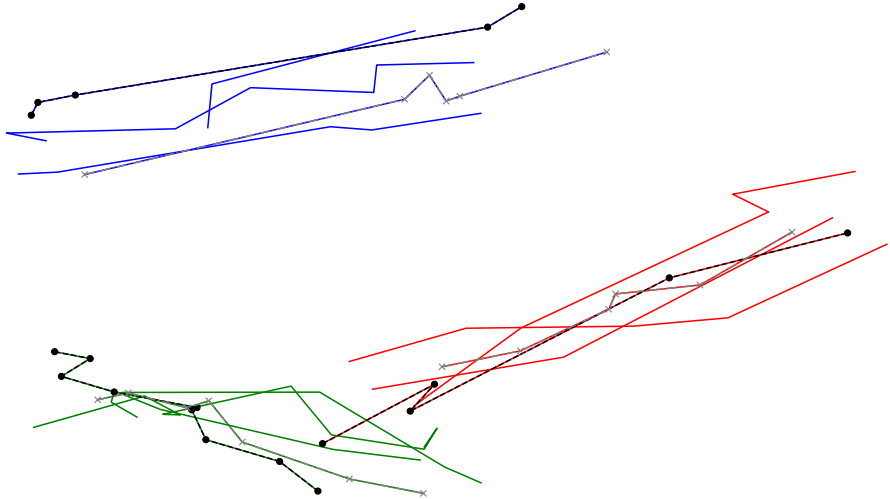}
	\caption{Randomly generated set of curves with three clusters in green, blue and red. Every cluster consists of five curves out of which the black one got selected by an implementation of \cref{algo:klcenter_approx_mod} and the gray one got selected by an implementation of \cref{algo:klmedian_approx} as center. Both implementations utilize the discrete Fréchet distance. Here every cluster has different centers under the different objectives. Comparing to the previous figures it can be observed that the less similar the curves within a cluster look, the more the choices of centers for the different objectives differ.}
	\label{fig:example3}
\end{figure}

\paragraph{On the running-times of the following algorithms} In the following we are studying clustering problems where we are given a set of polygonal curves, in $d$-dimensional euclidean space, of cardinality $n$ and two integers $k, l \in \mathbb{N}_{> 0}$ (or one integer $k \in \mathbb{N}_{>0}$). We assume $n$ and $m$ to be part of the input of the algorithm, while (as the nomenclature already suggests, e.g., \klcenter[]) $k$ and $l$ are part of the problem and are therefore fixed. Also, we assume that $d$ is fixed, \ie we are looking at a euclidean space of specific number of dimensions. Finally, for the \coreset[s] to be constructed, we assume the parameter $\epsilon$ to be a part of the input. 

In this work we assume the real RAM model, \conferre \cite{shamos}. Here simple geometric operations like translations and rotations can be done in constant time. Because we are mainly interested in the relation between the running-times of the following algorithms and their input, the running-times will be given in $\mathcal{O}$-notation.

\section{Main Results}
Our first result is an algorithm that constructs \coreset[s] for the center objective, for a set of line segments, in time linear in the number of given curves and polynomial in the reciprocal of the accepted error. The cardinality of a resulting \coreset is polynomial in the reciprocal of the accepted error.
\begin{theorem}
	\label{theo:algo_coreset_center_ls}
	There exists an algorithm that, given a set of $n$ line segments in $d$-dimensional euclidean space and a parameter $\epsilon \in (0,1)$, computes an \coreset for the \ensuremath{(k,2)}-\textsc{Center} objective of cardinality $\On{\frac{1}{\epsilon^{2d}}}$, in time $\On{\frac{n}{\epsilon^{2d}}}$.
\end{theorem}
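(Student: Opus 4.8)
The plan is to reduce $(k,2)$-\textsc{Center} for line segments to ordinary $k$-center clustering of points in $\R^{2d}$ and then to apply a standard grid-snapping construction. The first step is to identify a line segment $\overline{pq}$ with the ordered pair $(p,q)\in\R^{2d}$ of its endpoints and to record that $\frechet{\overline{pq}}{\overline{p'q'}}=\max\{\normlz{p-p'},\normlz{q-q'}\}$: the identity reparametrization witnesses ``$\le$'' because $t\mapsto\normlz{(1-t)(p-p')+t(q-q')}$ is convex on $[0,1]$ and hence attains its maximum at an endpoint, and ``$\ge$'' holds because every (orientation-preserving) reparametrization couples the first endpoint of $\overline{pq}$ with the first endpoint of $\overline{p'q'}$ and the last with the last. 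The input therefore becomes a set of $n$ points of the metric space $(\R^{2d},\rho)$ with $\rho\big((u_1,u_2),(v_1,v_2)\big)=\max\{\normlz{u_1-v_1},\normlz{u_2-v_2}\}$, and an admissible center curve of complexity at most $2$ is just a point of $\R^{2d}$ (one of complexity $1$ lying on the diagonal).

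Second, I would obtain a constant-factor handle on the optimal radius. Running the greedy furthest-point heuristic for $k$ rounds on the $n$ input segments produces, in $\On{nk}$ time, segments $c_1,\dots,c_k$ and the quantity $r=\max_s\min_i\frechet{s}{c_i}$; since $\{c_1,\dots,c_k\}$ is a feasible center set of value $r$, and since these $k$ segments together with the next greedy pick form $k+1$ segments that are pairwise at distance at least $r$ (so any $k$ centers must serve two of them by a single center), one gets $\opt\le r\le 2\,\opt$. Assigning each input segment $s$ to an index $i(s)$ with $\frechet{s}{c_{i(s)}}\le r$ yields groups $P_1,\dots,P_k$; writing $c_i=\overline{a_ib_i}$, every $\overline{pq}\in P_i$ satisfies $\normlz{p-a_i}\le r$ and $\normlz{q-b_i}\le r$, so $P_i\subseteq\ball{a_i}{r}\times\ball{b_i}{r}$. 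Now impose on $\R^{2d}$ the axis-parallel grid of side length $\ell\eqdef\epsilon r/(2\sqrt d)$; the cells meeting $\ball{a_i}{r}\times\ball{b_i}{r}$ form a block of $\On{(r/\ell)^d}\cdot\On{(r/\ell)^d}=\On{1/\epsilon^{2d}}$ cells (with $d$ fixed), and picking one input segment from each nonempty such cell, over all $i$, yields a set $S$ with $|S|=\On{k/\epsilon^{2d}}=\On{1/\epsilon^{2d}}$ because $k$ is a constant. Locating the representatives takes one input scan per relevant cell, hence $\On{n/\epsilon^{2d}}$ time, which dominates.

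For correctness: since $S$ consists of input segments, $\cost{S}{C}\le\cost{P}{C}$ for every set $C$ of $k$ center curves of complexity at most $2$. Conversely, each input segment $s$ shares its grid cell with some $s'\in S$, and any two segments lying in one cell are at Fréchet distance at most $\ell\sqrt d=\epsilon r/2\le\epsilon\,\opt$; the triangle inequality gives $\frechet{s}{C}\le\frechet{s'}{C}+\epsilon\,\opt\le\cost{S}{C}+\epsilon\,\cost{P}{C}$, where $\opt\le\cost{P}{C}$ was used. Maximizing over the input segments yields $(1-\epsilon)\,\cost{P}{C}\le\cost{S}{C}\le\cost{P}{C}$, which is the $\epsilon$-coreset property (rescaling $\epsilon$ by a constant if a symmetric multiplicative bound is wanted).

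I expect the only genuinely delicate points to be the metric identification of the first step and the observation that, although the grids are anchored at the \emph{approximate} centers $c_1,\dots,c_k$ rather than at an optimal clustering, the per-segment snapping error remains bounded by $\epsilon\,\opt$ and therefore, \emph{uniformly over all candidate center sets $C$}, by $\epsilon\,\cost{P}{C}$ — which is exactly what the additive-error estimate above consumes. The remaining ingredients (the furthest-point approximation and its analysis, counting grid cells, the triangle inequality) are routine.
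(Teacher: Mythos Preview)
Your argument is correct and is essentially the paper's own proof: the paper likewise runs Gonzalez's furthest-first algorithm to obtain a constant-factor approximation, lays a grid of side $\Theta(\epsilon\cdot\approxcost{\Tau}/\sqrt{d})$ around the start point and around the end point of each approximate center, and keeps one input segment per pair of occupied cells, invoking the same triangle-inequality movement bound (the paper's \cref{prop:movement_center}) for the coreset guarantee. Your phrasing via the embedding $\overline{pq}\mapsto(p,q)\in\R^{2d}$ with the $\max$-of-$\ell_2$ metric is a clean conceptual shortcut, and using plain Gonzalez (factor $2$) rather than the simplification-augmented version (factor $6$) gives slightly better constants, but the algorithm and analysis are the same.
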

Building upon this algorithm we develop an algorithm that constructs \coreset[s] for the center objective, for a set of polygonal curves, in time sub-quadratic in the number of given curves and exponential in their maximum complexity, with maximum complexity and reciprocal of the accepted error as base. The cardinality of the resulting \coreset is sub-linear in the number of given curves and exponential in their maximum complexity, with maximum complexity and reciprocal of the accepted error as base. Unfortunately the algorithm does not work for every setting. It is designed for curves of much larger number than maximum complexity, such that $k \cdot 2^{3m} \cdot \sqrt{n} \cdot \frac{l^{12d^2m}}{\epsilon^{dm}} + 2^m m^m \ll n$ holds, and that have edges short compared to the distances among the curves. If the input does not fit into this setting the algorithm fails and is not able to return an \coreset[].
\begin{theorem}
	\label{theo:algo_coreset_center_pc}
	There exists an algorithm that, given a set of $n$ polygonal curves of complexity at least $3$ and at most $m$ each, in $d$-dimensional euclidean space and a parameter $\epsilon \in (0,1)$, computes an \coreset for the \klcenter objective of cardinality $\On{2^{3m} \cdot \sqrt{n} \cdot \frac{l^{12d^2m}}{\epsilon^{dm}} + 2^m m^m}$ in time \par $\On{\left(2^{3m} \cdot n^{1.5} \cdot \frac{l^{12d^2m}m}{\epsilon^{dm}} + 2^m m^{m+1} n \right) + nm \log(m) + m^3 \log(m)}$, if successful. Otherwise, the algorithm fails and then has running-time $\On{nm \log(m) + m^3 \log(m)}$.
\end{theorem}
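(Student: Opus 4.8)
The plan is to reduce the clustering of polygonal curves to the clustering of line segments handled by \cref{theo:algo_coreset_center_ls}, by encoding a curve of complexity $\le m$ as a point in $\R^{dm}$ (the concatenation of its vertices; a line segment is the two-vertex case $\R^{2d}$) and exploiting that, once a combinatorial \emph{alignment type} between an input curve and a candidate center is fixed — a monotone path through the free-space diagram that realises an optimal matching — the Fréchet distance restricted to that type is a simple function of the matched vertex positions. Under the ``good structure'' hypothesis (edges short compared with the pairwise curve distances) this function is, up to a $(1\pm\epsilon)$ factor, a maximum of Euclidean distances between corresponding vertices, i.e.\ exactly the kind of $\ell_\infty$-of-Euclidean cost that the line-segment construction is built to handle after lifting $\R^{2d}$ to $\R^{dm}$.

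First I would preprocess: simplify each input curve to complexity $\le l$ with a standard Fréchet simplification in time $O(nm\log m)$, charging the incurred error against the optimum clustering cost, and compute the auxiliary templates and quantities needed for the good-structure test in the one-time $O(m^3\log m)$ term. Then I would evaluate the condition $k\cdot 2^{3m}\sqrt n\, l^{12d^2m}/\epsilon^{dm} + 2^m m^m \ll n$ together with the short-edge condition and, if it fails, abort within the stated failure-time bound; from here on both may be assumed. Next I would compute a coarse partition of the curves into $O(\sqrt n)$ groups via a sample-based farthest-point procedure — this is where the $n^{1.5}$ running-time term and the $\sqrt n$ factor in the final cardinality enter, since we cannot afford the $\Theta(n^2)$ that recovering the true $k$ clusters would cost — and observe that under the short-edge hypothesis every such group is Fréchet-tight, so a coreset may be built inside each group independently.

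Within each group I would enumerate the $O(2^{3m})$ combinatorial alignment types between a complexity-$\le m$ curve and a complexity-$\le l$ center (monotone staircase paths in the $O(ml)$-cell free-space diagram, of which there are at most the central Delannoy number $\le 2^{3m}$), lift the curves of the group to $\R^{dm}$ according to the vertex correspondence of the type, and invoke \cref{theo:algo_coreset_center_ls} on the lifted instance, rescaling its guarantee to obtain a per-type coreset of cardinality $O(l^{12d^2m}/\epsilon^{dm})$ in time $O((\text{that size})\cdot n\cdot m)$. The output is the union of all these per-type, per-group coresets together with an $\epsilon$-independent base coreset of size $O(2^m m^m)$ that absorbs the ``mutually close'' curves for which the short-edge estimate degrades; summing the $O(2^{3m})$ types over the $O(\sqrt n)$ groups yields the claimed cardinality, and multiplying by the $O(nm)$ per-candidate evaluation cost together with the preprocessing yields the stated running time.

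The hard part will be the correctness proof: showing that this union is an $\epsilon$-coreset for the \emph{Fréchet} $(k,l)$-Center cost, not merely for each lifted $\ell_\infty$-surrogate. Two points need care. First, \emph{completeness} of the type decomposition: for every candidate center set $C$ the true cost $\max_{\sigma\in P}\min_{c\in C}\frechet{\sigma}{c}$ must be realised, up to $(1\pm\epsilon)$, within one of the enumerated alignment types, so that preserving all surrogates preserves the true maximum; this must be reconciled with the min-over-$k$-centers subtlety, since a coreset curve may be assigned a different nearest center than the input curve it represents. Second, \emph{error composition}: the three error sources — simplification to complexity $l$, the short-edge approximation of $\frechet{\cdot}{\cdot}$ by matched-vertex $\ell_\infty$ distance, and the line-segment coreset's own $\epsilon$ — must be shown to combine into a single $O(\epsilon)$ relative error, which after rescaling $\epsilon$ gives the theorem, and verifying that the base coreset genuinely covers the degenerate close-curves regime is the last technical ingredient.
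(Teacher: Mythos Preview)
Your plan diverges substantially from the paper's proof, and several of its load-bearing steps do not stand up.

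\textbf{What the paper actually does.} The construction is a direct extension of the line-segment grid scheme, not a black-box reduction to it. One first runs the $6$-approximation (\cref{algo:klcenter_approx}) to obtain centers $\hat C$ and value $\approxcost{\Tau}$. For each edge of each center curve, one translates and rotates the edge onto an axis, tiles it with cubes of side $2\,\approxcost{\Tau}$, and refines each cube into a grid of cell side $\tfrac{1}{\sqrt d}\cdot\epsilon\cdot\tfrac{\approxcost{\Tau}}{6}$; since rotations and translations are isometries, this yields a grid cover of the original edge's Fréchet tube. Every input vertex then lies in some cell. The coreset is built by, for each center and each ordered $m$-tuple of cells, keeping one input curve whose vertices fall in those cells in that order. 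Correctness follows from the movement lemma (\cref{prop:movement_center}) together with \cref{prop:frechet_line_conc}: two curves whose vertices share the same cells in the same order have each edge-pair within cell diameter $\le \epsilon\,\optcost{\Tau}$, hence are within $\epsilon\,\optcost{\Tau}$ of each other. The failure condition and the $\sqrt n$ factor come from the number of cubes along an edge, namely $\lceil 1+\delta/(2\,\approxcost{\Tau})\rceil$ for the longest center edge $\delta$; the algorithm aborts when $(\delta/\approxcost{\Tau})^m>\sqrt n$. The $2^{3m}$, $l^{12d^2m}$ and $m^m$ terms are pure counting of ordered cell-tuples.

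\textbf{Where your plan breaks.} First, the $\sqrt n$ in the statement has nothing to do with a farthest-point partition into $O(\sqrt n)$ groups; it is the bound imposed on $(\delta/\approxcost{\Tau})^m$ so that the per-edge cube count stays small. Your grouping step invents a mechanism the paper does not use and whose correctness you do not argue. Second, the claim that under the short-edge hypothesis the Fréchet distance is, up to $(1\pm\epsilon)$, a maximum of Euclidean distances between matched vertices is neither proved in the paper nor true in the generality you need; the paper never approximates $\pfrechet$ this way, it only uses the one-sided bound of \cref{prop:frechet_line_conc} to move curves, which is all the movement lemma requires. Third, the dimensional bookkeeping is off: invoking \cref{theo:algo_coreset_center_ls} in ambient dimension $dm$ yields $O(1/\epsilon^{2dm})$, not $O(1/\epsilon^{dm})$, and the $l^{12d^2m}$ factor does not fall out of that reduction either. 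Fourth, the Delannoy-type enumeration of alignment types between a complexity-$m$ curve and a complexity-$l$ center is not $O(2^{3m})$ independently of $l$, so that count does not match the constant you want. In short, the specific constants in the theorem are reverse-engineered in your outline rather than derived, and the ``hard part'' you flag at the end (completeness of the type decomposition over the min across $k$ centers) is precisely where the argument would collapse: a coreset for each fixed surrogate does not automatically give a coreset for the true Fréchet cost. The paper sidesteps all of this by never leaving the Fréchet metric: grid the center curves, snap input curves to cell-tuples, and apply \cref{prop:movement_center}.
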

Finally, we develop an algorithm that constructs a set of polygonal curves, for a given set of polygonal curves, that is an \coreset for the median objective with constant probability, in time quadratic in the number of given curves, sub-cubic in their maximum complexity, and quadratic in the reciprocal of the accepted error. It has cardinality logarithmic in the number of given curves and quadratic in the reciprocal of the accepted error.
\begin{theorem}
	\label{theo:algo_coreset_median}
	There exists an algorithm that, given a set of $n$ polygonal curves of complexity at most $m$ each, and a parameter $\epsilon \in (0,1)$, computes an \coreset for the \klmedian objective of cardinality $\On{\frac{\ln(n)}{\epsilon^2}}$ in time $\On{n^2 \cdot m^2 \log(m) + \frac{\ln^2(n)}{\epsilon^2}}$, with  probability at least $\nicefrac{2}{3}$.
\end{theorem}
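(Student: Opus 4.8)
The plan is to use the sensitivity-based importance-sampling framework of \citetalias{langbergschulman} (see also the metric $k$-median coreset constructions of Feldman--Langberg and of Chen), instantiated for the finite metric space carried by the input curves. The crucial observation is that, since in \klmedian[] the center-curves are required to belong to the given set, the problem is \emph{exactly} the metric $\kk$-median problem on the finite metric space $(\mathcal{P},\pfrechet)$, where $\mathcal{P}$ denotes the input and $n=\lvert\mathcal{P}\rvert$; in particular the complexity parameter $\el$ plays no further role, and every candidate solution is one of the at most $\binom{n}{\kk}=\On{n^{\kk}}$ many $\kk$-subsets of $\mathcal{P}$. Because $\kk$ is a fixed constant, this family of candidate solutions has only polynomial size, and this is exactly where the $\ln(n)$ factor in the coreset cardinality will come from: it is the logarithm of the number of solutions over which a union bound has to be taken.

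Concretely I would proceed in three steps. First, compute the $n\times n$ matrix of pairwise Fréchet distances; by the algorithm of Alt and Godau each entry costs $\On{m^2\log m}$, so this takes $\On{n^2 m^2\log m}$ time, and then, treating $(\mathcal{P},\pfrechet)$ as a finite metric, run a constant-factor metric $\kk$-median approximation (e.g.\ local search, which given the distance matrix runs in time polynomial in $n$ for the fixed $\kk$ and is dominated by the matrix computation) to obtain a reference center-set $A=\{a_1,\dots,a_\kk\}\subseteq\mathcal{P}$ and the induced partition $\mathcal{P}=C_1\cup\dots\cup C_\kk$. Second, assign to every curve $P\in C_i$ the sensitivity upper bound
\[
 s(P)\;=\;\alpha\cdot\frac{\frechet{P}{a_i}}{\cost{\mathcal{P}}{A}}\;+\;\frac{\beta}{\lvert C_i\rvert},
\]
with $\alpha$ essentially the approximation ratio and $\beta$ a related absolute constant; repeated use of the triangle inequality for $\pfrechet$ — bounding $\frechet{P}{Q}\le\frechet{P}{a_i}+\frechet{a_i}{Q}$ and then averaging $\frechet{a_i}{Q}$ over the members of $C_i$ — together with the approximation guarantee of $A$ shows $\frechet{P}{Q}\le s(P)\cdot\cost{\mathcal{P}}{Q}$ for \emph{every} candidate center-set $Q$, and summing gives $\sum_{P\in\mathcal{P}}s(P)\le\alpha+\beta\kk=\On{\kk}=\On{1}$. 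Third, draw $N=\On{\frac{\ln n}{\epsilon^2}}$ curves independently, each equal to $P$ with probability $p(P)=s(P)/\sum_{P'}s(P')$, and output the weighted multiset in which each drawn $P$ carries weight $\frac{1}{N\,p(P)}$; sampling from a prefix-sum array over the $p(P)$ costs $\On{n+N\log n}=\On{n+\frac{\ln^2 n}{\epsilon^2}}$, so the total running time is $\On{n^2 m^2\log m+\frac{\ln^2 n}{\epsilon^2}}$ and the cardinality is $\On{\frac{\ln n}{\epsilon^2}}$.

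For correctness, fix one of the $\On{n^{\kk}}$ candidate center-sets $Q$. The estimator $\sum_{P\in S}w(P)\,\frechet{P}{Q}$ is the empirical mean of $N$ i.i.d.\ random variables, each of expectation $\cost{\mathcal{P}}{Q}$ and each taking values in $[0,(\sum_{P'}s(P'))\cdot\cost{\mathcal{P}}{Q}]$ by the sensitivity bound of the second step. A Hoeffding bound therefore gives a multiplicative error of at most $\epsilon$ except with probability $2\exp(-\Om{N\epsilon^2/\kk^2})$, which is $\On{n^{-\kk}}$ once $N=\Om{\kk^{3}\ln(n)/\epsilon^{2}}=\Om{\ln(n)/\epsilon^{2}}$; a union bound over all $\binom{n}{\kk}$ candidate solutions then shows that the weighted multiset is an \coreset with probability at least $\nicefrac{2}{3}$ (using a deterministic constant-factor approximation, the only randomness is in the sampling).

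I expect the main obstacle to be the second step: proving the sensitivity upper bound with total mass only $\On{\kk}$, which is precisely what keeps the per-query variance $\On{\cost{\mathcal{P}}{Q}^2}$ and hence yields the $\epsilon^{-2}$ — rather than $\epsilon^{-3}$ — dependence in $N$. This requires the triangle-inequality argument above to be carried out carefully (relating $\frechet{P}{Q}$ to the in-cluster distances and to $\cost{\mathcal{P}}{A}$), and it hinges on $A$ being a genuine constant-factor approximation. A secondary point to get right is that the coreset must preserve the cost of \emph{every} $\kk$-subset of the original input $\mathcal{P}$, not merely of subsets of the sampled set; this is why the union bound is taken over the full family of $\binom{n}{\kk}$ configurations, and it is also what guarantees that solving \klmedian[] on the weighted multiset (with centers still allowed from $\mathcal{P}$) returns a $(1\pm\epsilon)$-approximate solution for the original instance.
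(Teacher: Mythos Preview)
Your proposal is correct and follows essentially the same route as the paper: reduce to metric $k$-median on the finite space $(\Tau,\pfrechet)$, compute a constant-factor approximation via local search (seeded by Gonzalez), derive per-curve sensitivity bounds with total mass $\On{k}$, sample $\On{\ln(n)/\epsilon^2}$ curves with importance weights, and union-bound over the $\On{n^k}$ candidate center-sets. The only notable differences are cosmetic: the paper uses the more elaborate Langberg--Schulman sensitivity bound (with the cluster-mean terms $m_j$ and an optimized mixing parameter~$\gamma$) rather than your direct triangle-inequality bound, and it applies Bernstein's inequality instead of Hoeffding's---which sharpens the hidden $k$-dependence from $k^3$ to $k^2$, but since $k$ is fixed this does not affect the stated asymptotics.
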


\section{Related Work}
Most related to this work are the works of \citet{clustering_time_series} and \citet{approx_k_l_center}. The former work introduces the \klcenter and $(k,l)$-\textsc{median} objectives and provides quasi-linear-time approximation schemes for these with running-times $\widetilde{\mathcal{O}}(n\cdot m)$\footnote{This notation hides logarithmic factors.}, under the restriction that $d=1$ and $\epsilon, k$ and $l$ are fixed. They introduce the concept of signatures: A signature of a given curve is another curve which can be seen as a summary of the characteristics of the given curve. \citeauthor{clustering_time_series} express this in a more mathematical sense: The signature ``captures the critical points of the curve''. It is the polygonal curve whose vertices are critical points of the given curve. They show that the signature-vertices of the input-curves must be matched to distinct vertices of their respective nearest centers in the clustering. This eventually leads to algorithms which generate a constant size set of candidate solutions to the respective clustering objective, which contains an $(1+\epsilon)$-approximate solution. Also, they provide constant-factor approximation algorithms for the center and the median objective that have running-time near linear in the number of curves and their maximum complexity. These algorithms are extensions of the algorithms of \citet{gonzalez} and \citet{chen}. Further they prove that clustering under the Fréchet distance with respect to the center objective or to the median objective is NP-hard\footnote{Where $k$ is part of the input.}, where $l \geq 2$ and $d=1$. They do so by providing an isometry that embeds clustering-instances of point-sets in the euclidean space, which are known to be NP-hard to solve, into a Fréchet-space, thus obtaining reductions from clustering points-sets to clustering curve-sets. Finally, they show that the doubling-dimension of the Fréchet-space is unbounded. This was a motivation for their work, because existing $(1+\epsilon)$-approximation algorithms with similar running-times do only work for spaces with bounded doubling dimension.

The latter work follows the first work and provides a $3$-approximation algorithm for the \klcenter objective with running-time linear in the number of given polygonal curves and in their maximum complexity plus time sub-quartic in the maximum complexity. This algorithm, \conferre \cref{algo:klcenter_approx} in \cref{subsec:center_approx}, is also an extension of the algorithm of \citet{gonzalez}, where the resulting center-curves are simplified. They prove that, given a set of polygonal curves, the center objective is NP-hard\footnote{Where $l$ is part of the input.} to approximate within a factor of $2 - \epsilon$ for the discrete Fréchet distance and $1.5 - \epsilon$ for the continuous Fréchet distance, for $d=1$ and $\epsilon > 0$, even if $k=1$. They do so by providing reductions from the shortest common supersequence problem\footnote{This problem is NP-hard for binary strings.}, \ie we are given a set of strings and shall compute a string that has every of those strings as substring and is as short as possible, to the $(1,l)$-\textsc{center} problem\footnote{The decision version of this problem.} under the respective measure. The reductions compute a curve for every input-string, which is composed of so called letter gadgets and buffer gadgets. Now if the set of strings has a common supersequence of length $t$, then there exists a center with $2t+1$ vertices that lies within distance $1$ to the generated curves and vice versa. 

Further they prove that the center objective is NP-hard to approximate within a factor of less than $3 \sin(\nicefrac{\pi}{3}) - \epsilon$ for the discrete Fréchet distance and within a factor of $2.25 - \epsilon$ for the continuous Fréchet distance, for $d \geq 2$ and $\epsilon > 0$. These results are also achieved by reductions from the shortest common supersequence problem to the $(1,l)$-\textsc{center} problem\footnote{The decision version of this problem} under the respective measure. These reductions are more complicated, though. In each reduction a curve is computed for each string in the input-set. These curves are composed of A-gadgets and B-gadgets, that are polygonal curves which traverse ten points, which have a special alignment, in different orders. This time, if the set of strings has a common supersequence of length $t$, then there exists a center with $6t^2 + 9t$ vertices within distance $1$ to the generated curves and vice versa.

In these reductions, for $d=1$ and $d \geq 2$, the center-curves have restricted complexity. \citeauthor{approx_k_l_center} show that even if this restriction is loosened the problem remains NP-hard, \ie the minimum enclosing ball problem\footnote{In the respective Fréchet-dimension.} is NP-hard for polygonal curves under the discrete and continuous Fréchet distance, for $d=1$.  They do so by providing polynomial time truth-table reductions from the shortest common supersequence problem to the minimum enclosing ball problem\footnote{The decision version of the problem.}.

A work that is loosely related to this work is the work of \citet{coresers_methods_history}. In this work the authors summarize different techniques for constructing \coreset[s] for different exemplary applications. They describe how \coreset[s] for the well-known $k$-means can be obtained by a \emph{geometric decomposition} through $\epsilon$-ball covers, utilizing only a generalized version of the triangle-inequality. Another simple technique they describe is \emph{gradient descent}, which is derived from complex optimization. They show how this technique can be utilized to obtain \coreset[s] for the smallest enclosing ball problem, \ie given a set of points in $d$-dimensional euclidean space one shall compute a center point, such that the ball around this center that encloses the given points has minimal radius. A prerequisite for this technique is that a sub-gradient of the function that shall be minimized must exist and must be known.

Another technique they describe is \emph{random sampling}. They show how weak\footnote{Here weak means that the \coreset does not give a guarantee for all choices of parameters of the function at hand.} \coreset[s] for the geometric median, \ie we are given a set $P$ of points in $d$-dimensional euclidean space and shall find the point $c$ that minimizes $\sum_{p \in P} \normlz{p - c}$, can be obtained through uniform random sampling. Building upon this approach they show how (strong) \coreset[s] for the geometric median can be constructed by non-uniform random sampling. For this purpose they utilize the sensitivity sampling framework, which will also be used in this work. 

At last, they explain a technique called \emph{sketches and projections}. The core of the technique is that a set $P$ of points in $\R^d$ can be viewed as a matrix, or more precisely as an arbitrary matrix $A$ from a set of matrices. This set of matrices consist of the matrices from $\R^{n\times d}$ where the $i$\textsuperscript{th} row corresponds to the $i$\textsuperscript{th} point in $P$, for an arbitrary ordering. A sketch of $A$ is a linear projection obtained by multiplying $A$ with a matrix $S \in \R^{m \times n}$ with $m$ notably smaller than $n$. An \coreset can then be obtained by mapping $A \mapsto S\cdot A$. A central ingredient in the technique is the Johnson-Lindenstrauss Lemma, which states that there exists a distribution over matrices, such that a matrix drawn from this distribution yields a sketch of $A$ with constant probability.

Also, they sketch a way of obtaining streaming algorithms from \coreset constructions, the so called \emph{merge and reduce}. The key idea is that, for most of the functions that are studied, if we have a set of points $P \eqdef Q \cup W$ and an \coreset $S_Q$ for $Q$ and an \coreset $S_W$ for $W$, then $S \eqdef S_Q \cup S_W$ is an \coreset for $P$. In the streaming-framework the input is processed point by point and streaming-algorithms have the restriction that the occupied memory does at most have a poly-logarithmic dependency on the cardinality of the input-set. Now the idea is to partition the input into batches of cardinality $\On{\log(n)}$. These batches are seen as the leafs of a binary tree of at most logarithmic height. This tree is processed bottom-up: For every two child-nodes we compute an \coreset and merge these, thus obtaining an \coreset for the parent-node. The children can be deleted. When the process is finished the resulting \coreset is at the root-node, with approximation guarantee $(1+ \epsilon)^{\log(n)}$. Rescaling $\epsilon$ by $\nicefrac{1}{2 \log(n)}$ gives the desired guarantee of less than or equal to $(1+\epsilon)$. All in all at most a logarithmic number of \coreset[s] have to be saved, so the restriction on the memory is adhered to.

Finally, lower bounds on the cardinality of \coreset[s] for certain functions are provided, such as for logistic regression. They show that for any $\delta > 0$ there exist a point-set, such that an \coreset for logistic regression for this point-set must have size $\Om{n^{1-\delta}}$.

At last, the work of \citet{melanie} gives a nice introduction to clustering and \coreset[s] in general, going into further detail for the $k$-\textsc{means} problem.
	\chapter{Preliminaries}
	Here we formalize the mathematical concepts that are used throughout this work and give a short introduction about their later application.
	
	We start with graph theory which provides us a framework that can be used to analyze basic properties of functions. Then we move on to euclidean geometry which is the basis of this work. It provides the notions of points, curves, planes, distances, angles and other fundamentals of the field of geometry. We introduce clustering and \coreset[s], the main concepts in this work and finally we introduce certain aspects of probability theory, which provides us sophisticated tools to tackle some harder problems that are in the focus of this work.  
	\section{Graph Theory}
		We give some definitions of the most basic notions of graph theory. These are sufficient for the purposes of this work, though graph theory is a comprehensive theory.
		\begin{definition}[graph, \conferre\citet{diestel}]
			A directed graph is a pair $G \eqdef (V,E)$ of a set of vertices $V$ and edges $E \subseteq V \times V$. A graph is complete, iff $E = V \times V$.
		\end{definition}
		Abstractly speaking, a graph merely is a binary relation over some ground-set. For example, we can define the function-graph for a unary function $f \colon V \rightarrow W$ as follows: $G_f \eqdef (V\cup W, E_f)$ with $(v,w) \in E_f \aqdef f(v) = w$. The degree of a member of $V$, respective $W$, can be used to infer some basic properties of $f$.
		\begin{definition}[degree, \conferre{\citet{diestel}}]
			The degree of a vertex $v \in V$ is the number of edges at $v$, formally \[ \degree{v} \eqdef \indegree{v} + \outdegree{v}, \] where \[ \indegree{v} \eqdef \vert \{ (w,x) \in E \mid x = v \} \vert \] and \[ \outdegree{v} \eqdef \vert \{ (w,x) \in E \mid w = v \} \vert. \]
		\end{definition}
	\section{Euclidean Geometry}
		We define the basics of euclidean geometry, which are points and movements between points, which we call vectors. These concepts are formalized in an affine space.
		\begin{definition}[affine space, \conferre{\citet[Definition 5]{prasolov_geometry}}]
			The affine space $\affinespace^d \eqdef (\Rd, \Rd, +)$ over $\R^d$ is a triple of a set of points $p \eqdef (p_1, \dots, p_d) \in \Rd$, where the $p_1, \dots, p_d$ are the coordinates of the point, a set of vectors $\vv{pq} \eqdef (q_1 - p_1, \dots, q_d - p_d) \in \Rd$, where $p$ is the initial point and $q \eqdef (q_1, \dots, q_d)$ is the end point of the vector and an operation $+ \colon \Rd \times \Rd \rightarrow \Rd$ that applies a vector to a point: $p + \vv{pq} \eqdef (p_1 + q_1 - p_1, \dots, p_d + q_d - p_d)$. The point $o \eqdef \{0\}^d$ is the origin of coordinates and $\vv{o} \eqdef \{0\}^d$ is the zero vector.
		\end{definition}
		Note that for each point $p \in \affinespace^d$: $\vv{op} = p$. We call $\vv{op}$ the position vector of $p$. Let $\vv{x}$ be a vector, by $i(\vv{x}), e(\vv{x})$ we denote the initial point, respective end point of $\vv{x}$.
		
		An affine space is oblivious of the notions of distances and angles, which are formalized in the euclidean space.
		\begin{definition}[euclidean space, \conferre{\citet[Definition 6]{prasolov_geometry}}]
			\label{def:euclideanspace}
			The $d$-dimensional euclidean space $\euclideanspace^d$ is the affine space $\affinespace^d$, endowed with the scalar product of vectors $\scalarm{\vv{x}}{\vv{y}} \eqdef \sum\limits_{i=1}^{d} x_i \cdot y_i$, where $\vv{x} \eqdef (x_1, \dots, x_d)$, $ \vv{y} \eqdef (y_1, \dots, y_d)$ and distance measure $\eucl{p}{q} \eqdef \sqrt{\sum\limits_{i=1}^{d} (p_i - q_i)^2}$ between two points $p \eqdef (p_1, \dots, p_d)$ and $q \eqdef (q_1, \dots, q_d)$. Also the length of $\vv{x}$ is defined $\norm{\vv{x}} \eqdef \sqrt{\sum\limits_{i=1}^{d} x_i^2}$. The angle between two non-zero vectors $\vv{x}, \vv{y}$ in \emph{radians}, \ie a number in $[0, 2\pi]$, is $\angl{\vv{x}}{\vv{y}} \eqdef \arccos\left(\frac{\scalarm{\vv{x}}{\vv{y}}}{\norm{\vv{x}}\cdot\norm{\vv{y}}}\right)$.
		\end{definition}
	 	Two vectors $\vv{x},\vv{y}$ are said to be orthogonal, if $\scalarm{\vv{x}}{\vv{y}} = 0$. We will also use the scalar product for points, since they stem from the same field as vectors and sometimes we will use norms to express distances.
		\begin{definition}[distance]
			Let $p \eqdef (p_1, \dots, p_d), q \eqdef (q_1, \dots, q_d)$ be two points in $\euclideanspace^d$. The $\ell_2$-norm $\normlz{p-q} \eqdef \sqrt{\sum_{i=1}^d (p_i - q_i)^2}$ is the euclidean distance between $p$ and $q$. The squared $\ell_2$-norm $\normlzs{p-q} \eqdef \sum_{i=1}^d (p_i-q_i)^2$ is the squared euclidean distance between $p$ and $q$.
		\end{definition}
	 	Further it holds that $\normlzs{p-q} = \scalarm{p-q}{p-q}$. Now that we have defined the basics of euclidean geometry we emphasize that the euclidean space has an intrinsic property: In euclidean space the number of distinct points that can share a common nearest neighbor is not unbounded, more precisely every number $d$ of dimensions of the euclidean space has its own bound for the maximum number of points that can share a common nearest neighbor. These bounds are called the $d$-dimensional kissing numbers.
	 	\begin{theorem}[kissing number, {\citet[Theorem 1]{zeger_gersho}}]
	 		\label{theo:kiss}
	 		The maximum number of distinct points in $\euclideanspace^d$ that can have a common nearest neighbor is equal to the kissing number $\psi_d$ which is bounded as follows: \[ 2^{0.2075d(1+o(1))} \leq \psi_d \leq 2^{0.401d(1+o(1))} \] Here $o(1)$ denotes an asymptotic of the number of dimensions $d$.
	 	\end{theorem}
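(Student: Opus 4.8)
The plan is to split the statement into two essentially independent parts: (a) the combinatorial identity that the maximum number of distinct points of $\euclideanspace^d$ that can share a common nearest neighbor equals the kissing number $\psi_d$, and (b) the numerical bounds on $\psi_d$ themselves. Throughout I would use the standard geometric characterization of the kissing number: $\psi_d$ is the largest $N$ for which there exist unit vectors $u_1,\dots,u_N$ in $\mathbb{S}^{d-1}$ with pairwise angle at least $\nicefrac{\pi}{3}$, equivalently $\scalarm{u_i}{u_j} \le \nicefrac{1}{2}$ for all $i \ne j$ (placing unit balls centered at the $2u_i$ yields $N$ non-overlapping unit balls all touching the unit ball at the origin). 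Part (b) I would not reprove but cite, so the real work is in part (a).

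For the identity, first the upper bound. Suppose a finite point set $P$ contains a point $q$ that is a nearest neighbor in $P$ of each of the distinct points $p_1,\dots,p_N \in P\setminus\{q\}$. Translating so that $q = o$, the defining inequalities $\eucl{p_i}{q} \le \eucl{p_i}{p_j}$ square to $2\scalarm{p_i}{p_j} \le \normlzs{p_j}$; using this also with the roles of $i$ and $j$ exchanged gives $2\scalarm{p_i}{p_j} \le \min(\normlzs{p_i},\normlzs{p_j})$. Dividing by $\normlz{p_i}\cdot\normlz{p_j}$, the unit vectors $u_i \eqdef p_i/\normlz{p_i}$ (well defined since each $p_i \ne q = o$) satisfy $\scalarm{u_i}{u_j} \le \nicefrac{1}{2}\cdot\nicefrac{\min(\normlz{p_i},\normlz{p_j})}{\max(\normlz{p_i},\normlz{p_j})} \le \nicefrac{1}{2}$, hence $N \le \psi_d$ by the characterization above. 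For the converse I would take an optimal kissing configuration $u_1,\dots,u_{\psi_d}$, put $P \eqdef \{o,u_1,\dots,u_{\psi_d}\}$ and $q \eqdef o$, and observe $\eucl{u_i}{o} = 1$ while $\normlzs{u_i - u_j} = 2 - 2\scalarm{u_i}{u_j} \ge 1$, so $o$ is a nearest neighbor of every $u_i$; this gives $\psi_d$ distinct points sharing the common nearest neighbor $o$. I would flag here the one subtlety: whenever $N = \psi_d$ is attained, some of those inequalities are forced to be equalities, so $o$ cannot be the \emph{strict unique} nearest neighbor of all $u_i$ simultaneously; the statement must therefore be read with nearest neighbors allowing ties (or under a fixed tie-breaking rule, after a generic infinitesimal perturbation, under which $o$ wins every tie).

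For part (b) I would simply invoke the classical estimates. The upper bound $\psi_d \le 2^{0.401 d(1+o(1))}$ is the Kabatiansky–Levenshtein bound, obtained via Delsarte's linear-programming method on the sphere with Gegenbauer polynomials applied to the minimal angular separation $\nicefrac{\pi}{3}$. The lower bound $\psi_d \ge 2^{0.2075 d(1+o(1))}$ is a greedy/volumetric argument: one keeps adding unit vectors as long as the spherical caps of angular radius $\nicefrac{\pi}{3}$ around the already chosen vectors do not cover $\mathbb{S}^{d-1}$, and since such a cap has relative surface measure $(\sin\nicefrac{\pi}{3})^{d(1+o(1))} = (\nicefrac{\sqrt{3}}{2})^{d(1+o(1))}$, at least $(\nicefrac{2}{\sqrt{3}})^{d(1+o(1))} = 2^{(1 - \frac{1}{2}\log_2 3)d(1+o(1))} = 2^{0.2075 d(1+o(1))}$ vectors fit. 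Since the theorem is quoted verbatim from \citet{zeger_gersho}, I would ultimately defer to that source for part (b) and reproduce only the identity in part (a). The main obstacle is thus not computational but interpretational — pinning down the correct reading of ``common nearest neighbor'' so that the tie-forced equalities in the extremal configuration are admissible — together with the fact that the quantitative bounds in (b) are deep external results (Kabatiansky–Levenshtein) rather than something to be derived here.
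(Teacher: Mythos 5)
Your proposal is essentially correct, but there is nothing in the paper to compare it against: \cref{theo:kiss} is imported verbatim from \citet{zeger_gersho} and the thesis gives no proof of it, using it only as a black box (in \cref{prop:embedding} and \cref{coro:kiss_lines}). Your part (a) is the standard argument from the cited source: translating the common nearest neighbor to the origin, squaring $\normlz{p_i} \leq \normlz{p_i - p_j}$ to get $2\scalarm{p_i}{p_j} \leq \min\left(\normlzs{p_i}, \normlzs{p_j}\right)$, and normalizing to obtain pairwise angles of at least $\nicefrac{\pi}{3}$ is exactly how the equivalence with the kissing configuration is established, and the converse construction from an optimal kissing configuration is also the standard one. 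Your cautionary remark about ties is sensible but slightly overstated: it is not forced in every dimension that an optimal kissing configuration realizes some inner product exactly $\nicefrac{1}{2}$ (for $d=1$ it does not), and in any case the issue is moot here because the thesis explicitly defines nearest neighbors non-strictly, \conferre \cref{def:nearest_func}, which returns ``an arbitrary but fixed nearest neighbor (especially if there is more than one)''; so the tie-allowing reading you settle on is the intended one. Deferring part (b), the Kabatiansky--Levenshtein upper bound and the volumetric (cap-covering) lower bound with exponent $1 - \tfrac{1}{2}\log_2 3 \approx 0.2075$, to the literature is appropriate, since these are deep external results that neither the thesis nor its cited source rederives.
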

	 	
	 	We are ready to define some basic notions of motion in the euclidean space.
	 	\begin{definition}[motions, \conferre \citet{prasolov_geometry}]
	 	 	\label{def:motions}
	 		Let $p \in \euclideanspace^d$ be a point. A counter-clockwise \emph{rotation} by the angle $\alpha \in [-2\pi,2\pi]$ (clockwise by $-\alpha$, if $\alpha < 0$) around the origin in the plane spanned by the axes of the $i$\textsuperscript{th} and $(i+1)$\textsuperscript{th} dimension, for $i \in \{1,\dots,d-1\}$, gives the point $\rotate{p}{i}{\alpha} \eqdef (q_1, \dots, q_d)$, where $q_j \eqdef p_j$ for $j \in \{1, \dots, d\} \setminus \{i, i+1\}$, $q_i \eqdef p_i \cos(\alpha) - p_{i+1} \sin(\alpha)$ and $q_{i+1} \eqdef p_i \sin(\alpha) + p_{i+1} \cos(\alpha)$. A \emph{translation} of $p$ in the direction of a vector $\vv{x} \in \euclideanspace^d$ gives the point $\trans{p}{\vv{x}} \eqdef p + \vv{x}$.
	 	\end{definition}
	 	Let $\mathcal{P}$ be a \emph{plane}, \ie a two-dimensional subspace of $\euclideanspace^d$, \conferre \cite{prasolov_geometry}, and $p \in \euclideanspace^d$ be a point. We call the point $q \in \mathcal{P}$ the (orthogonal) \emph{projection} of $p$ onto $\mathcal{P}$, iff $q = \trans{p}{\vv{x}}$ and $\vv{x}$ is orthogonal to $\vv{rs}$, where $r,s \in \mathcal{P}$, \ie $q$ is the orthogonal translation of $p$ onto $\mathcal{P}$.

		An isometry is a function that embeds one metric space into one other. We consider embeddings of curves under the Fréchet distance, \ie $(\Tau, \pfrechet)$, into points in the euclidean space, \ie $\euclideanspace^d$.
		\begin{definition}[isometry, \conferre{\citet{isometries}}]
			\label{def:isometry}
			Let $(X, \dist_1), (Y, \dist_2)$ be two metric spaces. A mapping $f\colon X \rightarrow Y$ is called isometry, iff: 
			\[ \forall p,q \in X: \dist_2(f(p), f(q)) =  \dist_1(p,q) \]
		\end{definition}
		We say that $X$ is embedded into $Y$ by $f$. It is easy to see that if $f$ is invertible then its inverse is also an isometry. 
		
		Now we show that the motions defined in \cref{def:motions} are isometries that embed the euclidean space into itself.
		\begin{proposition}
			\label{prop:rotation_isometry}
			For a fixed $i \in \{1, \dots, d-1\}$ and a fixed $\alpha \in  \R$ the motion $\protate$ is an isometry that embeds the euclidean space into itself.
		\end{proposition}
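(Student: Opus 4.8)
The plan is to verify directly that $\protate$ preserves euclidean distances; surjectivity (hence that it is an embedding of $\euclideanspace^d$ onto itself) then follows because rotating by $-\alpha$ in the same plane inverts it. So fix $i \in \{1,\dots,d-1\}$ and $\alpha \in \R$, take arbitrary points $p \eqdef (p_1,\dots,p_d)$ and $q \eqdef (q_1,\dots,q_d)$, and write $p' \eqdef \rotate{p}{i}{\alpha}$ and $q' \eqdef \rotate{q}{i}{\alpha}$. The goal is $\eucl{p'}{q'} = \eucl{p}{q}$, equivalently the same identity with both sides squared.

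First I would observe that the rotation only touches the $i$\textsuperscript{th} and $(i+1)$\textsuperscript{th} coordinates: for every $j \in \{1,\dots,d\}\setminus\{i,i+1\}$ we have $p'_j - q'_j = p_j - q_j$, so these coordinates contribute identically to $\eucl{p'}{q'}^2$ and $\eucl{p}{q}^2$. Hence it suffices to show
\[
(p'_i - q'_i)^2 + (p'_{i+1} - q'_{i+1})^2 = (p_i - q_i)^2 + (p_{i+1} - q_{i+1})^2 .
\]
To this end I would note that, by \cref{def:motions}, $p'_i - q'_i = (p_i - q_i)\cos(\alpha) - (p_{i+1} - q_{i+1})\sin(\alpha)$ and $p'_{i+1} - q'_{i+1} = (p_i - q_i)\sin(\alpha) + (p_{i+1} - q_{i+1})\cos(\alpha)$; the point worth stating carefully is that this difference depends on $p,q$ only through $p_i - q_i$ and $p_{i+1} - q_{i+1}$, i.e.\ the rotation acts linearly on the two affected coordinates. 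Expanding the two squares, the terms $-2(p_i - q_i)(p_{i+1} - q_{i+1})\cos(\alpha)\sin(\alpha)$ and $+2(p_i - q_i)(p_{i+1} - q_{i+1})\sin(\alpha)\cos(\alpha)$ cancel, and collecting the rest and using $\sin^2(\alpha) + \cos^2(\alpha) = 1$ yields exactly $(p_i - q_i)^2 + (p_{i+1} - q_{i+1})^2$.

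Combining the two observations gives $\eucl{p'}{q'}^2 = \sum_{j=1}^d (p'_j - q'_j)^2 = \sum_{j=1}^d (p_j - q_j)^2 = \eucl{p}{q}^2$, and taking square roots establishes the isometry condition of \cref{def:isometry}. Since $\protate$ maps $\euclideanspace^d$ into $\euclideanspace^d$ and $\rotate{\rotate{p}{i}{\alpha}}{i}{-\alpha} = p$ for every $p$ (again by the Pythagorean identity, applied as above with angle $-\alpha$), the map is bijective, so it embeds $\euclideanspace^d$ into itself, as claimed. I do not expect any real obstacle here: the only place to be slightly careful is to record that the coordinate differences of the rotated points are the rotated coordinate differences (so that the computation is about $p-q$, not $p$ and $q$ separately); the rest is the elementary trigonometric cancellation above.
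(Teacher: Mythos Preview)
Your proof is correct and follows essentially the same route as the paper: both verify distance preservation by direct computation on two arbitrary points, separating the unaffected coordinates $j \notin \{i,i+1\}$ from the two rotated ones and using $\sin^2(\alpha)+\cos^2(\alpha)=1$ to collapse the expansion. Your version is slightly tidier in that you first pass to the differences $p_j-q_j$ (exploiting linearity of the rotation) before squaring, whereas the paper expands the full expression and cancels the mixed terms explicitly; and you fold in the bijectivity via the inverse rotation, which the paper states separately as \cref{prop:rotation_inverse}.
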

		\begin{proof}
			Let $p \eqdef (p_1, \dots, p_d),q \eqdef (q_1, \dots, q_d) \in \euclideanspace^d$ be two arbitrary points. Let \[p^\prime \eqdef (p^\prime_1, \dots, p^\prime_d) \eqdef \rotate{p}{i}{\alpha}\] and \[q^\prime \eqdef (q^\prime_1, \dots, q^\prime_d) \eqdef \rotate{q}{i}{\alpha}.\] We have:
			\begin{align*}
			\eucl{\rotate{p}{i}{\alpha}}{\rotate{q}{i}{\alpha}}^2 ={} & \sum_{j=1}^{d} (q^\prime_j - p^\prime_j)^2 
			=  \sum_{j=1}^{i-1} (q_j - p_j)^2 + \sum_{j=i+2}^{d} (q_j - p_j)^2 \\
			& + (q_i \cos(\alpha) - q_{i+1} \sin(\alpha) - p_i \cos(\alpha) + p_{i+1} \sin(\alpha))^2 \\ 
			& + (q_i \sin(\alpha) + q_{i+1} \cos(\alpha) - p_i \sin(\alpha) - p_{i+1} \cos(\alpha))^2 \\
			={} & \sum_{j=1}^{i-1} (q_j - p_j)^2 + \sum_{j=i+2}^{d} (q_j - p_j)^2 \\
			& + (\cos(\alpha)(q_i - p_i) +  \sin(\alpha)(p_{i+1} - q_{i+1}))^2 \\
			& + (\sin(\alpha)(q_i - p_i) + \cos(\alpha)(q_{i+1} - p_{i+1}))^2 \\
			={} & \sum_{j=1}^{i-1} (q_j - p_j)^2 + \sum_{j=i+2}^{d} (q_j - p_j)^2 \\
			& + \cos^2(\alpha)(q_i - p_i)^2 + \sin^2(\alpha)(q_i - p_i)^2 \\
			& + \cos^2(\alpha)(q_{i+1} - p_{i+1})^2 + \sin^2(\alpha)(p_{i+1} - q_{i+1})^2 \\
			& + 2\cos(\alpha)(q_i - p_i)\sin(\alpha)(p_{i+1} - q_{i+1}) \\
			& + 2\sin(\alpha)(q_i - p_i)\cos(\alpha)(q_{i+1} - p_{i+1}) \\ 
			={} & \sum_{j=1}^{i-1} (q_j - p_j)^2 + \sum_{j=i+2}^{d} (q_j - p_j)^2 + (q_i - p_i)^2 + (q_{i+1} - p_{i+1})^2 \\
			& + 2\cos(\alpha)\sin(\alpha)(q_i p_{i+1} - q_i q_{i+1} - p_i p_{i+1} + p_i q_{i+1}) \\
			& + 2\cos(\alpha)\sin(\alpha)(q_i q_{i+1} - q_i p_{i+1} - p_i q_{i+1} + p_i p_{i+1}) \\
			={} & \sum_{j=1}^{d} (q_j - p_j)^2 = \eucl{p}{q}^2
			\end{align*}
			We use the fact that for any $x \in \R$ it holds that $\cos^2(x) + \sin^2(x) = 1$ and the fact that for any $x, y \in \R$ it holds that $(x-y)^2 = (y-x)^2$.
			
			Taking the square-root of both sides proves the claim.
		\end{proof}
		\begin{proposition}
			\label{prop:translation_isometry}
			For a fixed vector $\vv{x} \in \euclideanspace^d$ the motion $\ptrans$ is an isometry that embeds the euclidean space into itself.
		\end{proposition}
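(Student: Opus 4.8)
The plan is to verify the defining equation of \cref{def:isometry} directly from the coordinate formula for translation given in \cref{def:motions}. Fix the vector $\vv{x} \eqdef (x_1, \dots, x_d)$ and let $p \eqdef (p_1, \dots, p_d)$ and $q \eqdef (q_1, \dots, q_d)$ be two arbitrary points of $\euclideanspace^d$. By \cref{def:motions} we have $\trans{p}{\vv{x}} = (p_1 + x_1, \dots, p_d + x_d)$ and $\trans{q}{\vv{x}} = (q_1 + x_1, \dots, q_d + x_d)$. First I would record the coordinate-wise observation that for every $j \in \{1, \dots, d\}$ the translation summand cancels in the difference, i.e.\ $(p_j + x_j) - (q_j + x_j) = p_j - q_j$.

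Plugging this into the definition of $\peucl$ from \cref{def:euclideanspace} then yields
\[
\eucl{\trans{p}{\vv{x}}}{\trans{q}{\vv{x}}} = \sqrt{\sum_{j=1}^{d} \left( (p_j + x_j) - (q_j + x_j) \right)^2} = \sqrt{\sum_{j=1}^{d} (p_j - q_j)^2} = \eucl{p}{q},
\]
which is exactly the isometry condition. Since $\trans{\cdot}{\vv{x}}$ maps $\euclideanspace^d$ into $\euclideanspace^d$, this shows that $\ptrans$ embeds the euclidean space into itself, proving the claim.

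Finally, although it is not strictly needed for the statement, I would add the remark that $\ptrans$ is invertible with inverse $\trans{\cdot}{-\vv{x}}$, since $(p + \vv{x}) + (-\vv{x}) = p$; by the observation following \cref{def:isometry} the inverse is then again an isometry, so $\ptrans$ is in fact a bijective isometry. There is essentially no obstacle in this proof: the entire content is the cancellation of $\vv{x}$ in the coordinate differences, which is the trivial analogue of (and far shorter than) the trigonometric cancellations carried out in the proof of \cref{prop:rotation_isometry}.
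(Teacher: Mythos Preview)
Your proof is correct and follows essentially the same approach as the paper's own proof: a direct coordinate computation showing that the translation summand cancels in each difference, hence the euclidean distance is preserved. Your additional remark on invertibility is fine but redundant here, since the paper proves that separately in \cref{prop:translation_inverse}.
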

		\begin{proof}
				Let $p \eqdef (p_1, \dots, p_d),q \eqdef (q_1, \dots, q_d) \in \euclideanspace^d$ be two arbitrary points. Let \[p^\prime \eqdef (p^\prime_1, \dots, p^\prime_d) \eqdef \trans{p}{\vv{x}}\] and \[q^\prime \eqdef (q^\prime_1, \dots, q^\prime_d) \eqdef \trans{q}{\vv{x}}.\] We have:
				\begin{align*}
					\eucl{\trans{p}{\vv{x}}}{\trans{q}{\vv{x}}} = \sqrt{\sum_{i=1}^d (q^\prime_i - p^\prime_i)^2} = \sqrt{\sum_{i=1}^{d} (q_i + \vv{x} - p_i - \vv{x})^2} = \eucl{p}{q}
				\end{align*}
		\end{proof}
		Finally we show that these motions are also invertible.
		\begin{proposition}
			\label{prop:rotation_inverse}
			For a fixed $i \in \{1, \dots, d-1\}$ and a fixed $\alpha \in  \R$ the motion $\protate$ is invertible.
		\end{proposition}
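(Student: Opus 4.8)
The natural approach is to exhibit an explicit inverse map. Since $\protate$ rotates the plane spanned by the $i$\textsuperscript{th} and $(i+1)$\textsuperscript{th} axes counter-clockwise by $\alpha$ (and fixes all other coordinates), geometric intuition says the inverse should be the rotation in the same plane by $-\alpha$, i.e. $\rotate{\cdot}{i}{-\alpha}$. So the plan is: first observe that by \cref{def:motions} the map $\rotate{\cdot}{i}{-\alpha}$ is itself a well-defined instance of $\protate$ for the same fixed $i$ (one only needs $-\alpha \in [-2\pi, 2\pi]$, which holds since $\alpha \in [-2\pi,2\pi]$; if one prefers to allow arbitrary $\alpha \in \R$ as the statement says, one can reduce modulo $2\pi$ or simply note the coordinate formulas make sense for any real argument). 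Then I would verify directly that composing the two rotations in either order yields the identity on $\euclideanspace^d$.

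The verification is a short coordinate computation. Take an arbitrary $p \eqdef (p_1,\dots,p_d)$, set $q \eqdef \rotate{p}{i}{\alpha}$ and then $r \eqdef \rotate{q}{i}{-\alpha}$. For every index $j \notin \{i,i+1\}$ both rotations act as the identity, so $r_j = p_j$ trivially. For the two active coordinates I would substitute $q_i = p_i\cos\alpha - p_{i+1}\sin\alpha$ and $q_{i+1} = p_i\sin\alpha + p_{i+1}\cos\alpha$ into the formulas for $r_i = q_i\cos(-\alpha) - q_{i+1}\sin(-\alpha) = q_i\cos\alpha + q_{i+1}\sin\alpha$ and $r_{i+1} = q_i\sin(-\alpha) + q_{i+1}\cos(-\alpha) = -q_i\sin\alpha + q_{i+1}\cos\alpha$, and then collect terms. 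Using $\cos^2\alpha + \sin^2\alpha = 1$ (the same identity already invoked in the proof of \cref{prop:rotation_isometry}) and the evenness/oddness of cosine and sine, the cross terms cancel and one is left with $r_i = p_i$, $r_{i+1} = p_{i+1}$. Hence $\rotate{\rotate{p}{i}{\alpha}}{i}{-\alpha} = p$, and by the same computation with the roles of $\alpha$ and $-\alpha$ swapped, $\rotate{\rotate{p}{i}{-\alpha}}{i}{\alpha} = p$ as well. Therefore $\rotate{\cdot}{i}{-\alpha}$ is a two-sided inverse of $\protate$, which establishes invertibility.

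There is essentially no obstacle here — the only thing to be a little careful about is the bookkeeping that $\rotate{\cdot}{i}{-\alpha}$ is a legitimate motion in the sense of \cref{def:motions} (the angle-range convention), and that the trigonometric cross-terms genuinely cancel; both are routine. It is also worth remarking, as a sanity check consistent with \cref{prop:rotation_isometry}, that an invertible isometry of a space onto itself must be a bijection, so producing the explicit inverse is the cleanest route and it simultaneously shows $\protate$ is onto.
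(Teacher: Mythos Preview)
Your proposal is correct and follows essentially the same route as the paper: exhibit $\rotate{\cdot}{i}{-\alpha}$ as the inverse and verify by a direct coordinate computation, fixing the inactive coordinates trivially and using $\cos^2\alpha+\sin^2\alpha=1$ together with the parity of sine and cosine on the two active coordinates. The only minor difference is that you explicitly check both compositions, whereas the paper writes out only one direction (the other being immediate by swapping $\alpha$ and $-\alpha$).
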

		\begin{proof}
			We show that for every $p \eqdef (p_1, \dots, p_d) \in \euclideanspace^d$ and $q = \rotate{p}{i}{\alpha}$ it holds that $q^\prime \eqdef (q^\prime_1, \dots, q^\prime_d) \eqdef \rotate{q}{i}{-\alpha} = p$.
			
			By \cref{def:motions} we have \[\forall j \in \{1, \dots, d\}: (j \neq i \wedge j \neq i + 1) \Rightarrow (p_{j} = q_j)\] and \[\forall j \in \{1, \dots, d\}: (j \neq i \wedge j \neq i + 1) \Rightarrow (q_{j} = q^\prime_j),\]
			hence \[\forall j \in \{1, \dots, d\}: (j \neq i \wedge j \neq i + 1) \Rightarrow (p_{j} = q^\prime_{j}).\]
			Further we have \[q_i =\cos(\alpha) p_i - \sin(\alpha) p_{i+1}\] and \[q_{i+1} = \sin(\alpha) p_i + \cos(\alpha) p_{i+1}.\]
			Using the fact that $\cos(x) = \cos(-x)$ and $-\sin(x) = \sin(-x)$, we obtain
			\begin{align*}
				q_i^\prime ={} & \cos(-\alpha) q_i - \sin(-\alpha) q_{i+1} = \cos(\alpha) q_i + \sin(\alpha) q_{i+1} \\
				={} & \cos(\alpha) \cdot (\cos(\alpha) p_i - \sin(\alpha) p_{i+1}) + \sin(\alpha) \cdot (\sin(\alpha) p_i + \cos(\alpha) p_{i+1}) \\
				={} & \cos^2(\alpha) p_i + \sin^2(\alpha) p_i + \sin(\alpha)\cos(\alpha) p_{i+1} - \cos(\alpha)\sin(\alpha) p_{i+1} \\
				={} & p_i
			\end{align*}
			and
			\begin{align*}
				q_{i+1}^\prime ={} & \sin(-\alpha) q_i + \cos(-\alpha) q_{i+1} = -\sin(\alpha) q_i + \cos(\alpha) q_{i+1} \\
				={} & -\sin(\alpha) \cdot (\cos(\alpha) p_i - \sin(\alpha) p_{i+1}) + \cos(\alpha) \cdot (\sin(\alpha) p_i + \cos(\alpha) p_{i+1}) \\
				={} & \sin^2(\alpha) p_{i+1} + \cos^2(\alpha) p_{i+1} + \cos(\alpha)\sin(\alpha) p_i - \sin(\alpha)\cos(\alpha) p_i \\
				={} & p_{i+1},
			\end{align*}
			which implies that $q^\prime = p$, thus finishes the proof.
		\end{proof}
		\begin{proposition}
			\label{prop:translation_inverse}
			For a fixed vector $\vv{x} \eqdef (x_1, \dots, x_d) \in \euclideanspace^d$ the motion $\ptrans$ is invertible.
		\end{proposition}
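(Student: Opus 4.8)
The plan is to exhibit an explicit two-sided inverse, namely the translation in the opposite direction. Concretely, set $-\vv{x} \eqdef (-x_1, \dots, -x_d) \in \euclideanspace^d$ and consider the motion $\trans{\cdot}{-\vv{x}}$. I would claim this is the inverse of $\trans{\cdot}{\vv{x}}$.

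First I would fix an arbitrary point $p \eqdef (p_1, \dots, p_d) \in \euclideanspace^d$ and set $q \eqdef \trans{p}{\vv{x}}$, so that by \cref{def:motions} we have $q = p + \vv{x}$, i.e. $q_i = p_i + x_i$ for each $i \in \{1, \dots, d\}$. Then I would compute $\trans{q}{-\vv{x}} = q + (-\vv{x})$ coordinate-wise: the $i$\textsuperscript{th} coordinate is $q_i + (-x_i) = (p_i + x_i) - x_i = p_i$, so $\trans{\trans{p}{\vv{x}}}{-\vv{x}} = p$. Running the identical argument with the roles of $\vv{x}$ and $-\vv{x}$ exchanged gives $\trans{\trans{p}{-\vv{x}}}{\vv{x}} = p$ for every $p$. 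Hence $\trans{\cdot}{-\vv{x}} \circ \trans{\cdot}{\vv{x}}$ and $\trans{\cdot}{\vv{x}} \circ \trans{\cdot}{-\vv{x}}$ are both the identity on $\euclideanspace^d$, which is exactly the statement that $\ptrans$ (for the fixed vector $\vv{x}$) is invertible, with inverse $\ptrans$ in the direction of $-\vv{x}$.

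There is essentially no obstacle here: the only point requiring any care is to unwind the affine-space definition of the operation $+$ correctly when applying a vector to a point, and to observe that $-\vv{x}$ is again a legitimate vector of $\euclideanspace^d$ so that $\trans{\cdot}{-\vv{x}}$ is a well-defined motion. Everything else is a one-line componentwise cancellation. As a remark, combined with \cref{prop:translation_isometry} and the observation in the text that the inverse of an isometry is an isometry, this also shows $\trans{\cdot}{-\vv{x}}$ is itself an isometry, though that is not needed for the present claim.
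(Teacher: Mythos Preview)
Your proof is correct. The paper argues invertibility by showing that for every $q$ the preimage under $\trans{\cdot}{\vv{x}}$ is unique (a short contradiction argument on the coordinates) and then observes that this preimage is $\trans{q}{-\vv{x}}$; you instead exhibit $\trans{\cdot}{-\vv{x}}$ as a two-sided inverse directly and verify both compositions. The underlying computation is the same componentwise cancellation $p_i + x_i - x_i = p_i$, so the two proofs are essentially the same, with yours packaged slightly more cleanly.
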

		\begin{proof}
			We show that for every $q \eqdef (q_1, \dots, q_d) \in \euclideanspace^d$ there exists one, and only one, $p_1 \eqdef (p_{1,1}, \dots, p_{1,d}) \in \euclideanspace^d$ with $q = \trans{p_1}{\vv{x}}$.
			
			Assume there exists a $p_2 \eqdef (p_{2,1}, \dots, p_{2,d}) \in \euclideanspace^d$ with $p_2 \neq p_1$ and $q = \trans{p_2}{\vv{x}}$. By \cref{def:motions} we have 
			\begin{align*}
				&& \forall i \in \{1,\dots,d\}: q_{i} = p_{2,i} + x_i = p_{1,i} + x_i \\
				\Leftrightarrow && \forall i \in \{1,\dots,d\}: p_{2,i} = p_{1,i} \label{trbij:eq1} \tag{I}
			\end{align*}
			By \cref{trbij:eq1} it is implied that $p_2 = p_1$ which leads to a contradiction and finishes the proof. Also, it can be observed that $p_1 = p_2 = \trans{q}{-\vv{x}}$.  
		\end{proof}
		\subsection{Objects in the Euclidean Space}
		We define the central objects, that are the foundations of most of the techniques we employ to construct \coreset[s]. One object that is very central in most of the techniques utilized for the well-known \emph{$k$-means} clustering problem is the centroid.
		\begin{definition}[centroid]
			\label{def:centroid}
			Let $\P \subset \euclideanspace^d$ be a set of points. By $\centroid{\P} \eqdef \frac{1}{\vert \P \vert} \sum\limits_{p \in \P} p$ we denote the centroid of the points in $P$.
		\end{definition}
		The centroid has the feature that it is the optimal \emph{$1$-means} for a set of points, which can be shown analytically:
		\begin{proposition}[{\citet[Lemma 2.1]{zauberformel_herkunft}}]
			\label{prop:zauberformel}
			Let $P \subset \euclideanspace^d$ be a set of points. The centroid of $P$, \ie $\centroid{P}$, is the point that minimizes the sum of squared euclidean distances with respect to the points in $P$.
		\end{proposition}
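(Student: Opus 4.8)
The plan is to prove the stronger, now-standard decomposition identity: for every point $c \in \euclideanspace^d$,
\[ \sum_{p \in P} \normlzs{p - c} = \sum_{p \in P} \normlzs{p - \centroid{P}} + \vert P \vert \cdot \normlzs{\centroid{P} - c}, \]
from which the claim follows at once, since the first term on the right does not depend on $c$ while the second is non-negative and vanishes precisely when $c = \centroid{P}$.

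First I would fix an arbitrary $c \in \euclideanspace^d$ and rewrite each summand by inserting $\centroid{P}$: using $\normlzs{p-c} = \scalarm{p-c}{p-c}$ and the bilinearity of the scalar product,
\[ \normlzs{p - c} = \normlzs{(p - \centroid{P}) + (\centroid{P} - c)} = \normlzs{p - \centroid{P}} + 2 \scalarm{p - \centroid{P}}{\centroid{P} - c} + \normlzs{\centroid{P} - c}. \]
Summing this over all $p \in P$, the first and third terms directly produce the two terms of the identity (the third one summed $\vert P \vert$ times), so everything comes down to showing that the cross term cancels. Pulling the constant vector $\centroid{P} - c$ out of the sum,
\[ \sum_{p \in P} \scalarm{p - \centroid{P}}{\centroid{P} - c} = \scalarm{\sum_{p \in P}(p - \centroid{P})}{\centroid{P} - c} = \scalarm{\left(\sum_{p \in P} p\right) - \vert P \vert \centroid{P}}{\centroid{P} - c} = 0, \]
where the last equality is exactly \cref{def:centroid}, i.e. $\sum_{p \in P} p = \vert P \vert \cdot \centroid{P}$.

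Finally I would conclude: since $\vert P \vert \cdot \normlzs{\centroid{P} - c} \geq 0$ for all $c$, with equality if and only if $c = \centroid{P}$ by positive-definiteness of the $\ell_2$-norm, the function $c \mapsto \sum_{p \in P} \normlzs{p - c}$ attains its minimum uniquely at $c = \centroid{P}$. There is essentially no serious obstacle here; the only point requiring care is making the cross-term cancellation explicit, as that is the single place where the specific definition of the centroid (rather than merely "some point") is used. An alternative route is a calculus argument — the map $c \mapsto \sum_{p \in P} \normlzs{p - c}$ is a strictly convex quadratic, so its unique minimizer is its stationary point, and setting the gradient $2\sum_{p \in P}(c - p)$ to zero yields $c = \centroid{P}$ — but the algebraic identity above is cleaner and additionally quantifies the suboptimality of any candidate $c$, which may be convenient to reuse later.
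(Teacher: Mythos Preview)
Your proof is correct and follows essentially the same route as the paper: both insert $\centroid{P}$ into $\normlzs{p-c}$, expand via the scalar product, and use $\sum_{p\in P}(p-\centroid{P})=0$ to kill the cross term, arriving at the identity $\sum_{p\in P}\normlzs{p-c}=\sum_{p\in P}\normlzs{p-\centroid{P}}+\vert P\vert\cdot\normlzs{\centroid{P}-c}$. Your version is slightly more explicit about uniqueness of the minimizer, which the paper leaves implicit.
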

		\begin{proof}
			Let $q \in \euclideanspace^d$ be an arbitrary but fixed point. We have:
			\begin{align*}
				\sum_{p \in P} \eucl{p}{q}^2 ={} & \sum_{p \in P} \normlzs{p - q} = \sum_{p \in P} \normlzs{p - \centroid{P} + \centroid{P} - q} \\
				={} & \sum_{p \in P} \scalarm{p-\centroid{P}+\centroid{P}-q}{p-\centroid{P}+\centroid{P}-q} \\
				={} & \sum_{p \in P} \left[\normlzs{p - \centroid{P}} + 2 \scalarm{p-\centroid{P}}{\centroid{P}-q} + \normlzs{\centroid{P}-q}\right] \\
				={} & \vert P \vert \cdot \normlzs{\centroid{P}-q} + \sum_{p \in P} \normlzs{p - \centroid{P}} + 2 \cdot \sum_{p \in P} \scalarm{p - \centroid{P}}{\centroid{P}-q} \\
				={} & \vert P \vert \cdot \normlzs{\centroid{P}-q} + \sum_{p \in P} \normlzs{p - \centroid{P}} + 2 \cdot (\centroid{P}-q)^T  \underbrace{\left( \sum_{p \in P} p - \sum_{p \in P} \centroid{P} \right)}_{=0} \\
				={} & \vert P \vert \cdot \normlzs{\centroid{P}-q} + \sum_{p \in P} \normlzs{p - \centroid{P}} = \vert P \vert \cdot \eucl{\centroid{P}}{q}^2 + \sum_{p \in P} \eucl{p}{\centroid{P}}^2 
			\end{align*}
			The claim follows by setting $q = \centroid{P}$.
		\end{proof}
		Line segments are the first objects we use that go beyond points.
		\begin{definition}[line segment, \conferre{\citet{prasolov_geometry}}]
			A line segment is a set of points $\overline{pq} \eqdef \{(1-\gamma) \cdot p + \gamma\cdot q \mid \gamma \in [0,1] \}$ with endpoints $p, q \in \euclideanspace^d$.
		\end{definition}
		Curves are the central objects in this work. We are solely interested in curves that are composed of line segments, \ie \emph{polygonal} curves which have the $d$-dimensional euclidean space as so called ambient space.
		\begin{definition}[curve, \conferre{\citet[Definition 1]{alt_godau_frechet}}]
			A (parameterized) curve is a continuous mapping $\tau \colon [a,b] \rightarrow \euclideanspace^d$, where $[a,b] \subseteq \R_{\geq 0}$. A polygonal curve is a curve $\tau$, where exist $t_1, \dots, t_m \in [a,b]$ with $t_1 \leq \dots \leq t_m$ and $t_1 = a$, $t_m = b$ such that $ \bigcup\limits_{t \in [a,b]} \{\tau(t)\} = \bigcup\limits_{i=1}^{m-1} e_i$, where the $e_i \eqdef \overline{v_i v_{i+1}}$ are line segments, which we call the edges of the curve and the $v_i \eqdef \tau(t_i)$ are the vertices. We call $m$ the complexity of $\tau$, denoted by $\vert \tau \vert$. The equivalence class of polygonal curves with complexity at least $2$ and at most $m$ is denoted \eqcfre{m}.
		\end{definition}
		For a curve $\tau \in \eqcfre{m}$ we call $\tau(0)$ the initial point of the curve and $\tau(1)$ the end point of the curve.

		Grids are objects that can easily be used to employ a simple scheme for partitioning a set of points. 
		\begin{definition}[grid, \conferre{\citet{har_peled_geo_approx}}]
			\label{def:grid}
			Let $p \eqdef (p_1, \dots, p_d) \in \euclideanspace^d$ be a point. The cube of edge length $l \in \R_{\geq 0}$ around $p$ is defined $\cube{p, l} \eqdef \{q \eqdef (q_1, \dots, q_d) \in \euclideanspace^d \mid \max\limits_{i=1}^d \vert p_i - q_i \vert = \frac{l}{2} \}$. It has volume $l^d$. The grid of cell length $m \in \R_{\geq 0}$ with respect to $\mathfrak{C} \eqdef \cube{p, l}$ is defined \[\grid{\mathfrak{C}, m} \eqdef \bigcup\limits_{i_1, \dots, i_d \in \{ -k, k \mid k \in \{ 1, \dots, \lceil \nicefrac{l}{2m} \rceil \} \}} \cell{\mathfrak{G}}{i_1, \dots, i_d},\] where \[\cell{\mathfrak{G}}{i_1, \dots, i_d} \eqdef \{ (q_1, \dots, q_d) \in \euclideanspace^d \mid \forall j \in \{1, \dots, d\}: q_j \in \range{p_j + (i_j - \nicefrac{i_j}{\vert i_j \vert}) \cdot m}{p_j + i_j \cdot m} \}\] is the grid cell of $\mathfrak{G} \eqdef \grid{\mathfrak{C}, m}$ with id $(i_1, \dots, i_d)$ and \[
				\range{x}{y} \eqdef
				\begin{cases}
					[x,y], & \text{if } x \leq y \\
					[y,x], & \text{else}
				\end{cases}
			\] is the range enclosed by $x$ and $y$.
		\end{definition}
		Note that every cell of a grid is a cube itself. The following observations lie near:
		\begin{observation}
			\label{obs:cube_grid}
			Let $p \in \euclideanspace^d$ be a point and $l \in \R_{\geq 0}$. By \cref{def:grid} it holds that $\mathfrak{C} \eqdef \cube{p, l} = \grid{\mathfrak{C}, m}$ for any $m \in \R_{\geq 0}$.
		\end{observation}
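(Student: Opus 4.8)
The plan is to establish the two set inclusions $\grid{\mathfrak{C}, m} \subseteq \mathfrak{C}$ and $\mathfrak{C} \subseteq \grid{\mathfrak{C}, m}$ directly from \cref{def:grid}. The one structural observation that makes this quick is that the index set $\{-k, k \mid k \in \{1, \dots, \lceil \nicefrac{l}{2m} \rceil\}\}$ is the same in every coordinate, so that the $d$-fold union of the product cells $\cell{\mathfrak{G}}{i_1,\dots,i_d} = \prod_{j=1}^d I_{i_j}$ distributes over the Cartesian product: a point $q = (q_1,\dots,q_d)$ lies in $\grid{\mathfrak{C}, m}$ iff for each coordinate $j$ we have $q_j \in \bigcup_i I_i$, where $I_i \eqdef \range{p_j + (i - \nicefrac{i}{\vert i \vert})m}{p_j + im}$ and $i$ ranges over $\pm 1, \dots, \pm\lceil \nicefrac{l}{2m}\rceil$.

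Next I would evaluate this coordinate-wise union. Splitting on the sign of $i$, the range $I_i$ equals $[p_j + (i-1)m,\, p_j + im]$ for $i > 0$ and $[p_j + im,\, p_j + (i+1)m]$ for $i < 0$; these are consecutive intervals of length $m$, so they telescope to $\bigcup_i I_i = [p_j - \lceil \nicefrac{l}{2m}\rceil m,\ p_j + \lceil \nicefrac{l}{2m}\rceil m]$. Reading $\mathfrak{C} = \cube{p,l}$ as the solid cube -- consistent with its stated volume $l^d$ -- its coordinate-$j$ extent is $[p_j - \nicefrac{l}{2},\, p_j + \nicefrac{l}{2}]$. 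Therefore $\grid{\mathfrak{C}, m} = \prod_{j=1}^d [p_j - \lceil \nicefrac{l}{2m}\rceil m,\ p_j + \lceil \nicefrac{l}{2m}\rceil m]$, which contains $\mathfrak{C} = \prod_{j=1}^d [p_j - \nicefrac{l}{2},\, p_j + \nicefrac{l}{2}]$ because $\lceil \nicefrac{l}{2m}\rceil m \geq \nicefrac{l}{2}$, and the two boxes coincide exactly when $2m$ divides $l$; in the latter case both inclusions hold and we are done.

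The hard part here is bookkeeping rather than mathematics: one must handle the two sign cases in $i - \nicefrac{i}{\vert i \vert}$ carefully when checking the telescoping, and be honest about the ceiling -- for an arbitrary $m$ the grid is the slightly larger symmetric box of half-width $\lceil \nicefrac{l}{2m}\rceil m$, so literal equality $\mathfrak{C} = \grid{\mathfrak{C}, m}$ requires $2m \mid l$ (which holds in the applications, where the cell length is chosen as a dyadic fraction of the cube's side) together with the convention $l, m > 0$. Modulo pinning down these conventions, the proof is exactly the distribute-the-union-then-telescope computation above followed by routine set algebra for the two inclusions; I would either state the divisibility hypothesis explicitly, or else restate the observation as ``$\grid{\mathfrak{C}, m}$ is the smallest grid-aligned cube containing $\mathfrak{C}$''.
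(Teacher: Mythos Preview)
The paper does not supply a proof: the statement is recorded as an ``observation'' that is asserted to follow directly from \cref{def:grid}, with no further argument. Your distribute-the-union-then-telescope computation is exactly the unpacking the paper leaves to the reader, and it is correct.

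Where you go beyond the paper is in noticing that the literal equality $\mathfrak{C} = \grid{\mathfrak{C}, m}$ fails unless $l/(2m)$ is an integer; otherwise the grid is the strictly larger symmetric box of half-width $\lceil l/(2m)\rceil\, m$. This is a genuine imprecision in the paper's statement, and your suggested fix (state the inclusion $\mathfrak{C} \subseteq \grid{\mathfrak{C}, m}$, or add the divisibility hypothesis) is the right one. One small correction to your parenthetical, though: in the paper's actual uses the divisibility does \emph{not} hold --- for instance in \cref{algo:klcenter_coreset_1} one has $l = 2\,\approxcost{\Tau}$ and $m = \epsilon\,\approxcost{\Tau}/(6\sqrt{d})$, so $l/(2m) = 6\sqrt{d}/\epsilon$, which is not an integer in general. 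Fortunately, every invocation of the observation in the paper (in the proofs of \cref{theo:klcenter_coreset_1} and \cref{lem:envelope_cover}) only needs the containment $\mathfrak{C} \subseteq \grid{\mathfrak{C}, m}$ to push points from balls into cubes into grids, so the weaker statement you isolate is precisely what is required.
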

		\begin{observation}
			\label{obs:ball_grid}
			Let $p \in \euclideanspace^d$ be a point and $B \eqdef \{ q \in \euclideanspace^d \mid \eucl{p}{q} \leq r \}$ be the closed ball with radius $r \in \R_{\geq 0}$ around $p$. By \cref{def:grid} it holds that $B \subseteq \cube{p, 2r}$.
		\end{observation}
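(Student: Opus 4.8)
The plan is to verify the inclusion $B \subseteq \cube{p,2r}$ pointwise. I would take an arbitrary point $q \eqdef (q_1, \dots, q_d) \in B$ and show that it satisfies the defining condition of $\cube{p,2r}$ from \cref{def:grid}, namely that $\max_{i=1}^d \vert p_i - q_i \vert \leq \frac{2r}{2} = r$, reading that definition as the solid cube.

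First I would unfold the hypothesis $q \in B$, which by definition of $B$ means $\eucl{p}{q} = \sqrt{\sum_{i=1}^d (p_i - q_i)^2} \leq r$. Then, for each fixed coordinate $j \in \{1, \dots, d\}$, I would bound a single term of the sum by the whole sum, using that every summand is non-negative: $(p_j - q_j)^2 \leq \sum_{i=1}^d (p_i - q_i)^2$. Taking square roots yields $\vert p_j - q_j \vert \leq \eucl{p}{q} \leq r$. Since $j$ was arbitrary, it follows that $\max_{i=1}^d \vert p_i - q_i \vert \leq r$, which is exactly the membership condition of $\cube{p,2r}$ with edge length $l = 2r$ (where $\frac{l}{2} = r$). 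Hence $q \in \cube{p,2r}$, and as $q \in B$ was arbitrary, $B \subseteq \cube{p,2r}$.

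I do not expect any genuine obstacle here; this is a routine $\ell_2$-versus-$\ell_\infty$ comparison. The only point worth stating explicitly is the matching of the edge-length parameter to the half-edge-length that appears in the cube's definition in \cref{def:grid}: a cube of edge length $2r$ has "radius" $r$ in the $\ell_\infty$ sense, which is why the edge length $2r$ (rather than $r$) is the right choice to contain a ball of radius $r$. Optionally one could remark that the inclusion is tight, since the ball meets each of the $2d$ facets of the cube, but that is not needed for the statement.
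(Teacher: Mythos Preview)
Your argument is correct and is exactly the routine $\ell_\infty \leq \ell_2$ comparison one would expect; the paper itself states this as an \emph{Observation} and gives no proof, so your write-up simply fills in the evident details. Your remark about reading $\cube{p,2r}$ as the solid cube (with $\leq$ rather than the literal $=$ in \cref{def:grid}) is appropriate and matches how the paper uses the cube throughout.
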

		Usually balls and spheres are defined as sets of points with respect to the euclidean distance. Here we define a ball to be a set of curves with respect to the Fréchet distance, which is formally introduced in the following subsection.
		\begin{definition}[ball]
			\label{def:ball}
			For $\tau \in \eqcfre{m}$ and $r \in \R_{\geq 0}$, we define \[\ball{\tau}{r} \eqdef \{ \tau^\prime \in \eqcfre{m} \mid \frechet{\tau}{\tau^\prime} \leq r \}.\]
		\end{definition}
		\subsection{Similarity Measures for Curves}
		Naturally, to be able to compare a set of curves among each other we need a notion of distance for curves. There are similar of these notions. We will work with the Fréchet distance which is a similarity measure that yields good results for most applications, because it takes the location and the ordering of the points on the curves into account.
		\begin{figure}
			\centering
			\def\svgwidth{\textwidth}
\begingroup%
  \makeatletter%
  \providecommand\color[2][]{%
    \errmessage{(Inkscape) Color is used for the text in Inkscape, but the package 'color.sty' is not loaded}%
    \renewcommand\color[2][]{}%
  }%
  \providecommand\transparent[1]{%
    \errmessage{(Inkscape) Transparency is used (non-zero) for the text in Inkscape, but the package 'transparent.sty' is not loaded}%
    \renewcommand\transparent[1]{}%
  }%
  \providecommand\rotatebox[2]{#2}%
  \newcommand*\fsize{\dimexpr\f@size pt\relax}%
  \newcommand*\lineheight[1]{\fontsize{\fsize}{#1\fsize}\selectfont}%
  \ifx\svgwidth\undefined%
    \setlength{\unitlength}{603.69920513bp}%
    \ifx\svgscale\undefined%
      \relax%
    \else%
      \setlength{\unitlength}{\unitlength * \real{\svgscale}}%
    \fi%
  \else%
    \setlength{\unitlength}{\svgwidth}%
  \fi%
  \global\let\svgwidth\undefined%
  \global\let\svgscale\undefined%
  \makeatother%
  \begin{picture}(1,0.11173834)%
    \lineheight{1}%
    \setlength\tabcolsep{0pt}%
    \put(0,0){\includegraphics[width=\unitlength]{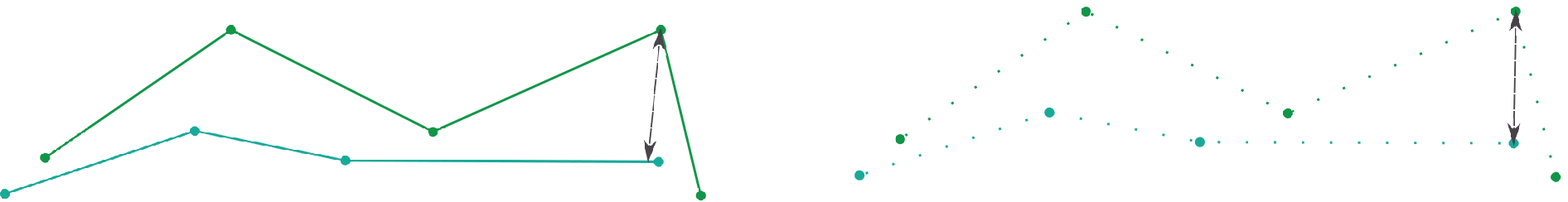}}%
    \put(0.60390429,0.13099974){\color[rgb]{0,0,0}\makebox(0,0)[lt]{\begin{minipage}{0.83713353\unitlength}\raggedright \end{minipage}}}%
    \put(0.39,0.05209105){\color[rgb]{0.28235294,0.2627451,0.28627451}\makebox(0,0)[lt]{\lineheight{1.25}\smash{\begin{tabular}[t]{l}$\pfrechet$\end{tabular}}}}%
    \put(0.93,0.06222527){\color[rgb]{0.28235294,0.2627451,0.28627451}\makebox(0,0)[lt]{\lineheight{1.25}\smash{\begin{tabular}[t]{l}$\pdfrechet$\end{tabular}}}}%
  \end{picture}%
\endgroup%

			\caption{Comparison of the (continuous) Fréchet distance to the discrete Fréchet distance. The Fréchet distance is usually visualized by a man walking the first curve and his dog, who is on a leash, walking the second one. The man varies his speed such that the maximum leash length is minimal throughout the walk. This length equals the Fréchet distance between the curves. With respect to the discrete Fréchet distance both can only hop from one vertex of their curve to another. In both cases the man and the dog are not allowed to go back to a point where they have already been. Standing still is allowed, though.}
			\label{fig:frechet_comp}
		\end{figure}
		\begin{definition}[Fréchet distance, \conferre{\citet[Definition 2]{alt_godau_frechet}}]
			\label{def:frechet_distance}
			Let $f$ be a continuous injective mapping. We call $f$ a reparamterization. 
			
			Let $\FF$ denote the set of non-decreasing reparameterizations $f\colon [0,1] \rightarrow [0,1]$ with $f(0) = 0$ and $f(1) = 1$. The Fréchet distance between two curves $\tau$ and $\sigma$ is defined as follows:
			\[ \frechet{\tau}{\sigma} \eqdef \inf_{f \in \FF} \max_{t \in [0,1]} \eucl{\tau(f(t))}{\sigma(t)}. \]
		\end{definition}
		If the domains of $\tau$ and $\sigma$ are not equal to $[0,1]$ (assume both have domain $[a,b]$), let $f\colon [0,1] \rightarrow [a,b]$ be a non-decreasing reparameterization, we obtain $\tau^\prime\colon [0,1] \rightarrow \euclideanspace^d, x \mapsto \tau(f(x))$ (resp. $\sigma^\prime \colon [0, 1] \rightarrow \euclideanspace^d, x \mapsto \sigma(f(x))$), which have the same Fréchet distance to each other, as $\tau$ and $\sigma$, because the Fréchet distance is invariant under reparameterizations. A visualization of the Fréchet distance is depicted in \cref{fig:frechet_comp}.
		
		Because the reparameterizations $f \in \FF$ have the restriction that $f(0) = 0$ and $f(1) = 1$ the following observation is immediate:
		\begin{observation}[{\citet[Observation 2.2]{clustering_time_series}}]
			\label{obs:frechet_endpoints}
			For every $\tau_1, \tau_2 \in \eqcfre{m}$ it holds that $\frechet{\tau_1}{\tau_2} \geq \max\{\eucl{\tau_1(0)}{\tau_2(0)}, \eucl{\tau_1(1)}{\tau_2(1)}\}$. It immediately follows that the Fréchet distance between two line segments $\overline{pq}, \overline{rs}$ is $\max\{ \eucl{p}{r}, \eucl{q}{s} \}$.
		\end{observation}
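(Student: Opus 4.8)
The plan is to obtain the lower bound directly from the endpoint constraints baked into the definition of the Fréchet distance, and then, for the line-segment case, to exhibit a single reparameterization that attains the matching upper bound.

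First I would fix an arbitrary reparameterization $f \in \FF$. By \cref{def:frechet_distance} every member of $\FF$ satisfies $f(0) = 0$ and $f(1) = 1$, so evaluating $\eucl{\tau_1(f(t))}{\tau_2(t)}$ at $t = 0$ and at $t = 1$ produces exactly $\eucl{\tau_1(0)}{\tau_2(0)}$ and $\eucl{\tau_1(1)}{\tau_2(1)}$, respectively. Hence $\max_{t \in [0,1]} \eucl{\tau_1(f(t))}{\tau_2(t)} \geq \max\{\eucl{\tau_1(0)}{\tau_2(0)}, \eucl{\tau_1(1)}{\tau_2(1)}\}$, and since the right-hand side does not depend on $f$, it persists after taking the infimum over $\FF$. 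This already yields the first claim.

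For the statement about line segments I would additionally prove the reverse inequality for $\tau_1 = \overline{pq}$ and $\tau_2 = \overline{rs}$. Parameterizing both segments affinely over $[0,1]$ and choosing the identity reparameterization $f(t) = t$, which lies in $\FF$, the distance at parameter $t$ equals $\normlz{(1-t)(p-r) + t(q-s)}$; by convexity of the norm this is at most $(1-t)\eucl{p}{r} + t\eucl{q}{s} \leq \max\{\eucl{p}{r}, \eucl{q}{s}\}$. Taking the maximum over $t$ and then the infimum over $\FF$ gives $\frechet{\overline{pq}}{\overline{rs}} \leq \max\{\eucl{p}{r}, \eucl{q}{s}\}$, while the first part (instantiated with the endpoints $p,r$ and $q,s$) provides the reverse inequality, so equality holds.

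I expect no real obstacle here: the inequality is essentially forced by the boundary conditions $f(0)=0$, $f(1)=1$, and the only computation of any substance is the convexity estimate in the segment case, which is routine. The single point worth noting is that \cref{def:frechet_distance} reparameterizes only one of the two curves rather than both, but since that reparameterization is still pinned at the endpoints, none of the above is affected.
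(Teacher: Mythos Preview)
Your proposal is correct and matches the paper's approach: the paper simply notes (in the sentence preceding the observation) that the lower bound is immediate from the endpoint constraints $f(0)=0$, $f(1)=1$ on every $f\in\FF$, and declares the line-segment equality to ``immediately follow'' without further argument. Your convexity estimate for the matching upper bound is exactly the routine detail the paper leaves implicit, so there is nothing to correct or contrast.
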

		We already stated that the curves of interest are composed of line segments, therefore it is beneficial to have a definition that enables us to assemble curves by concatenation.
		\begin{definition}[concatenation, \conferre {\citet[Definition 2.1]{clustering_time_series}}]
			Let $\tau_1\colon [a_1,b_1] \rightarrow \R^d$ and $\tau_2\colon [a_2,b_2] \rightarrow \R^d$ be curves with $0 \leq a_1 \leq b_1 \leq 1$ and $0 \leq a_2 \leq b_2 \leq 1$ and $\tau_1(b_1) = \tau_2(a_2)$. We call $\tau \colon [0,1] \rightarrow \R^d$ with
			\begin{align*}
				\tau(t) \eqdef (\tau_1 \oplus \tau_2)(t) \eqdef 
				\begin{cases}
					\tau_1(a_1 + (b_1 - a_1 + b_2 - a_2) \cdot t), & \text{if } t \leq \frac{b_1 - a_1}{b_1-a_1+b_2-a_2} \\
					\tau_2(b_2-(b_1-a_1+b_2-a_2)\cdot (1-t)), & \text{else}
				\end{cases}
			\end{align*}
			the \emph{concatenation} of $\tau_1$ and $\tau_2$.
		\end{definition}
		If two curves are the concatenations of each two respective (sub-)curves we can bound the Fréchet distance between the curves from the Fréchet distances of the respective sub-curves.
		\begin{observation}[{\citet[Observation 2.1]{clustering_time_series}}]
			\label{obs:subcurves}
			Let $\tau_1 \colon [a_1, b_1] \rightarrow \R^d$, $\tau_2 \colon [a_2, b_2] \rightarrow \R^d$ be curves with $\tau_1(b_1) = \tau_2(a_2)$ and $\tau \eqdef \tau_1 \oplus \tau_2$ be their concatenation. Also let $\sigma_1 \colon [a^\prime_1, b^\prime_1] \rightarrow \R^d$, $\sigma_2 \colon [a^\prime_2, b^\prime_2] \rightarrow \R^d$ be curves with $\sigma_1(b^\prime_1) = \sigma_2(a^\prime_2)$ and $\sigma \eqdef \sigma_1 \oplus \sigma_2$ be their concatenation. It holds that \[ \frechet{\tau}{\sigma} \leq \max\{\frechet{\tau_1}{\sigma_1}, \frechet{\tau_2}{\sigma_2}\}. \]
		\end{observation}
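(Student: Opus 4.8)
The plan is to prove the inequality by \emph{gluing} near-optimal reparameterizations for the two pairs of subcurves into a single reparameterization that witnesses the claimed distance between $\tau$ and $\sigma$. First I would reduce to the case where $\tau_1,\tau_2,\sigma_1,\sigma_2$ are all parameterized over $[0,1]$. Since the Fréchet distance is invariant under reparameterization (as observed after \cref{def:frechet_distance}), and since concatenating two curves is, up to reparameterization, independent of the particular parameter intervals chosen for the two pieces, replacing the subcurves by reparameterizations with domain $[0,1]$ changes neither $\frechet{\tau}{\sigma}$ nor either term $\frechet{\tau_1}{\sigma_1}$, $\frechet{\tau_2}{\sigma_2}$ on the right-hand side. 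After this reduction the concatenation formula places the seam of both $\tau$ and $\sigma$ at parameter $\tfrac12$: $\tau(t)=\tau_1(2t)$, $\sigma(t)=\sigma_1(2t)$ for $t\in[0,\tfrac12]$, and $\tau(t)=\tau_2(2t-1)$, $\sigma(t)=\sigma_2(2t-1)$ for $t\in[\tfrac12,1]$.

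Next, fix $\epsilon>0$ and, using the definition of the infimum, pick $f_1,f_2\in\FF$ with $\max_{u\in[0,1]}\eucl{\tau_1(f_1(u))}{\sigma_1(u)}\le\frechet{\tau_1}{\sigma_1}+\epsilon$ and $\max_{u\in[0,1]}\eucl{\tau_2(f_2(u))}{\sigma_2(u)}\le\frechet{\tau_2}{\sigma_2}+\epsilon$. Define $f\colon[0,1]\to[0,1]$ by $f(t)\eqdef\tfrac12 f_1(2t)$ for $t\in[0,\tfrac12]$ and $f(t)\eqdef\tfrac12+\tfrac12 f_2(2t-1)$ for $t\in[\tfrac12,1]$. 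I would then verify $f\in\FF$: each piece is continuous and strictly increasing, the two pieces agree at $t=\tfrac12$ (both evaluate to $\tfrac12$), $f(0)=0$, and $f(1)=1$, so $f$ is a non-decreasing reparameterization of $[0,1]$.

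Finally, I would bound $\eucl{\tau(f(t))}{\sigma(t)}$ on each half. For $t\in[0,\tfrac12]$ we have $f(t)\le\tfrac12$, hence $\tau(f(t))=\tau_1(2f(t))=\tau_1(f_1(2t))$ and $\sigma(t)=\sigma_1(2t)$; substituting $u=2t$ gives $\eucl{\tau(f(t))}{\sigma(t)}=\eucl{\tau_1(f_1(u))}{\sigma_1(u)}\le\frechet{\tau_1}{\sigma_1}+\epsilon$. Symmetrically, for $t\in[\tfrac12,1]$ one gets $\eucl{\tau(f(t))}{\sigma(t)}\le\frechet{\tau_2}{\sigma_2}+\epsilon$. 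Taking the maximum over $t\in[0,1]$ and then the infimum over $\FF$ yields $\frechet{\tau}{\sigma}\le\max\{\frechet{\tau_1}{\sigma_1},\frechet{\tau_2}{\sigma_2}\}+\epsilon$, and letting $\epsilon\to0$ completes the proof.

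There is no deep difficulty here; the argument is essentially a gluing and bookkeeping exercise. The one point that needs care is the reduction to a common domain $[0,1]$, which makes the two seams coincide so that $f$ can be assembled by a straightforward affine rescaling of $f_1$ and $f_2$; without it one must instead define $f$ as a monotone map sending the seam of $\sigma$ to the seam of $\tau$ (the affine rescaling composed with $f_1$ on the first part and with $f_2$ on the second), which works equally well but is notationally heavier. The only other subtlety is that the infimum in the Fréchet distance need not be attained, which is why the argument carries an $\epsilon$-slack and passes to the limit at the end.
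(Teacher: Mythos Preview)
Your argument is correct and is the standard gluing proof: reduce to a common domain so the seams align, pick near-optimal reparameterizations for the two pairs, splice them affinely at the midpoint, and pass to the limit in~$\epsilon$. The verification that the glued map lies in~$\FF$ (continuous, injective, endpoint-preserving) is handled correctly, and the care you take with the $\epsilon$-slack is appropriate given that the infimum need not be attained.

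There is nothing to compare against here: the paper states \cref{obs:subcurves} as a cited observation from \citet{clustering_time_series} and does not supply its own proof. It is used only as a black box in the induction step of \cref{prop:frechet_line_conc}. Your write-up would serve perfectly well as the missing justification.
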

		This is similar if the curves are concatenations of more than two sub-curves.
		\begin{proposition}
			\label{prop:frechet_line_conc}
			Let \[\tau_1 \colon [t_0,t_1] \rightarrow \R^d, \dots, \tau_n \colon [t_{n-1},t_n] \rightarrow \R^d\] and \[\sigma_1 \colon [t^\prime_0,t^\prime_1] \rightarrow \R^d, \dots, \sigma_n \colon [t^\prime_{n-1},t^\prime_n] \rightarrow \R^d\] be curves with $\tau_{i-1}(t_{i-1}) = \tau_i(t_{i-1})$ and $\sigma_{i-1}(t^\prime_{i-1}) = \sigma_i(t^\prime_{i-1})$ for all $i = 2, \dots, n$ and $0 \leq t_{i-1} \leq t_i \leq 1$, as well as $0 \leq t^\prime_{i-1} \leq t^\prime_i \leq$ for $i = 1, \dots, n$.  Also for $i = 1, \dots, n$ let $f_{\tau, i}(t) \colon [0,1] \rightarrow [t_{i-1},t_i]$ and $f_{\sigma, i}(t)\colon [0,1] \rightarrow [t^\prime_{i-1}, t^\prime_i]$ be non-decreasing reparameterizations.
			
			If for all $i = 1, \dots, n$ it holds that $\frechet{\tau_i \circ f_{\tau,i}}{\sigma_i \circ f_{\sigma, i}} \leq a$, for a fixed $a \in \R_{\geq 0}$, then \[\frechet{\tau_1 \oplus \dots \oplus \tau_n}{\sigma_1 \oplus \dots \oplus \sigma_n} \leq a.\]
		\end{proposition}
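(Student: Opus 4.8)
The plan is to proceed by induction on $n$, with \cref{obs:subcurves} — the two-piece version of this statement — supplying the inductive step, and with the reparameterization-invariance of the Fréchet distance (noted right after \cref{def:frechet_distance}) taking care both of the auxiliary reparameterizations $f_{\tau,i}, f_{\sigma,i}$ and of the fact that the individual pieces $\tau_i,\sigma_i$ are not parameterized over $[0,1]$.

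First I would record the elementary fact that, since $f_{\tau,i}$ and $f_{\sigma,i}$ are non-decreasing reparameterizations of $[0,1]$ onto $[t_{i-1},t_i]$ resp. $[t'_{i-1},t'_i]$, we have $\frechet{\tau_i \circ f_{\tau,i}}{\sigma_i \circ f_{\sigma,i}} = \frechet{\tau_i}{\sigma_i}$; hence the hypothesis is equivalent to $\frechet{\tau_i}{\sigma_i} \le a$ for every $i \in \{1,\dots,n\}$. The base case $n=1$ is then immediate, since $\tau_1 \oplus \dots \oplus \tau_n$ is just $\tau_1$ and the claimed bound is exactly $\frechet{\tau_1}{\sigma_1} \le a$.

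For the inductive step, assume the statement for $n-1$ pieces, set $\widehat{\tau} \eqdef \tau_1 \oplus \dots \oplus \tau_{n-1}$ and $\widehat{\sigma} \eqdef \sigma_1 \oplus \dots \oplus \sigma_{n-1}$, and note that $\tau_1 \oplus \dots \oplus \tau_n = \widehat{\tau} \oplus \tau_n$ and $\sigma_1 \oplus \dots \oplus \sigma_n = \widehat{\sigma} \oplus \sigma_n$ up to reparameterization. Applying \cref{obs:subcurves} to these two two-piece concatenations gives
\[
\frechet{\widehat{\tau}\oplus\tau_n}{\widehat{\sigma}\oplus\sigma_n} \;\le\; \max\bigl\{\, \frechet{\widehat{\tau}}{\widehat{\sigma}},\; \frechet{\tau_n}{\sigma_n} \,\bigr\}.
\]
By the induction hypothesis applied to $\tau_1,\dots,\tau_{n-1}$, $\sigma_1,\dots,\sigma_{n-1}$ with the reparameterizations $f_{\tau,1},\dots,f_{\tau,n-1}$ and $f_{\sigma,1},\dots,f_{\sigma,n-1}$ we get $\frechet{\widehat{\tau}}{\widehat{\sigma}} \le a$, and by the first step $\frechet{\tau_n}{\sigma_n} \le a$, so the right-hand side is at most $a$, which is the assertion.

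The one point needing care — and the main obstacle — is that $\tau_1 \oplus \dots \oplus \tau_n$ is not literally associative: $(\tau_1\oplus\tau_2)\oplus\tau_3$ and $\tau_1\oplus(\tau_2\oplus\tau_3)$ trace the same image in the same order but carry different parameterizations, and likewise the multi-fold concatenation appearing in the statement must be identified with the particular bracketing $\widehat{\tau}\oplus\tau_n$ used above. This is harmless because any two bracketings of the same ordered list of curves differ only by a non-decreasing reparameterization of $[0,1]$, under which the Fréchet distance is invariant; I would state this as a one-line lemma (or invoke it inline) to license the equalities above. An alternative, more hands-on route is to unfold \cref{def:frechet_distance} and splice the per-piece near-optimal reparameterizations $f\in\FF$ into a single reparameterization of $[0,1]$ witnessing leash length $\le a$; this works but is notationally heavier than the induction, so I would keep the inductive argument.
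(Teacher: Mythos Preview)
Your proof is correct and follows essentially the same route as the paper: induction on $n$, with \cref{obs:subcurves} supplying the inductive step. You are in fact slightly more careful than the paper, which glosses over both the reparameterization-invariance used to reduce the hypothesis to $\frechet{\tau_i}{\sigma_i}\le a$ and the associativity-up-to-reparameterization issue you flag.
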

		\begin{proof}
			We prove the claim by induction over $n \in \N$:
			\paragraph{Base case:} When $n = 1$ the claim follows by \cref{def:frechet_distance}.
			\paragraph{Induction step:} Let $k \in \N$ be given and assume the claim holds for $n = k$, therefore $\frechet{\tau_1 \oplus \dots \oplus \tau_k}{\sigma_1 \oplus \dots \oplus \sigma_k} \leq a$.  Now for $k+1$ we know that $\frechet{\tau_{k+1}}{\sigma_{k+1}} \leq a$ by assumption. Let $\tau_{1,k} \eqdef \tau_1 \oplus \dots \oplus \tau_k$ and $\sigma_{1, k} \eqdef \sigma_1 \oplus \dots \oplus \sigma_k$, by \cref{obs:subcurves} we obtain
			\begin{align*}
				\frechet{\tau_{1,k} \oplus \tau_{k+1}}{\sigma_{1, k} \oplus \sigma_{k+1}} \leq{} & \max\{ \frechet{\tau_1 \oplus \dots \oplus \tau_k}{\sigma_1 \oplus \dots \oplus \sigma_k}, \frechet{\tau_{k+1}}{\sigma_{k+1}} \} \\
				\leq{} & a,
			\end{align*}
			hence the claim holds for $n = k+1$, therefore it holds for all $n \in \N$ by induction.
		\end{proof}
		For the curves of interest, \ie polygonal curves, we can efficiently compute the Fréchet distance between two curves, though it is costly in terms of running-time\footnote{We are dealing with big data, thus every running-time dependency that is above linear is considered expensive.}.
		\begin{theorem}[{\citet[Theorem 6]{alt_godau_frechet}}]
			\label{theo:frechet_algo}
			For two polygonal curves $\tau, \sigma \in \eqcfre{m}$ the Fréchet distance $\frechet{\tau}{\sigma}$ can be computed in time $\On{m^2 \log(m)}$.
		\end{theorem}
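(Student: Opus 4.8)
The plan is to reduce the computation of $\frechet{\tau}{\sigma}$ to a sequence of decision problems of the form ``is $\frechet{\tau}{\sigma} \le \varepsilon$?'' by means of the free-space diagram. First I would fix a candidate $\varepsilon \ge 0$ and consider the \emph{free space}
\[ F_\varepsilon \eqdef \{ (s,t) \in [0,1]^2 \mid \eucl{\tau(s)}{\sigma(t)} \le \varepsilon \}. \]
By \cref{def:frechet_distance}, a pair of reparametrizations witnessing $\frechet{\tau}{\sigma} \le \varepsilon$ corresponds exactly to a curve in $[0,1]^2$ from $(0,0)$ to $(1,1)$ that is non-decreasing in both coordinates and stays entirely inside $F_\varepsilon$; hence $\frechet{\tau}{\sigma} \le \varepsilon$ holds iff such a monotone path exists, and it suffices to decide this for fixed $\varepsilon$ and then find the infimal feasible $\varepsilon$. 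Writing the vertices of $\tau$ and $\sigma$ at parameters $0 = s_0 \le \cdots \le s_{m-1} = 1$ and $0 = t_0 \le \cdots \le t_{m-1} = 1$, I would subdivide $[0,1]^2$ into the $\On{m^2}$ cells $[s_{i-1},s_i]\times[t_{j-1},t_j]$; on each such cell both $\tau(s)$ and $\sigma(t)$ are affine in $s$ and $t$ respectively, so $\eucl{\tau(s)}{\sigma(t)}^2$ is a convex quadratic in $(s,t)$ and $F_\varepsilon$ restricted to the cell is the intersection of the cell with an ellipse, computable in $\On{1}$. In particular $F_\varepsilon$ meets every vertical and every horizontal cell boundary in a single, possibly empty, interval.

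Next I would solve the decision problem in $\On{m^2}$ time by a cell-by-cell dynamic program. Sweeping the cells row by row from bottom to top and, within a row, from left to right, I maintain on the left and bottom boundary of the current cell the sub-interval of $F_\varepsilon$ that is reachable from $(0,0)$ by a monotone path contained in $F_\varepsilon$; the convexity of the cell's free space lets me compute the reachable intervals on its right and top boundaries from these two intervals in $\On{1}$. Then $\frechet{\tau}{\sigma} \le \varepsilon$ holds iff $(1,1)$ turns out reachable. Including the $\On{m^2}$ ellipse--cell intersections, the whole decision procedure runs in $\On{m^2}$ time.

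For the optimization I would note that, as $\varepsilon$ grows from $0$, the combinatorial behaviour of this dynamic program changes only at finitely many \emph{critical values} of three types: (A) $\eucl{\tau(0)}{\sigma(0)}$ and $\eucl{\tau(1)}{\sigma(1)}$, where the two corners of the diagram become free; (B) for a vertex of one curve and an edge of the other, the distance between them --- $\On{m^2}$ values, each the threshold at which a passage through a cell boundary first opens; and (C) for a vertex of one curve together with two edges of the other, the smallest $\varepsilon$ at which a monotone passage along the corresponding vertical or horizontal line of the diagram first opens --- $\On{m^3}$ values. Since $\frechet{\tau}{\sigma}$ is the least critical value on which the decision procedure answers affirmatively, I would sort and binary-search the $\On{m^2}$ values of types (A) and (B) using the $\On{m^2}$ decision procedure as oracle, at cost $\On{m^2 \log m}$; this pins $\frechet{\tau}{\sigma}$ down to an interval between two consecutive such values, on which the cell structure of $F_\varepsilon$ is frozen and all free intervals vary continuously in $\varepsilon$.

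The main obstacle is then to resolve the $\On{m^3}$ type-(C) values inside that interval without sorting them, which would be too slow. The remedy I would use is Megiddo's parametric search: run the decision algorithm ``generically'' on the unknown optimum $\varepsilon^\ast$, where each branch is controlled by the sign at $\varepsilon^\ast$ of a low-degree polynomial in $\varepsilon$ whose relevant root is a type-(C) critical value, and resolve each such sign with a single call of the $\On{m^2}$ decision procedure at that root; feeding this with a parallelizable version of the decision sweep, so that batches of independent comparisons are resolved together, keeps the overhead to one logarithmic factor and yields the claimed $\On{m^2 \log m}$ running time. (Alternatively one can replace the parametric search by an ad hoc narrowing argument exploiting how, on an interval between consecutive type-(A)/(B) values, the reachability intervals vary monotonically and can be updated column by column.) Everything else --- the convexity of the per-cell free space and the $\On{m^2}$ reachability sweep --- is routine; the $\log$-factor in the optimization step is the genuine technical heart.
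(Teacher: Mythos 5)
Your proposal is correct and is essentially the standard Alt--Godau argument (free-space diagram, an $\On{m^2}$ decision procedure via monotone reachability in the cells, critical values of the three types, and parametric search to obtain the $\log$-factor), which is exactly the proof of the cited result; the paper itself does not prove this theorem but imports it from \citet{alt_godau_frechet}. Nothing to object to.
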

		Especially for the examples that are depicted in this work we employ the discrete Fréchet distance, which can be computed more easily. Also, it has a useful relationship to the Fréchet distance.
		\begin{definition}[discrete Fréchet distance, \conferre{\citet{eiter_mannila, discrete_frechet_subquadratic}}]
			Let $P \eqdef p_1, \dots, p_m$ and $Q \eqdef q_1, \dots, q_n$ be sequences of points in $\euclideanspace^d$. Let $\delta \in \R_{\geq 0}$ be arbitrary. Let $G_{\delta} \eqdef (V, E_\delta)$ denote the directed graph with vertex-set $ V \eqdef \left(\bigcup\limits_{i=1}^m \{p_i\}\right) \times \left(\bigcup\limits_{i=1}^n \{q_i\}\right)$ and edge-set 
			\begin{align*}
			E_\delta \eqdef & \{ \left( (p_i, q_j), (p_{i+1}, q_j)\right) \mid \eucl{p_i}{q_j} \leq \delta \wedge \eucl{p_{i+1}}{q_j} \leq \delta \} \\ 
			& \cup \{ \left( (p_i, q_j), (p_{i}, q_{j+1})\right) \mid \eucl{p_i}{q_j} \leq \delta \wedge \eucl{p_{i}}{q_{j+1}} \leq \delta \}.
			\end{align*}
			The discrete Fréchet distance is defined: \[ \dfrechet{P}{Q} = \delta \aqdef \Reach((p_1,q_1), (p_m,q_n), \delta) \wedge \forall \delta^\prime > \delta: \lnot \Reach((p_1,q_1), (p_m,q_n), \delta^\prime) \] 
			Here $\Reach(x,y,\delta)$ is a tertiary predicate, that is true, iff there exists a path from vertex $x$ to vertex $y$ in $G_\delta$.
		\end{definition}
		A visualization of the discrete Fréchet distance is depicted in \cref{fig:frechet_comp}.
		
		The discrete Fréchet distance yields an upper bound on the (continuous) Fréchet distance:
		\begin{proposition}[{\citet[Lemma 2]{eiter_mannila}}]
			For all polygonal curves $\tau, \sigma \in \eqcfre{m}$ with vertices $V_\tau \eqdef v_{\tau, 1}, \dots, v_{\tau, m_{\tau}}$ and $V_{\sigma} \eqdef v_{\sigma, 1}, \dots, v_{\sigma, m_{\sigma}}$ it holds that $\frechet{\tau}{\sigma} \leq \dfrechet{V_{\tau}}{V_{\sigma}}$.
		\end{proposition}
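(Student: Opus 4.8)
The plan is to convert an optimal path witnessing the discrete Fréchet distance into a decomposition of $\tau$ and $\sigma$ into matched sub-pieces, bound the Fréchet distance of each matched pair by a one–line convexity argument (equivalently, by the line-segment case of \cref{obs:frechet_endpoints}), and glue these bounds together with \cref{prop:frechet_line_conc}.

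First, set $\delta \eqdef \dfrechet{V_\tau}{V_\sigma}$. By the definition of the discrete Fréchet distance, $\Reach((v_{\tau,1},v_{\sigma,1}),(v_{\tau,m_\tau},v_{\sigma,m_\sigma}),\delta)$ holds, so there is a path $P_1 = (v_{\tau,1},v_{\sigma,1}), P_2, \dots, P_N = (v_{\tau,m_\tau},v_{\sigma,m_\sigma})$ in $G_\delta$. As $m_\tau,m_\sigma \geq 2$, we have $N \geq 2$, so every $P_k = (v_{\tau,i_k},v_{\sigma,j_k})$ is an endpoint of an edge of $G_\delta$ lying on the path, hence $\eucl{v_{\tau,i_k}}{v_{\sigma,j_k}} \leq \delta$. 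Moreover, since both edge types of $G_\delta$ increment exactly one of the two indices by exactly one, the sequences $i_1,\dots,i_N$ and $j_1,\dots,j_N$ run monotonically through $1,\dots,m_\tau$ and $1,\dots,m_\sigma$ respectively.

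Next I read off the decomposition. For each step $P_k \to P_{k+1}$ ($k = 1,\dots,N-1$): if it increments the first index, let $\pi^\tau_k \eqdef \overline{v_{\tau,i_k}v_{\tau,i_k+1}}$ be the corresponding edge of $\tau$ and let $\pi^\sigma_k$ be the constant curve at $v_{\sigma,j_k}$; otherwise do the symmetric thing with the roles of $\tau$ and $\sigma$ swapped. Consecutive pieces share the required endpoint ($v_{\tau,i_k}$, resp. $v_{\sigma,j_k}$), so, after reparameterizing the pieces onto consecutive subintervals of $[0,1]$, the concatenations $\pi^\tau_1 \oplus \dots \oplus \pi^\tau_{N-1}$ and $\pi^\sigma_1 \oplus \dots \oplus \pi^\sigma_{N-1}$ are well-defined; and since deleting the constant pieces leaves exactly the edges $\overline{v_{\tau,1}v_{\tau,2}},\dots,\overline{v_{\tau,m_\tau-1}v_{\tau,m_\tau}}$ in order (resp. the edges of $\sigma$), these concatenations equal $\tau$ and $\sigma$ up to reparameterization and omission of constant pieces, which leaves the Fréchet distance unchanged. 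Now fix $k$: exactly one of $\pi^\tau_k,\pi^\sigma_k$ is a genuine edge $\overline{ab}$ and the other a constant curve at some point $v$, where $v$, $a$ and $b$ all occur as components of the path-vertices $P_k,P_{k+1}$, so $\eucl{v}{a}\leq\delta$ and $\eucl{v}{b}\leq\delta$. Viewing the constant curve as the degenerate segment $\overline{vv}$, \cref{obs:frechet_endpoints} gives $\frechet{\pi^\tau_k}{\pi^\sigma_k} = \max\{\eucl{v}{a},\eucl{v}{b}\} \leq \delta$ (equivalently, by convexity of the euclidean ball, $\overline{ab}\subseteq\{x : \eucl{x}{v}\leq\delta\}$). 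Applying \cref{prop:frechet_line_conc} with $a \eqdef \delta$ to the two decompositions then yields $\frechet{\tau}{\sigma} \leq \delta = \dfrechet{V_\tau}{V_\sigma}$.

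I do not expect a deep obstacle here: the geometric heart — a segment whose two endpoints lie within $\delta$ of a point lies entirely within $\delta$ of that point — is immediate from convexity, and everything else is combinatorial bookkeeping. The only slightly delicate point is the handling of the constant curves that appear as sub-pieces, and in particular the claim that inserting or deleting a constant piece in a concatenation does not change the Fréchet distance (which itself follows from \cref{obs:subcurves} applied to a constant curve together with a short initial segment, or simply from the triangle inequality for $\pfrechet$); if one wished to minimise their number one could merge each maximal run of same-direction steps into a single sub-piece, but constant curves cannot be avoided entirely, as the path that advances all the way along $\tau$ before advancing at all along $\sigma$ shows.
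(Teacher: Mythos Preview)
The paper does not supply its own proof of this proposition; it is merely stated with a citation to Eiter and Mannila. Your argument is correct and is essentially the standard one: extract a monotone coupling of the two vertex sequences from the path in $G_\delta$, match each step of the coupling to a segment-versus-point pair whose Fréchet distance is at most $\delta$ by convexity (your invocation of \cref{obs:frechet_endpoints} for the degenerate segment $\overline{vv}$), and glue via \cref{prop:frechet_line_conc}.

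The one place you hand-wave---removing the constant sub-pieces from the concatenations without changing $\pfrechet$---amounts to the fact that the infimum in \cref{def:frechet_distance} is unchanged if one allows non-decreasing surjective (not necessarily injective) reparameterizations. This is standard and is easily fixed by an $\epsilon$-perturbation of the constant pieces, or avoided entirely by building a single strictly increasing reparameterization directly from the staircase path in $[0,1]\times[0,1]$ rather than passing through \cref{prop:frechet_line_conc}. Either way, there is no substantive gap.
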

	\subsection{Clustering}
		Clustering is the task of partitioning a set of objects with respect to optimizing some objective function. Here we consider a set of curves $\Tau \subset \eqcfre{m}$ as input and three different objective functions that are to be minimized. In particular the ``nearest''-object relationship is of main interest. In this work all objective functions are defined with respect to this concept. The following function formalizes this relationship.
		\begin{definition}
			\label{def:nearest_func}
			For a curve $\tau \in \eqcfre{m}$ and a set of curves $C \subseteq \eqcfre{m}$, the function \[ \eta(\tau, C) \eqdef \argmin\limits_{c \in C} \frechet{\tau}{c} \] returns an arbitrary but fixed nearest neighbor (especially if there is more than one) of $\tau$ in $C$.
		\end{definition} 
		All objective functions are binary. They depend on the set of curves and on a set $C \subset \eqcfre{l}$ or $C \subseteq \Tau$, of cardinality $k$, that induces the partition. The members of the partition are called \emph{clusters} and the members of $C$ are called \emph{centers}, which are the representatives of the clusters.
		\begin{definition}
			\label{def:cluster}
			The cluster $\pcluster$ of a center $c \in C$ with respect to a set of curves $\Tau$ is defined \[\cluster{\Tau}{C}{c} \eqdef \{ \tau \in \Tau \mid \nearest{\tau}{C} = c \}. \]
		\end{definition}
		Now that we have defined the basics of clustering we define the objective functions that will be in the focus of this work.
		\subsubsection{Clustering Objectives}
		The first objective function we consider is the one that is most fuzzy, in the sense that for a given set of curves there can be \emph{many} center-sets for which the objective function has equal value. Specifically only the farthest curve to a respective nearest center determines the value of the objective function.
		\begin{figure}
			\centering
			\includegraphics[width=\textwidth]{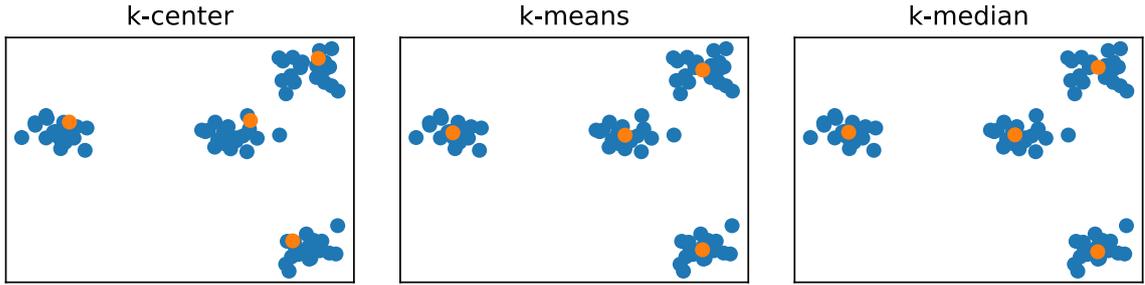}
			\caption{Comparison of approximate solutions for different clustering objectives with respect to point-sets. It can be observed that $k$-means is most sensitive to outliers, while $k$-median is less sensitive. $k$-center is least sensitive to outliers, but most efficient to approximate. In fact one has to look carefully to see the little differences between the centers with respect to the means objective and the centers with respect to the median objective.}
			\label{fig:cluster_comparison}
		\end{figure}
		\begin{definition}[\klcenter objective]
			\label{def:klcenter}
			Given a set of polygonal curves $\Tau \eqdef \{ \tau_1, \dots, \tau_n \} \subset \eqcfre{m}$ and $k, l \in \mathbb{N}_{>0}$, return the optimal cost, \ie $\optcost{\Tau} \eqdef \min\limits_{C \subset \eqcfre{l}, \vert C \vert = k} \cost{\Tau}{C}$, of a clustering with \kk\ centers of complexity at most $\el$, where $ \cost{\Tau}{C} \eqdef \max\limits_{\tau \in \Tau} \frechet{\tau}{\nearest{\tau}{C}}$.
		\end{definition}
		In contrast to the \klcenter objective, the \klmedian objective takes all distances from the curves to the respective nearest centers into account. To be precise, here we are working with the \emph{discrete} median objective because, different to the \klcenter objective and the \klmeans objective, the possible center-sets stem from a countably finite set, the input set. Generally it is easier to obtain a clustering with respect to the \klcenter objective than with respect to the \klmedian objective.
		\begin{definition}[\klmedian objective]
			\label{def:klmedian}
			Given a set of polygonal curves $\Tau \eqdef \{ \tau_1, \dots, \tau_n \} \subset \eqcfre{m}$ and $k \in \mathbb{N}_{>0}$, return the optimal cost, \ie $\optcost{\Tau} \eqdef \min\limits_{C \subseteq \Tau, \vert C \vert = k} \cost{\Tau}{C}$, of a clustering with \kk\ centers of complexity at most \el, where $ \cost{\Tau}{C} \eqdef \sum\limits_{\tau \in \Tau} \frechet{\tau}{\nearest{\tau}{C}}$.
		\end{definition}
		The following definition is different from \cref{def:klmedian} in the sense that curves that lie further from the respective nearest center are more strongly taken into account while the nearer curves contribute less strongly to the value of the objective function.
		\begin{definition}[\klmeans objective]
			\label{def:klmeans}
			Given a set of polygonal curves $\Tau \eqdef \{ \tau_1, \dots, \tau_n \} \subset \eqcfre{m}$ and $k, l \in \mathbb{N}_{>0}$, return the optimal cost, \ie $\optcost{\Tau} \eqdef \min\limits_{C \subset \eqcfre{l}, \vert C \vert = k} \cost{\Tau}{C}$, of a clustering with \kk\ centers of complexity at most $\el$, where $ \cost{\Tau}{C} \eqdef \sum\limits_{\tau \in \Tau} \frechet{\tau}{\nearest{\tau}{C}}^2$.
		\end{definition} 
		In particular, with respect to point-sets the \klmeans objective has an advantage over the \klmedian objective, because the optimal $1$-means can be determined by a simple calculation, \conferre \cref{def:centroid}. In \cref{fig:cluster_comparison} a comparison of the different clustering objectives with respect to point-sets is depicted.
	\section{$\epsilon$-coresets}
		Let $\Tau \subset \eqcfre{m}$ be a set of curves of cardinality $n$. In this work we assume that $n$ is a large number, such that any algorithms that work on $\Tau$ have high running-time. We assume that especially algorithms of super-polynomial running-time are not efficient enough to examine $\Tau$ in reasonable time. 
		
		If there is no known polynomial-time approximation scheme for a geometric problem then there may be the possibility of reducing the input-set to an \coreset and analyzing the \coreset instead of the input-set. Generally speaking, an \coreset approximates the ``shape'' of a set of objects up to an $\epsilon$-fraction. If the \coreset has cardinality dependent only on $\poly{\frac{1}{\epsilon}}$, \ie a polynomial of $\nicefrac{1}{\epsilon}$, and independent of $n$, we can use an exact algorithm on the \coreset which then has running-time $\On{\poly{\nicefrac{1}{\epsilon}}}$. In such a case it may even be beneficial to analyze the \coreset by brute force. If the \coreset can be constructed in polynomial time this yields a polynomial-time approximation scheme. If the \coreset can be constructed in linear time, we even obtain a pseudo-linear-time approximation scheme.
		\begin{definition}[{\coreset[] for curves}, \conferre{\citet[Definition 1]{coresers_methods_history}}]
			\label{def:coreset}
			Let $A$ be the set of finite sets of curves $\tau \in \eqcfre{m}$ and $B$ the set of sets, of cardinality $k$, of curves $\sigma \in \eqcfre{m}$, also called the candidate solutions. Further, let $f\colon A \times B \rightarrow \R_{\geq 0}$ be a non-negative measurable function. Then $S \in A$ is an \coreset for $\Tau \in A$ with respect to $f$, if $\forall C \in B:$
			\[ (1-\epsilon) f(\Tau,C) \leq f(S,C) \leq (1+\epsilon) f(\Tau,C) \]
			Further a set $S \in A$ with weights $w \colon S \rightarrow \R$ is a weighted \coreset for $\Tau \in A$, if $\forall C \in B:$ 
			\[ (1-\epsilon) f(\Tau,C) \leq g[w, S, C] \leq (1+\epsilon) f(\Tau, C) \]
			Here $g$ is obtained through $f$ by taking the weights $w$ into account.
		\end{definition}
		In the setting of this work, the set $A$ is the set of all possible input-sets and $B$ is the set of all possible center-sets. The function $f$ is the clustering objective at hand. The big benefit of \coreset[s] is, that they give a guarantee for every possible center-set, respective for one fixed input-set.
	\section{Probability Theory}
		Previously we stated that it is easier to obtain a clustering with respect to the \klcenter objective than with respect to the \klmedian objective. The same holds for \coreset constructions, though probability theory yields tools which make it possible to obtain \coreset[s] \emph{with constant probability} and reasonable effort.  
		
		We start by defining the basic notions of probability theory.
		\begin{definition}[\conferre\citet{mitzenmacher}]
			\label{def:probability_space}
			Let $ \Omega $ be a sample space, which is the set of all possible outcomes of a random process and let $ \mathcal{E} $ be the set of sets that represent the allowable events in $ \Omega $, so it holds that \( \forall E \in \mathcal{E}: E \subseteq \Omega \). Further let \( \Pr\colon \mathcal{E} \rightarrow \mathbbm{R} \) be a probability function, \ie it satisfies all the following conditions:
			\begin{itemize} 
				\item \( \forall E \in \mathcal{E} : 0 \leq \Pr[E] \leq 1 \) 
				\item \( \Pr[\Omega] = 1 \)
				\item for any finite or countably infinite sequence of pairwise mutually disjoint events \( E_1, E_2, E_3, \dots \) it holds that \( \Pr\left[ \bigcup\limits_{i \in \{1, 2, 3, \dots\}} E_i \right] = \sum\limits_{i \in \{1, 2, 3, \dots\}} \Pr[E_i] \)
			\end{itemize}
			The triplet \( (\Omega, \mathcal{E}, \Pr) \) defines a probability space, which is the basis for every random process.
		\end{definition}
		We call the $E \in \Omega$ simple events. Let $ E \in \mathcal{E} $ be an event, the contrary event of $ E $ is $ E^C = \Omega \backslash E $, its complement, with probability \( \Pr[E^C] = 1 - \Pr[E] \). Repeatedly choosing elements of the sample space, according to the given distribution, \ie the given probability function, we call sampling. The resulting set of this sampling-process is denoted a sample.
		
		If the occurrence of two arbitrary events does not influence each other's probability, we call those events independent. In particular two events are independent, if they occur in distinct samples.
		\begin{definition}[\conferre\citet{mitzenmacher}]
			\label{def:independence}
			Two events $ E_1 $ and $ E_2 $ are independent, iff \( \Pr[E_1 \cap E_2] = \Pr[E_1] \cdot \Pr[E_2] \).
		\end{definition}
		The union bound is a simple yet strong and widely employed tool to obtain upper bounds on the occurrence of general events.
		\begin{proposition}[union bound, {\citet[Lemma 1.2]{mitzenmacher}}]
			\label{prop:union_bound}
			Let \( E_1, \dots, E_n \) be arbitrary events in $ \mathcal{E} $, then it holds that \( \Pr \left[ \bigcup_{i=1}^n E_i \right] \leq \sum_{i=1}^{n} \Pr[E_i] \).
		\end{proposition}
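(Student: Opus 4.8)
The plan is to reduce the claim to the countable additivity axiom of \cref{def:probability_space} by replacing the events $E_1, \dots, E_n$ with a pairwise disjoint family that has the same union. First I would set $F_1 \eqdef E_1$ and, for $i \in \{2, \dots, n\}$, $F_i \eqdef E_i \setminus \bigcup_{j=1}^{i-1} E_j$. By construction the $F_i$ are pairwise disjoint, each $F_i \subseteq E_i$, and $\bigcup_{i=1}^n F_i = \bigcup_{i=1}^n E_i$; one should remark here that $\mathcal{E}$ is closed under the finitely many set operations involved, so each $F_i$ is again an allowable event.

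Next I would invoke the third condition in \cref{def:probability_space}, applied to the finite sequence $F_1, \dots, F_n$ (padding with empty events if a countably infinite sequence is formally required), to get
\[ \Pr\left[ \bigcup_{i=1}^n E_i \right] = \Pr\left[ \bigcup_{i=1}^n F_i \right] = \sum_{i=1}^n \Pr[F_i]. \]
It then remains to show $\Pr[F_i] \leq \Pr[E_i]$ for each $i$, i.e.\ monotonicity of $\Pr$. Since $F_i \subseteq E_i$, I would write $E_i = F_i \cup (E_i \setminus F_i)$ as a disjoint union of two events and apply countable additivity once more to obtain $\Pr[E_i] = \Pr[F_i] + \Pr[E_i \setminus F_i] \geq \Pr[F_i]$, using that $\Pr[E_i \setminus F_i] \geq 0$ by the first condition in \cref{def:probability_space}. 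Summing these inequalities over $i = 1, \dots, n$ and chaining with the displayed equality yields $\Pr\left[ \bigcup_{i=1}^n E_i \right] \leq \sum_{i=1}^n \Pr[E_i]$, as desired.

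There is no genuine obstacle here; the only point that deserves care is that the definition of a probability function supplies additivity only for countable families of \emph{pairwise disjoint} events, so both monotonicity and the subadditivity we want must be routed through the disjointification step rather than assumed outright. An alternative I would mention is a short induction on $n$ using $\Pr[A \cup B] = \Pr[A] + \Pr[B] - \Pr[A \cap B] \leq \Pr[A] + \Pr[B]$, but that two-set identity still has to be derived from the axioms by the same disjoint-decomposition trick, so the direct argument above is the cleaner route.
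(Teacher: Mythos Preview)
Your proof is correct and is the standard disjointification argument for the union bound. Note, however, that the paper does not actually supply its own proof of this proposition: it merely states the result and attributes it to \citet[Lemma 1.2]{mitzenmacher}, so there is nothing to compare against beyond observing that your argument is a complete and self-contained justification where the paper opts for a citation.
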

		A benefit of probability spaces are random variables. Here the value of the variable is determined by a sample from the probability space.
		\begin{definition}[random variable, \conferre\citet{mitzenmacher}]
			\label{def:random_variable}
			A real random variable \( X \colon \Omega \rightarrow \R \) is a function from a sample space to the real numbers.
		\end{definition}
		The expected value of a random variable over a discrete probability space is the sum of all possible outcomes with respect to their probabilities.
		\begin{definition}[expected value, \conferre{\citet{mitzenmacher}}]
			\label{def:expected_value}
			Let $ X $ be a real random variable that takes the values \( x_1, \dots, x_n \) with probabilities \( p_1, \dots, p_n \). The expected value of $X$ is defined as
			\begin{equation*}
			\E[X] \eqdef \sum_{i=1}^n x_i \cdot p_i.
			\end{equation*}
		\end{definition}
		The expected squared deviation of $ X $ from its expected value is its variance.
		\begin{definition}[variance, \conferre\citet{mitzenmacher}]
			\label{def:variance}
			Let $X$ be a random variable. Its variance is defined as
			\begin{equation*}
			\Var[X] \eqdef \E\left[(X-\E[X])^2\right].
			\end{equation*}
		\end{definition}
		When we are using more than one probability space, e.g. $(\Omega, \mathcal{E}_\varphi,\varphi)$ and $(\Omega, \mathcal{E}_\psi, \psi)$, we may write $\E_{\psi}[X]$ or $\Var_\varphi[Y]$ to emphasize the underlying probability distribution.
		
		Consider a function $f$ that maps a set of sets $\mathcal{S} \eqdef \{ S_1, S_2, S_3, \dots \}$ to the real numbers. We may have the case that every $S \in \mathcal{S}$ has very high cardinality, such that we can not efficiently evaluate $f$. We have the possibility to set $\Omega \eqdef S \in \mathcal{S}$ and use a random variable $X$, with $\E[X] = f(S)$, as so called estimator for $f$.
		\subsection{Concentration Inequalities}
		In this work we use so called tail-inequalities to determine the probability that a given random variable deviates from its expected value more than a certain constant.

		\begin{theorem}[Markov's inequality, {\citet[Theorem 3.1]{mitzenmacher}}]
			\label{theo:markov}
			Let $ X $ be a random variable that assumes only non-negative values. Then for $ a \in \R_{>0}$:
			\begin{equation*}
			\Pr[X \geq a] \leq \frac{\E[X]}{a}.
			\end{equation*}
		\end{theorem}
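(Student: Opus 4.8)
The plan is to work directly from the definition of expected value, splitting the defining sum over the outcomes of $X$ into the part where $X$ is at least $a$ and the part where it is smaller, and then discarding the latter.

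First I would invoke \cref{def:expected_value}: writing $x_1, \dots, x_n$ for the values assumed by $X$ with respective probabilities $p_1, \dots, p_n$, we have $\E[X] = \sum_{i=1}^n x_i p_i$. Since $X$ takes only non-negative values, every summand $x_i p_i$ is non-negative, so restricting the sum to the index set $J \eqdef \{\, i \mid x_i \geq a \,\}$ can only decrease its value: $\E[X] \geq \sum_{i \in J} x_i p_i$.

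Next, for each $i \in J$ we have $x_i \geq a$, hence $\sum_{i \in J} x_i p_i \geq a \sum_{i \in J} p_i$. The quantity $\sum_{i \in J} p_i$ is precisely the probability mass on the outcomes where $X \geq a$, that is, $\Pr[X \geq a]$. Chaining these estimates gives $\E[X] \geq a \cdot \Pr[X \geq a]$, and since $a \in \R_{>0}$ we may divide by $a$ to obtain $\Pr[X \geq a] \leq \E[X]/a$, as claimed.

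There is essentially no obstacle here beyond bookkeeping: the only place where non-negativity of $X$ enters is in dropping the terms outside $J$ without reversing the inequality, and the only place where $a>0$ enters is the final division. An alternative, slightly slicker argument introduces the indicator random variable $\mathbbm{1}[X \geq a]$ and uses the pointwise bound $a \cdot \mathbbm{1}[X \geq a] \leq X$ (again valid because $X$ is non-negative) together with linearity and monotonicity of expectation; I would remark on this but present the sum-splitting version, since \cref{def:expected_value} only sets up expectation for discrete random variables.
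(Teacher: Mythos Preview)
Your proof is correct. The paper does not actually prove this statement; it is cited without proof from \citet[Theorem 3.1]{mitzenmacher}, so there is nothing to compare your argument against.
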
 
		\begin{theorem}[Hoeffding's inequality, {\citet[Theorem 2]{hoeff}}]
			\label{theo:hoeff}
			Let \( X_1, \dots , X_\ell \) be independent random variables and let \( \bar{X} = \frac{1}{\ell} (X_1 + \dots + X_\ell) \) be their mean.
			
			If \( \forall i \in \{ 1, \dots, \ell\} \exists a_i, b_i \in \R : a_i \leq X_i \leq b_i\), then for \( t \in \R_{\geq 0} \):
			\begin{equation*}
			\Pr \left [ \bar{X} - \E[\bar{X}] \geq t \right ] \leq \exp \left ({ \frac{ -2 \ell^2 t^2 }{ \sum_{i=1}^\ell (b_i - a_i)^2 } } \right ).
			\end{equation*}
			Similarly, for $- \bar{X} + \E[X]$ this gives an upper bound for \( \Pr[ -\bar{X} + \E[\bar{X}] \geq t ] \), which implies the upper bound: \[ \Pr[ \vert{\bar{X}-\E[\bar{X}]}\vert \geq t ] \leq 2 \exp \left({ \frac{ -2 \ell^2 t^2 }{ \sum_{i=1}^\ell (b_i - a_i)^2 } } \right).\]
		\end{theorem}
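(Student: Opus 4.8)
The plan is to prove this via the exponential moment method (the Chernoff--Cramér technique), which reduces the whole statement to a single one-dimensional convexity estimate. First I would fix $s \in \R_{>0}$ and apply \cref{theo:markov} to the non-negative random variable $e^{s(\bar{X} - \E[\bar{X}])}$, which gives
\[ \Pr[\bar{X} - \E[\bar{X}] \geq t] = \Pr\bigl[e^{s(\bar{X} - \E[\bar{X}])} \geq e^{st}\bigr] \leq e^{-st}\,\E\bigl[e^{s(\bar{X} - \E[\bar{X}])}\bigr]. \]
Since $\bar{X} = \frac{1}{\ell}\sum_{i=1}^{\ell} X_i$ and the $X_i$ are independent, the expectation on the right factors as $\prod_{i=1}^{\ell} \E[e^{(s/\ell)(X_i - \E[X_i])}]$ (using that expectations of products of independent variables factor — which follows from \cref{def:independence} and \cref{def:expected_value}).

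Second, the heart of the argument is \emph{Hoeffding's lemma}: if $Y$ is a random variable with $\E[Y] = 0$ and $a \leq Y \leq b$ almost surely, then for every $\lambda \in \R$ one has $\E[e^{\lambda Y}] \leq e^{\lambda^2 (b-a)^2 / 8}$. I would prove this by writing $Y$ as a convex combination of the interval endpoints, $Y = \frac{b-Y}{b-a}\,a + \frac{Y-a}{b-a}\,b$, so convexity of $x \mapsto e^{\lambda x}$ yields $e^{\lambda Y} \leq \frac{b-Y}{b-a}\,e^{\lambda a} + \frac{Y-a}{b-a}\,e^{\lambda b}$; taking expectations and using $\E[Y]=0$ gives $\E[e^{\lambda Y}] \leq \frac{b}{b-a}e^{\lambda a} - \frac{a}{b-a}e^{\lambda b}$. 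Setting $p \eqdef -a/(b-a) \in [0,1]$ and $u \eqdef \lambda(b-a)$, the right-hand side equals $e^{L(u)}$ with $L(u) \eqdef -pu + \ln(1 - p + p e^u)$. A short computation shows $L(0) = 0$, $L'(0) = 0$, and $L''(u) = \frac{p(1-p)e^u}{(1-p+pe^u)^2} \leq \frac14$ for all $u$ (the last bound because $x(1-x) \leq \frac14$ applied to $x = \frac{pe^u}{1-p+pe^u} \in [0,1]$), so Taylor's theorem gives $L(u) \leq u^2/8 = \lambda^2(b-a)^2/8$.

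Third, applying Hoeffding's lemma to each $Y_i \eqdef X_i - \E[X_i]$ (mean zero, contained in an interval of length $b_i - a_i$) with $\lambda = s/\ell$ and combining with the factorization above yields
\[ \Pr[\bar{X} - \E[\bar{X}] \geq t] \leq \exp\!\left(-st + \frac{s^2}{8\ell^2}\sum_{i=1}^{\ell}(b_i - a_i)^2\right). \]
Finally I would optimize over $s > 0$: the exponent is minimized at $s = 4\ell^2 t / \sum_{i=1}^{\ell}(b_i-a_i)^2$, where it equals $-2\ell^2 t^2 / \sum_{i=1}^{\ell}(b_i-a_i)^2$, which proves the first inequality. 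The bound for $-\bar{X} + \E[\bar{X}]$ follows by repeating the argument with each $X_i$ replaced by $-X_i$ (contained in $[-b_i,-a_i]$, an interval of the same length), and the two-sided estimate then follows from the union bound, \cref{prop:union_bound}. The step I expect to be the main obstacle is Hoeffding's lemma — concretely, the uniform bound $L''(u) \leq \frac14$; everything else is routine bookkeeping with the exponential Markov inequality and a single-variable minimization.
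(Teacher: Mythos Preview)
Your proof is correct and is the standard exponential-moment argument for Hoeffding's inequality. However, the paper does not actually prove this theorem: it is stated as a cited result from \citet[Theorem 2]{hoeff} in the preliminaries chapter, without any proof given. So there is nothing to compare against --- you have supplied a (correct) proof where the thesis simply quotes the literature.
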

		\begin{theorem}[Bernstein's inequality, {\citet{bernstein}}]
			\label{theo:bern}
			Let \( X_1, \dots , X_\ell \) be independent random variables with $\E[X_i] = 0$ for all $i=1,\dots, \ell$ and $\vert X_i \vert \leq M$ almost surely for all $i=1,\dots, \ell$, then for $t \in \R_{> 0}$: \[ \Pr\left[\sum_{i=1}^\ell X_i > t\right] \leq \exp\left(\frac{-t^2}{2\sum_{i=1}^\ell \E[X_i^2] + \nicefrac{2}{3}Mt}\right). \]
			For any independent random variables \( Y_1, \dots, Y_\ell \) we can apply this inequality to $Y_i - \E[Y_i]$ and $-Y_i + \E[Y_i]$, if the absolute value of these random variables are bounded, since they have zero mean. We get
			\[ \Pr\left[\left\vert \sum_{i=1}^\ell Y_i - \E[Y_i]\right\vert > t\right] \leq 2 \exp\left(\frac{-t^2}{2 \sum_{i=1}^{\ell} \Var[Y_i] + \nicefrac{2}{3}Mt}\right).\] 
		\end{theorem}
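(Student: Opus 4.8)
The plan is to prove the one-sided tail bound by the exponential-moment (Cram\'er--Chernoff) method and then derive the two-sided version by symmetrisation and a union bound. First I would fix a parameter $\lambda > 0$, observe that $e^{\lambda \sum_i X_i}$ is a non-negative random variable, and apply Markov's inequality (\cref{theo:markov}) together with the independence of the $X_i$ to obtain
\[
  \Pr\!\left[\sum_{i=1}^\ell X_i > t\right] \;\le\; e^{-\lambda t}\,\E\!\left[e^{\lambda \sum_{i=1}^\ell X_i}\right] \;=\; e^{-\lambda t}\prod_{i=1}^\ell \E\!\left[e^{\lambda X_i}\right].
\]
Everything then reduces to bounding each factor $\E[e^{\lambda X_i}]$.

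The core estimate is a bound on the moment-generating function of a single bounded, centred variable. Expanding the exponential as a power series and using $\E[X_i] = 0$ to remove the linear term gives $\E[e^{\lambda X_i}] = 1 + \sum_{k\ge 2} \lambda^k \E[X_i^k]/k!$. Since $|X_i| \le M$ almost surely, $|\E[X_i^k]| \le M^{k-2}\,\E[X_i^2]$, and the elementary inequality $k! \ge 2\cdot 3^{k-2}$ for all integers $k \ge 2$ lets me dominate the tail of the series by a geometric series; summing it yields, for all $\lambda$ with $0 < \lambda M < 3$,
\[
  \E\!\left[e^{\lambda X_i}\right] \;\le\; \exp\!\left(\frac{\E[X_i^2]\,\lambda^2/2}{1-\lambda M/3}\right).
\]
Taking the product over $i$ and writing $\sigma^2 \eqdef \sum_{i=1}^\ell \E[X_i^2]$, I get $\Pr[\sum_i X_i > t] \le \exp\!\bigl(-\lambda t + \tfrac{\lambda^2\sigma^2/2}{1-\lambda M/3}\bigr)$.

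It remains to choose $\lambda$ well. The choice $\lambda = t/(\sigma^2 + Mt/3)$ lies in $(0,3/M)$ because $\sigma^2 + Mt/3 > Mt/3$; with it one has $1 - \lambda M/3 = \sigma^2/(\sigma^2 + Mt/3)$, so the fraction in the exponent simplifies to $\lambda^2(\sigma^2 + Mt/3)/2 = \lambda t/2$, and the exponent collapses to $-\lambda t + \lambda t/2 = -\lambda t/2 = -t^2/(2\sigma^2 + \tfrac23 Mt)$, which is exactly the claimed bound. For the two-sided statement about general independent $Y_1,\dots,Y_\ell$, I would apply the one-sided inequality just proved to the zero-mean variables $Y_i - \E[Y_i]$ (noting $\E[(Y_i-\E[Y_i])^2] = \Var[Y_i]$ and using the hypothesised bound $|Y_i - \E[Y_i]| \le M$) and then again to $\E[Y_i] - Y_i$, and combine the two tail events $\{\sum_i (Y_i-\E[Y_i]) > t\}$ and $\{\sum_i (\E[Y_i]-Y_i) > t\}$ via the union bound (\cref{prop:union_bound}), which contributes the factor $2$.

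The main obstacle is getting the single-variable MGF bound with the right constant: one must check absolute convergence of the series for $\lambda M < 3$, verify the combinatorial inequality $k! \ge 2\cdot 3^{k-2}$ for every $k\ge 2$, and ensure the denominator comes out as exactly $1-\lambda M/3$ so that the optimisation reproduces the stated $\tfrac23 M t$ term; a cruder bound here would still give a Bernstein-type inequality but with inferior constants. Once that estimate is in place, the remaining steps --- factoring over independent terms, optimising $\lambda$, and the symmetrisation plus union bound --- are routine.
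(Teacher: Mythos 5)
Your proposal is correct, and it is the standard Chernoff/moment-generating-function proof of Bernstein's inequality: the paper itself gives no proof of this statement, since it imports the inequality directly from \citet{bernstein} and only ever uses it as a black box (in the proof of \cref{theo:klmedian_coreset}), so there is no internal argument to compare against. The individual steps you outline all check out: the bound $\vert\E[X_i^k]\vert \leq M^{k-2}\E[X_i^2]$ follows from $\vert X_i\vert^k \leq M^{k-2}X_i^2$, the inequality $k! \geq 2\cdot 3^{k-2}$ holds for all $k \geq 2$ by induction, the geometric series converges precisely when $\lambda M < 3$, and the choice $\lambda = t/(\sigma^2 + Mt/3)$ lies in that range and collapses the exponent to $-t^2/(2\sigma^2 + \nicefrac{2}{3}Mt)$ as claimed. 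For the two-sided form, applying the one-sided bound to $Y_i - \E[Y_i]$ and to $\E[Y_i] - Y_i$ (both centred, with second moment $\Var[Y_i]$ and absolute value at most $M$ by hypothesis) and combining via the union bound is exactly the reduction the theorem statement itself gestures at, so your write-up would serve as a self-contained proof of the cited result.
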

	\subsection{Sensitivity Sampling}
	\label{subsec:sensitivity}
	Sometimes it is useful to transform one probability space into one other. For example a probability space endowed with the uniform distribution has many benefits in designing estimators for various tasks. Sometimes these estimators may have high variance, such that a sample needs to have very high cardinality for the estimator to be precise. Especially in clustering, so called outliers, \ie objects that lie far from most other objects, can introduce a lot of error when uniform sampling schemes are applied.
	
	Following the work of \citet{langbergschulman}, for a probability space $(\Omega, \mathcal{E}, \varphi)$ and some non-negative random variables $X_1, \dots, X_w$, every $\omega \in \Omega$ is assigned a sensitivity value, which expresses the highest contribution that $\omega$ has on any $\E_\varphi[X_i]$ for $i=1,\dots,w$. The set of random variables $W \eqdef \{ X_1, \dots, X_w\}$ is then assigned a total sensitivity value based on the sensitivity values of the $\omega \in \Omega$.
	
	\citeauthor{langbergschulman} show that these values suffice to construct a probability function $\psi\colon \mathcal{E} \rightarrow \R$, such that the random variables $Y_i(\omega) \eqdef X_i(\omega) \frac{\varphi(\omega)}{\psi(\omega)}$, for $i \in \{1, \dots, w \}$, have the same expected values under $\psi$ as the $X_i$ under $\varphi$ and have low variance under $\psi$. Hence, they are precise for a relatively small sample from $(\Omega, \mathcal{E}, \psi )$.
	
	\begin{definition}[\conferre\citet{langbergschulman}]
		\label{def:sensitivity}
		The sensitivity for $\omega \in \Omega$ with respect to $W$ is defined \[\xi(\omega) \eqdef \max\limits_{i \in \{1,\dots,w\}} \frac{X_i(\omega)}{\E_\varphi[X_i]}.\]
		The total sensitivity of $W$ is defined \[ \Xi \eqdef \sum_{\omega \in \Omega} \xi(\omega) \varphi(\omega). \]
		Since $\xi(\cdot)$ is sometimes very hard to calculate we will use an upper bound $\lambda(\cdot)$ on $\xi(\cdot)$ and an upper bound \[\Lambda \eqdef \sum_{\omega \in \Omega} \lambda(\omega) \varphi(\omega) \] on $\Xi$ to construct the probability function $\psi$, which is then defined for $\omega \in \Omega$:
		\[ \psi(\omega) \eqdef \frac{\lambda(\omega)}{\Lambda} \varphi(\omega). \]
		For $E \in \mathcal{E}$ we define $\psi(E) \eqdef \sum_{\omega \in E} \psi(\omega)$.
	\end{definition}
	First we show that $\psi$ indeed is a probability function.
	\begin{proposition}
		\label{prop:sum_one}
		$\psi$ is a probability function for $\Omega$ and $\mathcal{E}$.
	\end{proposition}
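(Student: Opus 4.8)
The goal is to verify the three defining properties of a probability function from \cref{def:probability_space} for the pair $(\Omega, \mathcal{E})$ equipped with $\psi$. The proof is essentially a direct computation exploiting that every quantity in sight is non-negative and that $\Lambda$ is precisely the normalizing constant $\sum_{\omega \in \Omega} \lambda(\omega)\varphi(\omega)$.

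\textbf{Step 1: Non-negativity of $\psi$ on simple events.} First I would observe that for every $\omega \in \Omega$ we have $\lambda(\omega) \geq \xi(\omega) \geq 0$, since each $X_i$ is non-negative and the expectations $\E_\varphi[X_i]$ appearing in the denominator of $\xi$ are positive; together with $\varphi(\omega) \geq 0$ and $\Lambda > 0$ this gives $\psi(\omega) = \frac{\lambda(\omega)}{\Lambda}\varphi(\omega) \geq 0$. Here I am implicitly using that $\Lambda$ is finite and strictly positive, which is guaranteed because $\lambda$ is assumed to be a (usable) upper bound on $\xi$ and $\varphi$ is a probability function with not all its mass on points of zero sensitivity.

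\textbf{Step 2: Normalization.} Next I would compute
\[
\psi(\Omega) = \sum_{\omega \in \Omega} \psi(\omega) = \sum_{\omega \in \Omega} \frac{\lambda(\omega)}{\Lambda}\varphi(\omega) = \frac{1}{\Lambda} \sum_{\omega \in \Omega} \lambda(\omega)\varphi(\omega) = \frac{\Lambda}{\Lambda} = 1,
\]
where pulling the constant $\frac{1}{\Lambda}$ out of the (possibly countably infinite) sum is legitimate because all summands are non-negative. Consequently, for any $E \in \mathcal{E}$ we get $0 \leq \psi(E) = \sum_{\omega \in E}\psi(\omega) \leq \sum_{\omega \in \Omega}\psi(\omega) = 1$, establishing the first bullet of \cref{def:probability_space}, and the normalization $\psi(\Omega)=1$ is exactly the second bullet.

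\textbf{Step 3: Countable additivity.} Finally, for pairwise disjoint events $E_1, E_2, \dots \in \mathcal{E}$ I would write
\[
\psi\!\left(\bigcup_{i} E_i\right) = \sum_{\omega \in \bigcup_i E_i} \psi(\omega) = \sum_{i} \sum_{\omega \in E_i} \psi(\omega) = \sum_{i} \psi(E_i),
\]
where the regrouping of the sum over a disjoint union into a sum of sums is valid because the terms $\psi(\omega)$ are non-negative (so the series is absolutely convergent and rearrangement does not change its value). This verifies the third bullet. I do not anticipate a genuine obstacle here; the only point requiring mild care is making the summation manipulations rigorous when $\Omega$ is countably infinite, which is handled entirely by non-negativity, and silently assuming $0 < \Lambda < \infty$, which is inherent in the setup of \cref{def:sensitivity}.
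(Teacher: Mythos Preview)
Your proposal is correct and follows essentially the same approach as the paper: a direct verification of the axioms via the normalization identity $\sum_{\omega\in\Omega}\lambda(\omega)\varphi(\omega)=\Lambda$ together with non-negativity. The only difference is cosmetic---you spell out countable additivity explicitly, whereas the paper dispatches it in one line by appealing to the definition $\psi(E)=\sum_{\omega\in E}\psi(\omega)$.
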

	\begin{proof}
		Let $E \in \mathcal{E}$.
		\begin{align*}
			\psi(E) = \sum_{\omega \in E} \psi(\omega) = \sum\limits_{\omega \in E} \frac{\lambda(\omega)}{\Lambda} \varphi(\omega) = \frac{\sum\limits_{\omega \in E} \lambda(\omega) \varphi(\omega)}{\sum\limits_{\omega \in \Omega} \lambda(\omega) \varphi(\omega)} \leq 1
		\end{align*}
		Since all random variables are non-negative $\psi(E) \geq 0$. For $\Omega$ the calculation is similar:
		\begin{align*}
			\psi(\Omega) = \frac{\sum\limits_{\omega \in \Omega} \lambda(\omega) \varphi(\omega)}{\sum\limits_{\omega \in \Omega} \lambda(\omega) \varphi(\omega)} = 1
		\end{align*}
		The claim follows by these two arguments and the definition of $\psi$.
	\end{proof}
	Also, the expected values of $X_i$ under $\varphi$ and $Y_i$ under $\psi$ are equal for $i=1,\dots,w$.
	\begin{proposition}
		\label{prop:exp_equal}
		Let $X_i \in W$ be arbitrary. It holds that $\E_\varphi[X_i] = \E_{\psi}[Y_i]$.
	\end{proposition}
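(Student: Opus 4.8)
The plan is to prove this by a direct computation from the definition of the expected value over the (discrete) probability space $(\Omega, \mathcal{E}, \psi)$. First I would write out $\E_\psi[Y_i] = \sum_{\omega \in \Omega} Y_i(\omega)\, \psi(\omega)$ using \cref{def:expected_value}, and then substitute the definition $Y_i(\omega) \eqdef X_i(\omega) \frac{\varphi(\omega)}{\psi(\omega)}$ from \cref{subsec:sensitivity}. The factor $\psi(\omega)$ then cancels, leaving $\sum_{\omega \in \Omega} X_i(\omega)\, \varphi(\omega)$, which is exactly $\E_\varphi[X_i]$ by \cref{def:expected_value} again. So the heart of the argument is a single line of cancellation.

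The one point that needs care is the well-definedness of $Y_i(\omega)$ when $\psi(\omega) = 0$: by \cref{def:sensitivity}, $\psi(\omega) = \frac{\lambda(\omega)}{\Lambda}\varphi(\omega)$, so $\psi(\omega)$ vanishes exactly when $\varphi(\omega) = 0$ or $\lambda(\omega) = 0$. In the first case $\omega$ contributes nothing to either expectation and can be dropped from the sum. In the second case, since $\lambda$ is an upper bound on the sensitivity $\xi(\omega) = \max_i X_i(\omega)/\E_\varphi[X_i]$, we have $X_i(\omega) = 0$ for all $i$, so the convention that such terms contribute $0$ to the sum is consistent; I would simply restrict the summation to the support of $\psi$ (equivalently, adopt $0 \cdot \frac{\varphi(\omega)}{0} = 0$), and note this matches the support of $\varphi$ on which the $X_i$ are nonzero. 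This is the only genuine obstacle, and it is a bookkeeping matter rather than a mathematical one.

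Concretely, the proof would read essentially:
\begin{align*}
	\E_{\psi}[Y_i] = \sum_{\omega \in \Omega} Y_i(\omega)\, \psi(\omega) = \sum_{\omega \in \Omega} X_i(\omega) \frac{\varphi(\omega)}{\psi(\omega)}\, \psi(\omega) = \sum_{\omega \in \Omega} X_i(\omega)\, \varphi(\omega) = \E_{\varphi}[X_i],
\end{align*}
with a preceding sentence restricting attention to $\{\omega \in \Omega : \psi(\omega) > 0\}$ and observing that all omitted $\omega$ have $X_i(\omega)\varphi(\omega) = 0$, so the value of the sum is unchanged. Since $X_i \in W$ was arbitrary, this establishes the claim for every $i \in \{1,\dots,w\}$.
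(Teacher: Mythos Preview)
Your proof is correct and matches the paper's approach exactly: the paper gives precisely the one-line cancellation $\E_\psi[Y_i] = \sum_{\omega} X_i(\omega)\frac{\varphi(\omega)}{\psi(\omega)}\psi(\omega) = \E_\varphi[X_i]$. Your additional discussion of the $\psi(\omega)=0$ case is more careful than the paper, which silently assumes this issue away.
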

	\begin{proof}
		\begin{align*}
			\E_\psi[Y_i] = \sum_{\omega \in \Omega} Y_i(\omega) \psi(\omega) = \sum_{\omega \in \Omega} X_i(\omega) \frac{\varphi(\omega)}{\psi(\omega)} \psi(\omega) = \E_\varphi[X_i]
		\end{align*}
	\end{proof}
	Finally we show that the random variables $Y_1, \dots, Y_w$ have low variance under $\psi$.
	\begin{proposition}[\citet{langbergschulman}]
		\label{prop:var_sens}
		Let $X_i \in W$ be arbitrary. It holds that $\Var_\psi[Y_i] \leq (\Lambda-1) \cdot \E_\psi[Y_i]^2$.
	\end{proposition}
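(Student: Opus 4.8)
The plan is to reduce the claim to the single inequality $\E_\psi[Y_i^2] \le \Lambda \cdot \E_\varphi[X_i]^2$ and then to verify this by unwinding the definitions of $\psi$ and $Y_i$ termwise. Since $\psi(\omega) = \frac{\lambda(\omega)}{\Lambda}\varphi(\omega)$, on the set where $\lambda(\omega) > 0$ we have the change-of-measure identity $\frac{\varphi(\omega)}{\psi(\omega)} = \frac{\Lambda}{\lambda(\omega)}$; on the set where $\lambda(\omega) = 0$ all $X_i(\omega)$ vanish (because $\lambda \ge \xi \ge X_i/\E_\varphi[X_i]$ and the $X_i$ are non-negative), so with the convention $0/0 = 0$ these points contribute nothing and may be ignored.

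First I would compute, using this identity, that
\[ \E_\psi[Y_i^2] = \sum_{\omega \in \Omega} X_i(\omega)^2 \frac{\varphi(\omega)^2}{\psi(\omega)^2}\psi(\omega) = \sum_{\omega \in \Omega} X_i(\omega)^2 \frac{\varphi(\omega)^2}{\psi(\omega)} = \Lambda \sum_{\omega \in \Omega} \frac{X_i(\omega)^2}{\lambda(\omega)}\varphi(\omega). \]
Then I would invoke the defining property of the sensitivity upper bound: for every $i$ and every $\omega$ we have $\lambda(\omega) \ge \xi(\omega) \ge \frac{X_i(\omega)}{\E_\varphi[X_i]}$, hence $\frac{X_i(\omega)}{\lambda(\omega)} \le \E_\varphi[X_i]$, and multiplying by $X_i(\omega) \ge 0$ gives $\frac{X_i(\omega)^2}{\lambda(\omega)} \le X_i(\omega)\,\E_\varphi[X_i]$. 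Substituting this into the sum yields
\[ \E_\psi[Y_i^2] \le \Lambda\,\E_\varphi[X_i] \sum_{\omega \in \Omega} X_i(\omega)\varphi(\omega) = \Lambda\,\E_\varphi[X_i]^2. \]

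Finally I would combine this with $\E_\psi[Y_i] = \E_\varphi[X_i]$ from \cref{prop:exp_equal} to get $\Var_\psi[Y_i] = \E_\psi[Y_i^2] - \E_\psi[Y_i]^2 \le \Lambda\,\E_\varphi[X_i]^2 - \E_\varphi[X_i]^2 = (\Lambda - 1)\,\E_\psi[Y_i]^2$. The only points needing a little care are the degenerate cases (a vanishing $\lambda(\omega)$, handled above, and $\E_\varphi[X_i] = 0$, which forces $X_i \equiv 0$ and makes both sides zero) and the observation that $\Lambda \ge \Xi \ge 1$ so the factor $(\Lambda - 1)$ is non-negative; I do not expect any genuine obstacle — the substance of the argument is just the bookkeeping that turns the reweighting into the factor $\Lambda/\lambda(\omega)$ together with the pointwise sensitivity bound.
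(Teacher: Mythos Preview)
Your proof is correct and takes essentially the same approach as the paper: both reduce to bounding $\sum_{\omega} \frac{\Lambda}{\lambda(\omega)} X_i(\omega)^2 \varphi(\omega)$ via the pointwise sensitivity inequality $X_i(\omega) \le \lambda(\omega)\,\E_\varphi[X_i]$. The only cosmetic difference is that you invoke the identity $\Var_\psi[Y_i] = \E_\psi[Y_i^2] - \E_\psi[Y_i]^2$ up front, whereas the paper expands the square in the definition of the variance and simplifies to the same expression; your handling of the degenerate cases $\lambda(\omega)=0$ and $\E_\varphi[X_i]=0$ is an extra bit of care the paper omits.
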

	\begin{proof}
		\begin{align*}
			\Var_\psi[Y_i] ={} & \sum_{\omega \in \Omega} \left( X_i(\omega) \frac{\varphi(\omega)}{\psi(\omega)} - \E_\psi[Y_i]  \right)^2 \cdot \psi(\omega) \\
			={} & \sum_{\omega \in \Omega} \left( X_i(\omega) \frac{\varphi(\omega)}{\psi(\omega)} - \E_\psi[Y_i]  \right)^2 \cdot \varphi(\omega) \frac{\lambda(\omega)}{\Lambda} \\
			={} & \sum_{\omega \in \Omega} \left[X_i(\omega)^2 \frac{\Lambda}{\lambda(\omega)} \varphi(\omega) - 2 X_i(\omega) \varphi(\omega) \E_\psi[Y_i] + \E_\psi[Y_i]^2 \varphi(\omega) \frac{\lambda(\omega)}{\Lambda}\right] \\
			={} & \sum_{\omega \in \Omega} \frac{\Lambda}{\lambda(\omega)} X_i(\omega)^2  \varphi(\omega) - 2 \E_\psi[Y_i] \underbrace{\sum_{\omega \in \Omega} X_i(\omega) \varphi(\omega)}_{=\E_\varphi[X_i]} + \E_\psi[Y_i]^2 \underbrace{\sum_{\omega \in \Omega} \frac{\lambda(\omega)}{\Lambda} \varphi(\omega)}_{=1} \\
			={} & \sum_{\omega \in \Omega} \frac{\Lambda}{\lambda(\omega)} X_i(\omega)^2  \varphi(\omega) -  \E_\psi[Y_i]^2 \\
			\leq{} & \sum_{\omega \in \Omega} \frac{\Lambda}{\lambda(\omega)} X_i(\omega) \cdot \E_\psi[Y_i] \cdot \lambda(\omega) \cdot \varphi(\omega) -\E_\psi[Y_i]^2 \\
			={} & (\Lambda - 1) \cdot \E_\psi[Y_i]^2
		\end{align*}
		The inequality is obtained through the fact that $\frac{X_i(\omega)}{\E_\varphi[X_i]} \leq \xi(\omega) \Leftrightarrow X_i(\omega) \leq \E_\varphi[X_i] \xi(\omega)$ and that $\lambda(\omega)$ is an upper bound for $\xi(\omega)$.
	\end{proof}
	\chapter{$\epsilon$-coresets for Clustering Objectives}
\label{chap:coresets}
Before we start investigating \coreset constructions, we seek to answer one obvious question: Can we directly apply techniques for constructing \coreset[s] of point-sets to curve sets?

Since a vast number of sophisticated techniques for constructing \coreset[s] of point sets exists, \ie random sampling, geometric decompositions (grids, balls, half-spaces, etc.), gradient descent as well as sketches and projections, \conferre \cite{coresers_methods_history}, it would be a great benefit to utilize these techniques when we are dealing with curve-sets. Here a first remark that comes to mind is that, even if this was possible, we would only be able to obtain the value of the clustering. To construct center-curves out of the center-points we would still need additional techniques -- these are even imaginable. 

\section{Why Transforming Curves into Points Fails}
Assume we are given a set of curves and want to apply one of the aforementioned techniques, without customizing or changing it in any way. A possibility that lies near: Transform the curves into points through an isometry, \ie a function that maps $(\Tau, \pfrechet)$ into $\euclideanspace^d$, maintaining the distances between the objects, \conferre \cref{def:isometry}. But is there a suitable isometry for every set of curves?

The answer is no, which is not surprising. Otherwise it would not make sense to deal with the Fréchet distance at all. The following proposition shows where the crucial point lies in the transformation.
\begin{proposition}
	\label{prop:embedding}
	For every integer $d \geq 1$, there exists a set $\Tau \eqdef \{\tau_1,\dots, \tau_n\} \subset \eqcfre{m}$ of curves, such that there exists no isometry $f\colon \Tau \rightarrow \euclideanspace^d$ that embeds $(\Tau, \pfrechet)$ into $\euclideanspace^d$.
\end{proposition}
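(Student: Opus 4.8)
The plan is to exhibit, for each dimension $d$, a finite set of curves whose pairwise Fréchet distances cannot be realized as pairwise Euclidean distances of points in $\euclideanspace^d$. The obstruction to an isometric embedding into $\euclideanspace^d$ is exactly \cref{theo:kiss}: in $\euclideanspace^d$ only $\psi_d$ distinct points can share a common nearest neighbor, whereas in the Fréchet metric arbitrarily many curves can be equidistant from one ``center'' curve and mutually equidistant from each other. So the strategy is to build a star-like configuration: one curve $\tau_0$ together with $n > \psi_d$ curves $\tau_1,\dots,\tau_n$ such that $\frechet{\tau_0}{\tau_i} = 1$ for every $i$ and $\frechet{\tau_i}{\tau_j} = 1$ for every $i \neq j$ (an equilateral set of size $n+1$). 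If this configuration embeds isometrically, its image would be $n+1$ equidistant points in $\euclideanspace^d$, so in particular the $n$ points $f(\tau_1),\dots,f(\tau_n)$ would all have $f(\tau_0)$ as a (common) nearest neighbor, contradicting \cref{theo:kiss} once $n > \psi_d$.

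First I would take $m = 3$ (curves with at most two edges suffice) and work in the plane of the first two coordinates of $\euclideanspace^d$, fixing the first and last vertices of all curves so that \cref{obs:frechet_endpoints} does not force distances to be large or small in an uncontrolled way. Concretely, let every $\tau_i$ share the common initial point $a$ and common end point $b$, so that by \cref{obs:frechet_endpoints} the endpoint contributions vanish and the Fréchet distance is governed purely by the middle vertex. Then choose the middle vertices $v_1,\dots,v_n$ to be $n$ points that are pairwise at Euclidean distance exactly $1$ and each at distance exactly $1$ from the middle vertex $v_0$ of $\tau_0$ — for instance, take $v_1,\dots,v_n$ on a tiny circular arc (or, more robustly, just place $v_0,\dots,v_n$ as the vertices of a regular simplex scaled to edge length $1$, which fits in $\euclideanspace^d$ for $n \le d$, and for larger $n$ simply embed that simplex-realization abstractly — we only need the curves to \emph{exist}, and any $n+1$ points in a sufficiently high-dimensional subspace of $\euclideanspace^d$ will do provided $d$ is large; for small $d$ we still get $n+1$ as large as needed since $\psi_d$ is finite, so I would instead place the $v_i$ so as to directly make the \emph{Fréchet} distances equal to $1$, using the explicit formula from \cref{obs:frechet_endpoints}-type reasoning for one-edge-per-side curves). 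The point is that with shared endpoints the Fréchet distance between $\tau_i$ and $\tau_j$ reduces to a simple function of $\eucl{v_i}{v_j}$ and the endpoint distances, which one tunes to be constantly $1$.

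The main obstacle is making the equidistant Fréchet configuration genuinely unbounded in cardinality while living in $\eqcfre{m}$ for \emph{fixed} (small) $m$ and in $\euclideanspace^d$ for \emph{fixed} $d$ — one must not cheat by using ambient dimension to spread points out, since the curves themselves live in $\euclideanspace^d$. I expect the cleanest route is to let the curves ``spiral'' or zig-zag so that the controlling quantity is not a Euclidean distance in $\euclideanspace^d$ at all but a reparametrization-alignment cost: e.g. take all $\tau_i$ to traverse the \emph{same} point set but in slightly perturbed ways, exploiting that the Fréchet distance sees ordering, so that one can have $n$ curves pairwise at distance $1$ even though the $n$ curves as point sets coincide up to tiny perturbations — this is precisely the phenomenon that distinguishes $\pfrechet$ from Hausdorff distance and from Euclidean geometry, and it is why \cref{theo:kiss} kills the embedding. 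Once the equilateral family of size $n+1$ with $n > \psi_d$ is in hand, the conclusion is immediate: an isometry $f$ would produce $n+1$ mutually equidistant points in $\euclideanspace^d$, hence $n$ points sharing the common nearest neighbor $f(\tau_0)$, contradicting \cref{theo:kiss}, so no such $f$ exists. I would then remark that the same argument rules out isometries into $\euclideanspace^{d'}$ for any fixed $d'$, not just $d' = d$, by the same kissing-number bound.
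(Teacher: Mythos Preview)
Your high-level strategy matches the paper's: exhibit more than $\psi_d$ curves sharing a common nearest neighbor under $\pfrechet$, then invoke \cref{theo:kiss} to rule out any isometry into $\euclideanspace^d$. The gap is in the construction itself. You aim for a fully \emph{equilateral} family (all pairwise distances equal to $1$), and try to realize it by fixing common endpoints and spreading out middle vertices. But, as you already notice, $n+1$ equidistant middle vertices in $\euclideanspace^d$ exist only for $n \le d$, while you need $n > \psi_d \ge d+1$; your ``simplex in a higher-dimensional subspace'' remark cannot help because the curves must live in the \emph{given} $\euclideanspace^d$. The fallback to ``spiraling'' or ordering tricks is a hope, not a construction, and with $m=3$ and shared endpoints there is no obvious way to manufacture arbitrarily large equilateral Fréchet families in fixed ambient dimension.

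The paper sidesteps all of this by (i) dropping the unnecessary equilateral requirement---for \cref{theo:kiss} you only need $\frechet{\tau_0}{\tau_i}=a$ and $\frechet{\tau_i}{\tau_j}\ge a$---and (ii) working with line segments ($m=2$), where \cref{obs:frechet_endpoints} gives $\frechet{\overline{pq}}{\overline{rs}}=\max\{\eucl{p}{r},\eucl{q}{s}\}$. Take a segment $\tau_1$, place $\psi_d$ points $p_1,\dots,p_{\psi_d}$ around $\tau_1(0)$ realizing the kissing configuration at radius $a$, and likewise $q_1,\dots,q_{\psi_d}$ around $\tau_1(1)$. The segments $\tau_{i+1}=\overline{p_i q_i}$ are each at Fréchet distance exactly $a$ from $\tau_1$ and pairwise at distance $\ge a$. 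The key extra step is to add \emph{one more} curve $\tau_{\psi_d+2}=\overline{p_1 q_2}$: it still has distance exactly $a$ to $\tau_1$ (both endpoints sit on the respective spheres) and distance $\ge a$ to every $\tau_{i+1}$ (since either the start or the end pair is at distance $\ge a$). Now $\tau_1$ is a common nearest neighbor of $\psi_d+1$ distinct curves, and \cref{theo:kiss} finishes the argument. The point you were missing is precisely this ``mix-and-match'' of endpoints, which lets the Fréchet metric accommodate one more neighbor than the Euclidean kissing number allows---no equilateral set, no reparametrization tricks, and $m=2$ suffices.
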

\begin{proof}
	Let $\psi_d$ denote the $d$-dimensional kissing number (\conferre \cref{theo:kiss}), further let $n \eqdef \psi_{d} + 2$ and $m \eqdef 2$, such that $\tau_1, \dots, \tau_{\psi_{d} + 2}$ are line segments. Let $I \eqdef \{ p_1, \dots, p_{\psi_d} \} \subset \euclideanspace^d$ and let $E \eqdef \{ q_1, \dots, q_{\psi_d} \} \subset \euclideanspace^d$ be sets of points. Let $\tau_1$ be an arbitrary but fixed curve. 
	
	Let $a \in \R_{\geq 0}$. We arrange $p_1, \dots, p_{\psi_d}$, such that they share $\tau_1(0)$ as a common nearest neighbor with distance $a$. Similarly, we arrange $q_1, \dots, q_{\psi_d}$, such that they share $\tau_1(1)$ as a common nearest neighbor with distance $a$.
	
	The main part is to define $\tau_2, \dots, \tau_{\psi_{d} + 2}$, such that they share $\tau_1$ as common nearest neighbor under the Fréchet distance, which is done as follows (\conferre \cref{fig:embedding}):
	
	For $i = 1, \dots, \psi_d$ we define $ \tau_{i+1} \eqdef \overline{p_i q_i}$. The crucial point is to define $\tau_{\psi_{d} + 2} \eqdef \overline{p_1 q_2}$ or similar. The main thing is that $\tau_{\psi_{d} + 2}$ does not equal any of $\tau_1, \dots, \tau_{\psi_{d} + 1}$. From \cref{obs:frechet_endpoints} it immediately follows that $\tau_1$ is a common nearest neighbor of $\tau_2, \dots, \tau_{\psi_d+2}$ with Fréchet distance $a$.
	
	For $m > 2$ let the edges of $\tau_{i}$, for $i \in \{1, \dots, \psi_{d}+2\}$, lie very close to the respective previously defined line segment (we want the curves to be \emph{nearly} equal to the line segments), such that the Fréchet distances among the curves is equal to the Fréchet distances among the line segments. It is easy to see, that such a construction is possible.
	
	\begin{figure}[h]
		\begin{center}
			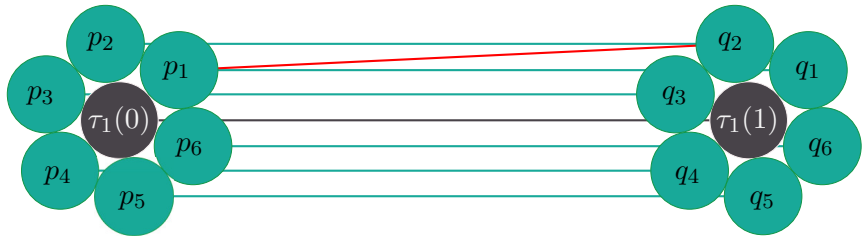
			\caption{Exemplary construction of $\Tau$ in the proof of \cref{prop:embedding} for $d=2$. The curves are defined with respect to the centers of the balls. The red curve, \ie $\tau_{\psi_d + 2} = \overline{p_1 q_2}$, breaks every embedding.}
			\label{fig:embedding}
		\end{center}
	\end{figure}

	Now assume there exists an isometry $f\colon \Tau \rightarrow \euclideanspace^d$ that embeds $(\Tau,\pfrechet)$ into $\euclideanspace^d$. Then the point $f(\tau_1)$ is a common nearest neighbor of $\psi_{d} + 1$ other points $f(\tau_2), \dots, f(\tau_{\psi_{d} + 2})$. Because $\psi_{d} + 1 > \psi_d$ we have a contradiction, which finishes the proof.
\end{proof}
\cref{prop:embedding} shows that even for a small number of input-curves there might not exist an embedding of $(\Tau, \pfrechet)$ into the euclidean space, therefore we need to check carefully if an adaption from an \coreset construction for point-sets does work for curve-sets. In the following we will examine and rigorously analyze such adaptions, but before we do, we state a corollary that can be derived from \cref{prop:embedding} surprisingly.
\begin{corollary}
	\label{coro:kiss_lines}
	The maximum number of line segments in $d$-dimensional euclidean space that can share a common nearest neighbor is less than or equal to $\psi_d^2$ (\conferre \cref{theo:kiss}).
\end{corollary}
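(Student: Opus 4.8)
The plan is to reduce the statement, via \cref{obs:frechet_endpoints}, to two applications of the kissing-number bound \cref{theo:kiss}: one for the ``initial halves'' and one for the ``terminal halves'' of the segments. First I would set up the hypothesis: suppose $\sigma_1,\dots,\sigma_N$ are pairwise distinct line segments in $\euclideanspace^d$ sharing a common nearest neighbor, i.e.\ there is a line segment $\tau = \overline{uv}$ with $\frechet{\sigma_i}{\tau} \le \frechet{\sigma_i}{\sigma_j}$ for all $i \neq j$; the goal is $N \le \psi_d^2$. By \cref{obs:frechet_endpoints} the Fréchet distance between two line segments is the larger of the two Euclidean distances between their corresponding endpoints, so for each $i$ I would orient $\sigma_i = \overline{a_i b_i}$ as the one of its two orientations realising $\frechet{\sigma_i}{\tau} = \max\{\eucl{a_i}{u},\eucl{b_i}{v}\}$; in particular $\eucl{a_i}{u} \le \frechet{\sigma_i}{\tau}$ and $\eucl{b_i}{v} \le \frechet{\sigma_i}{\tau}$.

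The heart of the argument is the following two claims. (i) The distinct initial points among $a_1,\dots,a_N$ have $u$ as a common nearest neighbor in $\euclideanspace^d$, hence by \cref{theo:kiss} there are at most $\psi_d$ of them. (ii) Symmetrically, the distinct terminal points among $b_1,\dots,b_N$ have $v$ as a common nearest neighbor, hence at most $\psi_d$ of them. For (i), given $\sigma_i,\sigma_j$ with $a_i \neq a_j$, I would combine $\eucl{a_i}{u} \le \frechet{\sigma_i}{\tau} \le \frechet{\sigma_i}{\sigma_j}$ with the upper bound on $\frechet{\sigma_i}{\sigma_j}$ coming from \cref{obs:frechet_endpoints}, using that pairs of segments ``aligned'' on their terminal halves have Fréchet distance governed by their initial points; (ii) is the same with the two endpoints swapped. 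Once both claims hold, each $\sigma_i$ is determined by the pair $(a_i,b_i)$ consisting of one of at most $\psi_d$ initial points and one of at most $\psi_d$ terminal points, so $N \le \psi_d \cdot \psi_d = \psi_d^2$.

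The main obstacle I anticipate is making (i) and (ii) fully rigorous. The Fréchet distance between two segments is a \emph{maximum} of two endpoint distances, while the common-nearest-neighbor hypothesis only controls one of them directly, so one must be careful about which orientation of $\sigma_j$ is used when bounding $\frechet{\sigma_i}{\sigma_j}$ from above, and about degenerate configurations (coincident endpoints, or $\tau$ equal to some $\sigma_i$), which require separate but routine handling. As a sanity check in the opposite direction — and as a reassurance that the exponent $2$ is not avoidable — I would note that extending the construction from the proof of \cref{prop:embedding} gives tightness: taking $\{p_1,\dots,p_{\psi_d}\}$ in kissing position around $u$ and $\{q_1,\dots,q_{\psi_d}\}$ in kissing position around $v$ (with $\eucl{u}{v}$ large), the $\psi_d^2$ segments $\overline{p_i q_j}$ are pairwise distinct and all have $\overline{uv}$ as a common nearest neighbor under $\pfrechet$.
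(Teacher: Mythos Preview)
Your high-level strategy --- bound the number of distinct initial points and the number of distinct terminal points each by $\psi_d$ via \cref{theo:kiss}, then take the product --- is exactly the paper's strategy, and your tightness construction at the end is the one implicit in the proof of \cref{prop:embedding}.

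The substantive gap is that claims (i) and (ii), as you state them, are \emph{false}: the distinct initial points need not have $u$ as a common Euclidean nearest neighbor. Concretely, take $u=(0,0)$, $v=(100,0)$, $\tau=\overline{uv}$, $a_1=(0.1,0)$, $a_2=(0.1,0.01)$, $b_1=(101,0)$, $b_2=(99,0)$. Then $\frechet{\sigma_i}{\tau}=1$ for both $\sigma_i=\overline{a_i b_i}$, and $\frechet{\sigma_1}{\sigma_2}=\max\{0.01,2\}=2\ge 1$, so $\tau$ is a common nearest neighbor of $\sigma_1,\sigma_2$; yet $\eucl{a_1}{a_2}=0.01<0.1=\eucl{a_1}{u}$, so $u$ is not a nearest neighbor of $a_1$ among the initial points. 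Your hinted fix (``pairs aligned on their terminal halves'') does not resolve this: two initial points can be arbitrarily close precisely because the corresponding terminal points are far apart, and that is exactly what the $\max$ in \cref{obs:frechet_endpoints} permits.

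The paper does \emph{not} try to prove (i) and (ii) pointwise. Instead it first fixes a single common Fr\'echet distance $a$ to $\tau$, and then argues extremally: if some initial point lay in the open ball $B_o^I$ around $\tau(0)$ it would be ``detrimental'' for maximizing the number of segments, so one may assume $I\subset \partial B(\tau(0),a)$ and $E\subset \partial B(\tau(1),a)$; only after this reduction does it invoke \cref{theo:kiss} to get $\lvert I\rvert,\lvert E\rvert\le\psi_d$. Your proposal is missing this extremal reduction to the equal-distance, on-the-sphere configuration; without it, the direct ``$u$ is a common nearest neighbor of the $a_i$'' route does not go through.
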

\begin{proof}
	Let $\tau$ be an arbitrary line segment. Assume there exists a set of $\psi_d^2 + 1$ line segments $\Tau^\prime \eqdef \{\tau_1^\prime, \dots, \tau_{\psi_{d}^2 + 1}^\prime\}$ that have $\tau$ as common nearest neighbor with Fréchet distance $a \in \R_{\geq 0}$. By \cref{def:frechet_distance} we have that \[I \eqdef \{ \tau^\prime(0) \mid \tau^\prime \in \Tau^\prime \} \subset B^I \eqdef \{ p \in \euclideanspace^d \mid \eucl{p}{\tau(0)} \leq a \}\] and \[E \eqdef \{ \tau^\prime(1) \mid \tau^\prime \in \Tau^\prime \} \subset B^E \eqdef \{ p \in \euclideanspace^d \mid \eucl{p}{\tau(1)} \leq a \}.\]
	
	Now by \cref{obs:frechet_endpoints} we know that for two arbitrary line segments, $\sigma$ and $\sigma^{\prime}$, the Fréchet distance is either realized through $\eucl{\sigma(0)}{\sigma^{\prime}(0)}$ or $\eucl{\sigma(1)}{\sigma^{\prime}(1)}$. That means for $\tau^\prime \in \Tau^\prime$ either $\eucl{\tau(0)}{\tau^\prime(0)} = a$ or $\eucl{\tau(1)}{\tau^\prime(1)} = a$ and for any other $\tau^{\prime\prime} \in \Tau^\prime \setminus \{ \tau^\prime \}$ it must hold that if $\eucl{\tau^\prime(0)}{\tau^{\prime\prime}(0)} < a$, then $\eucl{\tau^\prime(1)}{\tau^{\prime\prime}(1)} \geq a$ and vice versa.
	
	Let \[B_o^I \eqdef  \{ p \in \euclideanspace^d \mid \eucl{p}{\tau(0)} < a \}\] and \[ B^E_o \eqdef \{ p \in \euclideanspace^d \mid \eucl{p}{\tau(1)} < a \}. \]
	For an arbitrary $\tau^\prime \in \Tau^\prime$ assume that $\tau^\prime(0) \in B_o^I$. This means that $\tau^\prime(1) \in B^E\setminus B^E_o$ must hold, otherwise $\frechet{\tau}{\tau^\prime} < a$. The same argument holds for $\tau^\prime(1) \in B^E_o$, hence it is detrimental if $\tau^{\prime\prime}(0) \in B_o^I$ or $\tau^{\prime\prime}(1) \in B^E_o$ for any $\tau^{\prime\prime} \in \Tau^\prime$. Further, assume for $\tau^\prime, \tau^{\prime\prime} \in \Tau^\prime$ that $\eucl{\tau^\prime(0)}{\tau^{\prime\prime}(0)} < a$. This implies that $\eucl{\tau^\prime(0)}{\tau^{\prime\prime}(0)} \geq a$ must hold, otherwise $\frechet{\tau^\prime}{\tau^{\prime\prime}} < a$ and then $\tau$ is not a nearest neighbor for both line segments. The same argument holds for $\tau^\prime(1)$ and $\tau^{\prime\prime}(1)$. 
	
	Both arguments hold for all $\tau^\prime \in \Tau^\prime$ at a time. In combination, they imply that $I \subset B^I \setminus B_o^I$ and $E \subset B^E \setminus B^E_o$ must hold and further it is implied that $\vert I \vert \leq \psi_d$, as well as $\vert E \vert \leq \psi_d$ by \cref{theo:kiss}.
		
	Finally, the maximum number of line segments that can be obtained is to connect every point in $I$ to every point in $E$, which gives $\psi_d^2$ line segments. This yields to a contradiction which finishes the proof.
\end{proof}
\cref{coro:kiss_lines} shows that the space $(\Tau, \pfrechet)$ has other characteristics than $\euclideanspace^d$ (which is already thoroughly studied), although these characteristics can be derived from $\euclideanspace^d$. Thus, techniques that depend highly on the underlying geometry of the objects, such as geometric decompositions, will most likely do not work with respect to curve-sets as they do with respect to point-sets. To apply such a technique we will have to check every aspect of the technique carefully regarding their correctness in relation to curves. We will encounter such a situation in \cref{sec:klcenter}, which unfortunately yields very complicated construction techniques that lead to worse results than those with respect to point-sets, in terms of the cardinality of the resulting \coreset[s].

On the other hand, techniques that depend only on the property that the employed distance function is a metric (which is the case with the Fréchet distance) can be applied directly. This will be the case in \cref{sec:klmedian}, which fortunately yields good results in terms of the cardinality of the resulting \coreset[].

In \cref{sec:klmeans} we will again try to tackle the underlying geometry of the clustering problem, using the benefits of \cref{prop:zauberformel}, but this attempt does fail due to the different geometry of the problems at hand.

The prior statements hint that common \coreset constructions through geometric decompositions may have to be highly modified to work with curves under the Fréchet distance. When an approximate solution to the clustering problem is available an adaption of a construction used for point-sets based on a minimum enclosing ball/minimum radius ball cover is possible, though.
\section{The \klcenter Objective}
\label{sec:klcenter}

In this section we are working with \cref{def:klcenter}, \ie the \klcenter objective, and we are proving \cref{theo:algo_coreset_center_ls} and \cref{theo:algo_coreset_center_pc}. We are given a set of curves $\Tau \eqdef \{ \tau_1, \dots, \tau_n \} \subset \eqcfre{m}$ and for $k=1$ we are simply looking for a center-set $\{c\} \subset \eqcfre{l}$ such that the center-curve is the center of a smallest enclosing ball of the curve-set (\conferre \cref{def:ball}). For $k>1$ we are looking for a center-set $C \subset \eqcfre{l}$, such that the center-curves are centers of a minimum radius ball cover. Of course, with respect to the Fréchet distance, the obtained geometric objects are barely similar to (regular) balls in the euclidean space.

\subsection{Moving Curves}

We are going to construct the \coreset[s] by moving curves, or more precisely, by partitioning the input-set with respect to the pairwise distances among the curves and adding only a single curve from each element of the partition to the \coreset[]. This can be seen as a kind of movement as a result of which the elements of the partition collapse to a single curve that is present multiple times, so we have to add only a single instance of the respective curve to the \coreset[]. The more curves an element of the partition contains, the more the input-set is reduced. This idea is very central in \coreset[]-constructions. For example a similar approach for point-sets can be found in the long version of \citet{movement_lemma_origin} or in \citet[Theorem 1]{coresers_methods_history}. We start by bounding the maximum movement for each curve. 
\begin{proposition}
	\label{prop:movement_center}
	Let $\pi \colon \Tau \rightarrow \eqcfre{m}$ be a function with $\forall \tau \in \Tau \colon \frechet{\tau}{\pi(\tau)} \leq \epsilon \cdot \optcost{\Tau}$, then $\forall C \subset \eqcfre{l}: (\vert C \vert = k) \Rightarrow (1-\epsilon) \cost{\Tau}{C} \leq \cost{\pi(\Tau)}{C} \leq (1+\epsilon) \cost{\Tau}{C} $, where $\pi(\Tau) \eqdef \{ \pi(\tau) \mid \tau \in \Tau \}$.
\end{proposition}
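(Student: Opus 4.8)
The plan is to fix an arbitrary center-set $C \subset \eqcfre{l}$ with $\vert C \vert = k$ and prove the two inequalities separately, using only the triangle inequality for the Fréchet distance (which is a metric) together with the elementary observation that the movement budget is controlled by the cost at $C$: since $\optcost{\Tau} = \min_{C'} \cost{\Tau}{C'} \leq \cost{\Tau}{C}$, we have $\epsilon \cdot \optcost{\Tau} \leq \epsilon \cdot \cost{\Tau}{C}$ for this particular $C$.

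For the upper bound I would take an arbitrary $\tau \in \Tau$, set $c \eqdef \nearest{\tau}{C}$, and chain
\[
\frechet{\pi(\tau)}{\nearest{\pi(\tau)}{C}} \leq \frechet{\pi(\tau)}{c} \leq \frechet{\pi(\tau)}{\tau} + \frechet{\tau}{c} \leq \epsilon \cdot \optcost{\Tau} + \frechet{\tau}{\nearest{\tau}{C}}.
\]
Bounding the first summand by $\epsilon \cdot \cost{\Tau}{C}$ and the second by $\cost{\Tau}{C}$ gives $\frechet{\pi(\tau)}{\nearest{\pi(\tau)}{C}} \leq (1+\epsilon) \cost{\Tau}{C}$, and taking the maximum over all images $\pi(\tau) \in \pi(\Tau)$ yields $\cost{\pi(\Tau)}{C} \leq (1+\epsilon) \cost{\Tau}{C}$. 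I would explicitly remark that $\pi$ need not be injective, but since the cost is a maximum this is harmless: the maximum over the set $\pi(\Tau)$ coincides with the maximum over the family $(\pi(\tau))_{\tau \in \Tau}$.

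For the lower bound I would argue symmetrically: for an arbitrary $\tau \in \Tau$, set $c' \eqdef \nearest{\pi(\tau)}{C}$ and bound
\[
\frechet{\tau}{\nearest{\tau}{C}} \leq \frechet{\tau}{c'} \leq \frechet{\tau}{\pi(\tau)} + \frechet{\pi(\tau)}{c'} \leq \epsilon \cdot \optcost{\Tau} + \cost{\pi(\Tau)}{C}.
\]
Taking the maximum over $\tau \in \Tau$ gives $\cost{\Tau}{C} \leq \epsilon \cdot \optcost{\Tau} + \cost{\pi(\Tau)}{C} \leq \epsilon \cdot \cost{\Tau}{C} + \cost{\pi(\Tau)}{C}$, and rearranging yields $(1-\epsilon) \cost{\Tau}{C} \leq \cost{\pi(\Tau)}{C}$. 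Since $C$ was arbitrary, both inequalities hold for every center-set of cardinality $k$.

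I do not expect a genuine obstacle here; the proof is a short metric computation. The only points that need care are that one must route the movement bound through $\optcost{\Tau} \leq \cost{\Tau}{C}$ (bounding the displacement directly by $\epsilon \cdot \cost{\pi(\Tau)}{C}$ would not close the argument), that the Fréchet distance being a metric is what licenses the triangle inequalities above, and the non-injectivity remark ensuring $\cost{\pi(\Tau)}{\cdot}$ behaves as expected on the image set.
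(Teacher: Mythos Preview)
Your proposal is correct and follows essentially the same approach as the paper: both arguments fix $C$, use the triangle inequality together with the nearest-neighbor property to compare $\frechet{\pi(\tau)}{\nearest{\pi(\tau)}{C}}$ and $\frechet{\tau}{\nearest{\tau}{C}}$ up to $\epsilon\cdot\optcost{\Tau}$, and then absorb this error via $\optcost{\Tau}\leq\cost{\Tau}{C}$. Your write-up is in fact slightly more direct---the paper phrases the two key inequalities as short contradiction arguments, whereas you apply the nearest-neighbor inequality and the triangle inequality in one line---and your remark on the non-injectivity of $\pi$ is a point the paper leaves implicit.
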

\begin{proof}
	Let $\tau \in \Tau$ and $C \subset \eqcfre{l}$, with $\vert C \vert = k$, be arbitrary but fixed and let $c \eqdef \nearest{\tau}{C}$ (\conferre \cref{def:nearest_func}) be a nearest neighbor of $\tau$ in $C$. Further let $c^\prime \eqdef \nearest{\pi(\tau)}{C}$ be a nearest neighbor of $\pi(\tau)$ in $C$. 
	
	First we show that $\frechet{\pi(\tau)}{c^\prime} \leq \frechet{\pi(\tau)}{\tau} + \frechet{\tau}{c}$. Assume that $\frechet{\pi(\tau)}{c^\prime} > \frechet{\pi(\tau)}{\tau} + \frechet{\tau}{c}$, then by the triangle-inequality \[\frechet{\pi(\tau)}{c} \leq \frechet{\pi(\tau)}{\tau} + \frechet{\tau}{c} < \frechet{\pi(\tau)}{c^\prime},\] which is a contradiction, because $c^\prime$ is a nearest neighbor of $\pi(\tau)$ in $C$. 
	
	Now we show that $\frechet{\pi(\tau)}{c^\prime} \geq \frechet{\tau}{c} - \frechet{\tau}{\pi(\tau)}$. Assume that $\frechet{\pi(\tau)}{c^\prime} < \frechet{\tau}{c} - \frechet{\tau}{\pi(\tau)}$, then by the triangle-inequality \[ \frechet{\tau}{c^\prime} \leq \frechet{\tau}{\pi(\tau)} + \frechet{\pi(\tau)}{c^\prime} < \frechet{\tau}{c}, \] which again is a contradiction, because $c$ is a nearest neighbor of $\tau$ in $C$.
	
	Let $\sigma \eqdef \argmax\limits_{\tau \in \Tau} \frechet{\pi(\tau)}{\nearest{\pi(\tau)}{C}}$, we have:
	\begin{align*}
		\cost{\pi(\Tau)}{C} ={} & \frechet{\pi(\sigma)}{\nearest{\pi(\sigma)}{C}} \leq \frechet{\pi(\sigma)}{\sigma} + \frechet{\sigma}{\nearest{\sigma}{C}} \tag{I} \label{movement_ie1} \\
		\leq{} & \max_{\tau \in \Tau} \left[\frechet{\pi(\tau)}{\tau} + \frechet{\tau}{\nearest{\tau}{C}}\right] \tag{II} \label{movement_ie2} \\
		\leq{} & \underbrace{\max_{\tau \in \Tau} \frechet{\tau}{\nearest{\tau}{C}}}_{= \cost{\Tau}{C}} + \epsilon \cdot \optcost{\Tau} \tag{III} \label{movement_ie3} \\
		\leq{} & (1 + \epsilon) \cdot \cost{\Tau}{C} \tag{IV} \label{movement_ie4}
	\end{align*}
	\cref{movement_ie1} follows from the argument above and \cref{movement_ie2} holds, because $\sigma \in \Tau$. Further, \cref{movement_ie3} follows from the definition of $\pi$ and \cref{movement_ie4} follows from the fact, that every clustering hast cost greater than or equal to $\optcost{\Tau}$.
	
	Also we have:
	\begin{align*}
		\cost{\pi(\Tau)}{C} ={} & \frechet{\pi(\sigma)}{\nearest{\pi(\sigma)}{C}} = \max_{\tau \in \Tau} \frechet{\pi(\tau)}{\nearest{\pi(\tau)}{C}} \label{movement_ie5} \tag{V} \\
		\geq{} & \max_{\tau \in \Tau} \left[ \frechet{\tau}{\nearest{\tau}{C}} - \frechet{\pi(\tau)}{\tau} \right] \label{movement_ie6} \tag{VI} \\
		\geq{} & \underbrace{\max_{\tau \in \Tau} \frechet{\tau}{\nearest{\tau}{C}}}_{=\cost{\Tau}{C}} - \epsilon \cdot \optcost{\Tau} \label{movement_ie7} \tag{VII}  \\
		\geq{}	& (1-\epsilon) \cdot \cost{\Tau}{C} \label{movement_ie8} \tag{VIII} 
	\end{align*}
	Here \cref{movement_ie5} follows from the definition of $\sigma$, \cref{movement_ie6} follows from the above arguments, \cref{movement_ie7} follows from the definition of $\pi$ and \cref{movement_ie8} follows from the fact that $\optcost{\Tau} \leq \cost{\Tau}{C}$ for all $C \subset \eqcfre{l}$ with $\vert C \vert = k$.
\end{proof}
Here it becomes obvious that an approximate solution to the clustering is extremely beneficial, because its value yields a lower bound on the value of an optimal solution, which is necessary to construct such a function $\pi$.

At first, we want to make clear, that \cref{prop:movement_center} only works together with a clever partitioning scheme. For an \coreset construction with respect to point-sets an approximate solution to the clustering objective has several tasks. Not only does it yield a lower bound on the optimum value, as it was stated before, but it does also yield an approach for a good partition of the input point-set. Namely, we can partition the radius of its objective value around the obtained center-points with grids. Note that, as grids are defined in this work, they do not yield a proper partitioning of the space, because the cells overlap at their boundaries, thus are not pairwise disjoined. We assume that ties are broken arbitrarily when a point of the input-set lies in two cells at a time, such that we obtain a proper partition of the input point-set. We set the cells edge length such that their diagonal, which gives the maximum distance two points in a cell can have, has length less than or equal to $\epsilon \cdot \optcost{\Tau}$. Then $\pi$ becomes a function that maps every cell to a single point in the cell. Because the number of cells is independent of the cardinality of the input-set, we obtain an \coreset of sub-linear cardinality. With respect to curve-sets the approach is similar.

\subsection{A Constant-Factor Approximation Algorithm}
\label{subsec:center_approx}
As it was stated in the previous section, we need at least a lower bound on the optimal value of the clustering objective to construct the function that is needed for \cref{prop:movement_center} to work.

For this purpose we use a relatively simple constant-factor approximation algorithm from \citet{approx_k_l_center} that is based on \citet{gonzalez} algorithm, endowed with an \el-simplification algorithm. The $\el$-simplification algorithm is used to prevent over-fitting, \ie one does not want to obtain centers that have a number of vertices equal to $m$ or even more. If one does not want to reduce the complexity of the center-curves, $l$ can be set to be equal to $m$.

\begin{algorithm}
	\begin{algorithmic}[1]
		\caption{Compute $6$-Approximate Solution to the \klcenter Objective}\label{algo:klcenter_approx}
		\Procedure{kl-center-approx}{$\Tau$}
		\State $C \gets \emptyset$
		\State $c_1 \gets$ arbitrary $\tau \in \Tau$
		\State $\hat{c}_1 \gets$ simplify$(c_1,l)$
		\State $C \gets C \cup \{ \hat{c}_1 \}$
		\For{$i=2, \dots, k$}
		\State $c_i \gets \argmax\limits_{\tau \in \Tau} \min\limits_{j \in \{1, \dots, i-1\}} \frechet{\tau}{\hat{c}_j}$
		\State $\hat{c}_i \gets$ simplify$(c_i, l)$
		\State $C \gets C \cup \{ \hat{c}_i \}$
		\EndFor
		\State \textbf{return} $C$
		\EndProcedure
	\end{algorithmic}
\end{algorithm}

\cref{algo:klcenter_approx} works as follows: An $l$-simplification of an arbitrary curve becomes the first center. Now in each iteration an $l$-simplification of a curve, that lies farthest from the already chosen curves, is added as the next center. This is repeated until there are $k$ centers. \citeauthor{approx_k_l_center} show that the obtained algorithm is a $6$-approximation for the \klcenter objective with running-time $\On{mn \log(m) + m^3 \log(m)}$, which is stated in the following theorem:
\begin{theorem}[{\citet[Corollary 21]{approx_k_l_center}}]
	\label{theo:approx_algo_center}
	\cref{algo:klcenter_approx} computes a $6$-approximate solution to the \klcenter objective in time $\On{mn \log(m) + m^3 \log(m))}$.
\end{theorem}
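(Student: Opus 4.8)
Let $\opt \eqdef \optcost{\Tau}$. The plan is to transport the farthest-point analysis of \citet{gonzalez} to the Fréchet metric, absorbing the effect of the $l$-simplification into an additive slack that is a constant multiple of $\opt$, and then to bound the running time by treating the $\On{1}$ simplification calls and the $\On{1}$ farthest-point scans separately. For the approximation guarantee, let $C \eqdef \{\hat c_1,\dots,\hat c_k\}$ be the returned clustering, $r \eqdef \cost{\Tau}{C} = \max_{\tau\in\Tau}\min_{1\le j\le k}\frechet{\tau}{\hat c_j}$ its cost, and let $\sigma\in\Tau$ attain this maximum, so $\frechet{\sigma}{\hat c_j}\ge r$ for every $j$. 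First I would note that, since the minimum over a larger index set is no larger, for every iteration $i$ we have $\min_{j<i}\frechet{c_i}{\hat c_j} = \max_{\tau\in\Tau}\min_{j<i}\frechet{\tau}{\hat c_j} \ge \max_{\tau\in\Tau}\min_{j\le k}\frechet{\tau}{\hat c_j} = r$, hence $\frechet{c_i}{\hat c_j}\ge r$ for all $j<i$. Next I would invoke the $l$-simplification subroutine of \cite{approx_k_l_center}, which on a curve $c$ of complexity at most $m$ returns $\hat c$ of complexity at most $l$ with $\frechet{c}{\hat c}\le\alpha\cdot\min_{\sigma'\in\eqcfre{l}}\frechet{c}{\sigma'}$ for a fixed constant $\alpha$; since $c_j\in\Tau$, a nearest center of $c_j$ in an optimal clustering has complexity at most $l$ (\cref{def:klcenter}) and lies within $\opt$ of $c_j$, so $\min_{\sigma'\in\eqcfre{l}}\frechet{c_j}{\sigma'}\le\opt$ and therefore $\frechet{c_j}{\hat c_j}\le\alpha\opt$. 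By the triangle inequality for the Fréchet distance this gives, for $j<i$, $\frechet{c_i}{c_j}\ge\frechet{c_i}{\hat c_j}-\frechet{\hat c_j}{c_j}\ge r-\alpha\opt$, and likewise $\frechet{\sigma}{c_j}\ge r-\alpha\opt$; thus the $k+1$ input curves $c_1,\dots,c_k,\sigma$ are pairwise at Fréchet distance at least $r-\alpha\opt$.

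To conclude I would apply the pigeonhole principle: an optimal clustering uses at most $k$ centers, so two of the $k+1$ curves $c_1,\dots,c_k,\sigma$ lie in a common optimal cluster, hence both lie within $\opt$ of a single optimal center and, by the triangle inequality, at distance at most $2\opt$ from each other. Combining, $r-\alpha\opt\le 2\opt$, i.e.\ $r\le(2+\alpha)\opt$; with the constant $\alpha=4$ furnished by the simplification of \cite{approx_k_l_center} this is exactly $r\le 6\opt$. The degenerate situations — $\opt=0$, or some of $c_1,\dots,c_k,\sigma$ coinciding — only make the bound easier, since then $r\le\alpha\opt$ already.

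For the running time I would argue as follows. The simplification subroutine is called exactly $k=\On{1}$ times, each on a curve of complexity at most $m$ and, by \cite{approx_k_l_center}, in time $\On{m^3\log m}$; this accounts for the $\On{m^3\log m}$ term. Maintaining for each input curve its current distance to the nearest already chosen center, each of the $k$ iterations computes just one Fréchet distance $\frechet{\tau}{\hat c_i}$ per input curve $\tau$, against the newly added center $\hat c_i$, whose complexity is at most $l=\On{1}$; by the algorithm of \citet{alt_godau_frechet} (\cref{theo:frechet_algo}, whose $\On{m^2\log m}$ bound for two complexity-$m$ curves becomes $\On{m\log m}$ when one of the two curves has constant complexity) each such computation costs $\On{m\log m}$, so all scans together cost $\On{n\,m\log m}$. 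Adding the two contributions yields $\On{mn\log m + m^3\log m}$.

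The step I expect to be the main obstacle is verifying the exact guarantee of the $l$-simplification subroutine: the additive slack $\alpha\opt$, and hence the precise approximation factor $6=2+\alpha$, hinges on the simplification achieving a constant-factor approximation (with $\alpha=4$) of the minimum-error $l$-simplification while still running in $\On{m^3\log m}$; granting that, the rest is the textbook Gonzalez argument re-expressed for $(\Tau,\pfrechet)$ together with routine bookkeeping of running times.
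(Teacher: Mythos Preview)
Your proof is correct and matches the approach of the cited source: the paper itself does not prove the statement but imports it from \cite{approx_k_l_center}, and your argument is exactly the specialization of their general $(c+2)$-approximation (stated in this paper as \cref{theo:approx_klcenter_raw}) to the particular $l$-simplification with $c=4$ and $T_l(m)=\On{m^3\log m}$. The Gonzalez pigeonhole with the additive $\alpha\opt$ slack from simplification, together with the one-new-Fr\'echet-distance-per-input-curve-per-round bookkeeping, is precisely what \cref{theo:approx_klcenter_raw} encodes.
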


Now we introduce a rather simple technique for constructing \coreset[s] for line segments. Building upon this we will introduce a technique for constructing \coreset[s] for general polygonal curves in the subsequent subsection.

\subsection{An $\epsilon$-coreset Construction for Line Segments}

In this work we are dealing with curve sets, which makes it hard to define a simple partitioning scheme that yields to a similar result as those with respect to point-sets. However, this is possible for line segments, though.

\begin{algorithm}
	\caption{Compute $\epsilon$-Coreset for the \ensuremath{(k,2)}-\textsc{Center} Objective for Line Segments}\label{algo:klcenter_coreset_1}
	\begin{algorithmic}[1]
		\Procedure{kl-center-coreset-line-segments}{$\Tau$, $\epsilon$}
			\State $S \gets \emptyset$
			\State $\hat{C} \eqdef \{ \hat{c}_1, \dots, \hat{c}_k \} \gets $\textsc{kl-center-approx}$(\Tau)$ \Comment{\cref{algo:klcenter_approx}}
			\State $\approxcost{\Tau} \gets \cost{\Tau}{\hat{C}}$
			\For{$i = 1, \dots, k$}
				\State $\mathfrak{C}_{i,0} \gets \cube{\hat{c}_i(0), 2\cdot \approxcost{\Tau}}$
				\State $\mathfrak{C}_{i,1} \gets \cube{\hat{c}_i(1), 2\cdot \approxcost{\Tau}}$
				\State $\mathfrak{G}_{i,0} \gets \grid{\mathfrak{C}_{i,0}, \nicefrac{1}{\sqrt{d}}\cdot\epsilon\cdot\nicefrac{\approxcost{\Tau}}{6}}$
				\State $\mathfrak{G}_{i,1} \gets \grid{\mathfrak{C}_{i,1}, \nicefrac{1}{\sqrt{d}}\cdot\epsilon\cdot\nicefrac{\approxcost{\Tau}}{6}}$
				\For{$(i^\prime_1, \dots, i^\prime_d) \in \left(\left\{ -\left\lceil \nicefrac{(6 \sqrt{d})}{\epsilon}\right\rceil, \dots, -1\right\} \cup \left\{1, \dots, \left\lceil \nicefrac{(6 \sqrt{d})}{\epsilon}\right\rceil\right\}\right)^d$}
					\For{$(j_1, \dots, j_d) \in \left(\left\{ -\left\lceil \nicefrac{(6 \sqrt{d})}{\epsilon}\right\rceil, \dots, -1\right\} \cup \left\{1, \dots, \left\lceil \nicefrac{(6 \sqrt{d})}{\epsilon}\right\rceil\right\}\right)^d$}
						\State $\tau \gets \tau^\prime \in \Tau$ with $\tau^\prime(0) \in \cell{\mathfrak{G}_{i,0}}{i^\prime_1, \dots, i^\prime_d}$ and $\tau^\prime(1) \in \cell{\mathfrak{G}_{i,1}}{j_1, \dots, j_d}$
						\State $S \gets S \cup \{ \tau \}$
					\EndFor
				\EndFor
			\EndFor
			\State \textbf{return} $S$
		\EndProcedure
	\end{algorithmic}
\end{algorithm}

\cref{algo:klcenter_coreset_1} employs a partitioning scheme for the given curve-set based on grids: For any center of the approximate clustering returned by \cref{algo:klcenter_approx}, \cref{algo:klcenter_coreset_1} puts a grid around its initial point and around its end point, both of edge length twice the value of the objective function. From \cref{def:frechet_distance}, \cref{obs:ball_grid} and \cref{obs:cube_grid} we know that the initial points and the end points of every curve, that is associated to the respective center by the clustering, are covered by the grids. Now for every center and every combination of the grid cells of its initial point grid and its end point grid, \cref{algo:klcenter_coreset_1} adds a single curve to the \coreset[] that has its initial point in the respective cell of the initial point grid and its end point in the respective cell of the end point grid. Because the number of grids and the number of cells of each grid is independent of the input-set, this yields a number of curves independent of the input-set. If a point lies in two cells at a time, which can be the case, we assume that ties are broken arbitrarily to obtain a proper partition of the input-set.
\newpage
\subsubsection{Correctness Analysis of \cref{algo:klcenter_coreset_1}}

\begin{theorem}
	\label{theo:klcenter_coreset_1}
	Given a set $\Tau \eqdef \{\tau_1, \dots, \tau_n\} \subset \eqcfre{2}$ of line segments and a parameter $\epsilon \in (0,1)$, \cref{algo:klcenter_coreset_1} computes an \coreset for the \ensuremath{(k,2)}-\textsc{Center} objective of cardinality $\On{\frac{1}{\epsilon^{2d}}}$.
\end{theorem}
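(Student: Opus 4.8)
The plan is to exhibit a ``movement function'' $\pi\colon\Tau\to\eqcfre{2}$ that displaces every input segment by at most $\epsilon\cdot\optcost{\Tau}$ and whose image is contained in the returned set $S$, and then to invoke \cref{prop:movement_center} (with $\optcost{\Tau}$ the $(k,2)$-\textsc{Center} optimum). First I would record that the value $\approxcost{\Tau}=\cost{\Tau}{\hat C}$ used by \cref{algo:klcenter_coreset_1} satisfies $\optcost{\Tau}\le\approxcost{\Tau}\le 6\,\optcost{\Tau}$: the left inequality because $\hat C$ is a feasible $(k,2)$-center solution, the right one because \cref{algo:klcenter_approx} (invoked with $l=2$) is a $6$-approximation by \cref{theo:approx_algo_center}. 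This is precisely why the grids $\mathfrak G_{i,0},\mathfrak G_{i,1}$ are built with cell length $\tfrac1{\sqrt d}\cdot\epsilon\cdot\tfrac{\approxcost{\Tau}}{6}$: the factor $\tfrac16$ absorbs the approximation ratio, and the $\tfrac1{\sqrt d}$ compensates for passing from a cell's edge length to its diagonal. (We may assume $\approxcost{\Tau}>0$, as otherwise $\Tau$ consists of at most $k$ distinct segments and $S=\Tau$ is trivially a coreset.)

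Next I would define $\pi$. Fixing arbitrary tie-breaking throughout, for $\tau\in\Tau$ let $\hat c_i\eqdef\nearest{\tau}{\hat C}$; then $\frechet{\tau}{\hat c_i}\le\approxcost{\Tau}$, so by \cref{obs:frechet_endpoints} both $\eucl{\tau(0)}{\hat c_i(0)}\le\approxcost{\Tau}$ and $\eucl{\tau(1)}{\hat c_i(1)}\le\approxcost{\Tau}$. By \cref{obs:ball_grid} these endpoints lie in the cubes $\mathfrak C_{i,0}$, resp.\ $\mathfrak C_{i,1}$, and by \cref{obs:cube_grid} $\mathfrak C_{i,0}=\mathfrak G_{i,0}$ and $\mathfrak C_{i,1}=\mathfrak G_{i,1}$, so $\tau(0)$ and $\tau(1)$ fall into cells of these grids. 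A short computation shows that with cube edge length $2\approxcost{\Tau}$ and cell length $\tfrac1{\sqrt d}\epsilon\tfrac{\approxcost{\Tau}}6$ the cell-index range of \cref{def:grid} equals $\{-\lceil 6\sqrt d/\epsilon\rceil,\dots,-1\}\cup\{1,\dots,\lceil 6\sqrt d/\epsilon\rceil\}$ per coordinate, \ie exactly the range \cref{algo:klcenter_coreset_1} enumerates; hence the cell-pair containing $(\tau(0),\tau(1))$ is visited, the algorithm puts some segment into $S$ for it, and I let $\pi(\tau)$ be that segment. By construction $\pi(\Tau)\subseteq S$, and $S\subseteq\Tau$ since every segment added is drawn from $\Tau$.

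Then I would bound the movement. Each grid cell is itself a cube of edge length $\tfrac1{\sqrt d}\epsilon\tfrac{\approxcost{\Tau}}6$, so any two points in a common cell have euclidean distance at most its diagonal $\sqrt d\cdot\tfrac1{\sqrt d}\epsilon\tfrac{\approxcost{\Tau}}6=\epsilon\tfrac{\approxcost{\Tau}}6$. Since $\tau(0)$ shares a cell with $\pi(\tau)(0)$ and $\tau(1)$ with $\pi(\tau)(1)$, and both curves are line segments, \cref{obs:frechet_endpoints} gives $\frechet{\tau}{\pi(\tau)}=\max\{\eucl{\tau(0)}{\pi(\tau)(0)},\eucl{\tau(1)}{\pi(\tau)(1)}\}\le\epsilon\tfrac{\approxcost{\Tau}}6\le\epsilon\cdot\optcost{\Tau}$, using $\approxcost{\Tau}\le 6\,\optcost{\Tau}$. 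Thus $\pi$ meets the hypothesis of \cref{prop:movement_center}, giving $(1-\epsilon)\cost{\Tau}{C}\le\cost{\pi(\Tau)}{C}\le(1+\epsilon)\cost{\Tau}{C}$ for every $C\subset\eqcfre2$ with $\vert C\vert=k$. To pass from $\pi(\Tau)$ to $S$ I would use that $\cost{\cdot}{C}$ is a maximum of per-curve distances: $S\subseteq\Tau$ yields $\cost{S}{C}\le\cost{\Tau}{C}$, and $\pi(\Tau)\subseteq S$ yields $\cost{S}{C}\ge\cost{\pi(\Tau)}{C}\ge(1-\epsilon)\cost{\Tau}{C}$, so $S$ is an $\epsilon$-coreset. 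Finally, for cardinality: for each of the $k$ centers the nested loops enumerate $\bigl(2\lceil 6\sqrt d/\epsilon\rceil\bigr)^{2d}$ index pairs, each contributing at most one segment, so $\vert S\vert\le k\bigl(2\lceil 6\sqrt d/\epsilon\rceil\bigr)^{2d}=\On{1/\epsilon^{2d}}$ since $k,d$ are fixed and $\epsilon\in(0,1)$.

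The main difficulty here is purely the constant-chasing that links the three numerical choices: the $\tfrac1{\sqrt d}$ in the cell length, which makes the cell diagonal equal $\epsilon\approxcost{\Tau}/6$; the $\tfrac16$, which (via \cref{theo:approx_algo_center}) turns this into the bound $\epsilon\,\optcost{\Tau}$ demanded by \cref{prop:movement_center}; and the exact agreement between the cell-index range of \cref{def:grid} for these parameters and the loop bound $\lceil 6\sqrt d/\epsilon\rceil$, which is what lets \cref{obs:ball_grid} and \cref{obs:cube_grid} genuinely certify that every endpoint of every input segment is caught by an enumerated cell. The one conceptual subtlety worth flagging is that $S$ may \emph{properly} contain $\pi(\Tau)$ --- a representative chosen for a cell-pair of $\hat c_i$ need not have $\hat c_i$ as its own nearest center --- but this is harmless, since $S\subseteq\Tau$ supplies the upper bound and $\pi(\Tau)\subseteq S$ the lower bound.
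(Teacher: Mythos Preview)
Your proposal is correct and follows essentially the same approach as the paper's own proof: cover the endpoints by the grids via \cref{obs:ball_grid} and \cref{obs:cube_grid}, bound the cell diagonal by $\epsilon\,\optcost{\Tau}$ using the $6$-approximation of \cref{theo:approx_algo_center}, invoke \cref{obs:frechet_endpoints} and \cref{prop:movement_center}, and count cells. If anything, you are slightly more careful than the paper in distinguishing $S$ from $\pi(\Tau)$ and closing the gap via $S\subseteq\Tau$ and $\pi(\Tau)\subseteq S$; the paper tacitly identifies the two.
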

\begin{proof}
	Let $\hat{C} \eqdef \{ \hat{c}_1, \dots, \hat{c}_k \}$ be the center-set obtained by \cref{algo:klcenter_approx} with $\approxcost{\Tau} \eqdef \cost{\Tau}{C}$. From \cref{def:klcenter} we know that \[\{ \tau(0) \mid \tau \in \Tau \} \subseteq \bigcup_{i=1}^k B^I_i,\] where $B^I_i \eqdef \{ p \in \euclideanspace^d \mid \eucl{p}{\hat{c}_i(0)} \leq \approxcost{\Tau} \}$ is the closed ball of radius $\approxcost{\Tau}$ around the initial point of $\hat{c}_i$. Similarly, we know that \[\{ \tau(1) \mid \tau \in \Tau \} \subseteq \bigcup_{i=1}^k B^E_i,\] where $B^E_i \eqdef \{ p \in \euclideanspace^d \mid \eucl{p}{\hat{c}_i(1)} \leq \approxcost{\Tau} \}$ is the closed ball of radius $\approxcost{\Tau}$ around the end point of $\hat{c}_i$.
	
	For $i = 1, \dots, k$ let $\mathfrak{C}_{i,0} \eqdef \cube{\hat{c}_i(0), 2\cdot \approxcost{\Tau}}$ and $\mathfrak{C}_{i,1} \eqdef \cube{\hat{c}_i(1), 2\cdot \approxcost{\Tau}}$ be the cubes defined in \cref{algo:klcenter_coreset_1} and let \[\mathfrak{G}_{i,0} \eqdef \grid{\mathfrak{C}_{i,0}, \nicefrac{1}{\sqrt{d}}\cdot\epsilon\cdot\nicefrac{\approxcost{\Tau}}{6}}\] and \[\mathfrak{G}_{i,1} \eqdef \grid{\mathfrak{C}_{i,1}, \nicefrac{1}{\sqrt{d}}\cdot\epsilon\cdot\nicefrac{\approxcost{\Tau}}{6}}\] be the associated grids. By \cref{obs:ball_grid} we know that $B^I_i \subseteq \mathfrak{C}_{i,0}$ and $B^E_i \subseteq \mathfrak{C}_{i,1}$. Further, by \cref{obs:cube_grid} we know that $\mathfrak{C}_{i,0} = \mathfrak{G}_{i,0}$ and $\mathfrak{C}_{i,1} = \mathfrak{G}_{i,1}$, therefore the vertices of any curve that has distance less than or equal to $\approxcost{\Tau}$ to any center in $\hat{C}$ are covered by the grids.
	
	For $i = 1, \dots, k$ let $\Tau_i \eqdef \{ \tau \in \Tau \mid \nearest{\tau}{\hat{C}} = \hat{c}_i \}$ be the set of curves whose nearest center is $\hat{c}_i$. We know that the curves in $\Tau_i$ have their initial points in a cell of $\mathfrak{G}_{i,0}$ and their end points in a cell of $\mathfrak{G}_{i,1}$ which have edge length $\frac{1}{\sqrt{d}} \cdot \epsilon \cdot \frac{\approxcost{\Tau}}{6}$, therefore the maximum distance of two points in a cell, which is given by a diagonal of the cell, is:
	\begin{align*}
		\eucl{(0, \dots, 0)}{\left(\frac{1}{\sqrt{d}} \cdot \epsilon \cdot \frac{\approxcost{\Tau}}{6}, \dots, \frac{1}{\sqrt{d}} \cdot \epsilon \cdot \frac{\approxcost{\Tau}}{6}\right)} ={} & \sqrt{\sum_{i=1}^{d} \left(\frac{1}{\sqrt{d}} \cdot \epsilon \cdot \frac{\approxcost{\Tau}}{6} - 0\right)^2} \\
		={} & \sqrt{d \cdot \frac{1}{d} \cdot \epsilon^2 \cdot \frac{\approxcost{\Tau}^2}{6^2}} \\
		={} & \epsilon \cdot \frac{\approxcost{\Tau}}{6} \leq \epsilon \cdot \optcost{\Tau}
	\end{align*}
	Here the inequality follows from \cref{theo:approx_algo_center}, which states that \cref{algo:klcenter_approx} is a $6$-approximation.
	
	By \cref{obs:frechet_endpoints} any two curves that have their initial points in the same cell of $\mathfrak{G}_{i,0}$ and their end points in the same cell of $\mathfrak{G}_{i,1}$, for any $i \in \{1, \dots, k\}$, have Fréchet distance less than or equal to $\epsilon \cdot \optcost{\Tau}$. Now let $\pi\colon \Tau \rightarrow \Tau$ be a function that maps all input-curves to an arbitrary curve that has its initial point in the same cell of a $\mathfrak{G}_{i,0}$ and its end point in the same cell of a $\mathfrak{G}_{i,1}$, with the restriction that every two curves that have their initial point in the same cell and their end point in the same cell are mapped to the same curve, as it is done in \cref{algo:klcenter_coreset_1}. Thus, by \cref{prop:movement_center} the set $S$ returned by \cref{algo:klcenter_coreset_1} is an \coreset for the \ensuremath{(k,2)}-\textsc{Center} objective.
	
	It stays to show that $\vert S \vert \in \On{\frac{1}{\epsilon^{2d}}}$: For every $i \in \{1, \dots, k\}$ the volume of $\mathfrak{G}_{i,0}$ and $\mathfrak{G}_{i,1}$ is equal to $2^d \cdot \approxcost{\Tau}^d$. Every cell has volume $\nicefrac{1}{\sqrt{d}^d} \cdot \epsilon^d \cdot \nicefrac{\approxcost{\Tau}^d}{6^d}$, therefore every grid has number of cells up to:
	\begin{align*}
		\frac{2^d \cdot \approxcost{\Tau}^d}{\frac{1}{\sqrt{d}^d} \cdot \epsilon^d \cdot \frac{\approxcost{\Tau}^d}{6^d}} = \frac{2^d}{\frac{1}{\sqrt{d}^d} \cdot \epsilon^d \cdot \frac{1}{6^d}} = \frac{2^d 6^d d^{\nicefrac{d}{2}}}{\epsilon^d}
	\end{align*}
	Now for every $i = 1, \dots, k$ the function $\pi$ maps to a single curve from every pair of the cells of $\mathfrak{G}_{i,0}$ and $\mathfrak{G}_{i,1}$ which are up to $k \cdot \frac{2^{2d} 6^{2d} d^{d}}{\epsilon^{2d}} = 2^{2d} 6^{2d} d^{d} k \cdot \frac{1}{\epsilon^{2d}}$ curves.
\end{proof}

\subsubsection{Time Complexity Analysis of \cref{algo:klcenter_coreset_1}}

\begin{theorem}
	Given a set of $n$ line segments and a parameter $\epsilon \in (0,1)$, \cref{algo:klcenter_coreset_1} has running-time $\On{\frac{n}{\epsilon^{2d}}}$.
\end{theorem}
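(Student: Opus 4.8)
The plan is to trace \cref{algo:klcenter_coreset_1} phase by phase, bound each phase, and sum, using throughout that $k$, $l=2$ and $d$ are fixed and that the inputs are line segments, i.e.\ $m=2$. The call to \cref{algo:klcenter_approx} runs in time $\On{mn\log m + m^3\log m}$ by \cref{theo:approx_algo_center}, which is $\On{n}$ once $m=2$ is substituted. Computing $\approxcost{\Tau} = \cost{\Tau}{\hat C} = \max_{\tau\in\Tau}\min_i \frechet{\tau}{\hat c_i}$ requires $kn$ Fréchet distances between line segments; by \cref{obs:frechet_endpoints} each such distance equals $\max\{\eucl{\tau(0)}{\hat c_i(0)},\eucl{\tau(1)}{\hat c_i(1)}\}$ and costs $\On{d}=\On{1}$ (or one may invoke \cref{theo:frechet_algo} with $m=2$), so this phase is $\On{kn}=\On{n}$. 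For each of the $k$ centers the two cubes and two grids are each described by a constant amount of data and are built in $\On{1}$ on the real RAM, so the grid-construction phase is $\On{k}=\On{1}$.

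It remains to bound the nested loops. Fixing a center, each index set $\bigl(\{-\lceil 6\sqrt d/\epsilon\rceil,\dots,-1\}\cup\{1,\dots,\lceil 6\sqrt d/\epsilon\rceil\}\bigr)^{d}$ has cardinality $(2\lceil 6\sqrt d/\epsilon\rceil)^d = \On{1/\epsilon^d}$ since $d$ is fixed, so the two nested loops execute $\On{1/\epsilon^{2d}}$ times per center, hence $\On{k/\epsilon^{2d}}=\On{1/\epsilon^{2d}}$ times in total. To make each iteration cost $\On{1}$ I would precede the loops by a single bucketing pass: for each input segment $\tau'$ and each center $i$, compute in $\On{1}$ time (coordinate-wise division as in \cref{def:grid}) the id of the cell of $\mathfrak G_{i,0}$ containing $\tau'(0)$ and the id of the cell of $\mathfrak G_{i,1}$ containing $\tau'(1)$, and store $\tau'$ in a table keyed by this triple; this preprocessing costs $\On{kn}=\On{n}$. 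Each loop iteration then performs a single constant-time table lookup (skipping the pair if its bucket is empty). Summing the phases and the preprocessing gives $\On{n}+\On{n}+\On{1}+\On{n}+\On{1/\epsilon^{2d}}=\On{n+1/\epsilon^{2d}}$, and since $\epsilon\in(0,1)$ both $n$ and $1/\epsilon^{2d}$ are at most $n/\epsilon^{2d}$, so the total is $\On{n/\epsilon^{2d}}$, as claimed.

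The only step that is not entirely routine is the innermost representative selection: done naively, by re-scanning $\Tau$ for each of the $\On{k/\epsilon^{2d}}$ cell pairs, it would cost $\On{nk/\epsilon^{2d}}$, which still equals $\On{n/\epsilon^{2d}}$ because $k$ is a constant, but the $\On{n}$ bucketing pass above is cleaner and makes the linear dependence on $n$ manifest. One should also check that the cell-id computation is consistent with the tie-breaking convention used to turn the overlapping grid into a genuine partition (each curve assigned to exactly one bucket); this is a matter of fixing, e.g., half-open cells in the division step and does not affect the running time.
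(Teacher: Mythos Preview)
Your argument is correct. The only noteworthy difference from the paper's own proof is in how the inner loop is charged: the paper analyses the algorithm exactly as written, i.e.\ the naive variant you mention at the end, and obtains directly the term $k\cdot 2d\cdot \frac{2^{2d}6^{2d}d^{d}}{\epsilon^{2d}}\cdot n$ from scanning all of $\Tau$ for each of the $\On{1/\epsilon^{2d}}$ cell pairs. You instead introduce a bucketing pass and obtain the sharper intermediate bound $\On{n+1/\epsilon^{2d}}$ before relaxing it to $\On{n/\epsilon^{2d}}$. Both routes are valid and arrive at the stated bound; yours actually shows the algorithm can be implemented in $\On{n+1/\epsilon^{2d}}$, which is a strictly better running time than the paper claims, at the cost of the extra (routine) care you note about consistent cell assignment and constant-time table access.
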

\begin{proof}
	Let $\hat{C} \eqdef \{ \hat{c}_1, \dots, \hat{c}_k \}$ be the center-set returned by \cref{algo:klcenter_approx}, which has running-time $\On{mn \log(m) + m^3 \log(m)}$, \conferre \cref{theo:approx_algo_center}. The main part of the running-time of \cref{algo:klcenter_coreset_1} is to construct the grids for the initial point and the end point of every $\hat{c} \in \hat{C}$. To test whether a point $p \eqdef (p_1, \dots, p_d) \in \euclideanspace^d$ is contained in a cell of such a grid it is sufficient to know the intervals, one interval for each dimension, that the cell covers. Recall from \cref{theo:klcenter_coreset_1} that each grid has $\frac{2^d 6^d d^{\nicefrac{d}{2}}}{\epsilon^d}$ cells. All in all \cref{algo:klcenter_coreset_1} calls \cref{algo:klcenter_approx}, then computes $k \cdot 2 \cdot d \cdot \frac{2^d 6^d d^{\nicefrac{d}{2}}}{\epsilon^d}$ intervals, then it checks every pair of cells from the initial point grid and the end point grid of each $\hat{c} \in \hat{C}$ if there is a line segment $\tau \in \Tau$, that has its initial, respective end point in those cells. This yields the running-time, for a sufficiently large $c \in \R_{> 0}$:
	\begin{align*}
		& c \cdot (mn \log(m) + m^3 \log(m)) + k \cdot 2 \cdot d \cdot \frac{2^d 6^d d^{\nicefrac{d}{2}}}{\epsilon^d} + k \cdot 2d \cdot \frac{2^{2d} 6^{2d} d^{d}}{\epsilon^{2d}} \cdot n \\
		\leq{} & c \cdot (mn \log(m) + m^3 \log(m)) + 2 k \cdot 2d \cdot \frac{2^{2d} 6^{2d} d^{d}}{\epsilon^{2d}} \cdot n \\
		\leq{} & (c + 2 k \cdot 2d \cdot 2^{2d} 6^{2d} d^{d} + 2^3 \log(2)) \cdot \left(n + \frac{n}{\epsilon^{2d}}\right)
	\end{align*}
	So additionally to a large constant we have running-time linear in $n$ and polynomial in $\nicefrac{1}{\epsilon}$.
\end{proof}

\subsection{An $\epsilon$-coreset Construction for Polygonal Curves}
\label{subsec:mg2}
The case of polygonal curves of complexity at least $3$ clearly is very different, because the input-curves have more than two vertices, which also need to be covered by grids. The idea is simple: Cover the whole center-curves with grids and do the same construction as before. But there is a little deficit: We have defined grids based on cubes whose edges are axes-parallel, \ie grids based on uniform cubes. Clearly this may not be the case for the edges of the center-curves. The proposed solution is simple: We successively rotate the edges of the center-curves such that they lie on the non-negative section of the axis of the $d$\textsuperscript{th} dimension, then we define grids that cover the edge and rotate both the edge and the grids in reverse order and reverse direction, such that the edge remains unchanged but is covered by the grids. This is possible because all required transformations are isometries that are invertible and their inverses are also isometries. Also, we can efficiently compute the angles by which we have to rotate the edges, with the help of the following definition:
\begin{definition}
	\label{def:axisangle}
	For $\vv{x} \eqdef (x_1, \dots, x_d) \in \euclideanspace^d \setminus \{\vv{o}\}$ and $i \in \{1, \dots, d-1\}$ let 
	\begin{equation*}
		\axisangle{\vv{x}}{i} \eqdef \\
		\begin{cases}
		\frac{\pi}{2} - \angl{\vv{x}^\prime}{\vv{a}_i}, & \text{if } x_{i+1} \geq 0 \\
		\frac{\pi}{2} + \angl{\vv{x}^\prime}{\vv{a}_i}, & \text{else}
		\end{cases},
	\end{equation*}
	where $\vv{x}^\prime \eqdef (x^\prime_1, \dots, x^\prime_d)$ is the projection of $\vv{x}$ onto the plane spanned by the axes of the $i$\textsuperscript{th} and $(i+1)$\textsuperscript{th} dimension, therefore $x^\prime_j \eqdef 0$ for $j \in \{1, \dots, d\}  \setminus \{i, i+1\}$, $x^\prime_i \eqdef x_i$ and $x^\prime_{i+1} \eqdef x_{i+1}$. Also, the unit vector $\vv{a}_i \eqdef (a_1, \dots, a_d)$ is aligned on the $i$\textsuperscript{th} axis, where $a_j \eqdef 0$ for $j \in \{1, \dots, d\} \setminus \{i\}$ and $a_i \eqdef 1$.
\end{definition}

\begin{algorithm}
	\caption{Compute $\epsilon$-Coreset for the \klcenter Objective for Polygonal Curves}\label{algo:klcenter_coreset_mg1}
	\begin{algorithmic}[1]
		\Procedure{kl-center-coreset-polygonal-curves}{$\Tau$, $\epsilon$}
			\State Add dummy vertices to every $\tau \in \Tau$, such that all $\tau \in \Tau$ have $m$ vertices \label{algo:klcenter_coreset_mg1:line1}
			\State $S \gets \emptyset$
			\State $\hat{C} \eqdef \{ \hat{c}_1, \dots, \hat{c}_k \} \gets $\textsc{kl-center-approx}$(\Tau)$ \Comment{\cref{algo:klcenter_approx}}
			\State $\approxcost{\Tau} \gets \cost{\Tau}{\hat{C}}$
			\State $\delta \gets$ length of the longest edge of any $\hat{c} \in \hat{C}$
			\If{$\frac{\delta^m}{\approxcost{\Tau}^m} > \sqrt{n}$}
				\State \textbf{return} Unsuccessful \label{algo:klcenter_coreset_mg1:lineret}
			\EndIf
			\For{$i = 1, \dots, k$}
				\For{$j=1, \dots, \vert \hat{c}_i \vert - 1$} \label{algo:klcenter_coreset_mg1:line5}
					\State\Comment{Here we assemble the grid-cover.}
					\State $\vv{x}_{i,j} \gets \vv{v_{i,j}o}$ \label{algo:klcenter_coreset_mg1:line6}
					\State $v_{i,j,1} \gets \trans{v_{i,j}}{\vv{x}_{i,j}}$ \label{algo:klcenter_coreset_mg1:line7} \Comment{$j$\textsuperscript{th} vertex of $\hat{c}_i$ gets translated to origin}
					\State $v_{i, j+1, 1} \gets  \trans{v_{i,j+1}}{\vv{x}_{i,j}}$ \label{algo:klcenter_coreset_mg1:line8} \Comment{$(j+1)$\textsuperscript{th} vertex of $\hat{c}_i$ undergoes same translation}
					\For{$r = 1, \dots, d-1$} \label{algo:klcenter_coreset_mg1:line9} 
						\State $\alpha_{i,j,r} \gets \axisangle{\vv{o v_{i,j+1,r}}}{r}$
						\State $v_{i,j+1,r+1} \gets \rotate{v_{i,j+1,r}}{r}{\alpha_{i,j,r}}$ \Comment{set $r$\textsuperscript{th} component of $v_{i,j+1,r}$ to $0$}
					\EndFor \label{algo:klcenter_coreset_mg1:line12} \Comment{all components of $v_{i,j+1,d}$, except the $d$\textsuperscript{th}, are $0$ now}
					\For{$s = 1, \dots, \left\lceil 1 + \frac{\eucl{o}{v_{i,j+1,d}}}{2 \approxcost{\Tau}} \right\rceil$} \label{algo:klcenter_coreset_mg1:line13} \Comment{envelop $\overline{o v_{i,j+1,d}}$ with grids}
						\State \begin{varwidth}[t]{\linewidth} $\vv{y}_{i,j,s} \gets (y_{i,j,s,1}, \dots, y_{i,j,s,d})$ \par
							\hskip\algorithmicindent where $y_{i,j,s,r} \eqdef 0$, for $r \in \{1, \dots, d-1\}$, \par
							\hskip\algorithmicindent and $y_{i,j,s,d} \eqdef (s-1)\cdot 2 \approxcost{\Tau}$
						\end{varwidth}
						\State $\mathfrak{C}_{i,j,s} \gets \cube{\trans{o}{\vv{y}_{i,j,s}}, 2 \approxcost{\Tau}}$
						\State $\mathfrak{G}_{i,j,s} \gets \grid{\mathfrak{C}_{i,j,s}, \nicefrac{1}{\sqrt{d}}\cdot\epsilon\cdot\nicefrac{\approxcost{\Tau}}{6}}$ \label{algo:klcenter_coreset_mg1:line16}
					\EndFor \label{algo:klcenter_coreset_mg1:line17}
					\algstore{algo:center_mg2}
	\end{algorithmic}
\end{algorithm}
\begin{algorithm}
	\begin{algorithmic}[1]
					\algrestore{algo:center_mg2}
					\State\Comment{Here we put the grid-cover around the original edges of the center-curves.}
					\State $\mathfrak{G}_{i,j} \gets \emptyset$
					\For{$s=1, \dots, \left\lceil 1 + \frac{\eucl{v_{i,j,d}}{v_{i,j+1,d}}}{2 \approxcost{\Tau}} \right\rceil$} \label{algo:klcenter_coreset_mg1:line19}
						\State $\mathfrak{G}_{i,j,s,1} \gets \mathfrak{G}_{i,j,s}$
						\For{$r=1, \dots, d-1$}
							\State \begin{varwidth}[t]{\linewidth} $\mathfrak{G}_{i,j,s,r+1} \gets \{ \rotate{p}{d-r}{\alpha_{i,j,d-r}} \mid p \in \mathfrak{G}_{i,j,s,r} \}$ \par \hskip\algorithmicindent \Comment{rotate grids in reverse order}
							\end{varwidth}
						\EndFor
						\State \begin{varwidth}[t]{\linewidth} $\mathfrak{G}^\prime_{i,j,s} \gets \{ \trans{p}{-\vv{x}_{i,j}} \mid p \in \mathfrak{G}_{i,j,s,d} \}$ \par \hskip\algorithmicindent \Comment{translate grids to cover original edges}
						\end{varwidth}
						\State $\mathfrak{G}_{i,j} \gets \mathfrak{G}_{i,j} \cup \mathfrak{G}^\prime_{i,j,s}$
					\EndFor \label{algo:klcenter_coreset_mg1:line26}
				\EndFor \label{algo:klcenter_coreset_mg1:line27}
				\State\Comment{Here we build the actual \coreset[].}
				\For{every combination of $m$ cells of $\cup_{j=1}^{\vert \hat{c}_i \vert - 1} \mathfrak{G}_{i,j}$} \label{algo:klcenter_coreset_mg1:line28}
					\For{every permutation of those cells}
						\State \begin{varwidth}[t]{\linewidth} $\tau^\prime \gets$ arbitrary curve $\tau \in \Tau$ that has its vertices in those cells \par
						\hskip \algorithmicindent in the order of the current permutation, if exists, else $\bot$
						\end{varwidth} 
						\If{$\tau^\prime \neq \bot$}
							\State $S \gets S \cup \tau^\prime$
						\EndIf
					\EndFor
				\EndFor \label{algo:klcenter_coreset_mg1:line32}
			\EndFor
			\State \textbf{return} $S$
		\EndProcedure
	\end{algorithmic}
\end{algorithm}

Now \cref{algo:klcenter_coreset_mg1} works as follows: We run \cref{algo:klcenter_approx} on $\Tau$ to obtain a $6$-approximate solution $\hat{C} \eqdef \{ \hat{c}_1, \dots, \hat{c}_k \}$ to the \klcenter objective, with $\approxcost{\Tau} \eqdef \cost{\Tau}{\hat{C}}$. Then, for $i \in \{1, \dots, k\}$ we successively process each $\hat{c}_i$. One after another, we translate each edge of $\hat{c}_i$, such that its initial point is the origin, then we successively rotate the edge such that it lies on the non-negative section of the $d$\textsuperscript{th} axis with the help of \cref{def:axisangle}. Now we cover the edge with axis-parallel grids of edge length $2 \cdot \approxcost{\Tau}$. We rotate both the edge and the grids in reverse order and reverse direction and finally translate everything in the reverse direction of the initial translation. We obtain the original edge and additionally grids that cover the edge. In \cref{lem:envelope_cover} we will see that the vertices of every curve within Fréchet distance $\approxcost{\Tau}$ to $\hat{c}_i$, for $i \in \{1, \dots, k\}$, are covered by these grids. By \cref{prop:frechet_line_conc} we obtain that every two curves from the input-set have Fréchet distance at most $\epsilon \cdot \optcost{\Tau}$ if their vertices, in the order of their occurrence, lie in the same cells. Thus, for every combination of $m$ cells chosen from the grids that cover $\hat{c}_i$ and every permutation of these, for every $i \in \{1, \dots, k\}$, it suffices to pick a single curve from the input-set that has its vertices in these $m$ cells in the order of the permutation, if such a curve exists, to construct an \coreset[].

The construction has an obvious flaw: The cardinality of the resulting \coreset is not independent of the input-set: It does depend on the length of the longest edge of any center-curve. To be precise, it is dependent on the ratio of the longest edge of a center-curve and twice the cost of the approximate clustering $\hat{C}$, therefore we check if this ratio exceeds $\sqrt[2m]{n}$ in advance. If so, the algorithm may not be able to return a \coreset of sub-linear cardinality. In this case the algorithm will stop and return nothing. 

For the sake of simplicity of the following proofs, the algorithm adds dummy vertices to every input-curve that has less than $m$ vertices, such that all input-curves have $m$ vertices. This can for example be done by cloning the first vertex of every curve $\tau$ $m - \vert \tau \vert$ times.

\subsubsection{Correctness Analysis of \cref{algo:klcenter_coreset_mg1}}

\begin{theorem}
	\label{theo:klcenter_coreset_mg1}
	Given a set $\Tau \eqdef \{\tau_1, \dots, \tau_n\} \subset \eqcfre{m}$ of polygonal curves of complexity at least $3$ and a parameter $\epsilon \in (0,1)$, \cref{algo:klcenter_coreset_mg1} computes an \coreset for the \klcenter objective of cardinality $\On{2^{3m} \cdot \sqrt{n} \cdot \frac{l^{12d^2m}}{\epsilon^{dm}} + 2^m m^m}$, if successful.
\end{theorem}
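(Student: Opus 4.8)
The plan is to reduce the statement to \cref{prop:movement_center}. Two structural facts make this work. First, every curve that \cref{algo:klcenter_coreset_mg1} adds to the output $S$ is itself a member of $\Tau$ (each loop picks ``an arbitrary curve $\tau\in\Tau$ whose vertices lie in the current cells''), so $S\subseteq\Tau$; in particular $\cost{S}{C}\le\cost{\Tau}{C}$ for every $C\subset\eqcfre{l}$ with $\vert C\vert=k$, by monotonicity of the center-cost under enlarging the curve set. Second, I will construct a movement function $\pi\colon\Tau\rightarrow\eqcfre{m}$ with $\frechet{\tau}{\pi(\tau)}\le\epsilon\cdot\optcost{\Tau}$ for all $\tau$ and $\pi(\Tau)\subseteq S$; then \cref{prop:movement_center} gives $(1-\epsilon)\cost{\Tau}{C}\le\cost{\pi(\Tau)}{C}\le\cost{S}{C}\le\cost{\Tau}{C}\le(1+\epsilon)\cost{\Tau}{C}$, so $S$ is an $\epsilon$-coreset. (The dummy vertices added in \cref{algo:klcenter_coreset_mg1} only clone the first vertex, changing neither a curve nor any of its Fréchet distances, so this is without loss of generality; and we may assume $\approxcost{\Tau}>0$, since otherwise $\optcost{\Tau}=0$ and the coreset condition is vacuous, or the failure branch is taken.)

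To build $\pi$ I first need the covering claim isolated as \cref{lem:envelope_cover}: for each $i$, every curve $\sigma$ with $\frechet{\sigma}{\hat c_i}\le\approxcost{\Tau}$ has all of its vertices covered by $\bigcup_{j}\mathfrak G_{i,j}$. This holds because an optimal reparameterization matches each vertex $v$ of $\sigma$ to a point on some edge $e_j$ of $\hat c_i$ within distance $\approxcost{\Tau}$, so $v$ lies in the ``capsule'' (Minkowski sum with a ball of radius $\approxcost{\Tau}$) around $e_j$; the algorithm translates $v_{i,j}$ to the origin and, via \cref{def:axisangle} and \cref{prop:rotation_isometry}, rotates $e_j$ onto the non-negative $d$-axis, covers the resulting axis-aligned segment by the chain of cubes $\mathfrak C_{i,j,s}$ of edge length $2\approxcost{\Tau}$ centered at $0,2\approxcost{\Tau},4\approxcost{\Tau},\dots$ along that axis --- which, by the chosen range of $s$, contains the capsule of radius $\approxcost{\Tau}$ around that segment --- and then applies the inverse translation and inverse rotations, which are again isometries by \cref{prop:rotation_inverse}, \cref{prop:translation_inverse} and \cref{prop:translation_isometry}, carrying that capsule onto the capsule around the original $e_j$; by \cref{obs:cube_grid} the grids $\mathfrak G_{i,j,s}$ cover the same regions as the cubes. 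Now, for $\tau\in\Tau$ with $\hat c_i=\nearest{\tau}{\hat C}$, fix for each vertex of $\tau$ a cell of $\bigcup_j\mathfrak G_{i,j}$ containing it, and let $\pi(\tau)$ be the representative the algorithm selects for that ordered tuple of cells (it exists, as $\tau$ itself qualifies), so $\pi(\tau)\in S$. For the distance: cells have edge length $\frac1{\sqrt d}\epsilon\frac{\approxcost{\Tau}}6$, hence diagonal $\epsilon\frac{\approxcost{\Tau}}6\le\epsilon\optcost{\Tau}$ by the $6$-approximation guarantee \cref{theo:approx_algo_center} (the computation from the proof of \cref{theo:klcenter_coreset_1}); so if $\tau$ and $\pi(\tau)$ have their $j$-th vertices in a common cell for every $j$, then by \cref{obs:frechet_endpoints} each pair of corresponding edges has Fréchet distance at most $\epsilon\optcost{\Tau}$, and \cref{prop:frechet_line_conc} (with the affine parameterizations of the edges) yields $\frechet{\tau}{\pi(\tau)}\le\epsilon\optcost{\Tau}$.

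It remains to bound $\vert S\vert$. Each $\hat c_i$ has at most $l-1$ edges; each edge $e_j$ is covered by $\lceil 1+\eucl{v_{i,j,d}}{v_{i,j+1,d}}/(2\approxcost{\Tau})\rceil\le 2+\delta/(2\approxcost{\Tau})$ grids (isometries preserve edge lengths), and each grid is a cube of edge $2\approxcost{\Tau}$ subdivided into $(12\sqrt d/\epsilon)^d$ cells, so $\bigcup_j\mathfrak G_{i,j}$ has at most $N_i\le(l-1)\bigl(2+\delta/(2\approxcost{\Tau})\bigr)(12\sqrt d/\epsilon)^d$ cells. Since the algorithm returned successfully, $(\delta/\approxcost{\Tau})^m\le\sqrt n$. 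Per center the algorithm adds one curve for each ordered $m$-tuple of cells from $\bigcup_j\mathfrak G_{i,j}$ (repeats allowed, since padded curves repeat a vertex); there are at most $(N_i+m)^m$ such tuples, which is $\le 2^mN_i^m$ when $N_i\ge m$ and $\le 2^mm^m$ otherwise. Expanding $2^mN_i^m$, using $\bigl(2+\delta/(2\approxcost{\Tau})\bigr)^m\le 2^m\bigl(2^m+(\delta/\approxcost{\Tau})^m\bigr)\le 2^m(2^m+\sqrt n)$ and absorbing the fixed $d$-dependent constants together with $(l-1)^m$ into the (generous) factor $l^{12d^2m}$ gives $2^mN_i^m\in\On{2^{3m}\sqrt n\,l^{12d^2m}/\epsilon^{dm}}$; summing over the $k$ centers yields $\vert S\vert\in\On{2^{3m}\sqrt n\,l^{12d^2m}/\epsilon^{dm}+2^mm^m}$.

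The main obstacle is \cref{lem:envelope_cover}: verifying rigorously that the chain of axis-parallel cubes built around the straightened edge contains the full $\approxcost{\Tau}$-capsule, and that pushing it back through the composed inverse isometries preserves the covering relation, is conceptually routine but notationally heavy because of the successive plane rotations. The secondary delicate point is the combinatorial count, where the padding of short curves and the failure test $\delta^m/\approxcost{\Tau}^m>\sqrt n$ must be threaded together to produce exactly the claimed $\sqrt n$, $2^{3m}$ and $2^mm^m$ terms.
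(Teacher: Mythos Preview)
Your proposal is correct and follows essentially the same three-part structure as the paper: invoke \cref{lem:envelope_cover} for coverage, build $\pi$ via cell-representatives and combine \cref{obs:frechet_endpoints} with \cref{prop:frechet_line_conc} to bound $\frechet{\tau}{\pi(\tau)}$, then apply \cref{prop:movement_center} and count. The only cosmetic difference is that you sandwich $\cost{S}{C}$ between $\cost{\pi(\Tau)}{C}$ and $\cost{\Tau}{C}$ using $\pi(\Tau)\subseteq S\subseteq\Tau$, whereas the paper simply identifies $\pi(\Tau)$ with $S$ and reads off both bounds from \cref{prop:movement_center} directly; your cardinality bookkeeping (splitting on $N_i\ge m$ versus $N_i<m$) is a mild reorganization of the paper's $\binom{\kappa+m-1}{m}\cdot m!\le(\kappa+m)^m$ step but lands on the same asymptotic expression.
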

\begin{proof}
	This proof is threefold: 1. We show that every curve within Fréchet distance less than or equal to $\approxcost{\Tau}$ to $\hat{c}_i$, for any $i \in \{1, \dots, k\}$, is covered by the grids defined by \cref{algo:klcenter_coreset_mg1}. 2. We show that for every curve $\tau \in \Tau$ there is a curve $\tau^\prime \in S$ within Fréchet distance less than or equal to $\epsilon \cdot \optcost{\Tau}$ to $\tau$ and therefore $S$ is an \coreset for the \klcenter objective by \cref{prop:movement_center}. 3. We show that $S$ has cardinality $\On{2^{3m} \cdot \sqrt{n} \cdot \frac{l^{12d^2m}}{\epsilon^{dm}} + 2^m m^m}$.
	
	\begin{enumerate}
		\item Without loss of generality assume that $\approxcost{\Tau} > 0$. For $i = 1, \dots, k$ let $b_i \eqdef \vert \hat{c}_i \vert$ be the complexity of $\hat{c}_i$. Further, let $v_{i,1}, \dots, v_{i,b_i}$ be the vertices of $\hat{c}_i$ and $e_{i,1}, \dots, e_{i,b_i-1}$ be the edges of $\hat{c}_i$. For $\tau \in \Tau$ let $t_{\tau,1}, \dots, t_{\tau,m}$ be chosen such that $v_{\tau, 1} = \tau(t_{\tau,1}), \dots, v_{\tau, m} = \tau(t_{\tau, m})$ are the vertices of $\tau$ (recall that every curve has $m$ vertices).
		
		Using these definitions, in \cref{lem:envelope_cover} we will show that the vertices of every curve $\sigma \in \eqcfre{m}$ within Fréchet distance less than or equal to $\approxcost{\Tau}$ to $\hat{c}_i$, for any $i \in \{1, \dots, k\}$, are covered by the grids defined by \cref{algo:klcenter_coreset_mg1}. Thus, every $v_{\tau,j}$, for an arbitrary $\tau \in \Tau$ and $j \in \{1, \dots, m\}$, is contained in at least one cell of a grid (assume that ties are broken arbitrarily when the vertex is contained in more than one cell).
		
		\item Recall from \cref{theo:klcenter_coreset_1} that the maximum distance of two points in the same cell is less than or equal to $\epsilon \cdot \optcost{\Tau}$. Let $i \in \{1, \dots, k\}$ be arbitrary but fixed and let $\mathfrak{G}_i$ be the set of grids \cref{algo:klcenter_coreset_mg1} uses to cover $\hat{c}_i$. For every combination of $m$ cells of $\mathfrak{G}_i$ and every permutation of these cells, in \cref{algo:klcenter_coreset_mg1:line28} to \cref{algo:klcenter_coreset_mg1:line32}, \cref{algo:klcenter_coreset_mg1} adds only one curve to the set $S$, that has its vertices in those cells and has them connected in the order of the permutation at hand, if such a curve exists in the input-set. Assume $\tau, \tau^\prime \in \Tau$ are curves which have their vertices in the same cells of $\mathfrak{G}_i$ and have them connected in the same order, \ie for every $j \in \{1, \dots, m\}$ the vertices $v_{\tau, j}$ and $v_{\tau^\prime, j}$ lie in the same cell. Let $e_{\tau, 1}, \dots, e_{\tau^\prime, m-1}$ be the edges of $\tau$ and $e_{\tau^\prime, 1}, \dots, e_{\tau^\prime, m-1}$ be the edges of $\tau^\prime$. For $j \in \{1, \dots, m-1\}$ let $\tau_{j}\colon [t_{\tau, j}, t_{\tau, j+1}], t \mapsto \tau(t)$ be the $j$\textsuperscript{th} sub-curve of $\tau$ with $\cup_{t \in [t_{\tau,j}, t_{\tau, j+1}]} \{\tau_j(t)\} = e_{\tau, j}$ and let $\tau^\prime_{j}\colon [t_{\tau^\prime, j}, t_{\tau^\prime, j+1}], t \mapsto \tau^\prime(t)$ be the $j$\textsuperscript{th} sub-curve of $\tau^\prime$ with $\cup_{t \in [t_{\tau^\prime,j}, t_{\tau^\prime, j+1}]} \{\tau^\prime_j(t)\} = e_{\tau^\prime, j}$. Recall from \cref{theo:klcenter_coreset_1} that for all $j = 1, \dots, m-1$ the edges $e_{\tau, j}$ and $e_{\tau^\prime, j}$ have Fréchet distance less than or equal to $\epsilon \cdot \optcost{\Tau}$, so $\tau_j$ and $\tau^\prime_j$ do too. Now by \cref{prop:frechet_line_conc} we obtain that $\tau = \tau_1 \oplus \dots \oplus \tau_{m-1}$ and $\tau^\prime = \tau^\prime_1 \oplus \dots \oplus \tau^\prime_{m-1}$ have Fréchet distance less than or equal to $\epsilon \cdot \optcost{\Tau}$. Let $\pi \colon \Tau \rightarrow \Tau$ be a function that maps every $\tau \in \Tau$ to itself if there does not exist such a $\tau^\prime$ and to such a $\tau^\prime \in S$ otherwise, with the restriction that every two such curves are mapped to the same curve as it is done in \cref{algo:klcenter_coreset_mg1}. By \cref{prop:movement_center} the set $S$ returned by \cref{algo:klcenter_coreset_mg1} thus is an \coreset for the \klcenter objective.
		
		\item Now let $\delta$ be the length of the longest edge $e$ of any center $\hat{c} \in \hat{C}$. \cref{algo:klcenter_coreset_mg1} uses $\left\lceil 1 + \frac{\delta}{2 \approxcost{\Tau}} \right\rceil$ cubes to envelope $e$, \conferre \cref{lem:envelope_cover}. Recall from the proof of \cref{theo:klcenter_coreset_1} that an associated grid of such a cube has $\frac{2^d 6^d d^{\nicefrac{d}{2}}}{\epsilon^d}$ cells, thus we have at most
		\begin{align*}
		\kappa \eqdef (l-1) \cdot \left\lceil \left(1 + \frac{\delta}{2 \approxcost{\Tau}}\right) \right\rceil \cdot \frac{2^d 6^d d^{\nicefrac{d}{2}}}{\epsilon^d}
		\end{align*}
		cells to cover any $\hat{c} \in \hat{C}$ (recall that they have complexity $l$). For every curve $\tau \in \eqcfre{m}$ we have at most $\multiset{\kappa}{m} = \binom{\kappa + m - 1}{m}$ possibilities to put the vertices of $\tau$ in the cells of the grids (the order pays no importance and the cells can be reused) and $m!$ possibilities to connect them with edges, \ie to order the vertices. All in all the \coreset returned by \cref{algo:klcenter_coreset_mg1} has maximum cardinality:
		\begin{align*}
		k \cdot \binom{\kappa + m - 1}{m} \cdot m! ={} & k \cdot \frac{(\kappa + m - 1)! \cdot m!}{(\kappa - 1)! \cdot m!} \leq k \cdot (\kappa + m)^m \\
		={} & k \cdot \left((l-1) \cdot \left\lceil\left( 1 + \frac{\delta}{2 \approxcost{\Tau}}\right) \right\rceil \cdot \frac{2^d 6^d d^{\nicefrac{d}{2}}}{\epsilon^d} + m\right)^m \\
		\leq{} & k \cdot \left(l \cdot \left(2 + \frac{\delta}{2 \approxcost{\Tau}}\right) \frac{2^d 6^d d^{\nicefrac{d}{2}}}{\epsilon^d} + m \right)^m \\
		\leq{} & k \cdot 2^{m-1} \left(l^m \left(2 + \frac{\delta}{2 \approxcost{\Tau}}\right)^m \cdot \frac{2^{dm} 6^{dm} d^{\nicefrac{dm}{2}}}{\epsilon^{dm}} + m^m\right) \label{eq:mg2comb1} \tag{I} \\
		\leq{} &  k \cdot 2^{m-1}\left(\left(2^{2m} + \frac{\delta^m}{\approxcost{\Tau}^m}\right) \frac{l^m 2^{dm} 6^{dm} d^{\nicefrac{dm}{2}}}{\epsilon^{dm}} + m^m\right) \label{eq:mg2comb2} \tag{II} \\
		\leq{} & k \cdot \left(2^{3m} \cdot \sqrt{n} \cdot \frac{l^{12d^2m}}{\epsilon^{dm}} + 2^m m^m \right) \label{eq:mg2comb3} \tag{III}
		\end{align*}
		Here \cref{eq:mg2comb1} and \cref{eq:mg2comb2} follow from the fact that $(x+y)^a \leq 2^{a-1} \cdot (x^a + y^a)$ and \cref{eq:mg2comb3} follows from the fact, that the algorithm fails in \cref{algo:klcenter_coreset_mg1:lineret}, if $\frac{\delta^m}{\approxcost{\Tau}^m} > \sqrt{n}$.
	\end{enumerate}
\end{proof}
A thing that comes to mind is the combinatorial term of $(\kappa + m)^m$ and the question if there is a smaller bound on the cardinality of the resulting \coreset[], like $\binom{\kappa}{m}$ or even smaller. A reason why we cannot easily obtain a smaller bound on the cardinality of the resulting \coreset[] is that two or more vertices may lie in the same cell. In fact all but one vertex of a curve may lie in the same cell. We remark that the term of $m!$ may be substituted by something smaller because the vertices may actually not be connected in any permutation. Because we do not know this in advance we may not provide a non-adaptive bound for all possible input-sets, though.
\paragraph{Here we prove some deferred lemmas.}
\begin{lemma}
	\label{lem:envelope_cover}
	For $i \in \{ 1, \dots, k\}$, the grids defined by \cref{algo:klcenter_coreset_mg1} cover all vertices of all $\sigma \in \eqcfre{m}$ with $\frechet{\sigma}{\hat{c}_i} \leq \approxcost{\Tau}$.
\end{lemma}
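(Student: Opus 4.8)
The plan is to prove the claim by three successive reductions: from whole curves to the individual edges of $\hat{c}_i$; from those edges to their ``straightened'' copies, using invariance under isometries; and finally a direct covering computation in a canonical position. We may assume $\approxcost{\Tau} > 0$, the case $\approxcost{\Tau} = 0$ being degenerate and handled exactly as in the proof of \cref{theo:klcenter_coreset_mg1}. For the first reduction, fix $i \in \{1,\dots,k\}$ and $\sigma \in \eqcfre{m}$ with $\frechet{\sigma}{\hat{c}_i} \leq \approxcost{\Tau}$, and let $e_{i,1},\dots,e_{i,b_i-1}$ be the edges of $\hat{c}_i$. By \cref{def:frechet_distance} and the symmetry of the Fréchet distance, for every point $p = \sigma(t_0)$ on $\sigma$ and every reparameterization $g \in \FF$ we have $\min_{t \in [0,1]}\eucl{p}{\hat{c}_i(t)} \leq \eucl{p}{\hat{c}_i(g(t_0))} \leq \max_{t \in [0,1]}\eucl{\sigma(t)}{\hat{c}_i(g(t))}$; taking the infimum over $g$ gives $\min_{t \in [0,1]}\eucl{p}{\hat{c}_i(t)} \leq \frechet{\sigma}{\hat{c}_i} \leq \approxcost{\Tau}$. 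Since the point set of $\hat{c}_i$ is $\bigcup_{j} e_{i,j}$, every vertex of $\sigma$ lies in the tube $T_{i,j} \eqdef \{ q \in \euclideanspace^d \mid \min_{p \in e_{i,j}}\eucl{p}{q} \leq \approxcost{\Tau} \}$ for at least one $j$, so it suffices to show that for each $j$ the grids $\mathfrak{G}_{i,j} = \bigcup_{s} \mathfrak{G}^\prime_{i,j,s}$ built by \cref{algo:klcenter_coreset_mg1} cover $T_{i,j}$.

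For the second reduction, let $\Phi_{i,j}$ be the isometry that \cref{algo:klcenter_coreset_mg1} applies to the $j$-th vertex in \cref{algo:klcenter_coreset_mg1:line6}--\cref{algo:klcenter_coreset_mg1:line12}: the translation $\ptrans$ by $\vv{x}_{i,j}$ followed by the rotations $\protate$ through $\alpha_{i,j,1},\dots,\alpha_{i,j,d-1}$ (it is an isometry by \cref{prop:translation_isometry} and \cref{prop:rotation_isometry}). By the design of $\axisangle$ (\cref{def:axisangle}) the map $\Phi_{i,j}$ sends $e_{i,j}$ onto the segment $\overline{o\, v_{i,j+1,d}}$ with $v_{i,j+1,d} = (0,\dots,0,\ell_{i,j})$, $\ell_{i,j} \eqdef \eucl{v_{i,j}}{v_{i,j+1}}$: the translation brings $v_{i,j}$ to $o$, each successive rotation zeroes one further coordinate of the image of $v_{i,j+1}$ among the first $d-1$, lengths are preserved, and the two sign cases of \cref{def:axisangle} guarantee that the image lands on the \emph{non-negative} part of the $d$-th axis. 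Now \cref{algo:klcenter_coreset_mg1:line19}--\cref{algo:klcenter_coreset_mg1:line26} apply the same rotations in reverse order and opposite direction and then translate by $-\vv{x}_{i,j}$; by \cref{prop:rotation_inverse} and \cref{prop:translation_inverse} this is exactly $\Phi_{i,j}^{-1}$, so $\mathfrak{G}^\prime_{i,j,s} = \Phi_{i,j}^{-1}(\mathfrak{G}_{i,j,s})$. Because $\Phi_{i,j}$ is an isometry, $q \in T_{i,j}$ iff $\Phi_{i,j}(q)$ lies in $T'_{i,j}$, the tube of radius $\approxcost{\Tau}$ around $\overline{o\, v_{i,j+1,d}}$, and $q \in \bigcup_s \mathfrak{G}^\prime_{i,j,s}$ iff $\Phi_{i,j}(q) \in \bigcup_s \mathfrak{G}_{i,j,s}$; hence it remains to prove $T'_{i,j} \subseteq \bigcup_s \mathfrak{G}_{i,j,s}$.

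For the final step, by \cref{obs:cube_grid} each $\mathfrak{G}_{i,j,s}$ equals, as a point set, the cube $\mathfrak{C}_{i,j,s}$ centered at $(0,\dots,0,(s-1)\cdot 2\approxcost{\Tau})$ with edge length $2\approxcost{\Tau}$, i.e.\ its points $q$ satisfy $|q_r| \leq \approxcost{\Tau}$ for $r < d$ and $|q_d - (s-1)\cdot 2\approxcost{\Tau}| \leq \approxcost{\Tau}$. Over $s = 1,\dots,N$ with $N = \lceil 1 + \ell_{i,j}/(2\approxcost{\Tau}) \rceil$ (the range of \cref{algo:klcenter_coreset_mg1:line13}, as $\eucl{o}{v_{i,j+1,d}} = \ell_{i,j}$) the $d$-th coordinate sweeps the whole interval $[-\approxcost{\Tau}, (2N-1)\approxcost{\Tau}]$ while the other coordinates stay in $[-\approxcost{\Tau},\approxcost{\Tau}]$. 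Conversely, any $q \in T'_{i,j}$ satisfies $\sum_{r < d} q_r^2 \leq \approxcost{\Tau}^2$ and $q_d \in [-\approxcost{\Tau}, \ell_{i,j} + \approxcost{\Tau}]$, since the nearest point of $\overline{o\, v_{i,j+1,d}}$ to $q$ has $d$-th coordinate $\min(\max(q_d,0),\ell_{i,j})$ and its squared distance to $q$ dominates $\sum_{r<d} q_r^2$ and, for $q_d$ outside $[0,\ell_{i,j}]$, also $q_d^2$ respectively $(q_d - \ell_{i,j})^2$. As $(2N-1)\approxcost{\Tau} \geq \ell_{i,j} + \approxcost{\Tau}$ by the choice of $N$, we conclude $q \in \bigcup_{s=1}^N \mathfrak{C}_{i,j,s} = \bigcup_s \mathfrak{G}_{i,j,s}$, which closes the chain of reductions.

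The main obstacle I anticipate is the second reduction: one has to check that the transformations of \cref{algo:klcenter_coreset_mg1:line19}--\cref{algo:klcenter_coreset_mg1:line26} really compose to $\Phi_{i,j}^{-1}$, and, above all, that $\axisangle$ (\cref{def:axisangle}), with its two sign cases, steers the straightened edge onto the \emph{non-negative} half of the $d$-th axis, since the cubes $\mathfrak{C}_{i,j,s}$ are stacked only at non-negative $d$-coordinates; this needs a careful case analysis over the sign conditions, together with the verification that a rotation by $\axisangle{\vv{x}}{r}$ in the plane of axes $r$ and $r+1$ indeed annihilates the $r$-th coordinate of $\vv{x}$. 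A secondary point is the exact count $N$ of stacked cubes, for which the ceiling and the additive ``$1+$'' in \cref{algo:klcenter_coreset_mg1:line13} are precisely calibrated. The first reduction and the final computation are routine.
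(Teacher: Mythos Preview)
Your proposal is correct and follows essentially the same route as the paper: reduce from the curve to the union of edge-envelopes (your tubes $T_{i,j}$, the paper's $E_{i,j}$) via the Fr\'echet distance, use that the algorithm's transformations are invertible isometries to pass to the straightened segment on the non-negative $d$-th axis, and then verify the direct axis-aligned covering with the count $N=\lceil 1+\ell_{i,j}/(2\approxcost{\Tau})\rceil$. The only organisational difference is the order of the reductions, and the paper outsources exactly the point you flag as the main obstacle---that successive rotations by $\axisangle{\cdot}{r}$ annihilate coordinate $r$ and land on the non-negative half-axis, and that this is uniform along the segment---to two separate auxiliary lemmas (\cref{lem:rotation} and \cref{lem:axisangle}).
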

\begin{proof}
	Let $i \in \{1,\dots, k\}$ and $j \in \{1, \dots, b_i-1\}$ be arbitrary but fixed. We now look at the iterations for the $j$\textsuperscript{th} line segment of $\hat{c}_i$. In \cref{algo:klcenter_coreset_mg1:line6} \cref{algo:klcenter_coreset_mg1} defines $\vv{x}_{i,j} \eqdef \vv{v_{i,j}o}$ and translates both $v_{i,j}$ and $v_{i, j+1}$ by $\vv{x}_{i,j}$ in \cref{algo:klcenter_coreset_mg1:line7} and \cref{algo:klcenter_coreset_mg1:line8}. We obtain $v^\prime_{i,j} \eqdef \trans{v_{i,j}}{\vv{x}_{i,j}} = o$ and $v^\prime_{i,j+1} \eqdef \trans{v_{i,j+1}}{\vv{x}_{i,j}}$. For $p \in \euclideanspace^d$ let $D_{i,j,0}(p) \eqdef \trans{p}{\vv{x}_{i,j}}$ be this transformation and $D_{i,j,0}^{-1}(p) \eqdef \trans{p}{-\vv{x}_{i,j}}$ be its inverse (\conferre \cref{prop:translation_inverse}). Because $e_{i,j} = \overline{v_{i,j} v_{i,j+1}}$ we obtain a translated copy $e^\prime_{i,j} \eqdef \overline{v^\prime_{i,j} v^\prime_{i,j+1}}$ of $e_{i,j}$ that has its initial point at the origin. Now in \cref{algo:klcenter_coreset_mg1:line9} to \cref{algo:klcenter_coreset_mg1:line12} \cref{algo:klcenter_coreset_mg1} successively rotates $v^\prime_{i,j+1}$ in the plane spanned by the $r$\textsuperscript{th} and $(r+1)$\textsuperscript{th} axes, for $r = 1, \dots, d-1$, by $\alpha_{i,j,r} \eqdef \axisangle{\vv{o v^\prime_{i,j+1,r}}}{r}$, where $v^\prime_{i,j+1,r}$ is the result of the previous iteration, $v^\prime_{i,j+1,r+1} \eqdef \rotate{v^\prime_{i,j+1,r}}{r}{\alpha_{i,j,r}}$ is the result of the current iteration and $v^\prime_{i,j+1,1} \eqdef v^\prime_{i,j+1}$. For $p \in \euclideanspace^d$ and $r \in \{1, \dots, d-1\}$ let $D_{i,j,r}(p) \eqdef \rotate{p}{r}{\alpha_{i,j,r}}$ be the respective transformation and $D^{-1}_{i,j,r}(p) \eqdef \rotate{p}{r}{-\alpha_{i,j,r}}$ be its inverse (\conferre \cref{prop:rotation_inverse}). In \cref{lem:rotation} we will show that in each iteration for the resulting $v^\prime_{i,j+1,r+1} \eqdef (v^\prime_{i,j+1,r+1,1}, \dots, v^\prime_{i,j+1,r+1,d})$ it holds that $v^\prime_{i,j+1,r+1,r} = 0$ and $v^\prime_{i,j+1,r+1,r+1} \geq 0$. In conclusion, for the final result $v^\prime_{i,j+1,d} \eqdef (v^\prime_{i,j+1,d,1}, \dots, v^\prime_{i,j+1,d,d})$ it holds that $v^\prime_{i,j+1,d,r} = 0$, for $r \in \{0,\dots,d-1\}$, and $v^\prime_{i,j+1,d,d} \geq 0$. Because we will show in \cref{lem:axisangle} that every other point on $e^\prime_{i,j}$ except $v^\prime_{i,j} = o$ ($v^\prime_{i,j}$ is already aligned on the non-negative section of the $d$\textsuperscript{th} axis) has equal $\paxisangle$ as $v^\prime_{i,j+1}$, \conferre \cref{def:axisangle}, we have that  $e^{\prime\prime}_{i,j} \eqdef \overline{v^\prime_{i,j} v^\prime_{i, j+1, d}}$ is a translated and rotated copy of $e_{i,j}$ that is aligned on the non-negative section of the $d$\textsuperscript{th} axis and therefore has the advantage that it can be covered by grids.
	
	\begin{figure}
		\centering
		\def\svgwidth{0.78\textwidth}
\begingroup%
  \makeatletter%
  \providecommand\color[2][]{%
    \errmessage{(Inkscape) Color is used for the text in Inkscape, but the package 'color.sty' is not loaded}%
    \renewcommand\color[2][]{}%
  }%
  \providecommand\transparent[1]{%
    \errmessage{(Inkscape) Transparency is used (non-zero) for the text in Inkscape, but the package 'transparent.sty' is not loaded}%
    \renewcommand\transparent[1]{}%
  }%
  \providecommand\rotatebox[2]{#2}%
  \newcommand*\fsize{\dimexpr\f@size pt\relax}%
  \newcommand*\lineheight[1]{\fontsize{\fsize}{#1\fsize}\selectfont}%
  \ifx\svgwidth\undefined%
    \setlength{\unitlength}{581.74283814bp}%
    \ifx\svgscale\undefined%
      \relax%
    \else%
      \setlength{\unitlength}{\unitlength * \real{\svgscale}}%
    \fi%
  \else%
    \setlength{\unitlength}{\svgwidth}%
  \fi%
  \global\let\svgwidth\undefined%
  \global\let\svgscale\undefined%
  \makeatother%
  \begin{picture}(1,0.7389506)%
    \lineheight{1}%
    \setlength\tabcolsep{0pt}%
    \put(0.0,0.13){\color[rgb]{0,0,0}\rotatebox{-50.299208}{\makebox(0,0)[lt]{\lineheight{1.25}\smash{\begin{tabular}[t]{l}$2 \cdot \approxcost{\Tau}$\end{tabular}}}}}%
    \put(0.06,0.07){\includegraphics[width=\unitlength]{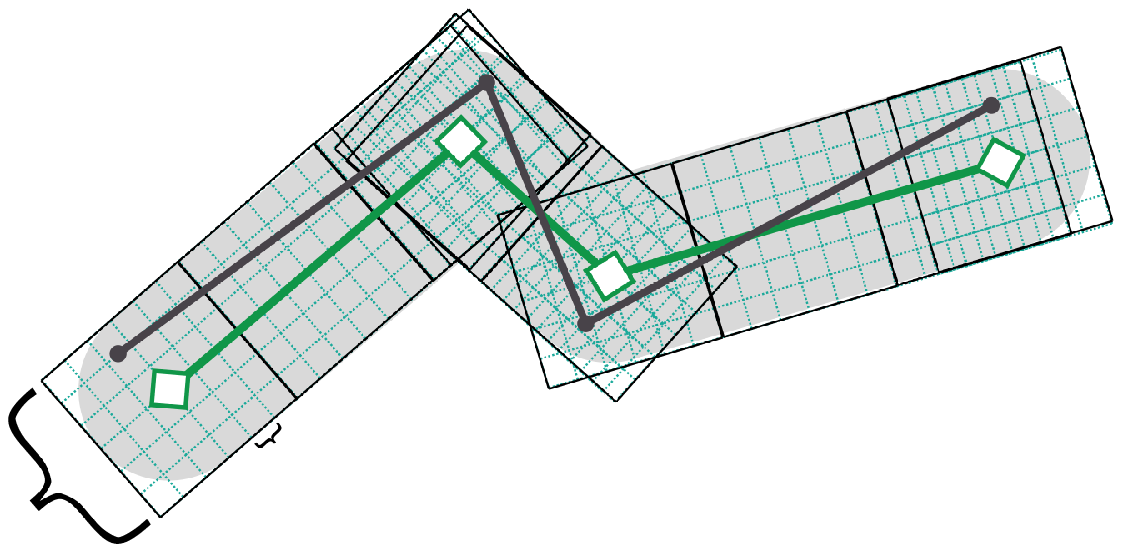}}%
    \put(0.24,0.08){\color[rgb]{0,0,0}\rotatebox{37}{\makebox(0,0)[lt]{\lineheight{1.25}\smash{\begin{tabular}[t]{l}$\nicefrac{1}{\sqrt{d}}\cdot \epsilon \cdot \nicefrac{\approxcost{\Tau}}{6}$\end{tabular}}}}}%
  \end{picture}%
\endgroup%

		\caption{Exemplary grid-cover of \cref{algo:klcenter_coreset_mg1} for $d=2$. A center-curve is depicted in green with cubes in black and associated grids in light blue. The area with the gray background is the union of the so called envelopes, \conferre \cref{lem:envelope_cover}. A curve with Fréchet distance less than $ \approxcost{\Tau}$ is also depicted. It can be observed that the vertices of this curve lie in at least one cell of a grid.}
		\label{fig:partitioning}
	\end{figure}

	For $i \in \{1, \dots, k\}$ and $j \in \{1, \dots, b_i-1\}$ let $E^{\prime\prime}_{i,j} \eqdef \{ p \in \euclideanspace^d \mid \exists q \in e^{\prime\prime}_{i,j}: \eucl{p}{q} \leq \approxcost{\Tau} \}$ be the envelope of $e^{\prime\prime}_{i,j}$ (\conferre \cref{fig:partitioning}) of radius $\approxcost{\Tau}$. We now show that for each $i \in \{1, \dots, k\}$ and $j \in \{ 1, \dots, b_i - 1 \}$ the algorithm covers $E^{\prime\prime}_{i,j}$ with grids: By the definition of $E^{\prime\prime}_{i,j}$ we need at least one grid around the initial point of $e^{\prime\prime}_{i,j}$ and one grid around its end point in addition to at least $\frac{\eucl{v_{i,j}}{v_{i,j+1}}-2 \cdot \approxcost{\Tau}}{2 \cdot \approxcost{\Tau}}$ many, to cover the range of $e^{\prime\prime}_{i,j}$ that remains uncovered, which is of length $\eucl{v_{i,j}}{v_{i,j+1}}-2 \cdot \approxcost{\Tau}$. It can be observed that $\beta \eqdef \left\lceil 1 + \frac{\eucl{v_{i,j}}{v_{i,j+1}}}{2 \cdot \approxcost{\Tau}} \right\rceil$ is an upper bound for this number. Because $e^{\prime\prime}_{i,j}$ is aligned on the non-negative section of the $d$\textsuperscript{th} axis, we cover $E^{\prime\prime}_{i,j}$ by translating the grids along the $d$\textsuperscript{th} axis, starting from the initial point of $e^{\prime\prime}_{i,j}$, which is the origin. In \cref{algo:klcenter_coreset_mg1:line13} to \cref{algo:klcenter_coreset_mg1:line17} \cref{algo:klcenter_coreset_mg1} defines for $i \in \{1, \dots, k\}$, $j \in \{1, \dots, b_i-1\}$ and $s \in \{1, \dots, \beta\}$ the vectors $\vv{y}_{i,j,s} \eqdef (y_{i,j,s,1}, \dots, y_{i,j,s,d})$, where $y_{i,j,s,r} \eqdef 0$, for $r \in \{0, \dots, d-1\}$, and $y_{i,j,s,d} \eqdef (s-1) \cdot 2 \cdot \approxcost{\Tau}$. Also, it defines the cubes $\mathfrak{C}_{i,j,s} \eqdef \cube{\trans{o}{\vv{y}_{i,j,s}}, 2 \cdot \approxcost{\Tau}}$ which are translated along the $d$\textsuperscript{th} axis and the associated grids $\mathfrak{G}_{i,j,s} \eqdef \grid{\mathfrak{C}_{i,j,s}, \nicefrac{1}{\sqrt{d}}\cdot\epsilon\cdot\nicefrac{\approxcost{\Tau}}{6}}$ of cell length $\nicefrac{1}{\sqrt{d}}\cdot\epsilon\cdot\nicefrac{\approxcost{\Tau}}{6}$. Recall that by \cref{obs:cube_grid} it holds that $\mathfrak{C}_{i,j,s} = \mathfrak{G}_{i,j,s}$. Now because it can be observed that $\cup_{s=1}^{\beta} \mathfrak{G}_{i,j,s} = \cup_{p \in e^{\prime\prime}_{i,j}} \cube{p, 2 \cdot \approxcost{\Tau}}$ and $E^{\prime\prime}_{i,j} = \cup_{p \in e^{\prime\prime}_{i,j}} \{ q \in \euclideanspace^d \mid \eucl{p}{q} \leq \approxcost{\Tau} \}$, we have that
	\begin{align*}
	E^{\prime\prime}_{i,j} \subseteq \bigcup_{s=1}^{\beta} \mathfrak{G}_{i,j,s},
	\end{align*}
	by \cref{obs:ball_grid}. Thus, it holds that for every $i \in \{1,\dots,k\}$ and every $j \in \{1, \dots, b_i-1\}$ the envelope $E^{\prime\prime}_{i,j}$ is covered by grids.
	
	For $i \in \{1, \dots, k\}$ and $j \in \{1, \dots, b_i - 1\}$ let $E_{i,j} \eqdef \{ p \in \euclideanspace^d \mid \exists q \in e_{i,j}: \eucl{p}{q} \leq \approxcost{\Tau} \}$ be the envelope of $e_{i,j}$ and for $p \in \euclideanspace^d$ let $D_{i,j}(p) \eqdef (D_{i,j,d} \circ \dots \circ D_{i,j,1} \circ D_{i,j,0})(p)$ be the composition of the transformations \cref{algo:klcenter_coreset_mg1} does in \cref{algo:klcenter_coreset_mg1:line7} to \cref{algo:klcenter_coreset_mg1:line12}, also let $D^{-1}_{i,j}(p) \eqdef (D^{-1}_{i,j,0} \circ D^{-1}_{i,j,1} \circ \dots \circ D^{-1}_{i,j,d})(p)$ be the inverse of $D_{i,j}(p)$. In \cref{algo:klcenter_coreset_mg1:line19} to \cref{algo:klcenter_coreset_mg1:line26} \cref{algo:klcenter_coreset_mg1} applies $D^{-1}_{i,j}(\cdot)$ to every point in every grid. By \cref{prop:translation_isometry} and \cref{prop:rotation_isometry} the points in the grids now have same distances to the points in $E_{i,j}$ as they had to the points in $E^{\prime\prime}_{i,j}$, therefore
	\begin{align*}
	E_{i,j} \subseteq \bigcup_{s=1}^{\beta} D^{-1}_{i,j}\left(\mathfrak{G}_{i,j,s}\right)
	\end{align*}
	holds, where $D^{-1}_{i,j}\left(\mathfrak{G}_{i,j,s}\right) \eqdef \{ D^{-1}_{i,j}(p) \mid p \in \mathfrak{G}_{i,j,s} \}$, hence for every $i \in \{1, \dots, k\}$ and every $j \in \{ 1, \dots, b_i-1\}$ the envelope $E_{i,j}$ is covered by the grids defined by \cref{algo:klcenter_coreset_mg1}. By \cref{def:frechet_distance} it is clear, that for every curve $\sigma \in \eqcfre{m}$ with $\frechet{\sigma}{\hat{c}_i} \leq \approxcost{\Tau}$, for an arbitrary $i \in \{1, \dots,k\}$, it holds that $\{ \sigma(t) \mid t \in [0,1] \} \subseteq \cup_{j=1}^{b_i-1} E_{i,j}$, which finishes the proof.
\end{proof}
\begin{lemma}
	\label{lem:axisangle}
	Let $p \eqdef (p_1, \dots, p_d) \in \euclideanspace^d$ be a point and $\overline{op}$ be the line segment from the origin to $p$. For every two points $q_1, q_2 \in \overline{op} \setminus \{ o \}$ and $i \in \{1, \dots, d-1\}$ it holds that $\axisangle{\vv{oq_1}}{i} = \axisangle{\vv{oq_2}}{i}$.
\end{lemma}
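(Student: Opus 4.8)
The plan is to exploit the fact that $o$ is the origin, so the segment $\overline{op}$ consists exactly of the scalar multiples $\gamma p$ with $\gamma \in [0,1]$. First I would dispose of the trivial case: if $p = \vv{o}$ then $\overline{op} \setminus \{o\} = \emptyset$ and there is nothing to prove, so assume $p \neq \vv{o}$. Then every $q \in \overline{op} \setminus \{o\}$ has the form $q = \gamma p$ for a unique $\gamma \in (0,1]$, and since $o$ is the origin the position vector is $\vv{oq} = q = \gamma p$. Writing $q_1 = \gamma_1 p$ and $q_2 = \gamma_2 p$ with $\gamma_1,\gamma_2 \in (0,1]$, it therefore suffices to show that $\axisangle{\gamma p}{i}$ is independent of the choice of $\gamma > 0$; the claim then follows because both $\axisangle{\vv{oq_1}}{i}$ and $\axisangle{\vv{oq_2}}{i}$ equal the common value $\axisangle{p}{i}$.

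The second step is to unwind \cref{def:axisangle} for the scaled vector. Projecting onto the plane spanned by the $i$\textsuperscript{th} and $(i+1)$\textsuperscript{th} axes only zeroes out the coordinates outside $\{i,i+1\}$ and keeps the others unchanged, which is a linear operation, so $(\gamma p)^\prime = \gamma\, p^\prime$. For the case distinction in \cref{def:axisangle}, the $(i+1)$\textsuperscript{th} coordinate of $\gamma p$ is $\gamma\, p_{i+1}$, and since $\gamma > 0$ this is $\geq 0$ exactly when $p_{i+1} \geq 0$; hence $\gamma p$ and $p$ always fall into the same branch. For the angle term I would use the definition of $\pangle$ from \cref{def:euclideanspace} together with bilinearity of the scalar product and positive homogeneity of the norm: for $\gamma > 0$,
\[
\angl{\gamma p^\prime}{\vv{a}_i} = \arccos\left( \frac{\scalarm{\gamma p^\prime}{\vv{a}_i}}{\norm{\gamma p^\prime}\cdot\norm{\vv{a}_i}} \right) = \arccos\left( \frac{\gamma\,\scalarm{p^\prime}{\vv{a}_i}}{\gamma\,\norm{p^\prime}\cdot\norm{\vv{a}_i}} \right) = \angl{p^\prime}{\vv{a}_i}.
\]
Combining the two observations gives $\axisangle{\gamma p}{i} = \axisangle{p}{i}$ for every $\gamma > 0$, which is exactly what is needed.

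The only point requiring a little care is the degenerate situation $p^\prime = \vv{o}$, i.e.\ $p_i = p_{i+1} = 0$, in which $\angl{\cdot}{\vv{a}_i}$ is not defined. But then $(\gamma p)^\prime = \vv{o}$ for every $\gamma > 0$ as well, so $\axisangle{\vv{oq_1}}{i}$ and $\axisangle{\vv{oq_2}}{i}$ are ``equally (un)defined''; under any fixed convention for this case (or, in the intended use in \cref{algo:klcenter_coreset_mg1}, where the rotation in the $(i,i+1)$-plane is simply skipped) the equality still holds. Apart from handling this edge case, the argument is a routine scaling computation, so I do not anticipate any real obstacle.
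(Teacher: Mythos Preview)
Your proposal is correct and follows essentially the same approach as the paper: parameterize $q_1,q_2$ as positive scalar multiples of $p$ and observe that the scalar factor cancels in the angle computation. Your treatment is in fact slightly more careful than the paper's, since you explicitly verify that the sign of the $(i{+}1)$\textsuperscript{th} coordinate (hence the branch in \cref{def:axisangle}) is preserved under positive scaling and you address the degenerate case $p^\prime = \vv{o}$, neither of which the paper spells out.
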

\begin{proof}
	We have that $\overline{op} = \{ (1-\gamma)o + \gamma p \mid \gamma \in [0,1] \} = \{ \gamma p \mid \gamma \in [0,1] \}$. Let $\gamma_1 \eqdef \frac{q_1}{p}$ and $\gamma_2 \eqdef \frac{q_2}{p}$. By definition, we have that $\gamma_1 > 0$ and $\gamma_2 > 0$.
	
	Let $\vv{a_i}$ and $\vv{oq_1}^\prime$, as well as $\vv{oq_2}^\prime$, be defined as in \cref{def:axisangle}. We obtain:
	\begin{align*}
		\axisangle{\vv{oq_1}^\prime}{i} ={} & \frac{\scalarm{\vv{oq_1}^\prime}{\vv{a}_i}}{\norm{\vv{oq_1}^\prime}\cdot\norm{\vv{a}_i}} \\
		={} & \frac{\gamma_1 \cdot p_i}{\sqrt{\gamma_1^2}\sqrt{\sum_{j=1}^{i-1} 0^2 + p_i^2 + p_{i+1}^2 + \sum_{i+2}^{d} 0^2}} = \frac{p_i}{\sqrt{\sum_{j=1}^{i-1} 0^2 + p_i^2 + p_{i+1}^2 + \sum_{i+2}^{d} 0^2}} \\
		={} & \frac{\gamma_2 \cdot p_i}{\sqrt{\gamma_2^2}\sqrt{\sum_{j=1}^{i-1} 0^2 + p_i^2 + p_{i+1}^2 + \sum_{i+2}^{d} 0^2}}
		= \frac{\scalarm{\vv{oq_2}^\prime}{\vv{a}_i}}{\norm{\vv{oq_2}^\prime}\cdot\norm{\vv{a}_i}} = \axisangle{\vv{oq_2}^\prime}{i}
	\end{align*}
\end{proof}
\begin{lemma}
	\label{lem:rotation}
	Let $p \in \euclideanspace^d \setminus \{ o \}$ be an arbitrary point and $i \in \{1, \dots,d-1\}$ be arbitrary. Let \[q \eqdef (q_1, \dots, q_d) \eqdef \rotate{p}{i}{\axisangle{\vv{op}}{i}},\] then it holds that $q_i = 0$ and $q_{i+1} \geq 0$.
\end{lemma}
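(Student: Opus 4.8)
The plan is to unwind the definitions of $\protate$ (Definition \ref{def:motions}) and $\paxisangle$ (Definition \ref{def:axisangle}) and verify the two claimed coordinate conditions by direct trigonometric computation. Write $p \eqdef (p_1,\dots,p_d)$ and set $\alpha \eqdef \axisangle{\vv{op}}{i}$. Since the rotation $\protate$ only alters the $i$\textsuperscript{th} and $(i+1)$\textsuperscript{th} coordinates, I only need to track those two, and they depend on $p$ solely through $p_i$ and $p_{i+1}$, i.e.\ through the projected vector $\vv{op}^\prime$ of Definition \ref{def:axisangle}. The key quantities are $\cos(\alpha)$ and $\sin(\alpha)$, which I would read off from the definition of $\paxisangle$: since $\angl{\vv{op}^\prime}{\vv{a}_i} = \arccos\!\left(\frac{p_i}{\sqrt{p_i^2 + p_{i+1}^2}}\right)$ (using that $\vv{a}_i$ is the $i$\textsuperscript{th} unit vector and $\scalarm{\vv{op}^\prime}{\vv{a}_i} = p_i$), I get $\cos(\angl{\vv{op}^\prime}{\vv{a}_i}) = \frac{p_i}{\sqrt{p_i^2+p_{i+1}^2}}$ and $\sin(\angl{\vv{op}^\prime}{\vv{a}_i}) = \frac{|p_{i+1}|}{\sqrt{p_i^2+p_{i+1}^2}}$ (the latter because an $\arccos$ value lies in $[0,\pi]$, where sine is nonnegative).

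Next I would split into the two cases of Definition \ref{def:axisangle}. If $p_{i+1} \geq 0$, then $\alpha = \frac{\pi}{2} - \angl{\vv{op}^\prime}{\vv{a}_i}$, so $\cos(\alpha) = \sin(\angl{\vv{op}^\prime}{\vv{a}_i}) = \frac{p_{i+1}}{\sqrt{p_i^2+p_{i+1}^2}}$ and $\sin(\alpha) = \cos(\angl{\vv{op}^\prime}{\vv{a}_i}) = \frac{p_i}{\sqrt{p_i^2+p_{i+1}^2}}$; if $p_{i+1} < 0$, then $\alpha = \frac{\pi}{2} + \angl{\vv{op}^\prime}{\vv{a}_i}$, giving $\cos(\alpha) = -\sin(\angl{\vv{op}^\prime}{\vv{a}_i}) = \frac{p_{i+1}}{\sqrt{p_i^2+p_{i+1}^2}}$ (since $|p_{i+1}| = -p_{i+1}$ here) and $\sin(\alpha) = \cos(\angl{\vv{op}^\prime}{\vv{a}_i}) = \frac{p_i}{\sqrt{p_i^2+p_{i+1}^2}}$. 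The pleasant observation is that in both cases one obtains the same formulas $\cos(\alpha) = \frac{p_{i+1}}{\sqrt{p_i^2+p_{i+1}^2}}$ and $\sin(\alpha) = \frac{p_i}{\sqrt{p_i^2+p_{i+1}^2}}$, so the case distinction collapses.

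Now I plug into Definition \ref{def:motions}: $q_i = p_i\cos(\alpha) - p_{i+1}\sin(\alpha) = \frac{p_i p_{i+1} - p_{i+1} p_i}{\sqrt{p_i^2+p_{i+1}^2}} = 0$, and $q_{i+1} = p_i\sin(\alpha) + p_{i+1}\cos(\alpha) = \frac{p_i^2 + p_{i+1}^2}{\sqrt{p_i^2+p_{i+1}^2}} = \sqrt{p_i^2+p_{i+1}^2} \geq 0$, which is exactly the claim. I would also dispatch the degenerate sub-case $p_i = p_{i+1} = 0$ separately: then $\vv{op}^\prime = \vv{o}$, the angle $\paxisangle$ is not well-defined via the formula, but since $p \neq o$ the point has a nonzero coordinate outside $\{i,i+1\}$, the rotation leaves $q_i = q_{i+1} = 0$ unchanged regardless of the angle chosen, and the conclusion holds trivially.

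The main obstacle — really the only subtlety — is handling the sign of $\sin(\angl{\vv{op}^\prime}{\vv{a}_i})$ correctly across the two cases and confirming that the $\pm$ in front of the angle in Definition \ref{def:axisangle} is precisely what makes $\cos(\alpha)$ come out with the right sign (namely the sign of $p_{i+1}$) so that $q_{i+1}\geq 0$; the rest is the Pythagorean identity. I would present the computation once, noting the case-merge, rather than twice.
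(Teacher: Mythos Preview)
Your proof is correct and takes a genuinely different route from the paper's. The paper argues geometrically: it distinguishes four cases according to which quadrant of the $(i,i+1)$-plane the projection $p'$ lies in, and in each case reasons about the angle between $\vv{op'}$ and the axes (with accompanying figures) to conclude that after rotation $\vv{oq'}$ points along the non-negative $(i+1)$\textsuperscript{th} axis. You instead unwind the definitions algebraically, reduce the two cases of \cref{def:axisangle} to a single pair of formulas $\cos(\alpha)=\frac{p_{i+1}}{\sqrt{p_i^2+p_{i+1}^2}}$, $\sin(\alpha)=\frac{p_i}{\sqrt{p_i^2+p_{i+1}^2}}$, and plug into the rotation formula to get $q_i=0$ and $q_{i+1}=\sqrt{p_i^2+p_{i+1}^2}\geq 0$ directly.

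Your approach is shorter and more self-contained; the case-merge is a nice observation that avoids the paper's four-way split, and the explicit value $q_{i+1}=\sqrt{p_i^2+p_{i+1}^2}$ is a small bonus the geometric argument does not surface. You also explicitly handle the degenerate sub-case $p_i=p_{i+1}=0$ (where $\vv{op}'=\vv{o}$ and the angle in \cref{def:axisangle} is ill-defined), which the paper's proof glosses over. The paper's argument, on the other hand, gives more geometric intuition for \emph{why} the rotation angle was defined this way, which is helpful context for the surrounding algorithm.
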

\begin{proof}
	Consider the plane spanned by the $i$\textsuperscript{th} and $(i+1)$\textsuperscript{th} dimension, which we denote by $\mathcal{P}$. Let $\vv{a}_i \eqdef (a_{i,1}, \dots, a_{i,d})$, where we define $a_{i,j} \eqdef 0$ for $j \in \{1, \dots, d\} \setminus \{i\}$ and $a_{i,i} \eqdef 1$. Also let $\vv{a}_{i+1} \eqdef (a_{i+1,1}, \dots, a_{i+1,d})$, where we define $a_{i,j} \eqdef 0$ for $j \in \{1, \dots, d\} \setminus \{i+1\}$ and $a_{i+1,i+1} \eqdef 1$. Clearly $\vv{a}_i$ lies on the non-negative section of the $i$\textsuperscript{th} axis and $\vv{a}_{i+1}$ lies on the non-negative section of the $(i+1)$\textsuperscript{th} axis (and $\vv{a}_i$ is orthogonal to $\vv{a}_{i+1}$ and vice versa). Now we have four cases:
	\paragraph{Case 1:} The projection of $p$ onto $\mathcal{P}$, denoted by $p^\prime \eqdef (p^\prime_1, \dots, p^\prime_d)$, lies in the first quadrant, \ie $p^\prime_i \geq 0$ and $p^\prime_{i+1} \geq 0$. We have that $\axisangle{\vv{op}}{i}$ is $\frac{\pi}{2}$ minus the angle between the position vector $\vv{op^\prime}$ and $\vv{a}_i$, which gives the angle between $\vv{op^\prime}$ and $\vv{a}_{i+1}$ (\conferre \cref{fig:rot_case1}). By rotating $p$ counter-clockwise by this angle we obtain $q$ for which the position vector $\vv{oq^\prime}$, where $q^\prime$ is the projection of $q$ onto $\mathcal{P}$, is orthogonal to $\vv{a}_i$ and therefore to the $i$\textsuperscript{th} axis (just as the $(i+1)$\textsuperscript{th} axis is). In conclusion $q^\prime$ lies on the non-negative section of the $(i+1)$\textsuperscript{th} axis, hence $q_{i} = 0$ and $q_{i+1} \geq 0$.
	\begin{figure}
		\centering
		\def\svgwidth{0.5\textwidth}
\begingroup%
  \makeatletter%
  \providecommand\color[2][]{%
    \errmessage{(Inkscape) Color is used for the text in Inkscape, but the package 'color.sty' is not loaded}%
    \renewcommand\color[2][]{}%
  }%
  \providecommand\transparent[1]{%
    \errmessage{(Inkscape) Transparency is used (non-zero) for the text in Inkscape, but the package 'transparent.sty' is not loaded}%
    \renewcommand\transparent[1]{}%
  }%
  \providecommand\rotatebox[2]{#2}%
  \newcommand*\fsize{\dimexpr\f@size pt\relax}%
  \newcommand*\lineheight[1]{\fontsize{\fsize}{#1\fsize}\selectfont}%
  \ifx\svgwidth\undefined%
    \setlength{\unitlength}{333.8771913bp}%
    \ifx\svgscale\undefined%
      \relax%
    \else%
      \setlength{\unitlength}{\unitlength * \real{\svgscale}}%
    \fi%
  \else%
    \setlength{\unitlength}{\svgwidth}%
  \fi%
  \global\let\svgwidth\undefined%
  \global\let\svgscale\undefined%
  \makeatother%
  \begin{picture}(1,0.94470604)%
    \lineheight{1}%
    \setlength\tabcolsep{0pt}%
    \put(0,0){\includegraphics[width=\unitlength]{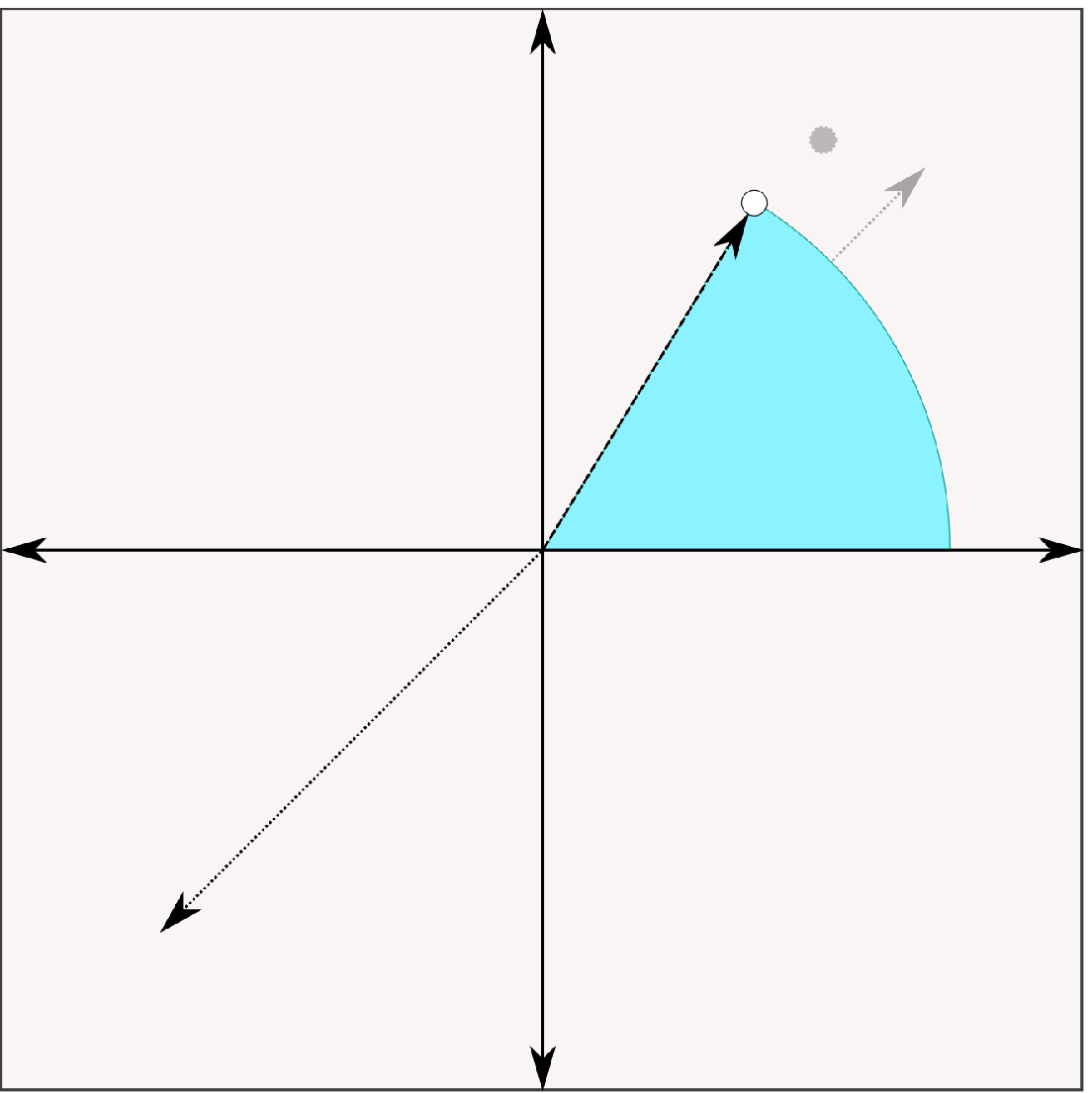}}%
    \put(0.54390726,0.67054846){\color[rgb]{0,0,0}\makebox(0,0)[lt]{\lineheight{1.25}\smash{\begin{tabular}[t]{l}$\vv{op^\prime}$\end{tabular}}}}%
    \put(0.73182624,0.10448845){\color[rgb]{0,0,0}\makebox(0,0)[lt]{\lineheight{1.25}\smash{\begin{tabular}[t]{l}$\mathcal{P}$\end{tabular}}}}%
    \put(0.72885181,0.85636272){\color[rgb]{0.30196078,0.30196078,0.30196078}\makebox(0,0)[lt]{\lineheight{1.25}\smash{\begin{tabular}[t]{l}$p$\end{tabular}}}}%
    \put(0.64228905,0.54951581){\color[rgb]{0,0,0}\makebox(0,0)[lt]{\lineheight{1.25}\smash{\begin{tabular}[t]{l}$\alpha$\end{tabular}}}}%
    \put(0.65239836,0.80412757){\color[rgb]{0,0,0}\makebox(0,0)[lt]{\lineheight{1.25}\smash{\begin{tabular}[t]{l}$p^\prime$\end{tabular}}}}%
    \put(0.44480162,0.4777958){\color[rgb]{0,0,0}\makebox(0,0)[lt]{\lineheight{1.25}\smash{\begin{tabular}[t]{l}$o$\end{tabular}}}}%
  \end{picture}%
\endgroup%

		\caption{Depiction of case 1 in the proof of \cref{lem:rotation} for $d=3$ and $i=1$. $\alpha$ is the angle between the position vector of $p^\prime$, which is the projection of $p$ onto the plane $\mathcal{P}$, and the $i$\textsuperscript{th} axis, therefore $\nicefrac{\pi}{2} - \alpha$ is the angle between the position vector of $p^\prime$ and the $(i+1)$\textsuperscript{th} axis.}
		\label{fig:rot_case1}
	\end{figure}
	The following cases only differ in the angle and direction of the rotation, therefore we just show that for these cases $\vv{oq^\prime}$ is also orthogonal to $\vv{a}_i$ in the direction of $\vv{a}_{i+1}$.
	\paragraph{Case 2:} $p^\prime$ lies in the second quadrant, \ie $p^\prime_i < 0$ and $p^\prime_{i+1} \geq 0$. We have that $\axisangle{\vv{op}}{i}$ is $\frac{\pi}{2}$ minus the angle between $\vv{op^\prime}$ and $\vv{a}_i$, which is greater than $\frac{\pi}{2}$. The resulting angle equals the negative angle between $\vv{op^\prime}$ and $\vv{a}_{i+1}$ (\conferre \cref{fig:rot_case2}), therefore the rotation changes its direction and again we obtain that $\vv{oq^\prime}$ is orthogonal to $\vv{a}_i$ in the direction of $\vv{a}_{i+1}$.
	\begin{figure}
		\centering
		\def\svgwidth{0.5\textwidth}
\begingroup%
  \makeatletter%
  \providecommand\color[2][]{%
    \errmessage{(Inkscape) Color is used for the text in Inkscape, but the package 'color.sty' is not loaded}%
    \renewcommand\color[2][]{}%
  }%
  \providecommand\transparent[1]{%
    \errmessage{(Inkscape) Transparency is used (non-zero) for the text in Inkscape, but the package 'transparent.sty' is not loaded}%
    \renewcommand\transparent[1]{}%
  }%
  \providecommand\rotatebox[2]{#2}%
  \newcommand*\fsize{\dimexpr\f@size pt\relax}%
  \newcommand*\lineheight[1]{\fontsize{\fsize}{#1\fsize}\selectfont}%
  \ifx\svgwidth\undefined%
    \setlength{\unitlength}{333.8771913bp}%
    \ifx\svgscale\undefined%
      \relax%
    \else%
      \setlength{\unitlength}{\unitlength * \real{\svgscale}}%
    \fi%
  \else%
    \setlength{\unitlength}{\svgwidth}%
  \fi%
  \global\let\svgwidth\undefined%
  \global\let\svgscale\undefined%
  \makeatother%
  \begin{picture}(1,0.94470604)%
    \lineheight{1}%
    \setlength\tabcolsep{0pt}%
    \put(0,0){\includegraphics[width=\unitlength]{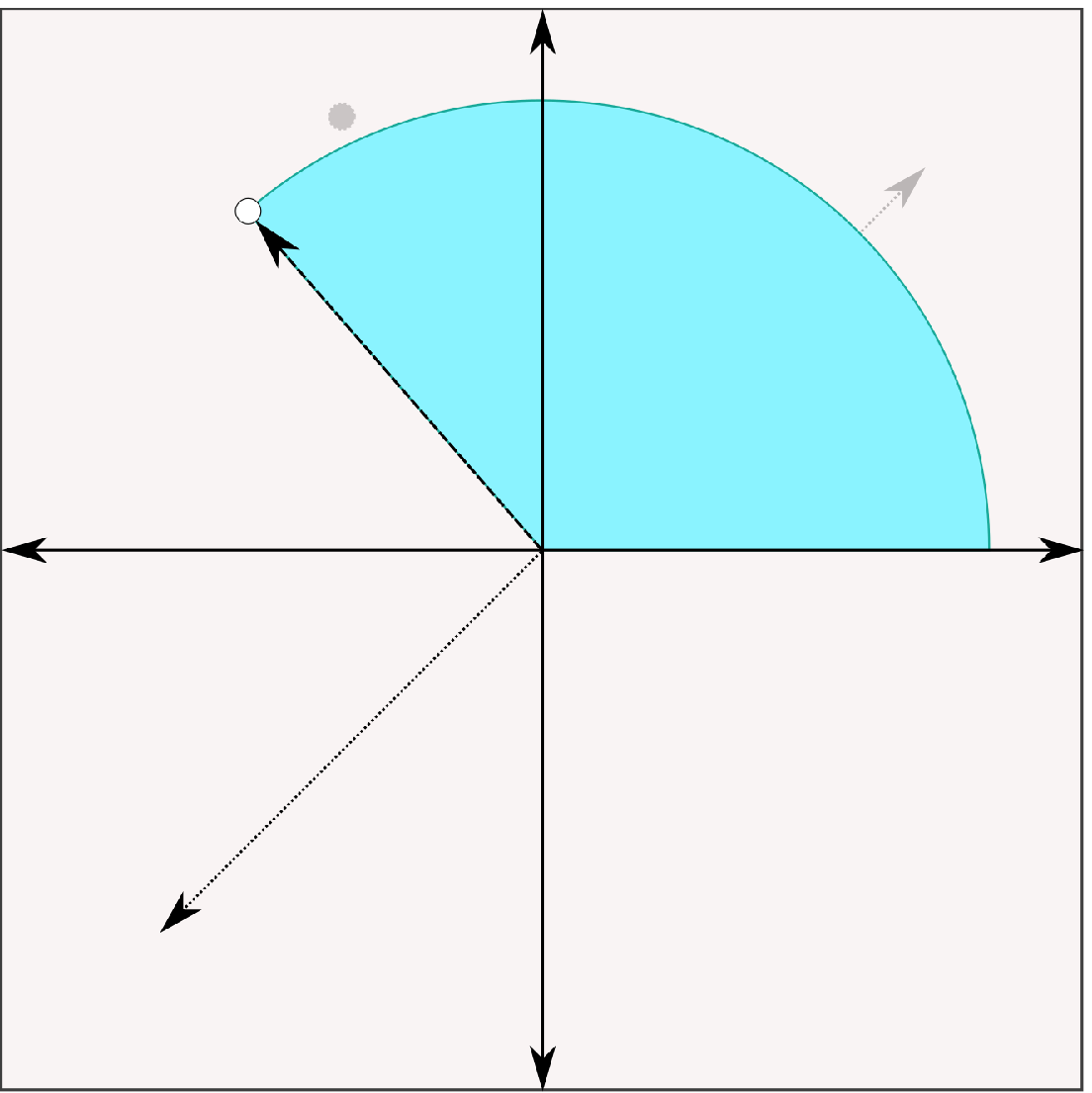}}%
    \put(0.3300854,0.65327175){\color[rgb]{0,0,0}\makebox(0,0)[lt]{\lineheight{1.25}\smash{\begin{tabular}[t]{l}$\vv{op^\prime}$\end{tabular}}}}%
    \put(0.73182624,0.10448845){\color[rgb]{0,0,0}\makebox(0,0)[lt]{\lineheight{1.25}\smash{\begin{tabular}[t]{l}$\mathcal{P}$\end{tabular}}}}%
    \put(0.28259716,0.88018869){\color[rgb]{0.30196078,0.30196078,0.30196078}\makebox(0,0)[lt]{\lineheight{1.25}\smash{\begin{tabular}[t]{l}$p$\end{tabular}}}}%
    \put(0.58564232,0.65753522){\color[rgb]{0,0,0}\makebox(0,0)[lt]{\lineheight{1.25}\smash{\begin{tabular}[t]{l}$\alpha$\end{tabular}}}}%
    \put(0.21125064,0.79691043){\color[rgb]{0,0,0}\makebox(0,0)[lt]{\lineheight{1.25}\smash{\begin{tabular}[t]{l}$p^\prime$\end{tabular}}}}%
    \put(0.47559528,0.44683153){\color[rgb]{0,0,0}\makebox(0,0)[lt]{\lineheight{1.25}\smash{\begin{tabular}[t]{l}$o$\end{tabular}}}}%
  \end{picture}%
\endgroup%

		\caption{Depiction of case 2 in \cref{lem:rotation} for $d=3$ and $i=1$. $\alpha$ is the angle between the position vector of $p^\prime$, which is the projection of $p$ onto the plane $\mathcal{P}$, and the $i$\textsuperscript{th} axis, therefore $\nicefrac{\pi}{2} - \alpha$ is the negative angle between the position vector of $p^\prime$ and the $(i+1)$\textsuperscript{th} axis.}
		\label{fig:rot_case2}
	\end{figure}
	\paragraph{Case 3:} $p^\prime$ lies in the third quadrant, \ie $p_i < 0$ and $p_{i+1} < 0$. We have that $\axisangle{\vv{op}}{i}$ is $\frac{\pi}{2}$ plus the angle between $\vv{op^\prime}$ and $\vv{a}_i$, which again is the angle between $\vv{op^\prime}$ and $\vv{a}_{i+1}$. Rotating by this angle again yields that $\vv{oq^\prime}$ is orthogonal to $\vv{a}_i$ in the direction of $\vv{a}_{i+1}$.
	\paragraph{Case 4:} $p^\prime$ lies in the fourth quadrant, \ie $p_i \geq 0$ and $p_{i+1} < 0$ (\conferre \cref{fig:rot_case4}). This case is completely analogous to the previous case.
	\begin{figure}
		\centering
		\def\svgwidth{0.5\textwidth}
\begingroup%
  \makeatletter%
  \providecommand\color[2][]{%
    \errmessage{(Inkscape) Color is used for the text in Inkscape, but the package 'color.sty' is not loaded}%
    \renewcommand\color[2][]{}%
  }%
  \providecommand\transparent[1]{%
    \errmessage{(Inkscape) Transparency is used (non-zero) for the text in Inkscape, but the package 'transparent.sty' is not loaded}%
    \renewcommand\transparent[1]{}%
  }%
  \providecommand\rotatebox[2]{#2}%
  \newcommand*\fsize{\dimexpr\f@size pt\relax}%
  \newcommand*\lineheight[1]{\fontsize{\fsize}{#1\fsize}\selectfont}%
  \ifx\svgwidth\undefined%
    \setlength{\unitlength}{333.8771913bp}%
    \ifx\svgscale\undefined%
      \relax%
    \else%
      \setlength{\unitlength}{\unitlength * \real{\svgscale}}%
    \fi%
  \else%
    \setlength{\unitlength}{\svgwidth}%
  \fi%
  \global\let\svgwidth\undefined%
  \global\let\svgscale\undefined%
  \makeatother%
  \begin{picture}(1,0.94470604)%
    \lineheight{1}%
    \setlength\tabcolsep{0pt}%
    \put(0,0){\includegraphics[width=\unitlength]{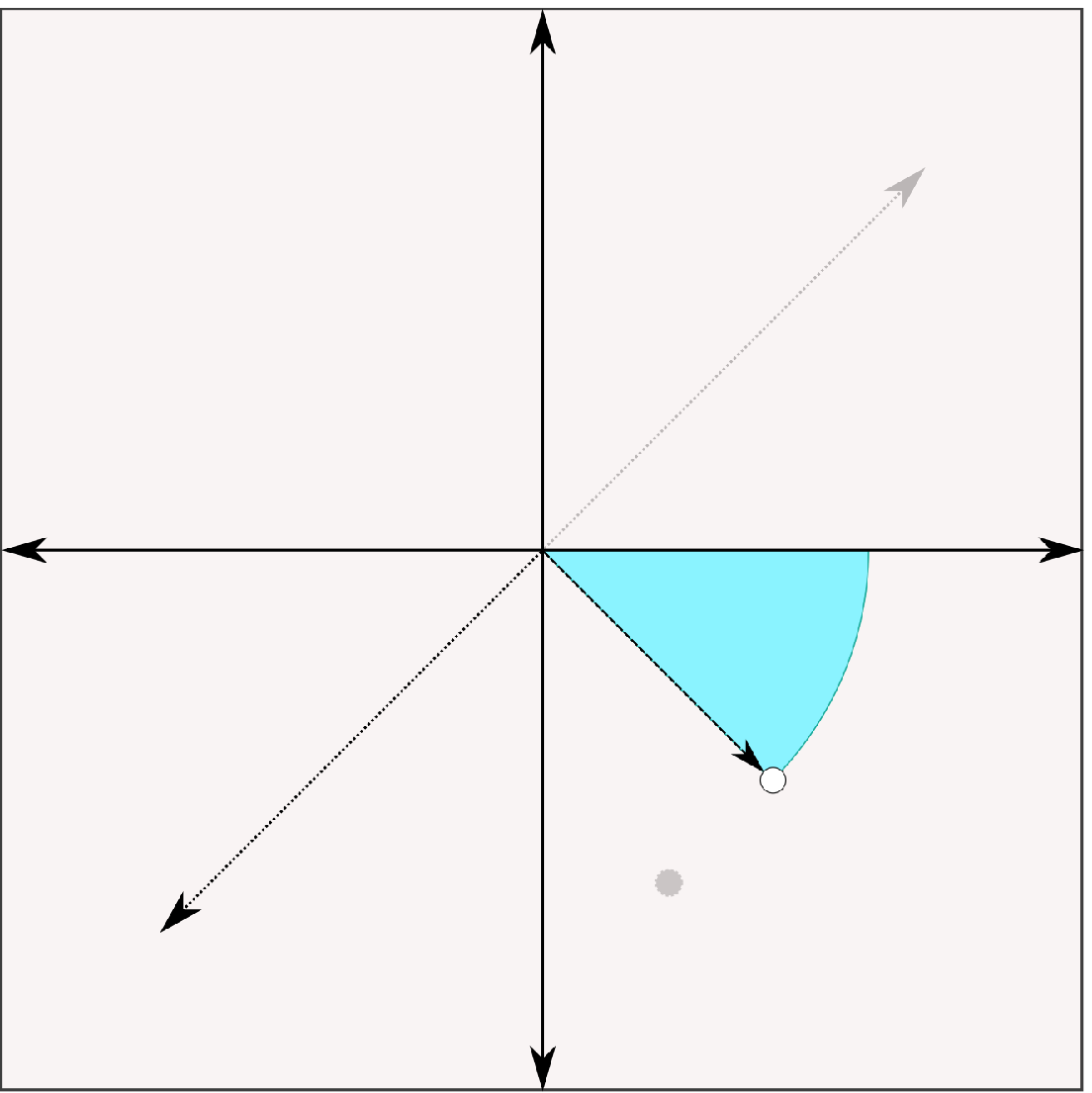}}%
    \put(0.44480104,0.47779975){\color[rgb]{0,0,0}\makebox(0,0)[lt]{\lineheight{1.25}\smash{\begin{tabular}[t]{l}$o$\end{tabular}}}}%
    \put(0.48346904,0.3358108){\color[rgb]{0,0,0}\makebox(0,0)[lt]{\lineheight{1.25}\smash{\begin{tabular}[t]{l}$\vv{op^\prime}$\end{tabular}}}}%
    \put(0.73182624,0.10448845){\color[rgb]{0,0,0}\makebox(0,0)[lt]{\lineheight{1.25}\smash{\begin{tabular}[t]{l}$\mathcal{P}$\end{tabular}}}}%
    \put(0.56059219,0.13855825){\color[rgb]{0.30196078,0.30196078,0.30196078}\makebox(0,0)[lt]{\lineheight{1.25}\smash{\begin{tabular}[t]{l}$p$\end{tabular}}}}%
    \put(0.6169782,0.41030632){\color[rgb]{0,0,0}\makebox(0,0)[lt]{\lineheight{1.25}\smash{\begin{tabular}[t]{l}$\alpha$\end{tabular}}}}%
    \put(0.66041637,0.22091373){\color[rgb]{0,0,0}\makebox(0,0)[lt]{\lineheight{1.25}\smash{\begin{tabular}[t]{l}$p^\prime$\end{tabular}}}}%
  \end{picture}%
\endgroup%

		\caption{Depiction of case 4 in \cref{lem:rotation} for $d=3$ and $i=1$. $\alpha$ is the angle between the position vector of $p^\prime$, which is the projection of $p$ onto the plane $\mathcal{P}$, and the $i$\textsuperscript{th} axis, therefore $\nicefrac{\pi}{2} + \alpha$ is the angle between the position vector of $p^\prime$ and the $(i+1)$\textsuperscript{th} axis.}
		\label{fig:rot_case4}
	\end{figure}
	Together these cases prove the claim.
\end{proof}

\subsubsection{Time Complexity Analysis of \cref{algo:klcenter_coreset_mg1}}
\begin{theorem}
	Given a set $\Tau \subset \eqcfre{m}$ of $n$ polygonal curves and a parameter $\epsilon \in (0,1)$, \cref{algo:klcenter_coreset_mg1} has running-time $\On{\left(2^{3m} \cdot n^{1.5} \cdot \frac{l^{12d^2m}m}{\epsilon^{dm}} + 2^m m^{m+1} n \right) + nm \log(m) + m^3 \log(m)}$, if successful, else $\On{mn \log(m) + m^3 \log(m)}$.
\end{theorem}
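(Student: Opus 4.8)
The plan is to split the running time into the part that is executed regardless of the outcome --- which therefore governs the unsuccessful case --- and the grid construction and coreset assembly that are executed only on success. For the common part: padding every curve to $m$ vertices in \cref{algo:klcenter_coreset_mg1:line1} takes $\On{nm}$; the call to \cref{algo:klcenter_approx} takes $\On{mn\log(m) + m^3\log(m)}$ by \cref{theo:approx_algo_center}; evaluating $\approxcost{\Tau} = \cost{\Tau}{\hat C}$ needs $\On{kn}$ Fréchet-distance computations, each between a curve of complexity $m$ and a center of complexity at most $l$, and as $k$ and $l$ are fixed this does not exceed the cost of \cref{algo:klcenter_approx} (indeed it is produced as a by-product of that algorithm's last iteration); and computing $\delta$ together with testing $\frac{\delta^m}{\approxcost{\Tau}^m} > \sqrt n$ by repeated squaring takes $\On{l + \log m}$. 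Hence, if the algorithm returns in \cref{algo:klcenter_coreset_mg1:lineret}, its running time is $\On{mn\log(m) + m^3\log(m)}$, as claimed.

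\emph{Grid construction (\cref{algo:klcenter_coreset_mg1:line5}--\cref{algo:klcenter_coreset_mg1:line27}).} I would store each grid cell by its $d$ coordinate intervals together with the translations and rotations that have been applied to it; with this representation each per-cell operation in \cref{algo:klcenter_coreset_mg1:line13}--\cref{algo:klcenter_coreset_mg1:line26} --- creating a cell, rotating it, translating it, and later testing whether a point lies in it --- costs $\On{1}$, so the whole construction costs $\On{1}$ per cell produced. By the count in the proof of \cref{theo:klcenter_coreset_mg1}, at most $\kappa \eqdef (l-1)\left\lceil 1 + \frac{\delta}{2\approxcost{\Tau}}\right\rceil \cdot \frac{2^d 6^d d^{d/2}}{\epsilon^d}$ cells cover any single center, and --- this is where success matters --- the test not failing forces $\frac{\delta}{\approxcost{\Tau}} \le n^{1/(2m)} \le \sqrt n$, whence $\kappa = \On{\frac{l}{\epsilon^d}\sqrt n}$. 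Summing over the $k$ centers bounds the grid construction by $\On{\frac{l}{\epsilon^d}\sqrt n}$, which the next step will dominate.

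\emph{Assembly of $S$ (\cref{algo:klcenter_coreset_mg1:line28}--\cref{algo:klcenter_coreset_mg1:line32}).} For each of the $k$ centers the algorithm iterates over all multisets of $m$ cells drawn from the at most $\kappa$ cells covering that center, over all $m!$ orderings of each such multiset, and for every resulting pair scans $\Tau$, checking for each curve in $\On{m}$ time whether its vertices fall into the prescribed cells in the prescribed order; this costs $\On{k\binom{\kappa+m-1}{m}\,m!\;nm}$. Reusing the inequalities already established in the proof of \cref{theo:klcenter_coreset_mg1}, $k\binom{\kappa+m-1}{m}m! \le k(\kappa+m)^m = \On{2^{3m}\sqrt n\,\frac{l^{12d^2m}}{\epsilon^{dm}} + 2^m m^m}$, and multiplying by $nm$ gives $\On{2^{3m} n^{1.5}\,\frac{l^{12d^2m}m}{\epsilon^{dm}} + 2^m m^{m+1}n}$. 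Adding the common part yields the stated bound $\On{\left(2^{3m} n^{1.5}\,\frac{l^{12d^2m}m}{\epsilon^{dm}} + 2^m m^{m+1}n\right) + nm\log(m) + m^3\log(m)}$.

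The step I expect to be the main obstacle is the accounting in the grid construction: one has to justify reading ``rotate every point in every grid'' in \cref{algo:klcenter_coreset_mg1:line19}--\cref{algo:klcenter_coreset_mg1:line26} as $\On{1}$ work per cell rather than a literal enumeration of points, and to verify that the failure check really caps $\frac{\delta}{\approxcost{\Tau}}$ by $\sqrt n$ so that the number of cells --- and hence this step --- stays sub-linear in $n$; once that is in place, the remaining running-time bound follows from the combinatorial size estimate already proved for the cardinality of $S$.
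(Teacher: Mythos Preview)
Your proposal is correct and follows essentially the same approach as the paper: split into the common preamble (dominated by \cref{algo:klcenter_approx}), bound the grid construction by counting cells, and bound the assembly by the combinatorial term $k(\kappa+m)^m$ from \cref{theo:klcenter_coreset_mg1} times $nm$. The paper handles your anticipated obstacle exactly as you suggest --- it stores with each cell its axis-parallel intervals and the rotation angles $\alpha_{i,j,r}$ and translation $\vv{x}_{i,j}$, so that lines \ref{algo:klcenter_coreset_mg1:line19}--\ref{algo:klcenter_coreset_mg1:line26} are implicit and point-in-cell tests cost $\On{d}=\On{1}$; the only extra detail it spells out is a precomputation of arrays $A_\tau$ recording which cell each vertex lies in, but this is dominated by the assembly step and does not change the bound.
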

\begin{proof}
	Recall that \cref{algo:klcenter_approx}, which is initially run to obtain an approximate clustering $\hat{C} \eqdef \{ \hat{c}_1, \dots, \hat{c}_k \}$, has running-time $\On{mn \log(m) + m^3 \log(m)}$, \conferre \cref{theo:approx_algo_center}. Further, recall that \cref{algo:klcenter_coreset_mg1} fails after this step, if $\frac{\delta^m}{\approxcost{\Tau}^m} > \sqrt{n} \Leftrightarrow \frac{\delta}{\approxcost{\Tau}} > \sqrt[2m]{n}$, where $\approxcost{\Tau} \eqdef \cost{\Tau}{\hat{C}}$ and $\delta$ is the length of a longest edge of any center $\hat{c} \in \hat{C}$, \conferre \cref{algo:klcenter_coreset_mg1:lineret}. In \cref{algo:klcenter_coreset_mg1:line5} to \cref{algo:klcenter_coreset_mg1:line27} \cref{algo:klcenter_coreset_mg1} computes the grids required for covering the center-curves $\hat{c}_1, \dots, \hat{c}_k$. For every cell of every of those grids we save the following in an array $A_{c, i}$ (where $i \in \{1, \dots, k\}$ is the index of the center-curve of the current iteration), one entry per cell: The intervals (one for each dimension) the cell covers when it is \emph{axis-parallel} (at \cref{algo:klcenter_coreset_mg1:line17}) and the $\alpha_{i,j,1}, \dots, \alpha_{i,j,d-1}$, as well as $\vv{x}_{i,j}$, \conferre \cref{lem:envelope_cover}. Recall from \cref{theo:klcenter_coreset_mg1} that there are up to $k \cdot (l-1) \cdot \left\lceil 1 + \frac{\delta}{2 \approxcost{\Tau}} \right\rceil \cdot \frac{2^d 6^d d^{\nicefrac{d}{2}}}{\epsilon^d}$ many cells. Thus, \cref{algo:klcenter_coreset_mg1} takes time $\On{k \cdot (l-1)} = \On{1}$ for \cref{algo:klcenter_coreset_mg1:line6} to \cref{algo:klcenter_coreset_mg1:line8}, $\On{k \cdot (l-1) \cdot d} = \On{1}$ for \cref{algo:klcenter_coreset_mg1:line9} to \cref{algo:klcenter_coreset_mg1:line12} and $\On{k \cdot (l-1) \cdot \frac{\delta}{\approxcost{\Tau}} \cdot \frac{1}{\epsilon^d} \cdot d} = \On{\sqrt[2m]{n} \cdot \frac{1}{\epsilon^d}}$ for \cref{algo:klcenter_coreset_mg1:line13} to \cref{algo:klcenter_coreset_mg1:line17}. In conclusion \cref{algo:klcenter_coreset_mg1:line6} to \cref{algo:klcenter_coreset_mg1:line17} take time $\On{\sqrt[2m]{n} \cdot \frac{1}{\epsilon^d}}$.
	
	Now \cref{algo:klcenter_coreset_mg1:line19} to \cref{algo:klcenter_coreset_mg1:line26} are done implicitly in an implementation. We only need those to show that we can indeed cover the center-curves with grids, \conferre \cref{lem:envelope_cover}. Instead we do the following: For every $\tau \in \Tau$ and every vertex $v_{\tau, r} \in \{v_{\tau, 1}, \dots, v_{\tau, m}\}$ of $\tau$ (recall that every curve has $m$ vertices) we go through the list of the cells and apply $D_{i,j}$ (recall that we saved $\alpha_{i,j,1}, \dots, \alpha_{i,j,d-1}$, as well as $\vv{x}_{i,j}$) to $v_{\tau, r}$ (\conferre \cref{lem:envelope_cover}). Now we can check if the components of $D_{i,j}(v_{\tau, r})$ lie in the intervals of the respective cell we are looking at. If so, we save the id (respective to the list of cells) of the cell in a field $A_{\tau}[r]$ of an array $A_{\tau}[1, \dots, m]$. As the notation suggests, there is one such array per input-curve. This takes time $\On{n \cdot m \cdot \underbrace{d \cdot k \cdot (l-1) \cdot \frac{\delta}{2 \cdot \approxcost{\Tau}} \cdot \frac{2^d 6^d d^{\nicefrac{d}{2}}}{\epsilon^d} \cdot d}_{\text{Apply }D_{i,j}\text{ and check in which cell the vertex lies.}}} = \On{n \cdot m \cdot \sqrt[2m]{n} \cdot \frac{1}{\epsilon^d}}$.
	
	We can now realize \cref{algo:klcenter_coreset_mg1:line28} to \cref{algo:klcenter_coreset_mg1:line32} by iterating over all combinations of $m$ cells, of all grids that cover the center-curve we are currently looking at, and all permutations of those $m$ cells and then looking through the input-set whether there is a curve that has its vertices in those cells, connected in the order of the current permutation. We do this by looking up the ids of the current cells in order of the permutation in the $A_\tau$, where $\tau \in \Tau$ is the input curve we are currently looking at. Recall from \cref{theo:klcenter_coreset_mg1} that there are up to $\kappa = (l-1) \cdot \left\lceil\left( 1 + \frac{\delta}{2 \cdot \approxcost{\Tau}}\right)\right\rceil \cdot \frac{2^d 6^d d^{\nicefrac{d}{2}}}{\epsilon^d}$ cells per center-curve. Also recall, that we have at most $\frac{(\left(\kappa + m\right))^m}{m!}$ ways to put the vertices of $\tau$ into these cells and $m!$ possibilities to connect them with edges. Thus, \cref{algo:klcenter_coreset_mg1:line28} to \cref{algo:klcenter_coreset_mg1:line32} take time $\On{k \cdot (\kappa + m)^m \cdot n \cdot m \cdot d} = \On{\left(2^{3m} \cdot \sqrt{n} \cdot \frac{l^{12d^2m}}{\epsilon^{dm}} + 2^m m^m \right) \cdot n \cdot m}$, \conferre \cref{theo:klcenter_coreset_mg1} and recall that \cref{algo:klcenter_coreset_mg1} fails if $\frac{\delta^m}{\approxcost{\Tau}^m} > \sqrt{n}$. Together with the running-time of \cref{algo:klcenter_approx} this dominates the running-time of \cref{algo:klcenter_coreset_mg1}, which is then $\On{\left(2^{3m} \cdot n^{1.5} \cdot \frac{l^{12d^2m}m}{\epsilon^{dm}} + 2^m m^{m+1}n \right) + nm \log(m) + m^3 \log(m)}$.
\end{proof}
\clearpage
\section{The \klmedian Objective}
\label{sec:klmedian}
In this section we are working with \cref{def:klmedian}, \ie the \klmedian objective, and we are proving \cref{theo:algo_coreset_median}. We are given a set of curves $\Tau \eqdef \{ \tau_1, \dots, \tau_n \} \subset \eqcfre{m}$ and we are looking for a center-set $C \subseteq \Tau$, of cardinality $k$, that minimizes the sum of the distances between the curves in $\Tau$ and a respective nearest center. For the structure of the objective function sampling-techniques show to be beneficial. Therefore, we examine the probability space $(\Tau, \powerset{\Tau}, \varphi)$, where $\varphi$ is the uniform distribution over $\Tau$, thus for $\tau \in \Tau$ we define $\varphi(\tau) \eqdef \frac{1}{n}$ and for $\Tau^\prime \subseteq \Tau$ we define $\varphi(\Tau^\prime) \eqdef \sum_{\tau \in \Tau^\prime} \varphi(\tau)$. It is easy to define estimators for the cost of center-sets under the \klmedian objective. When the sampling is done with respect to $\varphi$ these estimators have high variance, though. We are going to apply the sensitivity sampling framework, \conferre \cref{subsec:sensitivity}, to obtain estimators with low variance. Note that, in contrast to the \klcenter and \klmeans objectives, we restrict ourselves to finding a subset of $\Tau$ instead of finding a subset of $\eqcfre{l}$, of cardinality \kk, \ie we are working with the \emph{discrete} median objective. Mainly we do so because our techniques do not work properly when we define the objective with respect to the latter. To be precise, when there are infinitely many possible center-sets then the probability that a sample is an \coreset for all these center-sets at a time (just as \cref{def:coreset} requires) tends to zero.

\subsection{An $\epsilon$-coreset Construction}

\cref{algo:klmedian_coreset} works as follows: It builds the probability distribution $\psi$ as it is defined in terms of the sensitivity sampling, \conferre \cref{subsec:sensitivity}. To obtain a set of curves that is an \coreset with constant probability it takes an independent sample from $\Tau$ of cardinality $\Om{\frac{\ln(n)}{\epsilon^2}}$ with respect to $\psi$ and weighs every curve by $\nicefrac{n}{\ell}$, where $\ell$ is the cardinality of the sample. This weighted sample-set is then a weighted \coreset with probability at least $\nicefrac{2}{3}$.

\begin{algorithm}
	\caption{Compute Weighted $\epsilon$-Coreset for the \klmedian Objective}\label{algo:klmedian_coreset}
	\begin{algorithmic}[1]
		\Procedure{k-median-coreset}{$\Tau, \epsilon$}
			\State $n \gets \vert \Tau \vert$
			\State $\psi \gets $\textsc{compute-psi}$(\Tau, n)$
			\State $\ell \gets  \left\lceil 640 \cdot (-\ln(\rho) + \ln(2)) \cdot \frac{k^2 \ln(n)}{\epsilon^2} \right\rceil$
			\State $S \gets$ Sample $\ell$ curves from $\Tau$ independently with respect to $\psi$  \label{algo:klmedian_coreset_line_sampling}
			\State Weigh every curve $s$ in $S$ by $\nicefrac{1}{\ell} \cdot \nicefrac{1}{\psi(s)}$
			\State \textbf{return} $S$
		\EndProcedure
		\Function{compute-psi}{$\Tau, n$}\label{func:compute_psi}
		\State $\hat{C} \eqdef \{ \hat{c}_1, \dots, \hat{c}_k \} \gets $\textsc{k-median-approx}$(\Tau, \nicefrac{1}{(k\cdot 3 \cdot n)} )$ \label{algo:klmedian_coreset_approx_line} \Comment{\cref{algo:klmedian_approx}}
		\For{$j=1, \dots, k$}
		\State $U_j \gets \cluster{\Tau}{\hat{C}}{\hat{c}_j}$
		\State $m_j \gets \frac{n}{\vert U_j \vert} \sum\limits_{\tau \in U_j} \frechet{\tau}{\hat{c}_j} \cdot \nicefrac{1}{n}$
		\EndFor
		\State $\Phi \gets \frac{1}{6} \sum_{j=1}^{k} \sum\limits_{\tau \in U_j} \frechet{\tau}{\hat{c}_j} \cdot \nicefrac{1}{n} $
		\For{$j=1, \dots, k$}
		\For{$\tau \in U_j$}
		\State $\psi(\tau) \gets \frac{1}{32kn} \cdot \left( \frac{2m_j + \frechet{\tau}{\hat{c}_j}}{\nicefrac{3}{4}\Phi} + \frac{8n}{\vert U_j \vert} \right)$
		\EndFor
		\EndFor
		\State \textbf{return} $\psi$
		\EndFunction
	\end{algorithmic}
\end{algorithm}

To build $\psi$, \cref{algo:klmedian_coreset} calls \cref{algo:klmedian_approx} (we will introduce the algorithm in the next subsection) to obtain a center-set which yields an approximate solution to the \klmedian objective. This center-set is then used to compute the clusters $U_1, \dots, U_k$ and the values $m_1, \dots, m_k$, as well as $\Phi$. These values suffice to construct $\psi$, as will be shown in \cref{theo:klmedian_coreset}.

\subsubsection{Correctness Analysis of \cref{algo:klmedian_coreset}}
\begin{theorem}[\citet{coresers_methods_history}]
	\label{theo:klmedian_coreset}
	Given a set $\Tau \eqdef \{ \tau_1, \dots, \tau_n \} \subset \eqcfre{m}$ of polygonal curves and a parameter $\epsilon \in (0,1)$, \cref{algo:klmedian_coreset} computes a set of cardinality $\On{\frac{\ln(n)}{\epsilon^2}}$, that is a weighted \coreset for the \klmedian objective with probability at least $\nicefrac{2}{3}$.
\end{theorem}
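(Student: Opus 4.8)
The plan is to recognise \cref{algo:klmedian_coreset} as an instance of the sensitivity sampling framework of \cref{subsec:sensitivity}. I would work in the uniform probability space $(\Tau, \powerset{\Tau}, \varphi)$ and, for every candidate center-set $C \subseteq \Tau$ with $\vert C\vert = k$, introduce the random variable $X_C(\tau) \eqdef n \cdot \frechet{\tau}{\nearest{\tau}{C}}$, so that $\E_\varphi[X_C] = \cost{\Tau}{C}$; the finite family $W \eqdef \{X_C\}$ is indexed by the $\binom{n}{k} \leq n^k$ possible discrete clusterings (this finiteness is exactly what saves the union bound below, \conferre the remark opening \cref{sec:klmedian}). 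With the output $\hat C = \{\hat c_1,\dots,\hat c_k\}$ of \cref{algo:klmedian_approx}, the clusters $U_j$, the averages $m_j$ and the value $\Phi$ computed by \textsc{compute-psi}, the distribution $\psi$ there is precisely the one of \cref{def:sensitivity} for the sensitivity upper bound $\lambda(\tau) \eqdef \frac{2m_j + \frechet{\tau}{\hat c_j}}{\nicefrac34\Phi} + \frac{8n}{\vert U_j\vert}$ (for $\tau \in U_j$), so the first thing to prove is $\xi(\tau) \leq \lambda(\tau)$.

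For the sensitivity bound, fix $C$, put $D_j \eqdef \frechet{\hat c_j}{\nearest{\hat c_j}{C}}$, and use that $\pfrechet$ is a metric. The triangle inequality gives $\frechet{\tau}{\nearest{\tau}{C}} \leq \frechet{\tau}{\hat c_j} + D_j$, and for each $\tau'\in U_j$ it gives $\frechet{\tau'}{\nearest{\tau'}{C}} \geq D_j - \frechet{\tau'}{\hat c_j}$; summing the latter over $U_j$ and using $\sum_{\tau'\in U_j}\frechet{\tau'}{\hat c_j} = \vert U_j\vert m_j$ yields $D_j \leq m_j + \cost{\Tau}{C}/\vert U_j\vert$, hence
$$\frac{X_C(\tau)}{\E_\varphi[X_C]} \;\leq\; \frac{n\,(\frechet{\tau}{\hat c_j} + m_j)}{\cost{\Tau}{C}} \;+\; \frac{n}{\vert U_j\vert}.$$
Since \cref{algo:klmedian_approx} returns a constant-factor approximation, $\cost{\Tau}{C} \geq \optcost{\Tau} \geq \cost{\Tau}{\hat C}/\On{1} = 6n\Phi/\On{1}$ (using $\cost{\Tau}{\hat C}=6n\Phi$), so the first summand is at most a constant multiple of $(\frechet{\tau}{\hat c_j}+m_j)/\Phi$; the discount $\nicefrac34$ (in place of $6$) and the term $2m_j$ (in place of $m_j$) are precisely the slack that makes $\xi(\tau)\leq\lambda(\tau)$ hold for \emph{every} $C$. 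Summing $\lambda$ against $\varphi$ and splitting into the three parts gives $\Lambda = \sum_\tau\lambda(\tau)\varphi(\tau) = 8 + 16 + 8k = 8(k+3) = \On{k}$ (the $\frechet{\cdot}{\hat c_j}$-terms contribute $\cost{\Tau}{\hat C}/(\nicefrac34\Phi)=8n$, the $2m_j$-terms $16n$, and the $\frac{8n}{\vert U_j\vert}$-terms $8nk$, before dividing by $n$), matching up to constants the normalisation $32k$ in the algorithm; by \cref{prop:sum_one}, $\psi$ is a genuine probability function.

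Now \cref{prop:exp_equal} and \cref{prop:var_sens} apply to $Y_C(\tau) \eqdef X_C(\tau)\varphi(\tau)/\psi(\tau) = \frechet{\tau}{\nearest{\tau}{C}}/\psi(\tau)$: it is unbiased, $\E_\psi[Y_C] = \cost{\Tau}{C}$, with $\Var_\psi[Y_C] \leq (\Lambda-1)\cost{\Tau}{C}^2 = \On{k}\cost{\Tau}{C}^2$, and moreover $Y_C \leq \Lambda\cost{\Tau}{C}$ pointwise (from $X_C(\tau) \leq \lambda(\tau)\E_\varphi[X_C]$ and $\varphi(\tau)/\psi(\tau) = \Lambda/\lambda(\tau)$). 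The weighted estimate of $\cost{\Tau}{C}$ from the sample $S=\{s_1,\dots,s_\ell\}$ drawn in \cref{algo:klmedian_coreset_line_sampling} is $\sum_i w(s_i)\frechet{s_i}{\nearest{s_i}{C}} = \tfrac1\ell\sum_{i=1}^\ell Y_C(s_i)$, a mean of $\ell$ independent copies of $Y_C$, so Bernstein's inequality (\cref{theo:bern}) applied to $Y_C(s_i) - \cost{\Tau}{C}$, together with the variance bound and the pointwise bound, gives $\Pr[\,\vert \tfrac1\ell\sum_i Y_C(s_i) - \cost{\Tau}{C}\vert > \epsilon\cost{\Tau}{C}\,] \leq 2\exp(-\Om{\ell\epsilon^2/k})$. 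A union bound (\cref{prop:union_bound}) over the $\leq n^k$ center-sets, with the choice $\ell = \Theta\!\big((-\ln\rho+\ln 2)\,k^2\ln(n)/\epsilon^2\big)$ made in \cref{algo:klmedian_coreset}, keeps the total failure probability at most $\rho = \nicefrac13$; hence with probability at least $\nicefrac23$ the reweighted sample satisfies $(1-\epsilon)\cost{\Tau}{C} \leq \sum_i w(s_i)\frechet{s_i}{\nearest{s_i}{C}} \leq (1+\epsilon)\cost{\Tau}{C}$ for all admissible $C$ simultaneously, i.e.\ it is a weighted \coreset in the sense of \cref{def:coreset}, of cardinality $\ell = \On{\ln(n)/\epsilon^2}$ for fixed $k$.

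The main obstacle will be the sensitivity estimate: one must certify that $\lambda(\tau)$ dominates $X_C(\tau)/\E_\varphi[X_C]$ \emph{uniformly over all} admissible $C$ while threading the constants ($\nicefrac34$, the $8$, and the approximation ratio of \cref{algo:klmedian_approx}) so that $\psi$ both sums to one and has $\Lambda = \On{k}$. A secondary but real point is that the concentration step must use Bernstein rather than a crude bounded-differences bound: it is the variance bound scaling like $\Lambda$ (not like $\Lambda^2$, which a pointwise range of $\Lambda\cost{\Tau}{C}$ alone would give) that keeps $\ell$ at $\On{k^2\ln(n)/\epsilon^2}$, and hence the coreset size at $\On{\ln(n)/\epsilon^2}$, rather than a factor $k$ worse.
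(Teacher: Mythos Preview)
Your proposal is correct and follows the same high-level architecture as the paper: sensitivity sampling over the finite family of $\leq n^k$ discrete center-sets, followed by Bernstein's inequality and a union bound. The substantive difference lies in how the sensitivity upper bound $\xi(\tau)\leq\lambda(\tau)$ is established. Your route is direct: one triangle inequality upward, $\frechet{\tau}{\nearest{\tau}{C}}\leq\frechet{\tau}{\hat c_j}+D_j$, and one averaged triangle inequality downward over $U_j$ to get $D_j\leq m_j+\cost{\Tau}{C}/|U_j|$, yielding $\xi(\tau)\leq(\frechet{\tau}{\hat c_j}+m_j)/\Phi+n/|U_j|$ in one stroke. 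The paper instead proves its \cref{lem:xi_bound} via a Markov-inequality step showing $\varphi(U_j\cap\ball{\hat c_j}{2m_j})\geq\varphi(U_j)/2$, a case analysis depending on whether $\nearest{\hat c_j}{C_i}$ lies in that ball, a convex-combination parameter $\gamma\in[0,1]$ balancing two lower bounds on $\E_\varphi[X_i]$, and finally a monotonicity argument (with l'H\^opital) to eliminate the dependence on $C_i$; the constants $\tfrac34$, $2m_j$, $8$ in the algorithm are the residue of optimizing over $\gamma$. Your argument is shorter and in fact gives a tighter pointwise bound on $\xi$, so the algorithm's $\lambda$ dominates it with room to spare. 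Your closing observation that Bernstein (not Hoeffding) is what keeps $\ell=\On{k^2\ln(n)/\epsilon^2}$ is also made explicitly in the paper immediately after its proof.
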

\begin{proof}
	Let $C_1, \dots, C_w$ be a sequence of all possible center-sets of cardinality $k$, that are subsets of $\Tau$, \ie $w \leq n^k$. For $i \in \{1, \dots, w\}$ and $\tau \in \Tau$ we define the random variables (\conferre \cref{def:random_variable}) \[X_i(\tau) \eqdef \min\limits_{c \in C_i} \frechet{\tau}{c},\] that are estimators for the cost of a curve with respect to $C_i$. $W \eqdef \{ X_1, \dots, X_w \}$ is the set of these random variables. Because the $X_i$ have high variance when we sample from $\Tau$ with respect to $\varphi$, we define according to the sensitivity framework: For a curve $\tau \in \Tau$ the sensitivity with respect to $W$ is \[\xi(\tau) \eqdef \max\limits_{i \in \{1, \dots, w\}} \frac{X_i(\tau)}{\E_\varphi[X_i]}\] and the total sensitivity of $W$ is \[\Xi \eqdef \sum\limits_{\tau \in \Tau} \xi(\tau) \cdot \varphi(\tau).\] For $\tau \in \Tau$ we define an upper bound $\lambda(\tau)$ on $\xi(\tau)$ later in the proof, thus \[\Lambda \eqdef \sum_{\tau \in \Tau} \lambda(\tau) \cdot \varphi(\tau)\] is then an upper bound on $\Xi$. Finally, for $\tau \in \Tau$, respective $\Tau^\prime \subseteq \Tau$ we define the probability function (\conferre \cref{prop:sum_one}) \[\psi(\tau) \eqdef \frac{\lambda(\tau)}{\Lambda} \varphi(\tau),\] \[ \psi(\Tau^\prime) \eqdef \sum_{\tau \in \Tau^\prime} \psi(\tau), \] which takes the sensitivities into account. Thus, for $i \in \{ 1, \dots, w\}$ and $\tau \in \Tau$ the random variables \[Y_i(\tau) \eqdef X_i(\tau) \frac{\varphi(\tau)}{\psi(\tau)}\] have low variance with respect to $\psi$, \conferre \cref{prop:var_sens}. 
	
	Now let $i \in \{1, \dots,w\}$ be arbitrary but fixed and for $j=1, \dots, \ell$ let $Z_j \eqdef Y_i$ be independent random variables with respect to $\psi$, that are copies of $Y_i$. Let $Z \eqdef \sum\limits_{j=1}^\ell Z_j$ be the sum of these random variables and further let $\hat{Z} \eqdef \frac{n}{\ell} \cdot Z$ be a random variable that is a reweighing of $Z$ by $\nicefrac{n}{\ell}$. We show that $\hat{Z}$ is an estimator for $\cost{\Tau}{C_i}$. By \cref{prop:exp_equal} and the linearity of expectation we obtain (recall that for $\tau \in \Tau$: $\varphi(\tau) = \frac{1}{n}$):
	\begin{align*}
		\E_{\psi}[\hat{Z}] ={} & \frac{n}{\ell} \cdot \E_\psi[Z] = \frac{n}{\ell} \sum_{j=1}^\ell \E_\psi[Z_j] = \frac{n}{\ell} \sum_{j=1}^\ell \E_\varphi[X_i] = \frac{1}{\ell} \sum_{j=1}^\ell \sum_{r=1}^n \min\limits_{c \in C_i} \frechet{\tau_r}{c} \\
		={} & \cost{\Tau}{C_i}
	\end{align*} 
	Thus, $\hat{Z}$ indeed is an unbiased estimator for $\cost{\Tau}{C_i}$. We now prove that a sample of $\Tau$, of cardinality $\ell$ with respect to $\psi$, where the curves are weighted by $\frac{n}{\ell}$, indeed is a weighted \coreset for the \klmedian objective with constant probability.
	
	The random variables $Z_j-\E_{\psi}[Z_j]$ and $-Z_j + \E_{\psi}[Z_j]$ have zero mean. Because $Z_j(\tau) = Y_i(\tau) = X_i(\tau) \frac{\varphi(\tau)}{\psi(\tau)}$ we can bound $Z_j(\tau)$ as follows: \[ 0 \leq Z_j(\tau) = X_i(\tau) \frac{\Lambda}{\lambda(\tau)} \leq X_i(\tau) \frac{\Lambda}{\max\limits_{r \in \{ 1, \dots, w \}} \frac{X_r(\tau)}{\E_\varphi[X_r]} } \leq X_i(\tau) \frac{\Lambda}{\frac{X_i(\tau)}{\E_\varphi[X_i]}} = \E_\varphi[X_i] \Lambda \] Therefore $Z_j - \E_{\psi}[Z_j]$ takes values in the interval \[[-\E_\psi[Z_j], (\Lambda-1)\E_\psi[Z_j]]\] and $-Z_j + \E_{\psi}[Z_j]$ takes values in the interval \[[(-\Lambda + 1) \E_{\psi}[Z_j], \E_{\psi}[Z_j]],\] hence it holds for all $j \in \{1,\dots,\ell\}$ that \[\vert Z_j - \E_\psi[Z_j] \vert \leq \Lambda\E_\psi[Z_j].\] Also, it holds for all $j \in \{1,\dots,\ell\}$ that \[\vert -Z_j + \E_\psi[Z_j] \vert \leq \Lambda\E_\psi[Z_j].\] Let $\overline{E_i} \eqdef \vert \hat{Z} - \cost{\Tau}{C_i} \vert > \epsilon \cdot \cost{\Tau}{C_i}$ be the event, that $\hat{Z} > (1+\epsilon) \cost{\Tau}{C_i}$ or $\hat{Z} < (1-\epsilon) \cost{\Tau}{C_i}$ and let $E_i$ be the contrary event, \ie the sample, which is the set of simple events (that are the curves in $\Tau$) that lead to the value of $\hat{Z}$, is a weighted \coreset[] (\conferre \cref{def:coreset}). We can now apply Bernstein's inequality (\conferre \cref{theo:bern}).
	\begin{align*}
		\Pr[\overline{E_i}] ={} & \Pr[\vert \hat{Z} - \cost{\Tau}{C_i} \vert > \epsilon \cdot \cost{\Tau}{C_i}] \\
		={} &  \Pr[\vert \frac{1}{\ell} Z - \frac{1}{\ell} \E_\psi[Z] \vert > \epsilon \cdot \frac{1}{\ell} \E_\psi[Z] ] = \Pr[\vert Z - \E_\psi[Z] \vert > \epsilon \cdot \E_\psi[Z] ] \\
		\leq{} & 2\exp\left(- \frac{\epsilon^2 \E_{\psi}[Z]^2}{2\sum_{j=1}^\ell \Var_\psi[Z_j] + \nicefrac{2}{3} \Lambda \E_{\psi}[Z_j] \epsilon \E_{\psi}[Z]}\right) \\
		\leq{} & 2\exp\left(-\frac{\epsilon^2 \ell^2 \E_{\psi}[Z_j]^2}{2\ell \cdot (\Lambda-1)\E_{\psi}[Z_j]^2 + \nicefrac{2}{3} \ell \epsilon \Lambda \E_{\psi}[Z_j]^2}\right) \label{mt:eq1} \tag{I} \\
		={} & 2\exp\left(- \frac{\epsilon^2\ell}{2(\Lambda-1) + \nicefrac{2}{3} \epsilon \Lambda}\right) \\
		={} & 2 \exp\left(- \frac{\epsilon^2 \ell}{(2 + \nicefrac{2\epsilon}{3}) \Lambda - 2} \right)
	\end{align*}
	Here \cref{mt:eq1} holds because by \cref{prop:var_sens} it holds that $\Var_\psi[Y_i] \leq (\Lambda-1) \cdot \E_\psi[Y_i]^2$ for all $i \in \{1, \dots, w\}$.
	
	We bound $\ell$ to obtain a weighted \coreset with probability at least $\nicefrac{2}{3}$.
	\begin{align*}
		&&2 \exp\left(\frac{-\epsilon^2 \ell}{(2 + \nicefrac{2\epsilon}{3}) \Lambda - 2} \right) \leq{} & \frac{\rho}{n^k} \\
		\Leftrightarrow && \frac{\epsilon^2 \ell}{(2 + \nicefrac{2\epsilon}{3}) \Lambda - 2} \geq{} & -\ln(\rho) + k \ln(n) + \ln(2) \\
		\Leftrightarrow && \ell \geq{} & \frac{\left(-\ln(\rho) + k \ln(n) + \ln(2)\right) \cdot \left((2 + \nicefrac{2\epsilon}{3}) \Lambda - 2\right)}{\epsilon^2} \\
		\Leftarrow && \ell \geq{} & \frac{\left(-\ln(\rho) + k \ln(n) + \ln(2)\right) \cdot \left(\frac{6 + 2\epsilon}{3}\right) \cdot (3 \alpha + 2\sqrt{6\alpha} + 2) k}{\epsilon^2} \label{mt:eq2} \tag{II} \\
		\Leftarrow && \ell \geq{} & \frac{\left(-\ln(\rho) + k \ln(n) + \ln(2)\right) \cdot \left(\frac{6 + 2\epsilon}{3}\right) \cdot 32 k}{\epsilon^2} \label{mt:eq3} \tag{III} \\
		\Leftrightarrow && \ell \geq{} & \frac{(-32k\ln(\rho) + 32k^2 \ln(n) + 32k \ln(2)) \cdot \left(\frac{6 + 2\epsilon}{3}\right)}{\epsilon^2} \\
		\Leftrightarrow && \ell \geq{} & \frac{-64k\ln(\rho)(1 + \nicefrac{1}{3} \epsilon) + 64k^2 \ln(n)(1 + \nicefrac{1}{3} \epsilon) + 64k \ln(2)(1 + \nicefrac{1}{3} \epsilon)}{\epsilon^2} \\
		\Leftarrow && \ell \geq{} & 64 \cdot (-\ln(\rho) + \ln(2)) \cdot \frac{1 + \epsilon + k^2 \ln(n) + k^2 \ln(n) \epsilon + k + \epsilon k}{\epsilon^2} \\
		\Leftarrow && \ell \geq{} & 640 \cdot (-\ln(\rho) + \ln(2)) \cdot k^2 \cdot \frac{\ln(n)}{\epsilon^2}
	\end{align*}
	Here \cref{mt:eq2} holds, because in \cref{lem:total_sens_bound} we will show that for an $\alpha$-approximate center-set $\hat{C} \eqdef \{ \hat{c}_1, \dots, \hat{c}_k \}$ for $\Tau$, \ie $\cost{\Tau}{\hat{C}} \leq \alpha \cdot \optcost{\Tau}$, $\lambda$ can be defined such that $\Lambda = (3 \alpha + 2\sqrt{6\alpha} + 2) k$. \cref{mt:eq3} holds because we will show in \cref{theo:klmedian_approx_corr} that we can choose the input of \cref{algo:klmedian_approx} such that it returns a $6$-approximate solution to the \klmedian objective. 
	
	Finally, we apply a union bound (\conferre \cref{prop:union_bound}) to show that the sample indeed is a weighted \coreset for all $C_1, \dots, C_w$ at a time, as \cref{def:coreset} requires. We have that $\Pr[\cap_{i \in \{1, \dots, w\}} E_i] = 1- \Pr[\cup_{i \in \{1, \dots, w\}} \overline{E_i}]$. We obtain:
	\begin{align*}
		\Pr[\cap_{i \in \{1, \dots, w\}} E_i] = 1- \Pr[\cup_{i \in \{1, \dots, w\}} \overline{E_i}] \geq 1 - \sum_{i=1}^{w} \Pr[\overline{E_i}] \geq 1 - \sum_{i=1}^w \frac{\rho}{n^k} \geq  1 - \rho
	\end{align*} 
	Set $\rho \leq \frac{1}{3}$, then the sample is a strong weighted \coreset with probability at least $\frac{2}{3}$.
	
	It remains to show, that \cref{algo:klmedian_coreset} correctly computes the probability distribution $\psi$. Let $\hat{C} \eqdef \{ \hat{c}_1, \dots, \hat{c}_k \}$ be the $\alpha$-approximate center-set returned by \cref{algo:klmedian_approx} in \cref{algo:klmedian_coreset_approx_line} of \cref{algo:klmedian_coreset}. In \cref{lem:total_sens_bound} we will show in detail that for $j \in \{1, \dots, k\}$ and $\tau \in U_j \eqdef \cluster{\Tau}{\hat{C}}{\hat{c}_j}$ we can define $\lambda(\tau) \eqdef \frac{2m_j + \frechet{\tau}{\hat{c}_j}}{(1-\gamma)\Phi} + \frac{2}{\gamma \varphi(U_j)}$, where \[m_j \eqdef \frac{1}{\varphi\left( U_j \right)} \cdot \sum\limits_{\tau \in U_j} \frechet{\tau}{\hat{c}_j} \cdot \varphi(\tau),\] \[\Phi \eqdef \frac{1}{\alpha} \sum\limits_{j=1}^k \sum\limits_{\tau \in U_j} \frechet{\tau}{\hat{c}_j} \varphi(\tau)\] and $\gamma \eqdef \frac{1}{\frac{\sqrt{3 \alpha k}}{\sqrt{2k}} + 1}$. Thus, we have that $\Lambda = (3 \alpha + 2 \sqrt{6 \alpha} + 2)k$. By \cref{theo:klmedian_approx_corr} we have that $\alpha = 6$. Putting everything together (recall that $\varphi$ is the uniform distribution) we obtain for $j \in \{1, \dots, k\}$ and $\tau \in U_j$: \[ \psi(\tau) \eqdef \frac{\lambda(\tau)}{\Lambda} \varphi(\tau) = \frac{\varphi(\tau)}{32k} \cdot \left( \frac{2m_j + \frechet{\tau}{\hat{c}_j}}{\nicefrac{3}{4}\Phi} + \frac{2}{\nicefrac{1}{4} \varphi(U_j)} \right) = \frac{1}{32kn} \cdot \left( \frac{2m_j + \frechet{\tau}{\hat{c}_j}}{\nicefrac{3}{4}\Phi} + \frac{8n}{\vert U_j \vert} \right). \]
\end{proof}
Note that we could also define $Z \eqdef \frac{1}{\ell} \sum_{j=1}^{\ell} Z_j$ in \cref{theo:klmedian_coreset} and then apply Hoeffding's inequality (\conferre \cref{theo:hoeff}). However, this would lead to a weighted \coreset of cardinality $\On{\frac{k^3 \cdot \ln(n)}{\epsilon^2}}$, thus it is more beneficial to apply Bernstein's inequality.
\paragraph{Here we prove some deferred lemmas.}
\begin{lemma}[\citet{langbergschulman}]
	\label{lem:total_sens_bound}
	Let $\hat{C} \eqdef \{ \hat{c}_1, \dots, \hat{c}_k \}$ be an $\alpha$-approximate center-set for $\Tau$, \ie $\cost{\Tau}{\hat{C}} \leq \alpha \cdot \optcost{\Tau}$. Then for $\tau \in \Tau$ we can define $\lambda(\tau)$ such that $\Lambda = (3 \alpha + 2\sqrt{6\alpha} + 2) k$.
\end{lemma}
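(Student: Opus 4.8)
The plan is to prove the lemma in two stages. First, for each curve $\tau \in \Tau$ I will exhibit an explicit formula $\lambda(\tau)$ and show it dominates the true sensitivity $\xi(\tau)$ of \cref{def:sensitivity}. Second, I will sum these bounds against $\varphi$ and choose the free parameter $\gamma \in (0,1)$ (the value already fixed in \cref{algo:klmedian_coreset}) so that the total $\Lambda = \sum_{\tau \in \Tau} \lambda(\tau)\varphi(\tau)$ collapses to the claimed closed form.

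For the first stage, fix one center-set $C_i$ from the enumeration $C_1,\dots,C_w$ and a curve $\tau$ lying in the cluster $U_j \eqdef \cluster{\Tau}{\hat{C}}{\hat{c}_j}$. Since $X_i(\tau) = \min_{c \in C_i}\frechet{\tau}{c}$ and $\E_\varphi[X_i] = \frac{1}{n}\cost{\Tau}{C_i}$, the sensitivity quotient is $\frac{X_i(\tau)}{\E_\varphi[X_i]} = \frac{n\,X_i(\tau)}{\cost{\Tau}{C_i}}$. I would bound the numerator with the triangle inequality for the Fréchet distance: picking $c^\ast \in C_i$ nearest to $\hat{c}_j$ gives $X_i(\tau) \le \frechet{\tau}{c^\ast} \le \frechet{\tau}{\hat{c}_j} + \min_{c \in C_i}\frechet{\hat{c}_j}{c}$, which separates a local contribution depending only on $\tau$ and $\hat{c}_j$ from a detour contribution $\min_{c \in C_i}\frechet{\hat{c}_j}{c}$. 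The detour is then controlled by averaging the reverse triangle inequality $\min_{c \in C_i}\frechet{\hat{c}_j}{c} \le \frechet{\sigma}{\hat{c}_j} + \min_{c \in C_i}\frechet{\sigma}{c}$ over all $\sigma \in U_j$, which yields $\min_{c \in C_i}\frechet{\hat{c}_j}{c} \le m_j + \frac{n}{\vert U_j\vert}\sum_{\sigma \in U_j}\min_{c \in C_i}\frechet{\sigma}{c}\,\varphi(\sigma)$. Dividing by $\cost{\Tau}{C_i}$, bounding the cluster-cost sum against the full cost $\cost{\Tau}{C_i}$, and finally replacing $\frac{1}{\cost{\Tau}{C_i}} \le \frac{1}{\optcost{\Tau}} \le \frac{\alpha}{\cost{\Tau}{\hat{C}}} = \frac{1}{n\Phi}$ — the single place the $\alpha$-approximation quality of $\hat{C}$ is used — makes every term independent of $i$; inflating the denominators into $\frac{1}{1-\gamma}$ and $\frac{1}{\gamma}$ to absorb the small slack then produces $\lambda(\tau) \eqdef \frac{2m_j + \frechet{\tau}{\hat{c}_j}}{(1-\gamma)\Phi} + \frac{2}{\gamma\,\varphi(U_j)}$, which bounds $\xi(\tau)$ simultaneously for all $i$.

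For the second stage the computation is routine. Using $m_j\,\varphi(U_j) = \sum_{\tau \in U_j}\frechet{\tau}{\hat{c}_j}\varphi(\tau)$ and $\alpha\Phi = \sum_{j}\sum_{\tau \in U_j}\frechet{\tau}{\hat{c}_j}\varphi(\tau)$, the first summand of $\Lambda$ sums to $\frac{3\alpha}{1-\gamma}$ and the second to $\frac{2k}{\gamma}$, so $\Lambda = \frac{3\alpha}{1-\gamma} + \frac{2k}{\gamma} \le \frac{3\alpha k}{1-\gamma} + \frac{2k}{\gamma}$ for $k \ge 1$. Substituting the value $\gamma = \frac{1}{\sqrt{3\alpha k}/\sqrt{2k}+1}$ fixed in \cref{algo:klmedian_coreset}, which is exactly the minimizer of $\frac{3\alpha k}{1-\gamma} + \frac{2k}{\gamma}$, gives $\Lambda \le (3\alpha + 2\sqrt{6\alpha} + 2)k$; one may simply take this as the bound on the total sensitivity used in \cref{theo:klmedian_coreset}.

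The main obstacle is the uniformity demanded in the first stage: $\lambda(\tau)$ must dominate $\xi(\tau)$ for \emph{every} candidate center-set $C_i$ at once. The delicate points are that the detour bound introduces the $i$-dependent quantity $\sum_{\sigma \in U_j}\min_{c \in C_i}\frechet{\sigma}{c}$, which has to be reabsorbed into $\cost{\Tau}{C_i}$, and that $\cost{\Tau}{C_i}$ itself must be bounded \emph{from below} by an $i$-free quantity — possible precisely because $\hat{C}$ is an $\alpha$-approximation, so $\cost{\Tau}{C_i} \ge \optcost{\Tau} \ge \frac{1}{\alpha}\cost{\Tau}{\hat{C}} = n\Phi$. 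The role of $\gamma$ is purely cosmetic, chosen so that the summed bound has a clean closed form; the arithmetic of the second stage is elementary.
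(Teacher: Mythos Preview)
Your proposal is correct, and the second stage (summing $\lambda(\tau)\varphi(\tau)$ and optimizing over $\gamma$) is essentially identical to the paper's computation; you are even slightly more careful, observing that the first summand totals $\tfrac{3\alpha}{1-\gamma}$ rather than $\tfrac{3\alpha k}{1-\gamma}$ before bounding it by the latter.

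The first stage, however, takes a genuinely different route. The paper does \emph{not} average the triangle inequality over $U_j$. Instead it proceeds via two auxiliary lemmas: first a Markov-type estimate (\cref{lem:phi_bound_voronoi}) showing that at least half the $\varphi$-mass of $U_j$ lies in the ball $\ball{\hat{c}_j}{2m_j}$, and second a reverse-triangle bound (\cref{lem:tau_bound}) on $X_i(\tau)$ for curves inside that ball. These together give a second lower bound $\E_\varphi[X_i] \ge \max\{0,\frechet{\nearest{\hat{c}_j}{C_i}}{\hat{c}_j}-2m_j\}\cdot\tfrac{\varphi(U_j)}{2}$, which the paper combines convexly with $\E_\varphi[X_i]\ge\Phi$ in the denominator and then argues monotonicity in the detour variable $\rho=\frechet{\hat{c}_j}{\nearest{\hat{c}_j}{C_i}}$ to eliminate $i$. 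Your averaging argument bypasses all of this: you absorb the $i$-dependent cluster sum $\sum_{\sigma\in U_j}X_i(\sigma)\varphi(\sigma)$ directly into $\E_\varphi[X_i]$, obtaining the cleaner bound $\xi(\tau)\le\tfrac{m_j+\frechet{\tau}{\hat{c}_j}}{\Phi}+\tfrac{1}{\varphi(U_j)}$ before inflating to match the paper's $\lambda$. Your route is more elementary and in fact yields a strictly smaller total sensitivity ($2\alpha+k$ rather than $(3\alpha+2\sqrt{6\alpha}+2)k$) if one does not insist on the paper's exact $\lambda$ formula; the paper's approach, on the other hand, is the one that produces precisely the $\psi$ hard-coded in \cref{algo:klmedian_coreset}.
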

\begin{proof}
	For $j = 1, \dots, k$ let $U_j \eqdef \cluster{\Tau}{\hat{C}}{\hat{c}_j}$ be the clusters of $\hat{C}$ with respect to $\Tau$ and let \[m_j \eqdef \frac{1}{\varphi\left( U_j \right)} \cdot \sum\limits_{\tau \in U_j} \frechet{\tau}{\hat{c}_j} \cdot \varphi(\tau).\] Also let \[\Phi \eqdef \frac{1}{\alpha} \sum\limits_{j=1}^k \sum\limits_{\tau \in U_j} \frechet{\tau}{\hat{c}_j} \cdot \varphi(\tau)\] be a lower bound on all $\E_\varphi[X_1], \dots, \E_\varphi[X_w]$ (recall that $\hat{C}$ is an $\alpha$-approximate center-set and $\varphi$ is the uniform distribution). Using these definitions, we will show in \cref{lem:xi_bound} that for an arbitrary constant $\gamma \in [0,1]$ it holds that \[\forall j \in \{1,\dots,k\}\forall \tau \in U_j: \xi(\tau) \leq \frac{2m_j + \frechet{\tau}{\hat{c}_j}}{(1-\gamma)\Phi} + \frac{2}{\gamma \varphi(U_j)}.\] Thus, for $j \in \{1,\dots,k\}$ and $\tau \in U_j$ we can define \[\lambda(\tau) \eqdef \frac{2m_j + \frechet{\tau}{\hat{c}_j}}{(1-\gamma)\Phi} + \frac{2}{\gamma \varphi(U_j)}.\]
	Now we obtain:
	\begin{align*}
		\Lambda ={} & \sum_{\tau \in \Tau} \lambda(\tau) \cdot \varphi(\tau) = \sum_{j=1}^k \sum_{\tau \in U_j} \lambda(\tau) \cdot \varphi(\tau) = \sum_{j=1}^k \sum_{\tau \in U_j} \left(\frac{2m_j + \frechet{\tau}{\hat{c}_j}}{(1-\gamma)\Phi} + \frac{2}{\gamma \varphi(U_j)}\right) \cdot \varphi(\tau) \\
		={} & \sum_{j=1}^k \left[\frac{1}{(1-\gamma)\Phi} \cdot \left(2m_j \cdot \sum_{\tau \in U_j} \varphi(\tau) + \sum_{\tau \in U_j} \frechet{\tau}{\hat{c}_j} \cdot \varphi(\tau)\right) + \frac{2}{\gamma \varphi(U_j)}\sum_{\tau \in U_j} \varphi(\tau)\right] \\
		={} & \sum_{j=1}^k \left[\frac{2m_j \varphi(U_j)}{(1-\gamma)\Phi} + \frac{m_j \varphi(U_j)}{(1-\gamma)\Phi} + \frac{2}{\gamma}\right] = \frac{\sum_{j=1}^k 2m_j \varphi(U_j)}{(1-\gamma)\Phi} + \frac{\sum_{j=1}^k m_j \varphi(U_j)}{(1-\gamma)\Phi} + \frac{2k}{\gamma} \\
		={} & \frac{3\alpha k}{(1-\gamma)} + \frac{2k}{\gamma}
	\end{align*}
	Here we mainly use the fact that $\sum_{j=1}^{k}\varphi(U_j) \cdot m_j = \alpha \cdot \Phi$ by definition.
	
	We now want to minimize the obtained bound with respect to $\gamma$. We search for a local minimum for $\gamma \in [0,1]$. Let $h(\gamma) \eqdef \frac{3\alpha k}{(1-\gamma)} + \frac{2k}{\gamma}$ be the obtained bound $\Lambda$ as a function of $\gamma$. We have: $\frac{\partial h}{\partial \gamma} = \frac{3 \alpha k}{(1-\gamma)^2} - \frac{2k}{\gamma^2}$.
	We obtain a value for $\gamma$:
	\begin{align*}
		\frac{3 \alpha k}{(1-\gamma)^2} - \frac{2k}{\gamma^2} \stackrel{\text{!}}{=} 0 \Leftrightarrow \frac{3 \alpha k}{(1-\gamma)^2} = \frac{2k}{\gamma^2} \Leftrightarrow \left(\frac{\sqrt{3 \alpha k}}{\sqrt{2 k}} + 1\right) \gamma = 1 \Leftrightarrow \gamma = \frac{1}{\frac{\sqrt{3 \alpha k}}{\sqrt{2 k}} + 1}
	\end{align*}
	Because the denominator in the last equality is positive and greater than or equal to the numerator, the obtained value lies in $(0,1]$ and because $\frac{\partial^2 h}{\partial \gamma^2} = \frac{6 \alpha k}{(1-\gamma)^2} + \frac{4 k}{\gamma^2}$ the obtained value is at a local minimum. We insert the value for $\gamma$ into $\Lambda = \frac{3\alpha k}{(1-\gamma)} + \frac{2k}{\gamma}$, which yields to:
	\begin{align*}
		\Lambda = \frac{3\alpha k}{\left(1-\frac{1}{\frac{\sqrt{3 \alpha k}}{\sqrt{2 k}} + 1}\right)} + \frac{2k}{\frac{1}{\frac{\sqrt{3 \alpha k}}{\sqrt{2 k}} + 1}} ={} & \frac{3\alpha k}{\left(1-\frac{1}{\frac{\sqrt{3 \alpha k}}{\sqrt{2 k}} + 1}\right)} + 2k\left(\frac{\sqrt{3 \alpha k}}{\sqrt{2 k}} + 1\right) \\
		={} & \frac{3 \alpha k}{\frac{\frac{\sqrt{3 \alpha k} + \sqrt{2k}}{\sqrt{2 k}}-1}{\frac{\sqrt{3 \alpha k} + \sqrt{2k}}{\sqrt{2 k}}}} + 2k\left(\frac{\sqrt{3 \alpha k}}{\sqrt{2 k}} + 1\right) \\
		={} & \frac{3 \alpha k\left(\frac{\sqrt{3 \alpha k} + \sqrt{2k}}{\sqrt{2 k}}\right)}{\frac{\sqrt{3 \alpha k} + \sqrt{2k}}{\sqrt{2 k}}-1} + 2k\left(\frac{\sqrt{3 \alpha k}}{\sqrt{2 k}} + 1\right) \\
		={} & \frac{3 \alpha k\left(\sqrt{3\alpha k}+ \sqrt{2k}\right)}{\sqrt{3 \alpha k}}+ 2k\left(\frac{\sqrt{3 \alpha k}}{\sqrt{2 k}} + 1\right) \\
		={} & 3 \alpha k + \sqrt{2k}\sqrt{3 \alpha k} + 2k\left(\frac{\sqrt{3 \alpha k}}{\sqrt{2 k}}\right) + 2k \\
		={} & \left(3 \alpha + 2\sqrt{6\alpha} + 2\right) k 
	\end{align*}
	The last equality proves the claim.
\end{proof}
\begin{lemma}[\citet{langbergschulman}]
	\label{lem:xi_bound}
	The following holds: \[\forall \gamma \in [0,1] \forall j \in \{1,\dots,k\}\forall \tau \in U_j: \xi(\tau) \leq \frac{2m_j + \frechet{\tau}{\hat{c}_j}}{(1-\gamma)\Phi} + \frac{2}{\gamma \varphi(U_j)}.\]
\end{lemma}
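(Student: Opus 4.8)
The plan is to unfold the definition of the sensitivity and reduce the lemma to a single inequality about one fixed candidate center-set. First I would fix $\gamma\in[0,1]$, an index $j$, and a curve $\tau\in U_j$; since $\xi(\tau)=\max_{i}\frac{X_i(\tau)}{\E_\varphi[X_i]}$, it suffices to fix an arbitrary $C_i$ and bound $\frac{X_i(\tau)}{\E_\varphi[X_i]}$ by $\frac{2m_j+\frechet{\tau}{\hat c_j}}{(1-\gamma)\Phi}+\frac{2}{\gamma\,\varphi(U_j)}$. Note that for $\gamma\in\{0,1\}$ the claimed bound is $+\infty$, so there is nothing to prove there and I may assume $\gamma\in(0,1)$. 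I would write $c^\ast\eqdef\nearest{\tau}{C_i}$, so $X_i(\tau)=\frechet{\tau}{c^\ast}$, and abbreviate $b\eqdef m_j+\frechet{\tau}{\hat c_j}$.

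The heart of the argument will be to establish two lower bounds on $\E_\varphi[X_i]$. The first is free: $\Phi$ was defined in \cref{lem:total_sens_bound} precisely as a lower bound on every $\E_\varphi[X_i]$, so $\E_\varphi[X_i]\geq\Phi$. For the second I would exploit the cluster $U_j$: for any $\sigma\in U_j$, since $c^\ast$ is a nearest center of $\tau$ in $C_i$ and $\frechet{\cdot}{\cdot}$ is a metric, $X_i(\tau)=\frechet{\tau}{c^\ast}\leq\frechet{\tau}{\nearest{\sigma}{C_i}}\leq\frechet{\tau}{\sigma}+X_i(\sigma)$, so $X_i(\sigma)\geq X_i(\tau)-\frechet{\tau}{\sigma}$ — an inequality that remains valid when its right side is negative, since $X_i\geq0$. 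Summing this against the weights $\varphi(\sigma)$ over $\sigma\in U_j$, dropping the other clusters' contributions (permitted by non-negativity), bounding $\frechet{\tau}{\sigma}\leq\frechet{\tau}{\hat c_j}+\frechet{\hat c_j}{\sigma}$, and recognizing $\sum_{\sigma\in U_j}\frechet{\hat c_j}{\sigma}\varphi(\sigma)=m_j\,\varphi(U_j)$, I obtain $\E_\varphi[X_i]\geq\varphi(U_j)\bigl(X_i(\tau)-b\bigr)$.

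I would then finish with a case analysis on the size of $X_i(\tau)$ relative to the threshold $\tfrac{b}{1-\gamma}$. If $X_i(\tau)\leq\tfrac{b}{1-\gamma}$, the first lower bound gives $\frac{X_i(\tau)}{\E_\varphi[X_i]}\leq\frac{X_i(\tau)}{\Phi}\leq\frac{b}{(1-\gamma)\Phi}\leq\frac{2m_j+\frechet{\tau}{\hat c_j}}{(1-\gamma)\Phi}$, using $b\leq 2m_j+\frechet{\tau}{\hat c_j}$. If instead $X_i(\tau)>\tfrac{b}{1-\gamma}$, then $X_i(\tau)-b>\gamma\,X_i(\tau)>0$, so the second lower bound yields $\E_\varphi[X_i]\geq\gamma\,\varphi(U_j)\,X_i(\tau)$ and hence $\frac{X_i(\tau)}{\E_\varphi[X_i]}\leq\frac{1}{\gamma\,\varphi(U_j)}\leq\frac{2}{\gamma\,\varphi(U_j)}$. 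In both cases one of the two summands in the claimed bound already dominates, so the sum certainly does, and taking the maximum over $i$ finishes the proof.

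The step I expect to be the real obstacle — everything else being routine triangle-inequality bookkeeping — is making the cluster lower bound $\E_\varphi[X_i]\geq\varphi(U_j)(X_i(\tau)-b)$ actually usable: on its own it is circular, since $X_i(\tau)$ appears on both sides, and it is worthless when $X_i(\tau)$ is only marginally larger than $b$. The trick is to place the case split at $\tfrac{b}{1-\gamma}$ rather than at $b$, which turns $X_i(\tau)-b$ into a constant fraction $\gamma\,X_i(\tau)$ of $X_i(\tau)$ and breaks the circularity. The generous slack in the statement — the factor $2$ and the fact that it is asserted for \emph{every} $\gamma\in[0,1]$, the optimal $\gamma$ being chosen only later in \cref{lem:total_sens_bound} — is exactly what lets this split go through without tracking any tight constants.
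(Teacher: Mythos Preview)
Your proof is correct and takes a genuinely different, more elementary route than the paper. The paper parameterizes by the auxiliary quantity $\rho=\frechet{\hat c_j}{\nearest{\hat c_j}{C_i}}$: it first proves two separate lemmas — a Markov-type concentration (\cref{lem:phi_bound_voronoi}) showing $\varphi\bigl(U_j\cap\ball{\hat c_j}{2m_j}\bigr)\geq\tfrac{1}{2}\varphi(U_j)$, and a triangle-inequality bound (\cref{lem:tau_bound}) — to obtain $\E_\varphi[X_i]\geq\max\{0,\rho-2m_j\}\cdot\tfrac{1}{2}\varphi(U_j)$, then forms the convex combination in the denominator, computes $\partial h/\partial\rho$ to argue monotonicity, and evaluates the two endpoints (invoking l'H\^opital at $\rho\to\infty$). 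You instead work directly with $X_i(\tau)$: averaging the pointwise inequality $X_i(\sigma)\geq X_i(\tau)-\frechet{\tau}{\sigma}$ over $\sigma\in U_j$ gives the cluster bound $\E_\varphi[X_i]\geq\varphi(U_j)\bigl(X_i(\tau)-b\bigr)$ with no Markov step, and the case split at $b/(1-\gamma)$ replaces the calculus entirely. Your argument is shorter, needs no auxiliary lemmas, and in fact yields the sharper constants $m_j+\frechet{\tau}{\hat c_j}$ and $1/(\gamma\varphi(U_j))$ before you relax them; the paper's detour through $\rho$ buys nothing here, though it is the presentation inherited from \citet{langbergschulman}.
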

\begin{proof}
	In \cref{lem:phi_bound_voronoi} we will show that for all $j \in \{1, \dots, k\}$ it holds that (\conferre \cref{def:ball}) \[\varphi\left(U_j \cap \ball{\hat{c}_j}{2 m_j}\right) \geq \frac{\varphi(U_j)}{2}.\] Also in \cref{lem:tau_bound} we will show that for all $i \in \{1, \dots, w\}$, $j \in \{1, \dots, k\}$ and $\tau \in U_j$ it holds that \[\frechet{\tau}{\nearest{\tau}{C_i}} \geq \max\{0, \frechet{\nearest{\hat{c}_j}{C_i}}{\hat{c}_j} - 2m_j\}.\]
	
	For $i \in \{1, \dots, w\}$ we now have the bounds \[ \E_\varphi[X_i] \geq \Phi = \frac{1}{\alpha} \sum\limits_{j=1}^k \sum\limits_{\tau \in U_j} \frechet{\tau}{\hat{c}_j} \cdot \varphi(\tau), \] (recall that $\hat{C}$ is an $\alpha$-approximate center-set and $\varphi$ is the uniform distribution) and for an arbitrary $j \in \{1, \dots, k\}$ we have \[ \E_\varphi[X_i] \geq \sum_{\tau \in U_j \cap \ball{\hat{c}_j}{2m_j}} X_i(\tau) \cdot \varphi(\tau) \geq \max\{0, \frechet{\nearest{\hat{c}_j}{C_i}}{\hat{c}_j} - 2m_j\} \cdot \frac{\varphi(U_j)}{2}, \] by the previous arguments (recall that $X_i(\tau) = \frechet{\tau}{\nearest{\tau}{C_i}}$). 
	
	From now on let $j \in \{1, \dots, k\}$ be arbitrary but fixed. For $\tau \in U_j$ and $\gamma \in [0,1]$ we have:
	\begin{align*}
	\xi(\tau) ={} & \max_{i \in \{1, \dots, w\}} \frac{X_i(\tau)}{\E_\varphi[X_i]} \leq \max_{i \in \{1, \dots, w\}} \frac{\frechet{\tau}{\nearest{\hat{c}_j}{C_i}}}{\E_\varphi[X_i]} \label{ml:eq1} \tag{I} \\
	\leq{} & \max_{i \in \{1, \dots, w\}} \frac{\frechet{\tau}{\hat{c}_j} + \frechet{\hat{c}_j}{\nearest{\hat{c}_j}{C_i}}}{\E_\varphi[X_i]} \label{ml:eq2} \tag{II} \\
	\leq{} & \max_{i \in \{1,\dots,w\}} \frac{\frechet{\tau}{\hat{c}_j} + \frechet{\hat{c}_j}{\nearest{\hat{c}_j}{C_i}}}{\gamma \cdot(\max\{0, \frechet{\nearest{\hat{c}_j}{C_i}}{\hat{c}_j} - 2m_j\}) \frac{\varphi(U_j)}{2} + (1-\gamma) \cdot \Phi} \label{ml:eq3} \tag{III} \\
	\leq{} & \max_{i \in \{1,\dots,w\}, \frechet{\nearest{\hat{c}_j}{C_i}}{\hat{c}_j} \geq 2m_j} \frac{\frechet{\tau}{\hat{c}_j} + \frechet{\hat{c}_j}{\nearest{\hat{c}_j}{C_i}}}{\gamma \cdot(\frechet{\nearest{\hat{c}_j}{C_i}}{\hat{c}_j} - 2m_j) \frac{\varphi(U_j)}{2} + (1-\gamma) \cdot \Phi} \label{ml:eq4} \tag{IV}
	\end{align*}
	Here \cref{ml:eq1} holds because $\frechet{\tau}{\nearest{\tau}{C_i}} \leq \frechet{\tau}{\nearest{\hat{c}_j}{C_i}}$ by \cref{def:nearest_func}, \cref{ml:eq2} follows from the triangle-inequality, \cref{ml:eq3} holds because of the statements above and \cref{ml:eq4} holds because for $i \in \{1, \dots, w\}$ it can be observed, that the term assigns smaller values for $\frechet{\nearest{\hat{c}_j}{C_i}}{\hat{c}_j} < 2m_j$, than for $\frechet{\nearest{\hat{c}_j}{C_i}}{\hat{c}_j} \geq 2m_j$. To be precise for all $\frechet{\nearest{\hat{c}_j}{C_i}}{\hat{c}_j} \leq 2m_j$ we have that $\max\{0, \frechet{\nearest{\hat{c}_j}{C_i}}{\hat{c}_j} - 2m_j\} = 0$, so if the term is maximal for $\frechet{\nearest{\hat{c}_j}{C_i}}{\hat{c}_j} \leq 2m_j$, it clearly is when $\frechet{\nearest{\hat{c}_j}{C_i}}{\hat{c}_j} = 2m_j$, because $\frechet{\nearest{\hat{c}_j}{C_i}}{\hat{c}_j}$ is also present in the numerator.
	
	Now to obtain a bound that is independent of $i$ we substitute $\frechet{\nearest{\hat{c}_j}{C_i}}{\hat{c}_j}$ by a variable $\rho$ and define $h(\rho) \eqdef \frac{\frechet{\tau}{\hat{c}_j} + \rho}{\gamma \cdot(\rho - 2m_j) \frac{\varphi(U_j)}{2} + (1-\gamma) \cdot \Phi}$ to be the bound of \cref{ml:eq4} as a function of $\rho$. We show that $h$ is a monotone function, therefore it is maximized at one of the boundaries of the domain. We calculate the slope of $h(\rho)$:
	\begin{align*}
	\frac{\partial h}{\partial \rho} ={} & \frac{(\gamma \cdot(\rho - 2m_j) \frac{\varphi(U_j)}{2} + (1-\gamma) \cdot \Phi) - (\gamma \cdot \frac{\varphi(U_j)}{2} \cdot (\frechet{\tau}{\hat{c}_j} + \rho))}{(\gamma \cdot(\rho - 2m_j) \frac{\varphi(U_j)}{2} + (1-\gamma) \cdot \Phi)^2} \\
	={} & \frac{\frac{\gamma \rho \varphi(U_j)}{2} - \frac{\gamma 2 m_j \varphi(U_j)}{2} + \Phi - \gamma \Phi - \frac{\gamma \varphi(U_j) \frechet{\tau}{\hat{c}_j}}{2} - \frac{\gamma \rho \varphi(U_j)}{2}}{(\gamma \cdot(\rho - 2m_j) \frac{\varphi(U_j)}{2} + (1-\gamma) \cdot \Phi)^2} \\
	={} & \frac{- \frac{\gamma 2 m_j \varphi(U_j)}{2} + \Phi - \gamma \Phi - \frac{\gamma \varphi(U_j) \frechet{\tau}{\hat{c}_j}}{2}}{(\gamma \cdot(\rho - 2m_j) \frac{\varphi(U_j)}{2} + (1-\gamma) \cdot \Phi)^2}.
	\end{align*}
	The sign of $\frac{\partial h}{\partial \rho}$ is independent of $\rho$ since $\rho$ is not present in the numerator and is squared in the denominator. $h$ must be a monotone function that is either maximized at $\rho = 2m_j$, or $\rho \rightarrow \infty$. Thus, for $j \in \{1, \dots, k\}$ and $\tau \in U_j$ we obtain a bound on $\xi(\tau)$ that is independent of $i$:
	\begin{align*}
	\xi(\tau) \leq \max\left\{ \frac{2m_j + \frechet{\tau}{\hat{c}_j}}{(1-\gamma)\Phi}, \frac{2}{\gamma \varphi(U_j)} \right\} \leq \frac{2m_j + \frechet{\tau}{\hat{c}_j}}{(1-\gamma)\Phi} + \frac{2}{\gamma \varphi(U_j)}.
	\end{align*}
	Here the second term is obtained through an application of l'Hôspital's rule.
\end{proof}
\begin{lemma}
	\label{lem:phi_bound_voronoi}
	The following holds: \[\forall j \in \{1,\dots,k\}: \varphi\left(U_j \cap \ball{\hat{c}_j}{2 m_j}\right) \geq \frac{\varphi(U_j)}{2}.\]
\end{lemma}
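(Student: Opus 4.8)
The plan is to read this off from Markov's inequality (\cref{theo:markov}): $m_j$ is, up to normalisation, the average Fréchet distance from the curves in the cluster $U_j$ to its center $\hat{c}_j$, so only a $\nicefrac{1}{2}$-fraction of the cluster (measured by $\varphi$-weight) can lie at Fréchet distance more than $2m_j$ from $\hat{c}_j$. So the whole statement is a conditional first-moment bound.

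Concretely, fixing $j \in \{1,\dots,k\}$, I would first dispose of the degenerate situation $U_j = \emptyset$, where the inequality reads $0 \geq 0$. Assuming $\varphi(U_j) > 0$, I would pass to the conditional probability space on $U_j$ with measure $\tau \mapsto \varphi(\tau)/\varphi(U_j)$ and introduce the non-negative random variable $D(\tau) \eqdef \frechet{\tau}{\hat{c}_j}$. By the very definition of $m_j$ (recall $\varphi$ is the uniform distribution here), $\E[D] = \frac{1}{\varphi(U_j)}\sum_{\tau \in U_j}\frechet{\tau}{\hat{c}_j}\,\varphi(\tau) = m_j$. If $m_j = 0$ then every summand vanishes, so $U_j \subseteq \ball{\hat{c}_j}{0} \subseteq \ball{\hat{c}_j}{2m_j}$ and the claim is immediate; otherwise $2m_j > 0$ and \cref{theo:markov} with $a \eqdef 2m_j$ gives $\Pr[D \geq 2m_j] \leq m_j/(2m_j) = \nicefrac{1}{2}$. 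Passing to the complementary event and using $\{\tau \in U_j : D(\tau) < 2m_j\} \subseteq U_j \cap \ball{\hat{c}_j}{2m_j}$ (\cref{def:ball}), I obtain $\varphi(U_j \cap \ball{\hat{c}_j}{2m_j})/\varphi(U_j) \geq 1 - \Pr[D \geq 2m_j] \geq \nicefrac{1}{2}$, and multiplying through by $\varphi(U_j)$ finishes the proof.

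There is no real obstacle here; the argument is essentially a one-line Markov estimate. The only points worth being careful about are the boundary cases — an empty cluster, or $m_j = 0$, for which \cref{theo:markov} is inapplicable since it requires a strictly positive threshold — and the observation that the closed ball of \cref{def:ball} contains the \emph{strict} sublevel set $\{D < 2m_j\}$, which is exactly what the complement of the Markov event controls.
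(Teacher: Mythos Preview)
Your proposal is correct and follows essentially the same approach as the paper: both arguments apply Markov's inequality (\cref{theo:markov}) to the distance-to-center random variable to bound the $\varphi$-mass outside $\ball{\hat{c}_j}{2m_j}$ by $\varphi(U_j)/2$, then pass to the complement. The only minor difference is cosmetic---the paper works with an unconditional variable $V$ that vanishes off $U_j$, while you condition on $U_j$---and you are slightly more careful about the degenerate cases $U_j=\emptyset$ and $m_j=0$, which the paper silently glosses over.
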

\begin{proof}
	Let $j \in \{1, \dots, k\}$ be arbitrary but fixed and for $\tau \in \Tau$ let \[V(\tau) \eqdef \left\{
	\begin{array}{ll}
	\frechet{\tau}{\hat{c}_j}, & \text{if } \tau \in U_j \\
	0, & \text{else}
	\end{array} \right.\] be a random variable. By Markov's inequality (\conferre \cref{theo:markov}) we obtain (recall that $\varphi$ is a probability function):
	\begin{align*}
	\varphi(U_j \setminus \ball{\hat{c}_j}{2 m_j}) ={} & \varphi( V > 2 m_j ) \leq \varphi(V \geq 2 m_j) \leq \frac{\E_\varphi[V]}{2 m_j} \\
	={} & \frac{\left(\sum\limits_{\tau \in U_j} \frechet{\tau}{\hat{c}_j} \cdot \varphi(\tau)\right) + \left(\sum\limits_{\tau \in \Tau \setminus U_j} 0 \cdot \varphi(\tau)\right)}{2 m_j} \\
	={} & \frac{\varphi(U_j)}{2}
	\end{align*}
	Here the first equality holds by the definition of $V$ and \cref{def:ball}.
	
	Because $\varphi(U_j) = \varphi(U_j \setminus \ball{\hat{c}_j}{2 m_j}) + \varphi\left(U_j \cap \ball{\hat{c}_j}{2 m_j}\right)$ (\conferre \cref{def:probability_space}), this implies that $\varphi\left(U_j \cap \ball{\hat{c}_j}{2 m_j}\right) \geq \frac{\varphi(U_j)}{2}$.
\end{proof}
\begin{lemma}
	\label{lem:tau_bound}
	The following holds: 
	\begin{align*}
		\forall i \in \{1, \dots, w\} & \forall j \in \{1,\dots,k\} \forall \tau \in U_j \cap \ball{\hat{c}_j}{2m_j}: \\
		& \frechet{\tau}{\nearest{\tau}{C_i}} \geq \max\{0, \frechet{\nearest{\hat{c}_j}{C_i}}{\hat{c}_j} - 2m_j\}.
	\end{align*}
\end{lemma}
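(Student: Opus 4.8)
The plan is to treat the two terms in the maximum separately. The term $0$ is immediate: $\frechet{\tau}{\nearest{\tau}{C_i}}$ is a Fréchet distance and hence non-negative. So the whole content of the lemma is the bound $\frechet{\tau}{\nearest{\tau}{C_i}} \geq \frechet{\nearest{\hat{c}_j}{C_i}}{\hat{c}_j} - 2m_j$, which I would establish by a single application of the triangle inequality together with the two defining minimality properties of $\eta(\cdot,\cdot)$ (\cref{def:nearest_func}) and the membership hypothesis $\tau \in \ball{\hat{c}_j}{2m_j}$ (\cref{def:ball}).

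Concretely, fix $i$, $j$, and $\tau \in U_j \cap \ball{\hat{c}_j}{2m_j}$, and write $c \eqdef \nearest{\tau}{C_i}$ for a nearest neighbor of $\tau$ in $C_i$. First I would apply the triangle inequality for the Fréchet distance to the triple $\hat{c}_j, \tau, c$, obtaining $\frechet{\hat{c}_j}{c} \leq \frechet{\hat{c}_j}{\tau} + \frechet{\tau}{c}$. Then I would use that $\nearest{\hat{c}_j}{C_i}$ is a nearest neighbor of $\hat{c}_j$ in $C_i$, so $\frechet{\hat{c}_j}{\nearest{\hat{c}_j}{C_i}} \leq \frechet{\hat{c}_j}{c}$, and that $\tau \in \ball{\hat{c}_j}{2m_j}$ gives $\frechet{\hat{c}_j}{\tau} \leq 2m_j$. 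Chaining these yields $\frechet{\nearest{\hat{c}_j}{C_i}}{\hat{c}_j} \leq 2m_j + \frechet{\tau}{\nearest{\tau}{C_i}}$, and rearranging produces the claimed inequality. Combining with non-negativity then gives the $\max$ form, which finishes the proof.

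There is essentially no main obstacle here; the only thing to be careful about is bookkeeping the direction of the inequalities and making sure the symmetry of the Fréchet distance is used consistently (so that, e.g., $\frechet{\nearest{\hat{c}_j}{C_i}}{\hat{c}_j}$ and $\frechet{\hat{c}_j}{\nearest{\hat{c}_j}{C_i}}$ are freely interchanged). I would also state explicitly that the Fréchet distance is a metric on $\eqcfre{m}$ so that the triangle inequality is available, since the rest of the argument depends only on that fact and not on any geometric structure of curves.
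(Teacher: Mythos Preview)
Your proof is correct and follows essentially the same route as the paper: apply the triangle inequality to $\hat{c}_j,\tau,c$, use that $\nearest{\hat{c}_j}{C_i}$ minimizes the distance from $\hat{c}_j$ into $C_i$, and use $\frechet{\hat{c}_j}{\tau}\le 2m_j$ from the ball hypothesis. The only cosmetic difference is that the paper splits into the cases $c'_i\in\ball{\hat{c}_j}{2m_j}$ and $c'_i\notin\ball{\hat{c}_j}{2m_j}$, whereas you (more cleanly) observe that the inequality $\frechet{\tau}{\nearest{\tau}{C_i}}\ge \frechet{\nearest{\hat{c}_j}{C_i}}{\hat{c}_j}-2m_j$ holds unconditionally and then take the $\max$ with $0$.
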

\begin{proof}
	Let $i \in \{1, \dots, w\}$, $j \in \{1, \dots, k\}$ and $\tau \in U_j \cap \ball{\hat{c}_j}{2m_j}$ be arbitrary but fixed and let $c_i \eqdef \nearest{\tau}{C_i} $ (\conferre \cref{def:nearest_func}) be a nearest center at hand. Let $c^\prime_{i} \eqdef \nearest{\hat{c}_j}{C_i}$ be a center, which lies nearest to $\hat{c}_j$. The triangle-inequality gives: \[\frechet{c^\prime_{i}}{\hat{c}_j} \leq \frechet{c_i}{\hat{c}_j} \leq \frechet{c_{i}}{\tau} + \frechet{\tau}{\hat{c}_j}\]
	We examine two cases.
	\paragraph{Case 1:} In this case $c_{i}^\prime \not\in \ball{\hat{c}_j}{2m_j}$ (\conferre \cref{def:ball}). We have:
	\begin{align*}
	\frechet{c^\prime_{i}}{\hat{c}_j} \leq \frechet{c_{i}}{\tau} + 2 m_j \Leftrightarrow \frechet{c_{i}}{\tau} \geq \frechet{c_{i}^\prime}{\hat{c}_j} - 2 m_j
	\end{align*}  
	\paragraph{Case 2:} In this case $c_{i}^\prime \in \ball{\hat{c}_j}{2m_j}$. We can only say that $\frechet{\tau}{c_{i}} \geq 0 $.
	
	Putting everything together we obtain $\frechet{\tau}{c_{i}} \geq \max\{0, \frechet{c_{i}^\prime}{\hat{c}_j} - 2m_j\}$.
\end{proof}
\subsubsection{Time Complexity Analysis of \cref{algo:klmedian_coreset}}
\begin{theorem}
	Given a set $\Tau \subset \eqcfre{m}$ of $n$ polygonal curves and a parameter $\epsilon \in (0,1)$, \cref{algo:klmedian_coreset} has running-time $\On{n^2 \cdot m^2 \log(m) + \frac{\ln^2(n)}{\epsilon^2}}$.
\end{theorem}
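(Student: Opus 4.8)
The plan is to charge the running time of \cref{algo:klmedian_coreset} line by line and to observe that it splits into two dominant parts: building the sampling distribution $\psi$, which accounts for the $\On{n^2 m^2 \log(m)}$ term, and drawing plus re-weighting the sample, which accounts for the $\On{\nicefrac{\ln^2(n)}{\epsilon^2}}$ term.

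First I would bound the subroutine \textsc{compute-psi}. Its opening step is the call to \cref{algo:klmedian_approx} with accuracy parameter $\nicefrac{1}{(3kn)}$; this is the dominant contribution and is exactly where the $n^2 m^2\log(m)$ in the statement originates, since a constant-factor approximation for the discrete median must, in the worst case, inspect the pairwise Fréchet distances among the $n$ input curves, each costing $\On{m^2 \log(m)}$ by \cref{theo:frechet_algo}; I would invoke the running-time bound for \cref{algo:klmedian_approx} established in the subsequent subsection to get $\On{n^2 m^2 \log(m)}$ here. Given the returned center-set $\hat{C} \eqdef \{\hat c_1, \dots, \hat c_k\}$, forming the clusters $U_1, \dots, U_k$ requires, for every $\tau \in \Tau$, the $k$ evaluations $\frechet{\tau}{\hat c_1}, \dots, \frechet{\tau}{\hat c_k}$ followed by a minimum selection, which is $\On{nk\, m^2 \log(m)}$ by \cref{theo:frechet_algo}; keeping these distances, the values $m_1,\dots,m_k$, the value $\Phi$, and the probabilities $\psi(\tau)$ for all $\tau \in \Tau$ then follow from $\On{n}$ arithmetic operations and can be stored in an array of length $n$. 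Since $k$ is fixed, \textsc{compute-psi} runs in time $\On{n^2 m^2 \log(m)}$.

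Next I would bound the body of \textsc{k-median-coreset}. Reading off $n$ is $\On{n}$, and evaluating the closed form for $\ell$ is $\On{1}$ in the real-RAM model; since $\rho$ and $k$ are fixed, $\ell \in \On{\nicefrac{\ln(n)}{\epsilon^2}}$. To draw $\ell$ independent curves from $(\Tau, \powerset{\Tau}, \psi)$ at \cref{algo:klmedian_coreset_line_sampling}, I would precompute the cumulative distribution of $\psi$ over the $n$ curves in $\On{n}$ time and then answer each of the $\ell$ draws by a binary search in $\On{\log(n)}$, for a total of $\On{n + \ell \log(n)} = \On{n + \nicefrac{\ln^2(n)}{\epsilon^2}}$; re-weighting the $\ell$ sampled curves by $\nicefrac{1}{\ell}\cdot\nicefrac{1}{\psi(s)}$, with $\psi(s)$ read from the stored array, costs $\On{\ell}$. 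Summing all contributions and absorbing the $\On{n}$ terms into $\On{n^2 m^2 \log(m)}$ yields the claimed bound $\On{n^2 m^2 \log(m) + \nicefrac{\ln^2(n)}{\epsilon^2}}$.

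The main obstacle is making the cost of \cref{algo:klmedian_approx} precise: that single call is solely responsible for the $n^2 m^2 \log(m)$ term, so one must verify both that it runs in $\On{n^2 m^2 \log(m)}$ and that invoking it with the very small accuracy parameter $\nicefrac{1}{(3kn)}$ introduces no further polynomial dependence on $n$. A subtler point is the implementation of the sampling step: a naive per-draw scan of $\Tau$ would cost $\On{n\ell}$ and destroy the $\nicefrac{\ln^2(n)}{\epsilon^2}$ form, which is why the cumulative-distribution-plus-binary-search realisation (with its one-time $\On{n}$ preprocessing) is the one that matches the stated running time.
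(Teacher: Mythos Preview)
Your proposal is correct and follows essentially the same approach as the paper: both analyses split the cost into \textsc{compute-psi} (dominated by the call to \cref{algo:klmedian_approx}, whose $\On{n^2 m^2 \log(m)}$ bound is deferred to \cref{theo:median_approx_running-time}) and the sampling phase (cumulative distribution plus binary search, giving $\On{\ell \log n} = \On{\nicefrac{\ln^2(n)}{\epsilon^2}}$). Your explicit remark that the accuracy parameter $\gamma = \nicefrac{1}{(3kn)}$ must not inflate the approximation algorithm's running time, and your warning against a naive $\On{n\ell}$ sampling scan, are useful sanity checks that the paper leaves implicit.
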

\begin{proof}
	At first, we analyze the running-time of the function \textsc{compute-psi} in \cref{func:compute_psi} of \cref{algo:klmedian_coreset}. Let $\hat{C} \eqdef \{ \hat{c}_1, \dots, \hat{c}_k \}$ be the center-set returned by \cref{algo:klmedian_approx}, which has running-time $\On{n^2 \cdot m^2 \log(m)}$ (this will be shown in \cref{theo:median_approx_running-time}). To compute $U_j$ and $m_j$, for $j = 1, \dots, k$, we compute the Fréchet distances between every center and every $\tau \in \Tau$ and store them in a two-dimensional array, such that they can be accessed in constant time. This takes time $\On{n \cdot m^2 \log(m)}$. We use these values to compute $\Phi$ and eventually to compute $\psi$, which can then be done in time $\On{n}$. Thus, all in all, the function \textsc{compute-psi} has running-time $\On{n^2 \cdot m^2 \log(m)}$. We assume that the values of $\psi$ are stored in an array, such that \cref{algo:klmedian_coreset} can access them in constant time. To be able to sample from $\psi$ we store, for $i = 1, \dots, n$, the cumulative probabilities that $\tau_1, \dots, \tau_i$ occur, in $A[i]$ which is a field of an array $A[0, \dots, n+1]$. We set $A[0] \eqdef 0$ and $A[n+1] \eqdef 1$. For every curve that shall be sampled we sample a real number $a$ uniformly from $[0,1]$. We assume this can be done in constant time. Then we use binary-search to find the $A[i]$, for which $A[i-1] \leq a$ and $A[i] \geq a$ and then return $\tau_i$. Thus, every $\tau \in \Tau$ is sampled with probability $\psi(\tau)$. 
	
	In this way, the sampling-step in \cref{algo:klmedian_coreset_line_sampling} can be done in time $\On{\log(n) \cdot \frac{k^2 \ln(n)}{\epsilon^2}} = \On{\frac{\ln^2(n)}{\epsilon^2}}$ and the weighting-step in $\On{\frac{\ln(n)}{\epsilon^2}}$. All in all \cref{algo:klmedian_coreset} has running-time $\On{n^2 \cdot m^2 \log(m) + \frac{\ln^2(n)}{\epsilon^2}}$.
\end{proof}
\subsection{A Constant-Factor Approximation Algorithm}
\label{subsec:median_approx}
We introduce \cref{algo:klmedian_approx}, the constant-factor approximation algorithm that is used in \cref{algo:klmedian_coreset}. But first we introduce \cref{algo:klcenter_approx_mod}, which is used in \cref{algo:klmedian_approx} for the purpose of reducing the running-time of \cref{algo:klmedian_approx} to a polynomial in $n$ and $m$. (Otherwise the algorithm has running-time exponential in $n$.)

\cref{algo:klcenter_approx_mod} computes a $3$-approximate solution to the \klcenter objective and also to a modified version of the \klcenter objective, where the center-set stems from $\Tau$ instead of $\eqcfre{l}$. The algorithm is a marginally modified version of \cref{algo:klcenter_approx}, which is presented in \cite[Section 7.2]{approx_k_l_center} and which we already know from \cref{subsec:center_approx}. It originates from \citet{gonzalez} algorithm.

\begin{algorithm}
	\caption{Compute 3-Approximate Solution to the $k$-\textsc{center} Objective}\label{algo:klcenter_approx_mod}
	\begin{algorithmic}[1]
		\Procedure{k-center-approx}{$\Tau$}\label{proc:k_center_approx}
		\State $C \gets \emptyset$
		\State $c_1 \gets$ arbitrary $\tau \in \Tau$
		\State $C \gets C \cup \{c_1\}$
		\For{$i=2, \dots, k$}
		\State $c_i \gets \argmax\limits_{\tau \in \Tau} \min\limits_{j \in \{1, \dots, i-1\}} \frechet{\tau}{c_j}$
		\State $C \gets C \cup \{ c_i \}$
		\EndFor
		\State \textbf{return} $C$
		\EndProcedure
	\end{algorithmic}
\end{algorithm}

\cref{algo:klcenter_approx_mod} works as follows: At first it picks an arbitrary curve from $\Tau$ as the first center. In the subsequent $k-1$ steps the algorithm picks a curve that maximizes the minimal distance of the curve to a nearest center of the current center-set as next center. Afterwards it returns the resulting set.

The following theorem states the correctness and running-time of \cref{algo:klcenter_approx}:
\begin{theorem}[{\citet[Theorem 20]{approx_k_l_center}}]
	\label{theo:approx_klcenter_raw}
	Given a set of polygonal curves $\Tau \eqdef \{ \tau_1, \dots, \tau_n \} \subset \eqcfre{m}$ and $k,l \in \mathbb{N}_{>0}$, \cref{algo:klcenter_approx} computes a $(c+2)$-approximation to the \klcenter objective in time $\On{kn \cdot lm \log(l+m)+k\cdot T_l(m)}$, where $T_l(m)$ is the running-time to compute a $c$-approximate $l$-simplification of a polygonal curve of complexity at most $m$.
\end{theorem}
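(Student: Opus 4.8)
The plan is to adapt the classical analysis of \citet{gonzalez}' greedy farthest-point heuristic, which yields a $2$-approximation for the $k$-\textsc{Center} problem when centers must be chosen from the input, and to pay for the extra error introduced by the $l$-simplification step. Fix an optimal solution $C^\star \eqdef \{c^\star_1, \dots, c^\star_k\} \subset \eqcfre{l}$ with $\cost{\Tau}{C^\star} = \optcost{\Tau}$ and abbreviate $O \eqdef \optcost{\Tau}$. The key preliminary observation is that every curve $c_i$ picked by \cref{algo:klcenter_approx} lies in $\Tau$, hence has some optimal center within Fréchet distance $O$; as that optimal center has complexity at most $l$, the minimal Fréchet distance from $c_i$ to any curve of complexity at most $l$ is at most $O$, so the $c$-approximate $l$-simplification satisfies $\frechet{c_i}{\hat{c}_i} \leq c \cdot O$. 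Note also that the returned set $\hat{C} \eqdef \{\hat{c}_1, \dots, \hat{c}_k\}$ is feasible, since each $\hat{c}_i$ has complexity at most $l$.

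For the approximation guarantee, let $R \eqdef \cost{\Tau}{\hat{C}}$ and conceptually run one further iteration, picking $c_{k+1} \eqdef \argmax_{\tau \in \Tau} \min_{j \in \{1,\dots,k\}} \frechet{\tau}{\hat{c}_j}$, so that $R = \min_{j \leq k}\frechet{c_{k+1}}{\hat{c}_j}$. For $i \in \{2, \dots, k+1\}$ put $d_i \eqdef \min_{j < i}\frechet{c_i}{\hat{c}_j} = \max_{\tau \in \Tau}\min_{j<i}\frechet{\tau}{\hat{c}_j}$, the value at which $c_i$ was selected; since the minimum over a growing index set can only decrease, $d_2 \geq d_3 \geq \dots \geq d_{k+1} = R$. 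Thus for any $1 \leq i < i' \leq k+1$ we get $\frechet{c_{i'}}{\hat{c}_i} \geq d_{i'} \geq R$, and combining with $\frechet{c_i}{\hat{c}_i} \leq c\cdot O$ via the triangle inequality gives $\frechet{c_i}{c_{i'}} \geq R - c\cdot O$. Now assume for contradiction that $R > (c+2)\cdot O$; then the $k+1$ input curves $c_1, \dots, c_{k+1}$ are pairwise more than $2O$ apart under $\pfrechet$. By pigeonhole two of them, say $c_a$ and $c_b$, share a nearest optimal center $c^\star_j$, whence $\frechet{c_a}{c_b} \leq \frechet{c_a}{c^\star_j} + \frechet{c^\star_j}{c_b} \leq 2O$, a contradiction. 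Hence $\cost{\Tau}{\hat{C}} = R \leq (c+2)\cdot O$, i.e.\ \cref{algo:klcenter_approx} is a $(c+2)$-approximation for the \klcenter objective.

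For the running time: \cref{algo:klcenter_approx} runs $k$ iterations, each consisting of one $c$-approximate $l$-simplification at cost $T_l(m)$ followed by, while maintaining for every $\tau \in \Tau$ the running value $\min_j \frechet{\tau}{\hat{c}_j}$, $n$ Fréchet-distance computations between a curve of complexity at most $m$ and one of complexity at most $l$. By \cref{theo:frechet_algo} in its version for curves of differing complexities, each such computation takes $\On{lm\log(l+m)}$, so an iteration costs $\On{n\cdot lm\log(l+m) + T_l(m)}$ and the total is $\On{kn\cdot lm\log(l+m) + k\cdot T_l(m)}$, as claimed.

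The only point where the proof genuinely departs from the textbook Gonzalez argument — and the mild obstacle to watch — is the interaction between selection and simplification: the greedy step measures distances to the \emph{simplified} centers $\hat{c}_j$, so the clean bound "any two selected points are $\geq R$ apart" survives only between $c_{i'}$ and $\hat{c}_i$, and one must spend an additional $c\cdot O$ to return to $c_i$ before applying pigeonhole (this is exactly what turns the factor $2$ into $c+2$). One must also confirm that the monotonicity $d_2 \geq \dots \geq d_{k+1}$ is unaffected by the simplification, which is immediate since it is purely a statement about minima over an increasing family of centers and is oblivious to whether those centers are simplified.
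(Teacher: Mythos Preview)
The paper does not prove this theorem; it is quoted verbatim as \citet[Theorem 20]{approx_k_l_center} and used as a black box. So there is no ``paper's own proof'' to compare against.

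That said, your argument is correct and is precisely the standard adaptation of \citeauthor{gonzalez}' analysis that the cited paper carries out: bound the simplification error by $\frechet{c_i}{\hat c_i}\le c\cdot\optcost{\Tau}$ via the existence of an optimal center of complexity $\le l$ within distance $\optcost{\Tau}$, observe the monotonicity $d_2\ge\dots\ge d_{k+1}=R$ of the selection values, use the triangle inequality to pass from $\frechet{c_{i'}}{\hat c_i}\ge R$ to $\frechet{c_i}{c_{i'}}\ge R-c\cdot\optcost{\Tau}$, and finish by pigeonhole on the $k$ optimal clusters. The running-time argument is also the intended one; your appeal to the $\On{lm\log(l+m)}$ variant of the Alt--Godau algorithm for curves of differing complexities is exactly what is needed, even though the paper only states the symmetric $\On{m^2\log m}$ version in \cref{theo:frechet_algo}.
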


It is clear, that \cref{algo:klcenter_approx_mod} computes a center-set that is a subset of $\Tau$ and also that it does not do any $l$-simplification-step, thus $c=1$ and $T_l(m) = 0$.

We will use the following corollary to \cref{theo:approx_klcenter_raw} throughout this section.
\begin{corollary}
	\label{coro:median_3n_approx}
	Given a set $\Tau \eqdef \{\tau_1, \dots, \tau_n\} \subset \eqcfre{m}$ of polygonal curves, \cref{algo:klcenter_approx_mod} computes a $3$-approximate solution to the discrete $k$-\textsc{center} objective, \ie \par $\min\limits_{C \subseteq \Tau, \vert C \vert = k} \max\limits_{\tau \in \Tau} \frechet{\tau}{\nearest{\tau}{C}}$, in time $\On{n \cdot m^2 \log(m)}$.
\end{corollary}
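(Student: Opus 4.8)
The plan is to obtain \cref{coro:median_3n_approx} as an almost immediate specialization of \cref{theo:approx_klcenter_raw}, followed by a short argument transferring the guarantee from the $(k,l)$-\textsc{Center} objective to its discrete variant. First I would note that \cref{algo:klcenter_approx_mod} is literally \cref{algo:klcenter_approx} in which the $l$-simplification step is replaced by the identity map and the parameter $l$ is set to $m$. Since every input curve has complexity at most $m$, the identity is a $1$-approximate $m$-simplification of each $\tau \in \Tau$, so \cref{theo:approx_klcenter_raw} applies with $c = 1$ and $T_l(m) = 0$. It then yields that \cref{algo:klcenter_approx_mod} returns a set $C \subseteq \Tau$ with $\vert C \vert = k$ such that $\cost{\Tau}{C} \leq (c+2)\cdot \opt_{(k,m)}(\Tau) = 3\cdot \opt_{(k,m)}(\Tau)$, where $\opt_{(k,m)}(\Tau)$ denotes the optimum of \cref{def:klcenter} with $l = m$.

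Next I would pass from the $(k,m)$-\textsc{Center} objective to the discrete $k$-\textsc{Center} objective $\opt_{\mathrm{disc}}(\Tau) \eqdef \min_{C \subseteq \Tau,\ \vert C\vert = k} \max_{\tau \in \Tau} \frechet{\tau}{\nearest{\tau}{C}}$. Because $\Tau \subseteq \eqcfre{m}$, every feasible solution for the discrete objective is also a feasible solution for the $(k,m)$-\textsc{Center} objective, hence $\opt_{(k,m)}(\Tau) \leq \opt_{\mathrm{disc}}(\Tau)$. Combining this with the bound from the previous paragraph gives $\cost{\Tau}{C} \leq 3\cdot \opt_{\mathrm{disc}}(\Tau)$, and since the returned $C$ is a subset of $\Tau$ of size $k$ it is itself a feasible discrete solution; therefore $C$ is a $3$-approximate solution to the discrete $k$-\textsc{Center} objective, as claimed.

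Finally, for the running time I would substitute $l = m$, $c = 1$, $T_l(m) = 0$ into the time bound of \cref{theo:approx_klcenter_raw}: it becomes $\On{kn\cdot m\cdot m\log(2m)} = \On{kn\cdot m^2\log(m)}$ (using that curves have complexity at least $2$, so $\log(2m) = \On{\log(m)}$), and since $k$ is a fixed constant this is $\On{n\cdot m^2\log(m)}$. The only step needing any thought is the reduction between the two versions of the objective — one must use the inequality $\opt_{(k,m)}(\Tau) \leq \opt_{\mathrm{disc}}(\Tau)$ in the correct direction so that a $3$-approximation for the $(k,m)$-\textsc{Center} optimum certifies a $3$-approximation for the discrete optimum; everything else is a direct plug-in to an already proven theorem.
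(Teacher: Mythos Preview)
Your proposal is correct and follows essentially the same route as the paper: invoke \cref{theo:approx_klcenter_raw} with $l=m$, $c=1$, $T_l(m)=0$ (identity simplification), then use $\Tau \subset \eqcfre{m}$ to pass from the $(k,m)$-\textsc{Center} optimum to the discrete optimum. Your running-time derivation is slightly more explicit than the paper's, but the argument is otherwise identical.
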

\begin{proof}
	Let $\hat{C}$ be the center-set returned by \cref{algo:klcenter_approx_mod}. From \cref{theo:approx_klcenter_raw} we know that \[ \min_{C \subset \eqcfre{l}, \vert C \vert = k} \max_{\tau \in \Tau} \frechet{\tau}{\nearest{\tau}{C}} \leq \max_{\tau \in \Tau} \frechet{\tau}{\nearest{\tau}{\hat{C}}} \leq (c+2) \cdot \min_{C \subset \eqcfre{l}, \vert C \vert = k} \max_{\tau \in \Tau} \frechet{\tau}{\nearest{\tau}{C}}. \] Because \cref{algo:klcenter_approx_mod} directly adds curves that are picked from $\Tau$ to the center-set, instead of an $\el$-simplification of these, we have that $c = 1$ and $\el = m$, hence:
	\begin{align*}
	\min_{C \subset \eqcfre{m}, \vert C \vert = k} \max_{\tau \in \Tau} \frechet{\tau}{\nearest{\tau}{C}} \leq{} & \min_{C \subseteq \Tau, \vert C \vert = k} \max_{\tau \in \Tau} \frechet{\tau}{\nearest{\tau}{C}} \leq \max_{\tau \in \Tau} \frechet{\tau}{\nearest{\tau}{\hat{C}}} \\
	\leq{} & 3 \cdot \min_{C \subset \eqcfre{m}, \vert C \vert = k} \max_{\tau \in \Tau} \frechet{\tau}{\nearest{\tau}{C}} \\
	\leq{} & 3 \cdot \min_{C \subseteq \Tau, \vert C \vert = k} \max_{\tau \in \Tau} \frechet{\tau}{\nearest{\tau}{C}}
	\end{align*}
	These inequalities hold because $\Tau \subset \eqcfre{m}$.
	
	This proves the approximation-factor. The running-time follows from the fact, that we do not use \el-simplifications.
\end{proof}

The following proposition is also crucial for \cref{algo:klmedian_approx} to have running-time polynomial in $n$ and $m$. It states, that the solution of \cref{algo:klcenter_approx_mod} is a $(3n)$-approximate solution for the \klmedian objective. We will use such a solution as initial guess for \cref{algo:klmedian_approx}.
\begin{proposition}
	\label{prop:center_median_rel}
	Let $\Tau \eqdef \{ \tau_1, \dots, \tau_n \} \subset \eqcfre{m}$ be a set of polygonal curves and let $\approxcost{\Tau}$ be a $3$-approximate solution to the discrete $k$-\textsc{center} objective, \ie \par $\min\limits_{C \subseteq \Tau, \vert C \vert = k} \max\limits_{\tau \in \Tau} \frechet{\tau}{\nearest{\tau}{C}}$, with center-set $\hat{C}$. Then $\sum\limits_{\tau \in \Tau} \frechet{\tau}{\nearest{\tau}{\hat{C}}}$ is a $(3\cdot n)$-approximate solution to the \klmedian objective.
\end{proposition}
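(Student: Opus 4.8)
The plan is to chain three elementary inequalities relating the discrete $k$-\textsc{center} cost of $\hat{C}$, the optimal discrete $k$-\textsc{center} cost, and the optimal \klmedian cost. Write $\opt_{\mathrm{ce}} \eqdef \min_{C \subseteq \Tau, \vert C \vert = k} \max_{\tau \in \Tau} \frechet{\tau}{\nearest{\tau}{C}}$ for the optimal discrete $k$-\textsc{center} value and $\opt_{\mathrm{me}} \eqdef \min_{C \subseteq \Tau, \vert C \vert = k} \sum_{\tau \in \Tau} \frechet{\tau}{\nearest{\tau}{C}}$ for the optimal \klmedian value (\conferre \cref{def:klmedian}; note both minima range over the same family of candidate center-sets, namely size-$k$ subsets of $\Tau$). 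What we must show is $\sum_{\tau \in \Tau} \frechet{\tau}{\nearest{\tau}{\hat{C}}} \leq 3n \cdot \opt_{\mathrm{me}}$.

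First I would use the hypothesis: since $\approxcost{\Tau}$ is a $3$-approximate solution to the discrete $k$-\textsc{center} objective realised by $\hat{C}$, we have $\max_{\tau \in \Tau} \frechet{\tau}{\nearest{\tau}{\hat{C}}} = \approxcost{\Tau} \leq 3 \cdot \opt_{\mathrm{ce}}$. Second, bounding a sum of $n$ non-negative terms by $n$ times their maximum gives $\sum_{\tau \in \Tau} \frechet{\tau}{\nearest{\tau}{\hat{C}}} \leq n \cdot \max_{\tau \in \Tau} \frechet{\tau}{\nearest{\tau}{\hat{C}}} \leq 3n \cdot \opt_{\mathrm{ce}}$. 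Third, I would observe $\opt_{\mathrm{ce}} \leq \opt_{\mathrm{me}}$: let $C^\ast$ be a center-set attaining $\opt_{\mathrm{me}}$; then $\opt_{\mathrm{ce}} \leq \max_{\tau \in \Tau} \frechet{\tau}{\nearest{\tau}{C^\ast}} \leq \sum_{\tau \in \Tau} \frechet{\tau}{\nearest{\tau}{C^\ast}} = \opt_{\mathrm{me}}$, again because all summands are non-negative. Combining the three displays yields $\sum_{\tau \in \Tau} \frechet{\tau}{\nearest{\tau}{\hat{C}}} \leq 3n \cdot \opt_{\mathrm{me}}$, which is exactly the claim.

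There is essentially no hard step here; the proof is a short sequence of monotonicity observations. The only point that deserves a sentence of care is that the two optimisation problems compared in the last inequality genuinely range over the \emph{same} ground set of candidate solutions (size-$k$ subsets of $\Tau$), so that ``$\max \le \sum$ for the median-optimal $C^\ast$'' is legitimate; this is immediate from \cref{def:klmedian} and the statement of \cref{coro:median_3n_approx}. One might also remark that the bound is tight up to constants in the worst case (a single far outlier), which is why the crude factor $n$ cannot be avoided for this initial guess and why \cref{algo:klmedian_approx} subsequently refines it.
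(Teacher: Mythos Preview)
Your proof is correct and follows essentially the same approach as the paper: the paper also chains the inequalities $\sum \leq n\cdot\max$, the $3$-approximation guarantee, and $\opt_{\mathrm{ce}}\leq\opt_{\mathrm{me}}$ via $\max\leq\sum$ on the median-optimal center-set. The only cosmetic difference is that the paper explicitly records the trivial lower bound $\opt_{\mathrm{me}}\leq\sum_{\tau\in\Tau}\frechet{\tau}{\nearest{\tau}{\hat{C}}}$ as well, which you omit since it follows immediately from $\hat{C}$ being a feasible center-set.
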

\begin{proof}
	Let \[C^\ast_m \eqdef \argmin_{C \subseteq \Tau, \vert C \vert =k} \sum_{\tau \in \Tau} \frechet{\tau}{\nearest{\tau}{C}}\] denote an optimal center-set for the \klmedian objective and \[C^\ast_c \eqdef \argmin_{C \subseteq \Tau, \vert C \vert =k} \max_{\tau \in \Tau} \frechet{\tau}{\nearest{\tau}{C}}\] denote an optimal center-set for the $k$-\textsc{center} objective (the center-set is a subset of $\Tau$ instead of $\eqcfre{l}$).
	By the definition of $C^\ast_m$ we have: \[ \sum_{\tau \in \Tau} \frechet{\tau}{\nearest{\tau}{C^\ast_m}} \leq \sum_{\tau \in \Tau} \frechet{\tau}{\nearest{\tau}{C^\ast_c}} \leq n \cdot \max_{\tau \in \Tau} \frechet{\tau}{\nearest{\tau}{C^\ast_c}} \leq n \cdot \max_{\tau \in \Tau} \frechet{\tau}{\nearest{\tau}{\hat{C}}} \] Similarly by the definition of $C^\ast_c$: \[ \frac{1}{3} \max_{\tau \in \Tau} \frechet{\tau}{\nearest{\tau}{\hat{C}}} \leq \max_{\tau \in \Tau} \frechet{\tau}{\nearest{\tau}{C^\ast_c}} \leq \max_{\tau \in \Tau} \frechet{\tau}{\nearest{\tau}{C^\ast_m}} \leq \sum_{\tau \in \Tau} \frechet{\tau}{\nearest{\tau}{C^\ast_m}} \]
	This yields to:
	\begin{align*}
	\frac{1}{3 \cdot n} \sum_{\tau \in \Tau} \frechet{\tau}{\nearest{\tau}{C^\ast_m}} \leq{} & \frac{1}{3 \cdot n} \sum_{\tau \in \Tau} \frechet{\tau}{\nearest{\tau}{\hat{C}}} \leq \frac{1}{3} \max_{\tau \in \Tau} \frechet{\tau}{\nearest{\tau}{\hat{C}}} \\
	\leq{} & \max_{\tau \in \Tau} \frechet{\tau}{\nearest{\tau}{C^\ast_c}} \leq \sum_{\tau \in \Tau} \frechet{\tau}{\nearest{\tau}{C^\ast_m}}
	\end{align*}
	We obtain $\sum\limits_{\tau \in \Tau} \frechet{\tau}{\nearest{\tau}{C^\ast_m}} \leq \sum\limits_{\tau \in \Tau} \frechet{\tau}{\nearest{\tau}{\hat{C}}} \leq n \cdot 3 \cdot \sum\limits_{\tau \in \Tau} \frechet{\tau}{\nearest{\tau}{C^\ast_m}}$, which finishes the proof.
\end{proof}
Now we are ready to state the constant-factor approximation algorithm for the \klmedian objective, that is used in \cref{algo:klmedian_coreset}. \cref{algo:klmedian_approx} as well as the following proofs of correctness and running-time are adapted from \citet[Section 4.3]{har_peled_geo_approx} and originate from \citet{arya_local_search}.

\begin{algorithm}
	\caption{Compute Approximate Solution to the \klmedian Objective}\label{algo:klmedian_approx}
	\begin{algorithmic}[1]
		\Procedure{k-median-approx}{$\Tau$,$\gamma$}\label{proc:kmedian_approx}
		\State $C \gets $\textsc{k-center-approx}$(\Tau)$ \Comment{\cref{algo:klcenter_approx_mod}}
		\State $\approxcost{\Tau} \gets \cost{\Tau}{C}$
		\While{$\exists c \in C \exists \tau \in \Tau\setminus C: \cost{\Tau}{C} - \gamma \approxcost{\Tau} > \cost{\Tau}{(C \cup \{ \tau \})\setminus \{c\}}$}
		\State $C \gets (C \cup \{\tau\})\setminus\{c\}$
		\EndWhile
		\State \textbf{return} $C$
		\EndProcedure
	\end{algorithmic}
\end{algorithm}

\cref{algo:klmedian_approx} works as follows: It runs \cref{algo:klcenter_approx_mod} on $\Tau$ and uses the resulting solution as initial guess. Then it tries to locally improve the current solution by \emph{swap-operations}, \ie it looks for a $\tau \in \Tau \setminus C$ and a $c \in C$ such that $\cost{\Tau}{C} - \gamma \approxcost{\Tau} > \cost{\Tau}{(C \cup \{ \tau \})\setminus \{c\}}$. If such a $c$ and $\tau$ exist, it swaps $c$ and $\tau$, \ie $C$ becomes $\left(C \setminus \{c\}\right) \cup \{\tau\}$. Here $\approxcost{\Tau}$ is the value of the approximate solution returned by \cref{algo:klcenter_approx_mod} and $\gamma$ is a scaling-factor, which is used to obtain a fraction of a lower bound of $\optcost{\Tau}$. This prevents \cref{algo:klmedian_approx} from doing a number of steps, that is greater than any polynomial in $n$ and $m$. The algorithm returns $C$, when no more swap-operation can be done.

\subsubsection{Correctness Analysis of \cref{algo:klmedian_approx}}
\begin{theorem}[\citet{har_peled_geo_approx}]
	\label{theo:klmedian_approx_corr}
	Given a set $\Tau \eqdef \{ \tau_1, \dots, \tau_n \} \subset \eqcfre{m}$ of polygonal curves and a rational number $\gamma \eqdef \frac{1}{k\cdot 3n}$, \cref{algo:klmedian_approx} returns a $6$-approximate solution to the \klmedian objective.
\end{theorem}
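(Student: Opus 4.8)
The plan is to run the classical single-swap local-search analysis for metric $k$-median --- which uses only that $\pfrechet$ satisfies the triangle inequality --- and then charge the error incurred by halting at an \emph{approximate} local optimum to one extra copy of $\optcost{\Tau}$, exploiting the precise value $\gamma \eqdef \frac{1}{k \cdot 3n}$. First I would set up notation: let $C$ be the center-set returned by \cref{algo:klmedian_approx} and let $C^\ast \eqdef \{ c^\ast_1, \dots, c^\ast_k \} \subseteq \Tau$ be an optimal center-set for the \klmedian objective, so that $\cost{\Tau}{C^\ast} = \optcost{\Tau}$; we may assume $n > k$ so that $\vert C \vert = k$, the statement being trivial otherwise. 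Since $\cost{\Tau}{\cdot}$ only decreases along the while-loop, $\cost{\Tau}{C} \le \approxcost{\Tau}$ at every iteration, so the slack $\gamma \approxcost{\Tau}$ is a fixed non-negative quantity, and the loop's exit condition states exactly that $C$ is a local optimum up to this slack: for all $c \in C$ and all $\tau \in \Tau \setminus C$,
\[
	\cost{\Tau}{(C \cup \{\tau\}) \setminus \{c\}} \ \ge\ \cost{\Tau}{C} - \gamma \approxcost{\Tau}.
\]

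Next I would invoke the standard swap-pair construction of Arya et al.~\cite{arya_local_search}, in the form presented in \cite[Section~4.3]{har_peled_geo_approx}: assign every optimal center $c^\ast_j$ to a center of $C$ closest to it, classify each center of $C$ by how many optimal centers it captures, and choose $k$ pairs $(c, c^\ast)$ so that every optimal center appears in exactly one pair, every center of $C$ appears in at most two pairs, and a center of $C$ capturing exactly one optimal center is paired only with that one. For a pair $(c, c^\ast)$ one upper-bounds $\cost{\Tau}{(C \cup \{c^\ast\}) \setminus \{c\}}$ by the cost of the explicit reassignment routing each $\tau \in \cluster{\Tau}{C^\ast}{c^\ast}$ to $c^\ast$ and each remaining $\tau$ with $\nearest{\tau}{C} = c$ to the center of $C$ nearest to $\nearest{\tau}{C^\ast}$, then estimates these distances by the triangle inequality; the capture properties guarantee that no curve is charged more than a bounded number of times. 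Summing the displayed local-optimality inequality over the $k$ pairs and rearranging yields the classical bound
\[
	\cost{\Tau}{C} \ \le\ 5 \cdot \optcost{\Tau} + k\,\gamma\,\approxcost{\Tau}.
\]

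It then remains to control the additive term. With $\gamma = \frac{1}{k \cdot 3n}$ we have $k\,\gamma\,\approxcost{\Tau} = \frac{\approxcost{\Tau}}{3n}$, and $\approxcost{\Tau}$ is, by line~3 of \cref{algo:klmedian_approx}, the \klmedian cost of the center-set produced by \cref{algo:klcenter_approx_mod}, which is a $(3n)$-approximate \klmedian solution by \cref{prop:center_median_rel} (together with \cref{coro:median_3n_approx}); hence $\approxcost{\Tau} \le 3n \cdot \optcost{\Tau}$ and $k\,\gamma\,\approxcost{\Tau} \le \optcost{\Tau}$. Combining, $\cost{\Tau}{C} \le 6 \cdot \optcost{\Tau}$, which is the claim.

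The main obstacle is the charging step in the second paragraph: exhibiting the swap pairs with the capture properties and verifying that the per-pair reassignment costs telescope to exactly $5\,\optcost{\Tau}$ without any curve being overcharged --- this is the technical heart of the local-search analysis and the only place curve-specific care is needed (though in fact only the metric property of $\pfrechet$ enters). The remaining ingredients --- the monotonicity of the cost along the loop and the arithmetic with $\gamma$ via \cref{prop:center_median_rel} --- are routine.
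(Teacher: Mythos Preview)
Your proposal is correct and follows essentially the same approach as the paper: both run the Arya et al.\ single-swap local-search analysis (the paper phrases the capture classification explicitly as \emph{drifters}, \emph{anchors}, \emph{tyrants} and proves the supporting lemmas, while you invoke the swap-pair construction at a higher level), obtain $\cost{\Tau}{C} \le 5\,\optcost{\Tau} + k\gamma\,\approxcost{\Tau}$, and then absorb the slack term via \cref{prop:center_median_rel} exactly as you do.
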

\begin{proof}
	It is clear that \cref{algo:klmedian_approx} terminates, because there are at most $n^k$ possible center-sets and the algorithm has to do a proper improvement in every swap-operation. Therefore, the algorithm will eventually get stuck with a solution that can not be improved.
	
	Let $\hat{C}$ be a $(3n)$-approximate solution to the \klmedian objective returned by \cref{algo:klcenter_approx_mod}, with $\approxcost{\Tau} \eqdef \cost{\Tau}{\hat{C}}$, \conferre \cref{coro:median_3n_approx,prop:center_median_rel}. Let $C$ be the center-set returned by \cref{algo:klmedian_approx}, which we call the local center-set and let $C^\ast \subseteq \Tau$, with $\vert C^\ast \vert = k$, be an optimal center-set, \ie $\cost{\Tau}{C^\ast} = \optcost{\Tau}$. For $C \subseteq \Tau$ and $c,c^\prime \in \Tau$ let \[\swap{C}{c}{c^\prime} \eqdef (C \cup \{c^\prime\})\setminus \{c\}\] be the center-set obtained by a swap-operation of \cref{algo:klmedian_approx}. Because the algorithm terminated, we know:
	\begin{equation*}
	\forall c \in C \forall c^\prime \in \Tau\setminus C: \cost{\Tau}{C} \leq \cost{\Tau}{\swap{C}{c}{c^\prime}} + \gamma \approxcost{\Tau}. \label{eq:algo_prop} \tag{I}
	\end{equation*}
	For $c^\ast \in C^\ast$ the optimal cost of a cluster $\cluster{\Tau}{C^\ast}{c^\ast}$ is denoted by \[\mathfrak{o}(c^\ast) \eqdef \sum_{\tau \in \cluster{\Tau}{C^\ast}{c^\ast}} \frechet{\tau}{c^\ast}\] and we denote the cost with respect to $C$ by \[\mathfrak{l}(c^\ast) \eqdef \sum_{\tau \in \cluster{\Tau}{C^\ast}{c^\ast}} \frechet{\tau}{\nearest{\tau}{C}}.\] 
	\cref{eq:algo_prop} is then equivalent to:
	
	$\forall c \in C \forall c^\prime \in \Tau \setminus C:$
	\begin{equation}
	\sum_{c^\ast \in C^\ast} \mathfrak{l}(c^\ast) \leq \gamma \approxcost{\Tau} + \sum_{c^\ast \in C^\ast} \sum_{\tau \in \cluster{\Tau}{C^\ast}{c^\ast}} \frechet{\tau}{\nearest{\tau}{\swap{C}{c}{c^\prime}}} \label{eq:localopt} \tag{II}
	\end{equation}
	For $c^\ast \in C^\ast$ and $c \in C$ let \[\mathcal{C}^{+}(c^{\ast}, c) \eqdef \cluster{\Tau}{C^\ast}{c^{\ast}} \cap \cluster{\Tau}{C}{c}\] denote the intersection of an optimal cluster with a local cluster, also let \[\mathcal{C}^{-}(c^{\ast}, c) \eqdef \cluster{\Tau}{C^\ast}{c^{\ast}} \setminus \cluster{\Tau}{C}{c}\] denote the difference of an optimal cluster with respect to a local cluster. Let $c^\ast \in C^\ast$ be arbitrary but fixed. \cref{eq:localopt} yields:
	
	$\forall c \in C \forall c^\prime \in \Tau \setminus C:$
	\begin{align*}
	\mathfrak{l}(c^\ast) + \sum_{c^{\ast}_o \in C^\ast\setminus \{c^\ast\}} \mathfrak{l}(c^{\ast}_o) \leq{} & \gamma \approxcost{\Tau} + \sum_{c^{\ast}_o \in C^\ast} \sum_{\tau \in \cluster{\Tau}{C^\ast}{c^{\ast}_o}} \frechet{\tau}{\nearest{\tau}{\swap{C}{c}{c^\prime}}} \\
	\leq{} & \gamma \approxcost{\Tau} + \sum_{\tau \in \cluster{\Tau}{C^\ast}{c^\ast}} \frechet{\tau}{\nearest{\tau}{\swap{C}{c}{c^\prime}}} \\
	& + \sum_{c^{\ast}_o \in C^\ast\setminus\{c^\ast\}} \sum_{\tau \in \mathcal{C}^{-}(c^{\ast}_o, c)} \frechet{\tau}{\nearest{\tau}{C}} \\
	& + \sum_{c^{\ast}_o \in C^\ast\setminus\{c^\ast\}} \sum_{\tau \in \mathcal{C}^{+}(c^{\ast}_o, c)} \frechet{\tau}{\nearest{\tau}{\swap{C}{c}{c^\prime}}} \\
	={} & \gamma \approxcost{\Tau} + \sum_{\tau \in \cluster{\Tau}{C^\ast}{c^\ast}} \frechet{\tau}{\nearest{\tau}{\swap{C}{c}{c^\prime}}} \\
	& + \sum_{c^{\ast}_o \in C^\ast\setminus\{c^\ast\}} \sum_{\tau \in \mathcal{C}^{+}(c^{\ast}_o, c)} \left[\frechet{\tau}{\nearest{\tau}{\swap{C}{c}{c^\prime}}} - \frechet{\tau}{\nearest{\tau}{C}}\right] \\
	& + \sum_{c^{\ast}_o \in C^\ast\setminus\{c^\ast\}} \mathfrak{l}(c^{\ast}_o) \tag{III} \label{eq:clusterchange}
	\end{align*}
	In words the obtained bound expresses that the following happens if we make a swap: All curves $\tau \in \Tau \setminus \cluster{\Tau}{C}{c}$ remain unaffected while the $\tau \in \cluster{\Tau}{C}{c}$ do now contribute cost that is equal to the distance to a nearest center in $\swap{C}{c}{c^\prime}$. By the structure of \cref{eq:clusterchange}, \ie we have $\sum_{c^{\ast}_o \in C^\ast\setminus \{c^\ast\}} \mathfrak{l}(c^{\ast}_o)$ on the left-hand side and on the right-hand side, we immediately obtain the following inequalities through equivalence:
	
	$\forall c^\ast \in C^\ast \forall c \in C \forall c^\prime \in \Tau \setminus C:$
	\begin{align*}
	\mathfrak{l}(c^\ast) \leq{} & \gamma \approxcost{\Tau} + \sum_{\tau \in \cluster{\Tau}{C^\ast}{c^\ast}} \frechet{\tau}{\nearest{\tau}{\swap{C}{c}{c^\prime}}} \\
	& + \sum_{c^{\ast}_o \in C^\ast\setminus\{c^\ast\}} \sum_{\tau \in \mathcal{C}^{+}(c^{\ast}_o, c)} \left[\frechet{\tau}{\nearest{\tau}{\swap{C}{c}{c^\prime}}} - \frechet{\tau}{\nearest{\tau}{C}}\right] \\
	\leq{} & \gamma \approxcost{\Tau} + \sum_{\tau \in \cluster{\Tau}{C^\ast}{c^\ast}} \frechet{\tau}{\nearest{\tau}{\swap{C}{c}{c^\prime}}} \\
	& + \sum_{c^{\ast}_o \in C^\ast\setminus\{c^\ast\}} \sum_{\tau \in \mathcal{C}^{+}(c^{\ast}_o, c)} \left[\frechet{\tau}{\nearest{\nearest{\tau}{C^\ast}}{\swap{C}{c}{c^\prime}}} - \frechet{\tau}{\nearest{\tau}{C}}\right] \tag{IV} \label{eq:single_opt_cluster}
	\end{align*}
	Here the last inequality holds by \cref{def:nearest_func}. Now we partition $C$ to obtain better bounds from \cref{eq:single_opt_cluster}, therefore consider the bipartite directed graph \[G_\eta \eqdef (C \cup C^\ast, E_\eta),\] where $E_\eta \eqdef \{ (c^\ast, c) \mid c^\ast \in C^\ast, c = \eta(c^\ast, C) \}$.
	We define three types of vertices of $C$:
	\begin{enumerate}
		\item Drifters $ C_d \eqdef \{ c \in C \mid \indegree{c} = 0 \}.$ That are local centers that are no nearest neighbor to any optimal center.
		\item Anchors $ C_a \eqdef \{ c \in C \mid \indegree{c} = 1 \} $. These are local centers that are nearest neighbors to exactly one optimal center.
		\item Tyrants $ C_t \eqdef \{ c \in C \mid \indegree{c} > 1 \}$. These are local centers that are nearest neighbors to multiple optimal centers.
	\end{enumerate}
	Similarly, we define two types of vertices of $C^\ast$:
	\begin{enumerate}
		\item Optimal centers $C^\ast_a \eqdef \{ c^\ast \in C^\ast \mid \forall (c^\ast, c) \in E: \indegree{c} = 1 \}$ whose nearest neighbor in $C$ is no nearest neighbor to any other $c^\ast_o \in C^\ast\setminus\{c^\ast\}$.
		\item Optimal centers $C^\ast_t \eqdef \{ c^\ast \in C^\ast \mid \forall (c^\ast, c) \in E: \indegree{c} > 1 \}$ whose nearest neighbor in $C$ is a nearest neighbor to some other $c^\ast_o \in C^\ast\setminus\{c^\ast\}$.
	\end{enumerate}
	Now we have:
	\begin{align*}
	\cost{\Tau}{C} ={} & \sum_{c^\ast \in C^\ast} \mathfrak{l}(c^\ast) = \sum_{c^\ast \in C_t^\ast} \mathfrak{l}(c^\ast) + \sum_{c^\ast \in C_a^\ast} \mathfrak{l}(c^\ast) \\
	\leq{} & \sum\limits_{c^\ast \in C^\ast} \gamma \approxcost{\Tau} + \sum\limits_{c^\ast \in C^\ast} \mathfrak{o}(c^\ast) + \sum\limits_{c \in C} \sum\limits_{c^{\ast}_o \in C^\ast} \sum\limits_{\tau \in \mathcal{C}^{+}(c^{\ast}_o, c)} 4 \frechet{\tau}{\nearest{\tau}{C^\ast}} \tag{V} \label{eq:lemmaeq} \\
	={} & \optcost{\Tau} + k \gamma \approxcost{\Tau} + \sum\limits_{c \in C} \sum\limits_{c^{\ast}_o \in C^\ast} \sum\limits_{\tau \in \mathcal{C}^{+}(c^{\ast}_o, c)} 4 \frechet{\tau}{\nearest{\tau}{C^\ast}} \\
	={} & \optcost{\Tau} + k \gamma \approxcost{\Tau} + \sum\limits_{\tau \in \Tau} 4 \frechet{\tau}{\nearest{\tau}{C^\ast}} \tag{VI} \label{eq:union} \\
	={} & 5 \optcost{\Tau} +  k \gamma \approxcost{\Tau}.
	\end{align*}
	Here \cref{eq:lemmaeq} holds, because we show in \cref{lem:tyrant_bound} that \[\sum\limits_{c^\ast \in C^\ast_t} \mathfrak{l}(c^\ast)  \leq  \sum\limits_{c^\ast \in C^\ast_t} \gamma \approxcost{\Tau} + \sum\limits_{c^\ast \in C^\ast_t} \mathfrak{o}(c^\ast) + 2 \sum\limits_{c \in C_d} \sum\limits_{c^{\ast}_o \in C^\ast} \sum\limits_{\tau \in \mathcal{C}^{+}(c^{\ast}_o, c)} 2 \frechet{\tau}{\nearest{\tau}{C^\ast}}\] holds, and we show in \cref{lem:anchor_bound} that \[\sum\limits_{c^\ast \in C^\ast_a} \mathfrak{l}(c^\ast)  \leq  \sum\limits_{c^\ast \in C^\ast_a} \gamma \approxcost{\Tau} + \sum\limits_{c^\ast \in C^\ast_a} \mathfrak{o}(c^\ast) + \sum\limits_{c \in C_a} \sum\limits_{c^{\ast}_o \in C^\ast} \sum\limits_{\tau \in \mathcal{C}^{+}(c^{\ast}_o, c)} 2 \frechet{\tau}{\nearest{\tau}{C^\ast}}\] holds. \cref{eq:union} holds, because \[ \bigcup_{c \in C} \bigcup_{c^\ast \in C^\ast} \mathcal{C}^{+}(c^{\ast}, c) = \Tau \] holds by definition, therefore $\sum\limits_{c \in C} \sum\limits_{c^{\ast}_o \in C^\ast} \sum\limits_{\tau \in \mathcal{C}^{+}(c^{\ast}_o, c)} 4 \frechet{\tau}{\nearest{\tau}{C^\ast}} = \sum\limits_{\tau \in \Tau} 4 \frechet{\tau}{\nearest{\tau}{C^\ast}}$.
	
	Now by the definition of $\gamma = \frac{1}{k\cdot 3n}$ we have that $k \gamma \approxcost{\Tau} \leq \optcost{\Tau}$ (recall that $\approxcost{\Tau} \leq 3\cdot n \cdot \optcost{\Tau}$), thus we obtain \[ \cost{\Tau}{C} \leq 6 \optcost{\Tau}. \]
\end{proof}
\paragraph{Here we prove some deferred lemmas.}
\begin{lemma}[\citet{har_peled_geo_approx}]
	\label{lem:tyrant_bound}
	The following holds: \[\sum\limits_{c^\ast \in C^\ast_t} \mathfrak{l}(c^\ast)  \leq  \sum\limits_{c^\ast \in C^\ast_t} \gamma \approxcost{\Tau} + \sum\limits_{c^\ast \in C^\ast_t} \mathfrak{o}(c^\ast) + 2 \sum\limits_{c \in C_d} \sum\limits_{c^{\ast}_o \in C^\ast} \sum\limits_{\tau \in \mathcal{C}^{+}(c^{\ast}_o, c)} 2 \frechet{\tau}{\nearest{\tau}{C^\ast}}.\]
\end{lemma}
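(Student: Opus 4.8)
The plan is to exploit the local-optimality inequality \cref{eq:single_opt_cluster}, specialised to swaps of the form $\swap{C}{c}{c^\ast}$ in which the removed local center $c$ is a \emph{drifter} and the inserted center is an optimal center $c^\ast \in C^\ast_t$. The reason for insisting that $c$ be a drifter is that, since no optimal center has a drifter as its nearest local center, deleting $c$ from $C$ does not ``orphan'' any optimal cluster: every curve whose nearest local center was some $c'\neq c$ keeps $c'$, and every $\tau \in \cluster{\Tau}{C}{c}$ can be rerouted through its optimal center without large loss.

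First I would establish the combinatorial counting that makes enough drifters available. Using $\vert C\vert = \vert C^\ast\vert = k$ and the fact that $E_\eta$ has exactly $k$ edges (one out of each optimal center), the anchors correspond bijectively to $C^\ast_a$, so $\vert C_a\vert = \vert C^\ast_a\vert$, and the tyrants absorb at least $2\vert C_t\vert$ edges, all originating in $C^\ast_t$, so $\vert C^\ast_t\vert \ge 2\vert C_t\vert$. Combining these with $\vert C_d\vert + \vert C_a\vert + \vert C_t\vert = k = \vert C^\ast_a\vert + \vert C^\ast_t\vert$ gives $\vert C_d\vert = \vert C^\ast_t\vert - \vert C_t\vert \ge \tfrac12 \vert C^\ast_t\vert$, i.e.\ $\vert C^\ast_t\vert \le 2\vert C_d\vert$. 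Hence there is an assignment $\sigma\colon C^\ast_t \to C_d$ under which every drifter is the image of at most two optimal centers of $C^\ast_t$.

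Next, for a fixed $c^\ast \in C^\ast_t$ with $c \eqdef \sigma(c^\ast) \in C_d$, I would instantiate \cref{eq:single_opt_cluster} with the swap $\swap{C}{c}{c^\ast}$ and bound its two sums. The sum over $\tau \in \cluster{\Tau}{C^\ast}{c^\ast}$ is at most $\mathfrak{o}(c^\ast)$, since $c^\ast$ itself belongs to $\swap{C}{c}{c^\ast}$. For the second sum, take $\tau \in \mathcal{C}^{+}(c^{\ast}_o, c)$ with $c^{\ast}_o \neq c^\ast$; then $\frechet{\tau}{\nearest{\tau}{C}} = \frechet{\tau}{c}$, and because $c$ is a drifter, $\nearest{c^{\ast}_o}{C} \neq c$ survives in $\swap{C}{c}{c^\ast}$. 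Two triangle inequalities give $\frechet{\tau}{\nearest{\nearest{\tau}{C^\ast}}{\swap{C}{c}{c^\ast}}} \le \frechet{\tau}{c^{\ast}_o} + \frechet{c^{\ast}_o}{\nearest{c^{\ast}_o}{C}} \le \frechet{\tau}{c^{\ast}_o} + \frechet{c^{\ast}_o}{c} \le 2\frechet{\tau}{c^{\ast}_o} + \frechet{\tau}{c}$, so each summand of the second sum is at most $2\frechet{\tau}{\nearest{\tau}{C^\ast}}$. Dropping the restriction $c^{\ast}_o \neq c^\ast$ (the extra terms are non-negative) and using $\bigcup_{c^{\ast}_o \in C^\ast}\mathcal{C}^{+}(c^{\ast}_o, c) = \cluster{\Tau}{C}{c}$, this yields $\mathfrak{l}(c^\ast) \le \gamma\approxcost{\Tau} + \mathfrak{o}(c^\ast) + 2\sum_{\tau \in \cluster{\Tau}{C}{c}}\frechet{\tau}{\nearest{\tau}{C^\ast}}$.

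Finally I would sum this inequality over all $c^\ast \in C^\ast_t$: the first two terms reproduce $\sum_{c^\ast\in C^\ast_t}\gamma\approxcost{\Tau}$ and $\sum_{c^\ast\in C^\ast_t}\mathfrak{o}(c^\ast)$, while for the third term the ``load at most two'' property of $\sigma$ gives $\sum_{c^\ast\in C^\ast_t} 2\sum_{\tau\in\cluster{\Tau}{C}{\sigma(c^\ast)}}\frechet{\tau}{\nearest{\tau}{C^\ast}} \le 4\sum_{c\in C_d}\sum_{\tau\in\cluster{\Tau}{C}{c}}\frechet{\tau}{\nearest{\tau}{C^\ast}} = 2\sum_{c\in C_d}\sum_{c^{\ast}_o\in C^\ast}\sum_{\tau\in\mathcal{C}^{+}(c^{\ast}_o,c)} 2\frechet{\tau}{\nearest{\tau}{C^\ast}}$, which is exactly the claimed bound. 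I expect the main obstacle to be the counting argument and the bookkeeping of the load-two assignment, together with handling the degenerate situation $C\cap C^\ast\neq\emptyset$ cleanly (so that the swap $\swap{C}{c}{c^\ast}$ and the in-degree definitions still make sense when an optimal center coincides with a local one); the distance estimates themselves are routine triangle-inequality manipulations once the right drifter swap has been fixed.
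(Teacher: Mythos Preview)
Your proposal is correct and follows essentially the same route as the paper: specialise \cref{eq:single_opt_cluster} to swaps that remove a drifter and insert some $c^\ast\in C^\ast_t$, bound the first sum by $\mathfrak{o}(c^\ast)$, bound each term of the second sum by $2\frechet{\tau}{\nearest{\tau}{C^\ast}}$ via the triangle inequality (the paper packages this as \cref{lem:cluster_switch}), and then sum using the counting $\vert C^\ast_t\vert \le 2\vert C_d\vert$ (the paper's \cref{lem:number_drifter}). The only cosmetic differences are that you spell out the load-two assignment $\sigma$ explicitly and perform the triangle-inequality estimate inline rather than deferring it to a separate lemma.
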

\begin{proof}
	Without loss of generality assume $\vert C^\ast_t \vert > 0$. Let $c^\ast \in C^\ast_t$ and $c \in C_d$ be arbitrary. Consider swapping $c$ and $c^\ast$. We immediately get that for all $\tau \in \Tau$ it holds that $\frechet{\tau}{\nearest{\nearest{\tau}{C^\ast}}{\swap{C}{c}{c^\ast}}} \leq \frechet{\tau}{\nearest{\nearest{\tau}{C^\ast}}{C}}$, because for any optimal center $c^\ast \in C$ it holds that $\nearest{c^\ast}{C} \neq c$.
	Combining this fact with \cref{eq:single_opt_cluster} in the proof of \cref{theo:klmedian_approx_corr} we obtain:
	\begin{align*}
	\mathfrak{l}(c^\ast) \leq{} & \gamma \approxcost{\Tau} + \mathfrak{o}(c^\ast) + \sum_{c^{\ast}_o \in C^\ast\setminus\{c^\ast\}} \sum_{\tau \in \mathcal{C}^{+}(c^{\ast}_o, c)} \left[\frechet{\tau}{\nearest{\nearest{\tau}{C^\ast}}{C}} - \frechet{\tau}{\nearest{\tau}{C}}\right] \\
	\leq{} & \gamma \approxcost{\Tau} + \mathfrak{o}(c^\ast) + \sum_{c^{\ast}_o \in C^\ast\setminus\{c^\ast\}} \sum_{\tau \in \mathcal{C}^{+}(c^{\ast}_o, c)} 2 \frechet{\tau}{\nearest{\tau}{C^\ast}} \\
	\leq{} & \gamma \approxcost{\Tau} + \mathfrak{o}(c^\ast) + \sum_{c^{\ast}_o \in C^\ast} \sum_{\tau \in \mathcal{C}^{+}(c^{\ast}_o, c)} 2 \frechet{\tau}{\nearest{\tau}{C^\ast}}
	\end{align*}
	The first inequality holds, because the curves in $\cluster{\Tau}{C^\ast}{c^\ast}$ pay at most $\mathfrak{o}(c^\ast)$ and the second inequality holds because $ \frechet{\tau}{\nearest{\nearest{\tau}{C^\ast}}{C}} - \frechet{\tau}{\nearest{\tau}{C}} \leq 2 \frechet{\tau}{\nearest{\tau}{C^\ast}}$ for every $\tau \in \Tau$, what will be shown in \cref{lem:cluster_switch}.
	
	When we consider such a swap for every $c^\ast \in C^\ast_t$ then we obtain by adding up the respective inequalities:
	\begin{align*}
	\sum_{c^\ast \in C^\ast_t} \mathfrak{l}(c^\ast) \leq \sum_{c^\ast \in C^\ast_t} \gamma \approxcost{\Tau} + \sum_{c^\ast \in C^\ast_t} \mathfrak{o}(c^\ast) + 2 \sum_{c \in C_d} \sum_{c^{\ast}_o \in C^\ast} \sum_{\tau \in \mathcal{C}^{+}(c^{\ast}_o, c)} 2 \frechet{\tau}{\nearest{\tau}{C^\ast}}
	\end{align*}
	This inequality holds, because every drifter $c \in C_d$ has to be used at most twice, what will be shown in \cref{lem:number_drifter}.
\end{proof}
\begin{lemma}[\citet{har_peled_geo_approx}]
	\label{lem:anchor_bound}
	The following holds: \[\sum\limits_{c^\ast \in C^\ast_a} \mathfrak{l}(c^\ast)  \leq  \sum\limits_{c^\ast \in C^\ast_a} \gamma \approxcost{\Tau} + \sum\limits_{c^\ast \in C^\ast_a} \mathfrak{o}(c^\ast) + \sum\limits_{c \in C_a} \sum\limits_{c^{\ast}_o \in C^\ast} \sum\limits_{\tau \in \mathcal{C}^{+}(c^{\ast}_o, c)} 2 \frechet{\tau}{\nearest{\tau}{C^\ast}}.\]
\end{lemma}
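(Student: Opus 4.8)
The plan is to mimic the structure of the proof of \cref{lem:tyrant_bound}, but using the swap-operations that are natural for anchors rather than for tyrants. Recall that an anchor $c \in C_a$ has in-degree exactly $1$ in $G_\eta$, so there is a unique optimal center $c^\ast \in C^\ast_a$ with $\eta(c^\ast, C) = c$. This gives a bijection between $C^\ast_a$ and the set of anchors that receive an edge; for each such pair $(c^\ast, c)$ the plan is to consider the swap $\swap{C}{c}{c^\ast}$ and plug it into \cref{eq:single_opt_cluster} from the proof of \cref{theo:klmedian_approx_corr}.

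First I would fix an arbitrary $c^\ast \in C^\ast_a$ together with its anchor $c = \eta(c^\ast, C)$. Inserting $c^\ast$ into the center-set covers the curves of $\cluster{\Tau}{C^\ast}{c^\ast}$ at cost at most $\mathfrak{o}(c^\ast)$, so the first summand of \cref{eq:single_opt_cluster} is bounded by $\mathfrak{o}(c^\ast)$. For the remaining term, I need to bound, for every other optimal center $c^\ast_o \in C^\ast \setminus \{c^\ast\}$ and every $\tau \in \mathcal{C}^{+}(c^\ast_o, c)$, the quantity $\frechet{\tau}{\nearest{\nearest{\tau}{C^\ast}}{\swap{C}{c}{c^\ast}}} - \frechet{\tau}{\nearest{\tau}{C}}$. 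The key point is that, because $c$ is an anchor, it is \emph{not} the nearest local center of any optimal center other than $c^\ast$; hence $\nearest{c^\ast_o}{C} \neq c$ and therefore $\nearest{c^\ast_o}{\swap{C}{c}{c^\ast}} = \nearest{c^\ast_o}{C}$. This lets me replace the nearest center in $\swap{C}{c}{c^\ast}$ by the nearest center in $C$, and then \cref{lem:cluster_switch} (which states $\frechet{\tau}{\nearest{\nearest{\tau}{C^\ast}}{C}} - \frechet{\tau}{\nearest{\tau}{C}} \leq 2 \frechet{\tau}{\nearest{\tau}{C^\ast}}$) bounds the difference by $2 \frechet{\tau}{\nearest{\tau}{C^\ast}}$, exactly as in the tyrant case. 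Summing the resulting per-center inequality over all $c^\ast \in C^\ast_a$ yields
\[
\sum_{c^\ast \in C^\ast_a} \mathfrak{l}(c^\ast) \leq \sum_{c^\ast \in C^\ast_a} \gamma \approxcost{\Tau} + \sum_{c^\ast \in C^\ast_a} \mathfrak{o}(c^\ast) + \sum_{c^\ast \in C^\ast_a} \sum_{c^\ast_o \in C^\ast} \sum_{\tau \in \mathcal{C}^{+}(c^\ast_o, \eta(c^\ast, C))} 2 \frechet{\tau}{\nearest{\tau}{C^\ast}}.
\]
Since $c^\ast \mapsto \eta(c^\ast, C)$ is injective on $C^\ast_a$ with image contained in $C_a$, the double-indexed sum on the right is bounded by $\sum_{c \in C_a} \sum_{c^\ast_o \in C^\ast} \sum_{\tau \in \mathcal{C}^{+}(c^\ast_o, c)} 2 \frechet{\tau}{\nearest{\tau}{C^\ast}}$, which is the claimed bound.

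The main obstacle I anticipate is the bookkeeping needed to legitimately reindex the final triple sum: one must verify that each anchor $c \in C_a$ is "used" at most once (in contrast to the tyrant proof, where drifters are used at most twice, cf.\ \cref{lem:number_drifter}), and that anchors which receive no edge simply contribute nothing. This follows directly from $\indegree{c} = 1$ for anchors, so no analogue of \cref{lem:number_drifter} is required here — but I would state it explicitly to keep the passage from $\sum_{c^\ast \in C^\ast_a}$ to $\sum_{c \in C_a}$ airtight. A secondary subtlety is making sure the validity of \cref{eq:single_opt_cluster} is invoked correctly with $c^\prime \eqdef c^\ast$ ranging over $\Tau \setminus C$ (it is, since $C^\ast \subseteq \Tau$ and an optimal center equal to a local center would be a degenerate case that only helps).
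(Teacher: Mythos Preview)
Your proposal is correct and follows essentially the same approach as the paper: pair each $c^\ast \in C^\ast_a$ with its unique anchor $c = \nearest{c^\ast}{C}$, observe that $\nearest{c^\ast_o}{C} \neq c$ for every other $c^\ast_o$, invoke \cref{eq:single_opt_cluster} together with \cref{lem:cluster_switch}, and sum. The paper records the reindexing step simply as ``trivially $\vert C^\ast_a \vert = \vert C_a \vert$'', whereas you phrase it via the injectivity of $c^\ast \mapsto \nearest{c^\ast}{C}$ on $C^\ast_a$; these are the same observation, and your extra remark about the degenerate case $c^\ast \in C$ is a welcome clarification the paper leaves implicit.
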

\begin{proof}
	Let $c^\ast \in C_a^\ast$ be arbitrary but fixed and let $c \eqdef \nearest{c^\ast}{C}$ be the anchor of $c^\ast$. We know that for all $c^\ast_o \in C^\ast \setminus \{c^\ast\}$ and for all $\tau \in \cluster{\Tau}{C^\ast}{c^\ast_o}$ it holds that $\nearest{\tau}{C^\ast} \neq c^\ast$, therefore $ \nearest{\nearest{\tau}{C^\ast}}{C} \neq c$. We conclude that 
	$\frechet{\tau}{\nearest{\nearest{\tau}{C^\ast}}{\swap{C}{c}{c^\ast}}} \leq \frechet{\tau}{\nearest{\nearest{\tau}{C^\ast}}{C}}$ holds in such a setting.
	
	Again, combining this fact with \cref{eq:single_opt_cluster} in the proof of \cref{theo:klmedian_approx_corr} we obtain:
	\begin{align*}
	\mathfrak{l}(c^\ast) \leq{} & \gamma \approxcost{\Tau} + \mathfrak{o}(c^\ast) + \sum_{c^{\ast}_o \in C^\ast\setminus\{c^\ast\}} \sum_{\tau \in \mathcal{C}^{+}(c^{\ast}_o, c)} \left[\frechet{\tau}{\nearest{\nearest{\tau}{C^\ast}}{C}} - \frechet{\tau}{\nearest{\tau}{C}}\right] \\
	\leq{} & \gamma \approxcost{\Tau} + \mathfrak{o}(c^\ast) + \sum_{c^{\ast}_o \in C^\ast\setminus\{c^\ast\}} \sum_{\tau \in \mathcal{C}^{+}(c^{\ast}_o, c)} 2 \frechet{\tau}{\nearest{\tau}{C^\ast}} \\
	\leq{} & \gamma \approxcost{\Tau} + \mathfrak{o}(c^\ast) + \sum_{c^{\ast}_o \in C^\ast} \sum_{\tau \in \mathcal{C}^{+}(c^{\ast}_o, c)} 2 \frechet{\tau}{\nearest{\tau}{C^\ast}}
	\end{align*}
	Similarly to \cref{lem:tyrant_bound} the first inequality holds, because the curves in $\cluster{\Tau}{C^\ast}{c^\ast}$ pay at most $\mathfrak{o}(c^\ast)$ and the second inequality holds because $ \frechet{\tau}{\nearest{\nearest{\tau}{C^\ast}}{C}} - \frechet{\tau}{\nearest{\tau}{C}} \leq 2 \frechet{\tau}{\nearest{\tau}{C^\ast}}$ for every $\tau \in \Tau$, what will be shown in \cref{lem:cluster_switch}.
	
	Now we consider swapping every $c^\ast \in C^\ast_a$ with the respective anchor $\nearest{c^\ast}{C}$. By adding up the respective inequalities we obtain:
	\begin{align*}
	\sum_{c^\ast \in C^\ast_a} \mathfrak{l}(c^\ast) \leq \sum_{c^\ast \in C^\ast_a} \gamma \approxcost{\Tau} + \sum_{c^\ast \in C^\ast_a} \mathfrak{o}(c^\ast) + \sum_{c \in C_a} \sum_{c^{\ast}_o \in C^\ast} \sum_{\tau \in \mathcal{C}^{+}(c^{\ast}_o, c)} 2 \frechet{\tau}{\nearest{\tau}{C^\ast}}
	\end{align*}
	This inequality holds because trivially $\vert C^\ast_a \vert = \vert C_a \vert$ by definition.
\end{proof}
\begin{lemma}[\citet{har_peled_geo_approx}]
	\label{lem:cluster_switch}
	The following holds: \[\forall \tau \in \Tau: \frechet{\tau}{\nearest{\nearest{\tau}{C^\ast}}{C}} - \frechet{\tau}{\nearest{\tau}{C}} \leq 2 \frechet{\tau}{\nearest{\tau}{C^\ast}}.\]
\end{lemma}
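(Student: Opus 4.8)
Fix an arbitrary $\tau \in \Tau$. I would first set up abbreviations to keep the nested $\eta$'s under control: let $c^\ast \eqdef \nearest{\tau}{C^\ast}$ be a nearest center of $\tau$ in the optimal center-set, let $c^\prime \eqdef \nearest{c^\ast}{C}$ be a nearest center of $c^\ast$ in the local center-set $C$ (so $c^\prime$ is exactly the center appearing in the left-hand term $\frechet{\tau}{\nearest{\nearest{\tau}{C^\ast}}{C}}$), and let $c \eqdef \nearest{\tau}{C}$ be a nearest center of $\tau$ in $C$. With this notation the claimed inequality reads $\frechet{\tau}{c^\prime} - \frechet{\tau}{c} \leq 2\frechet{\tau}{c^\ast}$, equivalently $\frechet{\tau}{c^\prime} \leq \frechet{\tau}{c} + 2\frechet{\tau}{c^\ast}$.

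The plan is to bound $\frechet{\tau}{c^\prime}$ by walking $\tau \to c^\ast \to c^\prime$ via the triangle inequality, then bounding the second leg by walking $c^\ast \to \tau \to c$. Concretely: by the triangle inequality for the Fréchet distance, $\frechet{\tau}{c^\prime} \leq \frechet{\tau}{c^\ast} + \frechet{c^\ast}{c^\prime}$. Since $c^\prime$ is a nearest neighbor of $c^\ast$ in $C$ and $c \in C$, we have $\frechet{c^\ast}{c^\prime} \leq \frechet{c^\ast}{c}$ by \cref{def:nearest_func}. Applying the triangle inequality once more, $\frechet{c^\ast}{c} \leq \frechet{c^\ast}{\tau} + \frechet{\tau}{c}$, and using symmetry of the Fréchet distance ($\frechet{c^\ast}{\tau} = \frechet{\tau}{c^\ast}$) we obtain $\frechet{\tau}{c^\prime} \leq \frechet{\tau}{c^\ast} + \frechet{\tau}{c^\ast} + \frechet{\tau}{c} = 2\frechet{\tau}{c^\ast} + \frechet{\tau}{c}$. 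Rearranging gives the claim, and since $\tau$ was arbitrary the statement holds for all $\tau \in \Tau$.

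I do not expect any real obstacle here: the argument is two applications of the triangle inequality plus one appeal to the defining minimality property of $\nearest{\cdot}{\cdot}$. The only thing to be careful about is not to confuse the three different ``nearest'' centers ($c$, $c^\prime$, $c^\ast$) and to make sure the minimality step is invoked for the pair $(c^\ast, C)$ rather than for $\tau$ — that is, it is $c^\prime$'s being nearest to $c^\ast$ (not to $\tau$) that lets us replace $c^\prime$ by $c$ inside the bound. A remark worth including is that this lemma is precisely what feeds the ``$2\frechet{\tau}{\nearest{\tau}{C^\ast}}$'' terms in \cref{lem:tyrant_bound} and \cref{lem:anchor_bound}, since there one replaces, for a curve $\tau$ whose optimal center got swapped out, the cost to its nearest local center by the cost to the local center nearest to its optimal center.
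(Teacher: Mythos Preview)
Your proof is correct and is essentially identical to the paper's own argument: the paper fixes the same three centers $c^\ast = \nearest{\tau}{C^\ast}$, $c = \nearest{\tau}{C}$, $c' = \nearest{c^\ast}{C}$, applies the triangle inequality $\frechet{\tau}{c'} \leq \frechet{\tau}{c^\ast} + \frechet{c^\ast}{c'}$, uses the minimality of $c'$ to replace $\frechet{c^\ast}{c'}$ by $\frechet{c^\ast}{c}$, and applies the triangle inequality once more. Your additional remark about how the lemma feeds into \cref{lem:tyrant_bound} and \cref{lem:anchor_bound} is also accurate.
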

\begin{proof}
	Let $\tau \in \Tau$ be arbitrary but fixed and let $c^\ast \eqdef \nearest{\tau}{C^\ast}$, $c \eqdef \nearest{\tau}{C}$ and finally $c^\prime \eqdef \nearest{c^\ast}{C}$. By definition $\frechet{c^\ast}{c^\prime} \leq \frechet{c^\ast}{c}$. The triangle-inequality gives:
	\begin{align*}
	\frechet{\tau}{c^\prime} \leq \frechet{\tau}{c^\ast} + \frechet{c^\ast}{c^\prime} \leq \frechet{\tau}{c^\ast} + \frechet{c^\ast}{c} \leq \frechet{\tau}{c^\ast} + \frechet{c^\ast}{\tau} + \frechet{\tau}{c}
	\end{align*}
	The last inequality proves the claim.
\end{proof}
\begin{lemma}[\citet{har_peled_geo_approx}]
	\label{lem:number_drifter}
	The following holds: \[2 \vert C_d \vert \geq \vert C^\ast_t \vert.\]
\end{lemma}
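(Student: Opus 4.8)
The plan is to prove the inequality by a pure counting argument on the bipartite graph $G_\eta = (C \cup C^\ast, E_\eta)$, where $E_\eta = \{ (c^\ast, c) \mid c^\ast \in C^\ast,\ c = \eta(c^\ast, C) \}$. First I would record the basic structural facts. Since $\eta(c^\ast, C)$ is a single fixed curve for each $c^\ast \in C^\ast$ (\conferre \cref{def:nearest_func}), every vertex of $C^\ast$ has out-degree exactly $1$ in $G_\eta$, so $\vert E_\eta \vert = \vert C^\ast \vert = k$. Moreover, the head of every edge is a local center with $\indegree{\cdot} \geq 1$, hence never a drifter; therefore $C^\ast$ is the disjoint union $C^\ast = C^\ast_a \cup C^\ast_t$, according to whether the unique local center an optimal center points to is an anchor or a tyrant.

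Next I would establish two cardinality identities. The map $c^\ast \mapsto \eta(c^\ast, C)$ restricts to a bijection from $C^\ast_a$ onto $C_a$: it is injective because two distinct optimal centers with the same image would force that image to have $\indegree{\cdot} \geq 2$, contradicting anchor status; it is surjective because each anchor has $\indegree{\cdot} = 1$, which exhibits a preimage, and that preimage lies in $C^\ast_a$ by definition of $C^\ast_a$. Hence $\vert C^\ast_a \vert = \vert C_a \vert$. For the tyrants, counting the edges of $G_\eta$ by their heads gives $\vert C^\ast_t \vert = \sum_{c \in C_t} \indegree{c}$, and since every tyrant satisfies $\indegree{c} \geq 2$ this yields $\vert C^\ast_t \vert \geq 2\vert C_t \vert$.

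Finally I would combine these facts with $\vert C \vert = \vert C^\ast \vert = k$. Writing $k = \vert C_d \vert + \vert C_a \vert + \vert C_t \vert$ and $k = \vert C^\ast_a \vert + \vert C^\ast_t \vert = \vert C_a \vert + \vert C^\ast_t \vert$ and subtracting gives $\vert C_d \vert + \vert C_t \vert = \vert C^\ast_t \vert$. Substituting the tyrant bound $\vert C_t \vert \leq \vert C^\ast_t \vert / 2$ yields $\vert C_d \vert = \vert C^\ast_t \vert - \vert C_t \vert \geq \vert C^\ast_t \vert / 2$, i.e. $2\vert C_d \vert \geq \vert C^\ast_t \vert$. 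There is no genuine obstacle here: the whole argument is bookkeeping, and the only points requiring care are verifying that $C^\ast_a$ and $C^\ast_t$ truly partition $C^\ast$ (which rests on drifters having in-degree $0$, so no optimal center points to a drifter) and that the anchor correspondence is a bijection rather than merely an injection or a surjection.
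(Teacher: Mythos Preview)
Your proof is correct and follows essentially the same counting argument as the paper: establish $|C_a| = |C^\ast_a|$, $|C_t| \leq \tfrac{1}{2}|C^\ast_t|$, and $C^\ast = C^\ast_a \cup C^\ast_t$, then use $|C| = |C^\ast| = k$ to conclude. The only difference is that you actually justify these three facts via the bijection and edge-counting on $G_\eta$, whereas the paper simply asserts them ``by definition'' and proceeds directly to the algebra.
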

\begin{proof}
	By definition $C^\ast_t \cup C^\ast_a = C^\ast$ and $\vert C_t \vert \leq \frac{1}{2} \vert C^\ast_t \vert$, as well as $\vert C_a \vert = \vert C^\ast_a \vert$. We have:
	\begin{align*}
	\vert C_t \vert + \vert C_a \vert + \vert C_d \vert = \vert C^\ast_t \vert + \vert C^\ast_a \vert \Leftrightarrow \vert C_t \vert + \vert C_d \vert = \vert C^\ast_t \vert \Rightarrow \vert C_d \vert \geq \frac{1}{2} \vert C^\ast_t \vert
	\end{align*}
\end{proof}
\subsubsection{Time Complexity Analysis of \cref{algo:klmedian_approx}}
\begin{theorem}
	\label{theo:median_approx_running-time}
	Given a set of $n$ polygonal curves and a rational number $\gamma \eqdef \frac{1}{k \cdot 3n}$, \cref{algo:klmedian_approx} has running-time $\On{n^2 \cdot m^2 \log(m)}$.
\end{theorem}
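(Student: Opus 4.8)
The plan is to bound separately the cost of the initialization step (the call to \cref{algo:klcenter_approx_mod}) and the cost of the local-search loop, and then argue that the total is dominated by $\On{n^2 \cdot m^2 \log(m)}$.

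First I would recall from \cref{coro:median_3n_approx} that the call to \textsc{k-center-approx} in \cref{algo:klmedian_approx} runs in time $\On{n \cdot m^2 \log(m)}$, and that, by \cref{prop:center_median_rel}, its output is a $(3n)$-approximation to the \klmedian objective, so $\approxcost{\Tau} \leq 3n \cdot \optcost{\Tau}$. Next, I would bound the cost of a single iteration of the \texttt{while}-loop. In each iteration the algorithm must test, over all pairs $(c,\tau) \in C \times (\Tau\setminus C)$ — of which there are $\On{kn} = \On{n}$ many (since $k$ is fixed) — whether the swap improves the cost by more than $\gamma\approxcost{\Tau}$. Evaluating $\cost{\Tau}{(C\cup\{\tau\})\setminus\{c\}}$ requires, for each of the $n$ curves, computing its Fréchet distance to each of the $k$ centers in the candidate center-set; by \cref{theo:frechet_algo} each such distance costs $\On{m^2\log(m)}$, giving $\On{n \cdot m^2\log(m)}$ per candidate center-set, hence $\On{n^2 \cdot m^2\log(m)}$ per loop iteration. (One could precompute the pairwise Fréchet distances between all input curves once, in time $\On{n^2 \cdot m^2\log(m)}$, and then each cost evaluation is only $\On{n}$ table look-ups; but even without this optimization a single iteration is $\On{n^2 \cdot m^2\log(m)}$, and with it a single iteration drops to $\On{n^2}$.)

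The main obstacle is bounding the \emph{number} of loop iterations. Here the role of the scaling-factor $\gamma \eqdef \frac{1}{k\cdot 3n}$ becomes essential: each successful swap decreases $\cost{\Tau}{C}$ by at least $\gamma\approxcost{\Tau}$. The starting value of the cost is $\approxcost{\Tau}$ and the cost never drops below $\optcost{\Tau} \geq \frac{1}{3n}\approxcost{\Tau}$ (again by \cref{prop:center_median_rel}, rearranged). Hence the total decrease over the whole run is at most $\approxcost{\Tau} - \optcost{\Tau} \leq \approxcost{\Tau}$, and the number of swaps is at most $\approxcost{\Tau} / (\gamma\approxcost{\Tau}) = 1/\gamma = 3kn = \On{n}$. (This also re-establishes termination quantitatively, improving on the crude $n^k$ bound used in the proof of \cref{theo:klmedian_approx_corr}.)

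Finally I would assemble the pieces: the initialization costs $\On{n \cdot m^2\log(m)}$, and the loop runs for $\On{n}$ iterations each costing $\On{n \cdot m^2\log(m)}$ (after the one-time $\On{n^2 \cdot m^2\log(m)}$ precomputation of pairwise distances, or directly $\On{n^2 \cdot m^2 \log(m)}$ per iteration without it — in which case the loop alone is $\On{n^3 \cdot m^2\log(m)}$, so the precomputation trick is what is needed to reach the claimed bound). With the precomputation, the dominant term is $\On{n}$ iterations times $\On{n}$ work, i.e. $\On{n^2}$, plus the $\On{n^2 \cdot m^2\log(m)}$ for the distance table and the initialization, giving an overall running-time of $\On{n^2 \cdot m^2 \log(m)}$, as claimed.
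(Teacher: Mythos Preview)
Your overall strategy---bound the initialization via \cref{coro:median_3n_approx}, bound the number of swaps by $1/\gamma=3kn\in\On{n}$ using \cref{prop:center_median_rel}, and then bound the work per iteration---is exactly the paper's.

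There is, however, a slip in your final assembly. Earlier you (correctly) argue that with the pairwise-distance table a single iteration costs $\On{n^2}$: there are $\On{kn}$ candidate swaps and each cost evaluation still has to sum over all $n$ curves. In the last paragraph you silently replace this by ``$\On{n}$ work'' per iteration in order to land on $\On{n^2}$ for the whole loop. Carrying your own $\On{n^2}$-per-iteration figure through $\On{n}$ iterations gives $\On{n^3}$ for the loop and hence $\On{n^2 m^2\log(m)+n^3}$ overall, which is not the stated bound.

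For comparison, the paper does \emph{not} precompute a distance table. It bounds each iteration directly by $\On{n\cdot m^2\log(m)}$: it counts the $n\cdot k$ candidate swaps and charges a single Fr\'echet-distance computation, $\On{m^2\log(m)}$ by \cref{theo:frechet_algo}, per swap evaluation. Multiplying by the $\On{n}$ iterations then yields the claimed $\On{n^2 m^2\log(m)}$. If you want to match the paper's bound you should adopt this per-swap accounting rather than the $\On{n}$-lookups-per-evaluation route; otherwise your argument, done consistently, proves only the weaker $\On{n^2 m^2\log(m)+n^3}$.
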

\begin{proof}
	Let $C$ be the center-set returned by \cref{algo:klcenter_approx_mod}, which has running-time \par $\On{nm^2 \log(m)}$, \conferre \cref{coro:median_3n_approx}. Let $\approxcost{\Tau} \eqdef \cost{\Tau}{C}$. We know that $\optcost{\Tau} \leq \approxcost{\Tau} \leq 3n \cdot \optcost{\Tau}$, \conferre \cref{prop:center_median_rel}. Further we know that \cref{algo:klmedian_approx} improves the solution in every step by at least $\gamma \cdot \approxcost{\Tau}$. We conclude that \cref{algo:klmedian_approx} does at most $\frac{1}{\gamma \cdot \approxcost{\Tau}} \cdot (3n \cdot \optcost{\Tau} - \optcost{\Tau})$ steps. This yields to a maximum number of steps:
	\begin{align*}
	\frac{3n \cdot \optcost{\Tau} - \optcost{\Tau}}{\gamma \cdot \approxcost{\Tau}} \leq \frac{3n \cdot \optcost{\Tau} - \optcost{\Tau}}{\frac{1}{3n \cdot k} \cdot \approxcost{\Tau}} \leq \frac{3n - 1}{\frac{1}{k}} = 3nk - k.
	\end{align*}
	For every step the algorithm checks for each $c \in C$ and each $\tau \in \Tau\setminus C$ whether they can be swapped. This takes time $\On{n \cdot m^2 \log(m)}$, because for every possible swap, which are up to $n \cdot k$ many, the value of the objective function has to be evaluated, which takes time $\On{m^2 \log(m)}$, \conferre \cref{theo:frechet_algo}.
	
	Taking the number of steps into account we have the overall running-time $\On{n^2 \cdot m^2 \log(m)}$.
\end{proof}
\clearpage
\section{The \klmeans Objective}
\label{sec:klmeans}
\FloatBarrier
In this section we are working with \cref{def:klmeans}, \ie the \klmeans objective. We are given a set of curves $\Tau \eqdef \{ \tau_1, \dots, \tau_n \} \subset \eqcfre{m}$ and we are looking for a center-set $C \subseteq \Tau$, of cardinality $k$, that minimizes the sum of the squared distances between the curves in $\Tau$ and a respective nearest center. 

With respect to point-sets the $k$-\textsc{means} objective has a peculiarity. Namely, for $k = 1$ we can compute the optimal center analytically with \cref{def:centroid}, \ie the \emph{centroid} of the point-set, as it is shown in \cref{prop:zauberformel}. This benefit is used in most clustering algorithms for the $k$-\textsc{means} problem, such as Lloyd's algorithm, \conferre \cite{lloyd}. It is also used in \coreset constructions for the $k$-\textsc{means} problem, such as in \cite{bico}. One may assume that this peculiarity extends to polygonal curves, \ie for the $(1,l)$-\textsc{means} one can obtain an optimal center-curve by connecting the centroid of the first vertices of the input-curves to the centroid of the second vertices of the input-curves and so on. Of course, these considerations would require the input-curves to have the exact same complexity. We show that such a construction does not necessarily yield an optimal center, even for line segments.

\begin{figure}
	\centering
	\def\svgwidth{0.85\textwidth}
\begingroup%
  \makeatletter%
  \providecommand\color[2][]{%
    \errmessage{(Inkscape) Color is used for the text in Inkscape, but the package 'color.sty' is not loaded}%
    \renewcommand\color[2][]{}%
  }%
  \providecommand\transparent[1]{%
    \errmessage{(Inkscape) Transparency is used (non-zero) for the text in Inkscape, but the package 'transparent.sty' is not loaded}%
    \renewcommand\transparent[1]{}%
  }%
  \providecommand\rotatebox[2]{#2}%
  \newcommand*\fsize{\dimexpr\f@size pt\relax}%
  \newcommand*\lineheight[1]{\fontsize{\fsize}{#1\fsize}\selectfont}%
  \ifx\svgwidth\undefined%
    \setlength{\unitlength}{393.93995729bp}%
    \ifx\svgscale\undefined%
      \relax%
    \else%
      \setlength{\unitlength}{\unitlength * \real{\svgscale}}%
    \fi%
  \else%
    \setlength{\unitlength}{\svgwidth}%
  \fi%
  \global\let\svgwidth\undefined%
  \global\let\svgscale\undefined%
  \makeatother%
  \begin{picture}(1,0.44431684)%
    \lineheight{1}%
    \setlength\tabcolsep{0pt}%
    \put(0,0){\includegraphics[width=\unitlength]{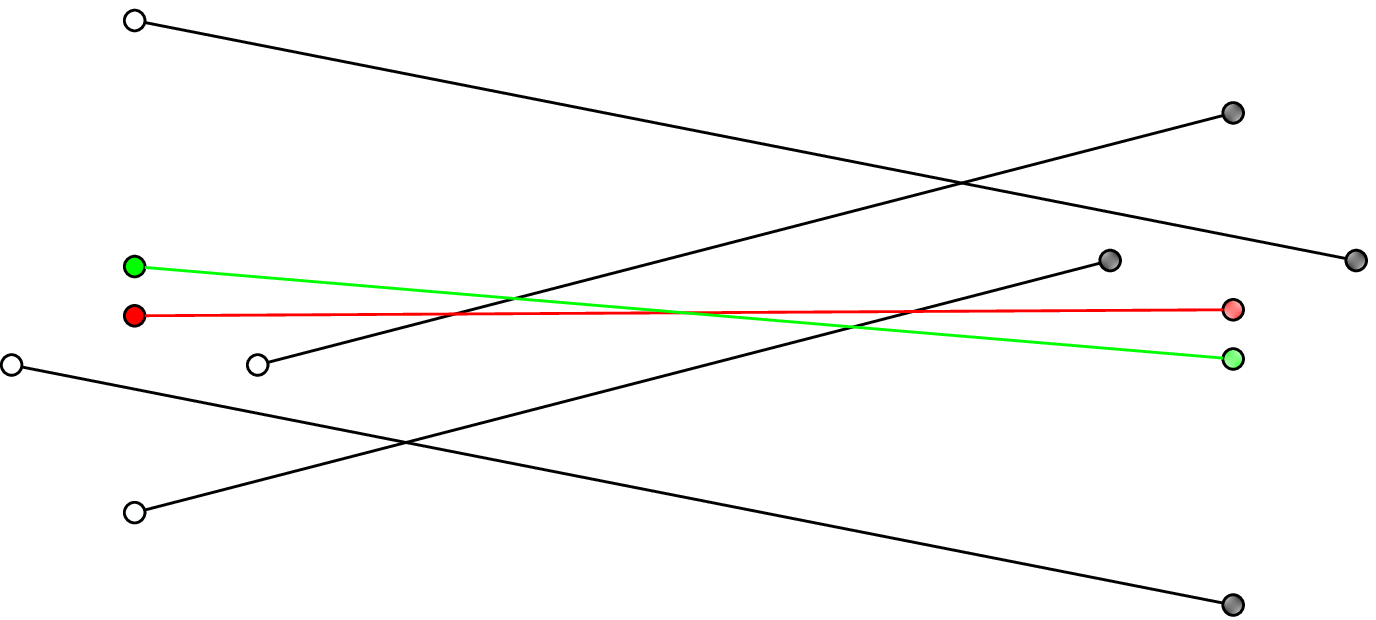}}%
    \put(0.4276539,0.26318767){\color[rgb]{0,0,0}\makebox(0,0)[lt]{\lineheight{1.25}\smash{\begin{tabular}[t]{l}$\tau_1$\end{tabular}}}}%
    \put(0.59202421,0.08263743){\color[rgb]{0,0,0}\makebox(0,0)[lt]{\lineheight{1.25}\smash{\begin{tabular}[t]{l}$\tau_2$\end{tabular}}}}%
    \put(0.4318977,0.38386511){\color[rgb]{0,0,0}\makebox(0,0)[lt]{\lineheight{1.25}\smash{\begin{tabular}[t]{l}$\tau_3$\end{tabular}}}}%
    \put(0.4088839,0.17230756){\color[rgb]{0,0,0}\makebox(0,0)[lt]{\lineheight{1.25}\smash{\begin{tabular}[t]{l}$\tau_4$\end{tabular}}}}%
    \put(0.6204602,0.23706007){\color[rgb]{1,0,0}\makebox(0,0)[lt]{\lineheight{1.25}\smash{\begin{tabular}[t]{l}$\mu_l$\end{tabular}}}}%
    \put(0.18447475,0.26127244){\color[rgb]{0,1,0}\makebox(0,0)[lt]{\lineheight{1.25}\smash{\begin{tabular}[t]{l}$\nu_l$\end{tabular}}}}%
  \end{picture}%
\endgroup%

	\caption{Construction of $\Tau$ in \cref{prop:counterexample_line_center} for $d = 2$.}
	\label{fig:counterexample_line_center}
\end{figure}

\begin{proposition}
	\label{prop:counterexample_line_center}
	Let $\Tau \eqdef \{ \tau_1, \dots, \tau_n \} \subset \eqcfre{2}$ be a set of line segments and $\mu_0 \eqdef \centroid{\{ \tau(0) \mid \tau \in \Tau \}}$ be the centroid of their initial points, as well as $\mu_1 \eqdef \centroid{\{ \tau(1) \mid \tau \in \Tau \}}$ be the centroid of their end points. The center with minimum cost under $(1,2)$-\textsc{means} is not necessarily $\mu_l \eqdef \overline{\mu_0 \mu_1}$.
\end{proposition}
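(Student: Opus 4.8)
The plan is to establish the claim by exhibiting an explicit finite set of line segments for which $\mu_l$ is strictly suboptimal for $(1,2)$-\textsc{means}. The reason the vertex-wise centroid construction can fail is visible from \cref{obs:frechet_endpoints}: for any candidate center $\overline{ab}$ one has $\frechet{\overline{ab}}{\tau_i} = \max\{\eucl{a}{\tau_i(0)},\eucl{b}{\tau_i(1)}\}$, so the cost of $\overline{ab}$ under \cref{def:klmeans} equals $\sum_i \max\{\eucl{a}{\tau_i(0)},\eucl{b}{\tau_i(1)}\}^2$. This is \emph{not} the separable functional $\sum_i \eucl{a}{\tau_i(0)}^2 + \sum_i \eucl{b}{\tau_i(1)}^2$, which is precisely the quantity minimized coordinate-by-coordinate by $\mu_0 = \centroid{\{\tau(0)\mid\tau\in\Tau\}}$ and $\mu_1 = \centroid{\{\tau(1)\mid\tau\in\Tau\}}$ according to \cref{prop:zauberformel}. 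In particular, a curve whose Fréchet distance to $\mu_l$ is realized at its initial endpoint imposes no constraint on where the terminal endpoint of the center is placed, and this slack is what makes $\mu_l$ beatable.

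Concretely, I would work in $\euclideanspace^2$ with $n=4$ and a parameter $a\in(0,1)$, setting $\tau_1 \eqdef \overline{(-1,0)\,o}$, $\tau_2 \eqdef \overline{(1,0)\,o}$, $\tau_3 \eqdef \overline{o\,(-1,a)}$ and $\tau_4 \eqdef \overline{o\,(1,a)}$, where $o$ is the origin. Then $\mu_0 = o$ and $\mu_1 = (0,\nicefrac{a}{2})$, so $\mu_l = \overline{o\,(0,\nicefrac{a}{2})}$. Using $a<1$ and \cref{obs:frechet_endpoints} one computes $\frechet{\tau_1}{\mu_l} = \frechet{\tau_2}{\mu_l} = 1$ and $\frechet{\tau_3}{\mu_l} = \frechet{\tau_4}{\mu_l} = \sqrt{1 + \nicefrac{a^2}{4}}$, whence $\cost{\Tau}{\{\mu_l\}} = 4 + \nicefrac{a^2}{2}$. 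On the other hand, for the center $\nu_l \eqdef \overline{o\,(0,a)}$ the same observation gives $\frechet{\tau_i}{\nu_l} = 1$ for every $i\in\{1,2,3,4\}$, so $\cost{\Tau}{\{\nu_l\}} = 4 < 4 + \nicefrac{a^2}{2} = \cost{\Tau}{\{\mu_l\}}$. Hence $\mu_l$ is not a center of minimum cost under $(1,2)$-\textsc{means}, which proves the proposition; taking $a = \nicefrac{1}{2}$ yields a fully explicit instance. For ambient dimension $d>2$ the same construction works after appending $d-2$ zero coordinates to every point.

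The verification above is elementary, so the only real difficulty lies in choosing the configuration: the four segments must be arranged so that the $\max$ defining the cost of $\mu_l$ is ``mixed'' -- attained at the initial endpoint for $\tau_1,\tau_2$ and at the terminal endpoint for $\tau_3,\tau_4$ -- and at the same time the competing center $\nu_l$ must not enlarge any of the initial-endpoint distances, so that the gain from better placing the terminal endpoint is not cancelled. Once such an instance is found, the remaining steps are just the distance computations, the case checks on $a$, and the bookkeeping confirming that each $\tau_i$ genuinely has complexity $2$ and that $\mu_l$ is a nondegenerate line segment.
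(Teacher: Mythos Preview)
Your proof is correct and follows essentially the same approach as the paper: build an instance in which the Fr\'echet distance to $\mu_l$ is realized at the initial endpoint for half of the segments and at the terminal endpoint for the other half, then exhibit a competing center $\nu_l$ with strictly smaller cost. The paper phrases this via abstract conditions on a family of even-size instances and invokes the centroid identity of \cref{prop:zauberformel} to quantify the gap, whereas you give one fully explicit four-segment instance and verify the inequality by direct computation; the underlying idea is the same.
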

\begin{proof}
	Let $n \eqdef 2 \cdot r$, for an arbitrary integer $r \geq 1$ and $\Tau^\prime \eqdef \{ \tau_1, \dots, \tau_{\nicefrac{n}{2}} \}$. We define $\nu_0 \eqdef \centroid{\{\tau(0) \mid \tau \in \Tau^\prime\}}$, $\nu_1 \eqdef \centroid{\{\tau(1) \mid \tau \in \Tau \setminus \Tau^\prime\}}$ and $\nu_l \eqdef \overline{\nu_0 \nu_1}$. Let $\Tau$ be chosen, such that the following holds (\conferre \cref{fig:counterexample_line_center} for an example):
	\begin{alignat}{1}
		\forall \tau \in \Tau^\prime: & \max\{ \eucl{\tau(0)}{\mu_0}, \eucl{\tau(1)}{\mu_1}\} = \eucl{\tau(0)}{\mu_0} \label{vorb1} \tag{I} \\
		\forall \tau \in \Tau^\prime: & \max\{ \eucl{\tau(0)}{\nu_0}, \eucl{\tau(1)}{\nu_1}\} = \eucl{\tau(0)}{\nu_0} \label{vorb2} \tag{II}\\
		\forall \tau \in \Tau \setminus \Tau^\prime: & \max\{ \eucl{\tau(0)}{\mu_0}, \eucl{\tau(1)}{\mu_1} \} = \eucl{\tau(1)}{\mu_1} \label{vorb3} \tag{III} \\
		\forall \tau \in \Tau \setminus \Tau^\prime: & \max\{ \eucl{\tau(0)}{\nu_0}, \eucl{\tau(1)}{\nu_1} \} = \eucl{\tau(1)}{\nu_1} \label{vorb4} \tag{IV} \\
		& \mu_0 \neq \nu_0 \wedge \mu_1 \neq \nu_1 \label{vorb5} \tag{V}
	\end{alignat}
	Let $c \eqdef \mu_l$, from \cref{obs:frechet_endpoints} we know that:
	\begin{alignat*}{3}
		\cost{\Tau}{\{c\}} ={} & \sum_{\tau \in \Tau} \frechet{\tau}{c}^2 = \sum_{\tau \in \Tau} \max\{ \eucl{\tau(0)}{c(0)}^2, \eucl{\tau(1)}{c(1)}^2 \} \\
		={} & \sum\limits_{\tau \in \Tau^\prime} \eucl{\tau(0)}{c(0)}^2 + \sum\limits_{\tau \in \Tau \setminus \Tau^\prime} \eucl{\tau(1)}{c(1)}^2 \label{eq:vorb} \tag{1} \\
		={} & \frac{\vert \Tau \vert}{2} \cdot \eucl{c(0)}{\nu_0}^2 + \frac{\vert \Tau \vert}{2} \cdot \eucl{c(1)}{\nu_1}^2 \\
		& + \sum\limits_{\tau \in \Tau^\prime} \eucl{\tau(0)}{\nu_0}^2 + \sum\limits_{\tau \in \Tau \setminus \Tau^\prime} \eucl{\tau(1)}{\nu_1}^2 \label{eq:centroidline} \tag{2}
	\end{alignat*}
	Here \cref{eq:vorb} follows from \cref{vorb1} and \cref{vorb3}. \cref{eq:centroidline} follows from \cref{prop:zauberformel}, \cref{vorb2} and \cref{vorb4}. Since \cref{vorb5}, it can be observed that $\nu_l$ is a better center than $\mu_l$.
\end{proof}
Because of \cref{prop:counterexample_line_center} and the fact that the means objective is more sensitive towards outliers we see no benefits in studying \coreset constructions for the \klmeans objective.
	\chapter{Discussion}
In this thesis we have provided construction methods for \coreset[s] for the \klcenter clustering problem and the \klmedian clustering problem. Also, we have proven that the benefits of the $k$-\textsc{means} objective with respect to point-sets do not extend to the \klmeans objective, even for sets of line segments.

These results are more or less satisfying. The most satisfying results are presented in \cref{sec:klmedian}, \ie the \coreset construction for the \klmedian objective. Because we restricted ourselves to the discrete median objective we were able to adapt the sensitivity sampling framework from \citeauthor{langbergschulman}, which was originally stated for point-sets from $\euclideanspace^d$ endowed with some $\ell_p$ norm, to curves in $\euclideanspace^d$ endowed with the Fréchet distance. A benefit in this adaption is that the initial random variables, which are used as estimators for the cost of the clusterings (one estimator per possible center-set), are defined with respect to the uniform distribution. Thus, we could use a $c$-approximate solution to the clustering to obtain a lower bound on the expected values of these random variables and therefore also for the random variables used by the sensitivity sampling framework. This bound is needed to build the probability distribution, which is yielded by the framework and that we used to obtain an \coreset with constant probability. For general probability distributions this is not the case, here we need at least a bi-criteria approximation on a minimal expected value, where minimal means minimal under the choice of the center-set. The crucial point is that the bi-criteria optimization, where the parameters are a scaling factor on $k$ and the approximation-guarantee, does not necessarily yield a $c$-approximate solution on the optimal value of the clustering and vice versa. We can use such bi-criteria approximations for general metric spaces, though, \conferre \cite{median_bi}. Because these approximations are often realized through a greedy approach that is an extension of a local-search scheme, our results are roughly similar in terms of the approximation guarantee and running-time: We use a modified version of the constant-factor approximation algorithm by \citeauthor{arya_local_search}, which is a well-known local-search heuristic. This algorithm was developed for general metric spaces, therefore, it can be used for curves under the Fréchet distance. We provide a straightforward proof of the correctness and approximation-guarantee of the algorithm. Further we use the constant-factor approximation algorithm for \klcenter clustering by \citeauthor{approx_k_l_center} to obtain an initial guess for the local-search algorithm, which then has polynomial running-time, thus obtaining a construction technique for \coreset[s] with polynomial running-time.

Another satisfying result is \cref{prop:embedding}. Here we are able to point out a difference between the $d$-dimensional euclidean space and the Fréchet space $(\eqcfre{m}, \pfrechet)$, which has the $d$-dimensional euclidean space as ambient space. We were even able to derive \cref{coro:kiss_lines} from this result, hinting that the Fréchet spaces (under the restriction of polygonal curves) have own kissing-numbers, which depend only on the maximum number of vertices of the curves. It is conceivable that the kissing number of $(\eqcfre{m}, \pfrechet)$ is bounded by $\psi_d^m$ and that this can be proven by similar arguments as those we use in the proof of \cref{coro:kiss_lines}. If this proves to be true there may be the chance that one can construct an isometry from a Fréchet space, with ambient space $\euclideanspace^d$, to a $d^\prime$-dimensional euclidean space. In this way one can at least obtain the value of a clustering. If there is also interest in the center-curves, a post-processing is imaginable which takes the pairwise distances between the points and their center-points and uses these to construct meaningful center-curves.

The results that are less satisfying are presented in \cref{sec:klcenter}, \ie the \coreset constructions for the \klcenter objective. Here we are working with geometric decompositions based on grids, which work well for line segments, but less good for polygonal curves with complexity at least $3$, because then the cardinality of the \coreset is dependent on the ratio of the length of the longest edge of a center-curve and the cost of the approximate clustering we use to construct the grids. For line segments, \cref{algo:klcenter_coreset_1} provides \coreset[s] of cardinality $\On{\nicefrac{1}{\epsilon^{2d}}}$ in all cases while, for polygonal curves of complexity at least $3$, \cref{algo:klcenter_coreset_mg1} provides \coreset[s] of cardinality exponential in $m$ and sub-linear in $n$, if the ratio of a longest edge of an approximate clustering returned by \cref{algo:klcenter_approx} and the clustering objective value does not exceed $\sqrt[m]{n}$. Otherwise, \cref{algo:klcenter_coreset_mg1} is not able to provide an \coreset at all. A depiction of such a case can be observed in \cref{fig:wc_pc_center}, where the number of input-curves is rigorously outnumbered by the number of cells of the constructed grids. Additional to this disadvantage, the construction is very complicated and requires several very technical proofs, although the idea behind it is very simple, which makes it tedious to improve it.

\begin{figure}
	\centering
	\includegraphics[width=\textwidth]{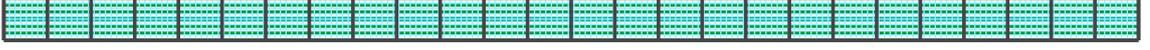}
	\caption{Exemplary (part of) worst case curve-set for the \coreset construction in \cref{subsec:mg2}. An edge of a center-curve is depicted in light green with six edges of curves from the input-set with equidistant distances in dark green, three on each side of the center-edge. The cubes are depicted in black and the associated grids are depicted in light blue. It can be observed that there are orders of magnitude between the number of curves and the number of cells.}
	\label{fig:wc_pc_center}
\end{figure}
	\bibliography{bib}

\begin{thebibliography}{53}
\providecommand{\natexlab}[1]{#1}
\providecommand{\url}[1]{\texttt{#1}}
\expandafter\ifx\csname urlstyle\endcsname\relax
  \providecommand{\doi}[1]{doi: #1}\else
  \providecommand{\doi}{doi: \begingroup \urlstyle{rm}\Url}\fi

\bibitem[Agarwal et~al.(2014)Agarwal, Avraham, Kaplan, and
  Sharir]{discrete_frechet_subquadratic}
P.~Agarwal, R.~Avraham, H.~Kaplan, and M.~Sharir.
\newblock {Computing the Discrete Fréchet Distance in Subquadratic Time}.
\newblock \emph{SIAM Journal on Computing}, 43\penalty0 (2):\penalty0 429--449,
  2014.
\newblock \doi{10.1137/130920526}.

\bibitem[Ahmad(2015)]{cluster_higgs}
S.~Raza Ahmad.
\newblock {Technical Report of Participation in Higgs Boson Machine Learning
  Challenge}.
\newblock \emph{CoRR}, abs/1510.02674, 2015.

\bibitem[Alt and Godau(1995)]{alt_godau_frechet}
Helmut Alt and Michael Godau.
\newblock {Computing the Fréchet Distance between two Polygonal Curves}.
\newblock \emph{{International Journal of Computational Geometry \&
  Applications}}, 05:\penalty0 75--91, 1995.
\newblock \doi{10.1142/S0218195995000064}.

\bibitem[Arima et~al.(2003)Arima, Hanai, and Okamoto]{genome_arima}
Chinatsu Arima, Taizo Hanai, and Masahiro Okamoto.
\newblock {Gene Expression Analysis Using Fuzzy K-Means Clustering}.
\newblock \emph{{Genome Informatics}}, 14:\penalty0 334--335, 2003.
\newblock \doi{10.11234/gi1990.14.334}.

\bibitem[Arya et~al.(2004)Arya, Garg, Khandekar, Meyerson, Munagala, and
  Pandit]{arya_local_search}
V.~Arya, N.~Garg, R.~Khandekar, A.~Meyerson, K.~Munagala, and V.~Pandit.
\newblock {Local Search Heuristics for k-Median and Facility Location
  Problems}.
\newblock \emph{SIAM Journal on Computing}, 33\penalty0 (3):\penalty0 544--562,
  2004.
\newblock \doi{10.1137/S0097539702416402}.

\bibitem[Awasthi et~al.(2010)Awasthi, Blum, and Sheffet]{ptas_lloyd}
P.~Awasthi, A.~Blum, and O.~Sheffet.
\newblock {Stability Yields a PTAS for k-Median and k-Means Clustering}.
\newblock In \emph{IEEE 51st Annual Symposium on Foundations of Computer
  Science}, pages 309--318, October 2010.
\newblock \doi{10.1109/FOCS.2010.36}.

\bibitem[Beckman and Quarles(1953)]{isometries}
F.~S. Beckman and D.~A. Quarles.
\newblock {On Isometries of Euclidean Spaces}.
\newblock \emph{Proceedings of the American Mathematical Society}, 4\penalty0
  (5):\penalty0 810--815, October 1953.
\newblock \doi{10.2307/2032415}.

\bibitem[Bernstein(1946)]{bernstein}
Sergei~Natanovich Bernstein.
\newblock {The Theory of Probabilities (Russian)}.
\newblock Moscow, Leningrad, 1946.

\bibitem[Buchin et~al.(2018)Buchin, Driemel, Gudmundsson, Horton, Kostitsyna,
  and L{\"{o}}ffler]{approx_k_l_center}
Kevin Buchin, Anne Driemel, Joachim Gudmundsson, Michael Horton, Irina
  Kostitsyna, and Maarten L{\"{o}}ffler.
\newblock {Approximating (k, l)-center Clustering for Curves}.
\newblock \emph{CoRR}, abs/1805.01547, 2018.

\bibitem[Chekanov(2006)]{cluster_chekanov}
S.~Chekanov.
\newblock {A New Jet Algorithm Based on the k-means Clustering for the
  Reconstruction of Heavy States from Jets}.
\newblock \emph{The European Physical Journal C - Particles and Fields},
  C47:\penalty0 611--616, 2006.
\newblock \doi{10.1140/epjc/s2006-02618-3}.

\bibitem[Chen(2009)]{chen}
K.~Chen.
\newblock {On Coresets for k-Median and k-Means Clustering in Metric and
  Euclidean Spaces and Their Applications}.
\newblock \emph{SIAM Journal on Computing}, 39\penalty0 (3):\penalty0 923--947,
  2009.
\newblock \doi{10.1137/070699007}.

\bibitem[Chuang et~al.(2006)Chuang, Tzeng, Chen, Wu, and
  Chen]{image_seg_chuang}
Keh-Shih Chuang, Hong-Long Tzeng, Sharon Chen, Jay Wu, and Tzong-Jer Chen.
\newblock Fuzzy c-means clustering with spatial information for image
  segmentation.
\newblock \emph{Computerized Medical Imaging and Graphics}, 30\penalty0
  (1):\penalty0 9--15, 2006.
\newblock ISSN 0895-6111.
\newblock \doi{10.1016/j.compmedimag.2005.10.001}.

\bibitem[Coleman and Andrews(1979)]{image_seg_coleman}
G.~B. Coleman and H.~C. Andrews.
\newblock Image segmentation by clustering.
\newblock \emph{Proceedings of the IEEE}, 67\penalty0 (5):\penalty0 773--785,
  1979.
\newblock ISSN 0018-9219.
\newblock \doi{10.1109/PROC.1979.11327}.

\bibitem[Diestel(2005)]{diestel}
Reinhard Diestel.
\newblock {Graph Theory}.
\newblock \emph{Graduate Texts in Mathematics}, 173\penalty0 (5):\penalty0 428,
  2005.
\newblock ISSN 0072-5285.

\bibitem[Driemel et~al.(2016)Driemel, Krivo\v{s}ija, and
  Sohler]{clustering_time_series}
Anne Driemel, Amer Krivo\v{s}ija, and Christian Sohler.
\newblock {Clustering Time Series Under the Fr\'echet Distance}.
\newblock In \emph{Proceedings of the Twenty-seventh Annual ACM-SIAM Symposium
  on Discrete Algorithms}, pages 766--785. Society for Industrial and Applied
  Mathematics, 2016.
\newblock ISBN 978-1-611974-33-1.

\bibitem[Eiter and Mannila(1994)]{eiter_mannila}
Thomas Eiter and Heikki Mannila.
\newblock {Computing Discrete Fréchet Distance}.
\newblock Technical report, Christian Doppler Labor für Expertensyteme,
  Technische Universität Wien, 1994.

\bibitem[Experiment(2011)]{cluster_atlas_data}
The~ATLAS Experiment.
\newblock {The ATLAS Detector}.
\newblock Technical report, CERN, 2011.

\bibitem[Feldman et~al.(2007)Feldman, Monemizadeh, and Sohler]{kmeans_ptas}
Dan Feldman, Morteza Monemizadeh, and Christian Sohler.
\newblock {A PTAS for k-means Clustering Based on Weak Coresets}.
\newblock In \emph{Proceedings of the Twenty-third Annual Symposium on
  Computational Geometry}, SCG '07, pages 11--18, New York, NY, USA, 2007. ACM.
\newblock ISBN 978-1-59593-705-6.
\newblock \doi{10.1145/1247069.1247072}.

\bibitem[Fichtenberger et~al.(2013)Fichtenberger, Gill{\'e}, Schmidt,
  Schwiegelshohn, and Sohler]{bico}
Hendrik Fichtenberger, Marc Gill{\'e}, Melanie Schmidt, Chris Schwiegelshohn,
  and Christian Sohler.
\newblock {BICO: BIRCH Meets Coresets for k-Means Clustering}.
\newblock In \emph{Algorithms -- ESA 2013}, pages 481--492. Springer Berlin
  Heidelberg, 2013.
\newblock ISBN 978-3-642-40450-4.

\bibitem[Frahling and Sohler(2005)]{movement_lemma_origin}
Gereon Frahling and Christian Sohler.
\newblock {Coresets in Dynamic Geometric Data Streams}.
\newblock In \emph{Proceedings of the Thirty-seventh Annual ACM Symposium on
  Theory of Computing}, pages 209--217, 2005.
\newblock \doi{10.1145/1060590.1060622}.

\bibitem[Friggstad et~al.(2016)Friggstad, Rezapour, and
  Salavatipour]{ptas_local}
Z.~Friggstad, M.~Rezapour, and M.~R. Salavatipour.
\newblock {Local Search Yields a PTAS for k-Means in Doubling Metrics}.
\newblock In \emph{{IEEE} 57th Annual Symposium on Foundations of Computer
  Science}, pages 365--374, October 2016.
\newblock \doi{10.1109/FOCS.2016.47}.

\bibitem[Fung et~al.(2003)Fung, Wang, and Ester]{document_fung}
Benjamin~C.M. Fung, Ke~Wang, and Martin Ester.
\newblock \emph{{Hierarchical Document Clustering Using Frequent Itemsets}},
  pages 59--70.
\newblock Society for Industrial and Applied Mathematics, 2003.
\newblock \doi{10.1137/1.9781611972733.6}.

\bibitem[Gonzalez(1985)]{gonzalez}
Teofilo~F. Gonzalez.
\newblock {Clustering to Minimize the Maximum Intercluster Distance}.
\newblock \emph{Theoretical Computer Science}, 38:\penalty0 293--306, 1985.
\newblock ISSN 0304-3975.
\newblock \doi{10.1016/0304-3975(85)90224-5}.

\bibitem[Hansen and Jaumard(1997)]{hansen_aristotle}
Pierre Hansen and Brigitte Jaumard.
\newblock {Cluster Analysis and Mathematical Programming}.
\newblock \emph{Mathematical Programming}, 79\penalty0 (1):\penalty0 191--215,
  1997.
\newblock ISSN 1436-4646.
\newblock \doi{10.1007/BF02614317}.

\bibitem[Har-Peled(2011)]{har_peled_geo_approx}
S.~Har-Peled.
\newblock \emph{{Geometric Approximation Algorithms}}.
\newblock Mathematical Surveys and Monographs. American Mathematical Society,
  2011.
\newblock ISBN 9780821849118.

\bibitem[Har-Peled and Mazumdar(2004)]{kmeans_ptas2}
Sariel Har-Peled and Soham Mazumdar.
\newblock {On Coresets for k-means and k-median Clustering}.
\newblock In \emph{Proceedings of the Thirty-sixth Annual ACM Symposium on
  Theory of Computing}, pages 291--300, 2004.
\newblock ISBN 1-58113-852-0.
\newblock \doi{10.1145/1007352.1007400}.

\bibitem[Hatfull et~al.(2010)Hatfull, Jacobs-Sera, Lawrence, Pope, Russell, Ko,
  Weber, Patel, Germane, Edgar, Hoyte, Bowman, Tantoco, Paladin, Myers, Smith,
  Grace, Pham, O'Brien, Vogelsberger, Hryckowian, Wynalek, Donis-Keller, Bogel,
  Peebles, Cresawn, and Hendrix]{genome_hatfull}
Graham~F. Hatfull, Deborah Jacobs-Sera, Jeffrey~G. Lawrence, Welkin~H. Pope,
  Daniel~A. Russell, Ching-Chung Ko, Rebecca~J. Weber, Manisha~C. Patel,
  Katherine~L. Germane, Robert~H. Edgar, Natasha~N. Hoyte, Charles~A. Bowman,
  Anthony~T. Tantoco, Elizabeth~C. Paladin, Marlana~S. Myers, Alexis~L. Smith,
  Molly~S. Grace, Thuy~T. Pham, Matthew~B. O'Brien, Amy~M. Vogelsberger,
  Andrew~J. Hryckowian, Jessica~L. Wynalek, Helen Donis-Keller, Matt~W. Bogel,
  Craig~L. Peebles, Steven~G. Cresawn, and Roger~W. Hendrix.
\newblock {Comparative Genomic Analysis of 60 Mycobacteriophage Genomes: Genome
  Clustering, Gene Acquisition, and Gene Size}.
\newblock \emph{Journal of Molecular Biology}, 397\penalty0 (1):\penalty0
  119--143, 2010.
\newblock ISSN 0022-2836.
\newblock \doi{10.1016/j.jmb.2010.01.011}.

\bibitem[Hoeffding(1963)]{hoeff}
Wassily Hoeffding.
\newblock {Probability Inequalities for Sums of Bounded Random Variables}.
\newblock \emph{{Journal of the American Statistical Association}}, 58\penalty0
  (301):\penalty0 13--30, 1963.
\newblock \doi{10.1080/01621459.1963.10500830}.

\bibitem[Hsu and Telgarsky(2016)]{median_bi}
Daniel~J. Hsu and Matus Telgarsky.
\newblock {Greedy Bi-Criteria Approximations for k-medians and k-means}.
\newblock \emph{CoRR}, abs/1607.06203, 2016.

\bibitem[Jain(2010)]{jain_50_years}
Anil~K. Jain.
\newblock {Data Clustering: 50 Years Beyond k-means}.
\newblock \emph{Pattern Recognition Letters}, 31\penalty0 (8):\penalty0
  651--666, 2010.
\newblock ISSN 0167-8655.
\newblock \doi{10.1016/j.patrec.2009.09.011}.

\bibitem[Jardine and van Rijsbergen(1971)]{document_jardine}
N.~Jardine and C.J. van Rijsbergen.
\newblock {The Use of Hierarchic Clustering in Information Retrieval}.
\newblock \emph{Information Storage and Retrieval}, 7\penalty0 (5):\penalty0
  217--240, 1971.
\newblock ISSN 0020-0271.
\newblock \doi{10.1016/0020-0271(71)90051-9}.

\bibitem[Kanungo et~al.(2004)Kanungo, Mount, Netanyahu, Piatko, Silverman, and
  Wu]{zauberformel_herkunft}
Tapas Kanungo, David~M. Mount, Nathan~S. Netanyahu, Christine~D. Piatko, Ruth
  Silverman, and Angela~Y. Wu.
\newblock {A Local Search Approximation Algorithm for k-means Clustering}.
\newblock \emph{Computational Geometry}, 28\penalty0 (2):\penalty0 89--112,
  2004.
\newblock ISSN 0925-7721.
\newblock \doi{10.1016/j.comgeo.2004.03.003}.

\bibitem[Kirzhner et~al.(2005)Kirzhner, Bolshoy, Volkovich, Korol, and
  Nevo]{genome_kirzhner}
Valery Kirzhner, Alexander Bolshoy, Zeev Volkovich, Abraham Korol, and Eviatar
  Nevo.
\newblock {Large-Scale Genome Clustering Across Life Based on a Linguistic
  Approach}.
\newblock \emph{Biosystems}, 81\penalty0 (3):\penalty0 208--222, 2005.
\newblock ISSN 0303-2647.
\newblock \doi{10.1016/j.biosystems.2005.04.003}.

\bibitem[Kummamuru et~al.(2004)Kummamuru, Lotlikar, Roy, Singal, and
  Krishnapuram]{document_kummamuru}
Krishna Kummamuru, Rohit Lotlikar, Shourya Roy, Karan Singal, and Raghu
  Krishnapuram.
\newblock {A Hierarchical Monothetic Document Clustering Algorithm for
  Summarization and Browsing Search Results}.
\newblock In \emph{Proceedings of the 13th International Conference on World
  Wide Web}, pages 658--665, 2004.
\newblock ISBN 1-58113-844-X.
\newblock \doi{10.1145/988672.988762}.

\bibitem[Langberg and Schulman(2010)]{langbergschulman}
Michael Langberg and Leonard~J. Schulman.
\newblock {Universal Epsilon-Approximators for Integrals}.
\newblock In \emph{Proceedings of the Twenty-First Annual {ACM-SIAM} Symposium
  on Discrete Algorithms}, pages 598--607, 2010.
\newblock \doi{10.1137/1.9781611973075.50}.

\bibitem[Larsen and Aone(1999)]{document_larsen}
Bjornar Larsen and Chinatsu Aone.
\newblock {Fast and Effective Text Mining Using Linear-time Document
  Clustering}.
\newblock In \emph{Proceedings of the Fifth ACM SIGKDD International Conference
  on Knowledge Discovery and Data Mining}, pages 16--22, 1999.
\newblock ISBN 1-58113-143-7.
\newblock \doi{10.1145/312129.312186}.

\bibitem[Lennox(2017)]{aristotle_biology}
James Lennox.
\newblock {Aristotle's Biology}.
\newblock In \emph{The Stanford Encyclopedia of Philosophy}. Metaphysics
  Research Lab, Stanford University, 2017.

\bibitem[Leuski(2001)]{document_leuski}
Anton Leuski.
\newblock {Evaluating Document Clustering for Interactive Information
  Retrieval}.
\newblock In \emph{Proceedings of the Tenth International Conference on
  Information and Knowledge Management}, pages 33--40, 2001.
\newblock ISBN 1-58113-436-3.
\newblock \doi{10.1145/502585.502592}.

\bibitem[Lloyd(1982)]{lloyd}
S.~Lloyd.
\newblock {Least Squares Quantization in PCM}.
\newblock \emph{IEEE Transactions on Information Theory}, 28\penalty0
  (2):\penalty0 129--137, March 1982.
\newblock ISSN 0018-9448.
\newblock \doi{10.1109/TIT.1982.1056489}.

\bibitem[Macqueen(1967)]{k_means_origin}
J.~Macqueen.
\newblock {Some Methods for Classification and Analysis of Multivariate
  Observations}.
\newblock In \emph{Proceedings of the Berkeley Symposium on Mathematical
  Statistics and Probability}, volume~5, pages 281--297. University of
  California Press, 1967.

\bibitem[Michaels et~al.(1998)Michaels, Carr, Askenazi, Fuhrman, Wen, and
  Somogyi]{genome_michaels}
George~S. Michaels, Daniel~B. Carr, Manor Askenazi, Stefanie Fuhrman, X~Wen,
  and Roland Somogyi.
\newblock {Cluster Analysis and Data Visualization of Large-scale Gene
  Expression Data}.
\newblock \emph{Pacific Symposium on Biocomputing}, 3:\penalty0 42--53, 1998.

\bibitem[Mitzenmacher and Upfal(2005)]{mitzenmacher}
Michael Mitzenmacher and Eli Upfal.
\newblock \emph{{Probability and Computing: Randomized Algorithms and
  Probabilistic Analysis}}.
\newblock Cambridge University Press, 2005.
\newblock ISBN 0521835402.

\bibitem[Munteanu and Schwiegelshohn(2018)]{coresers_methods_history}
Alexander Munteanu and Chris Schwiegelshohn.
\newblock {Coresets-Methods and History: A Theoreticians Design Pattern for
  Approximation and Streaming Algorithms}.
\newblock \emph{KI - K{\"u}nstliche Intelligenz}, 32\penalty0 (1):\penalty0
  37--53, 2018.
\newblock ISSN 1610-1987.
\newblock \doi{10.1007/s13218-017-0519-3}.

\bibitem[Pappas(1992)]{images_seg_pappas}
T.~N. Pappas.
\newblock {An Adaptive Clustering Algorithm for Image Segmentation}.
\newblock \emph{IEEE Transactions on Signal Processing}, 40\penalty0
  (4):\penalty0 901--914, 1992.
\newblock ISSN 1053-587X.
\newblock \doi{10.1109/78.127962}.

\bibitem[Prasolov and Tikhomirov(2001)]{prasolov_geometry}
V.V. Prasolov and V.M. Tikhomirov.
\newblock \emph{Geometry}.
\newblock Translations of Mathematical Monographs. American Mathematical
  Society, 2001.
\newblock ISBN 9781470425432.

\bibitem[Rao(1971)]{rao_cluster}
M.~R. Rao.
\newblock {Cluster Analysis and Mathematical Programming}.
\newblock \emph{Journal of the American Statistical Association}, 66\penalty0
  (335):\penalty0 622--626, 1971.
\newblock \doi{10.1080/01621459.1971.10482319}.

\bibitem[Salam and Cacciari(2006)]{cluster_jet}
G.~P. Salam and Matteo Cacciari.
\newblock {Jet clustering in particle physics, via a dynamic nearest neighbour
  graph implemented with CGAL}.
\newblock LPTHE-06-02, 2006.

\bibitem[Schmidt(2014)]{melanie}
Melanie Schmidt.
\newblock \emph{{Coresets and Streaming Algorithms for the k-means Problem and
  Related Clustering Objectives}}.
\newblock PhD thesis, TU Dortmund, 2014.

\bibitem[Shamos(1978)]{shamos}
Michael~Ian Shamos.
\newblock \emph{{Computational Geometry}}.
\newblock PhD thesis, Yale University, New Haven, CT, USA, 1978.

\bibitem[Steinbach et~al.(2000)Steinbach, Karypis, and
  Kumar]{docment_steinbach}
Michael Steinbach, George Karypis, and Vipin Kumar.
\newblock {A Comparison of Document Clustering Techniques}.
\newblock In \emph{In KDD Workshop on Text Mining}, 2000.

\bibitem[Tkachov(1997)]{cluster_jet_origin}
Fyodor~V. Tkachov.
\newblock {Measuring Multi-Jet Structure of Hadronic Energy Flow or What is a
  Jet?}
\newblock \emph{International Journal of Modern Physics}, A12\penalty0
  (30):\penalty0 5411--5529, 1997.
\newblock \doi{10.1142/S0217751X97002899}.

\bibitem[Wu and Leahy(1993)]{image_seg_wu}
Z.~Wu and R.~Leahy.
\newblock {An Optimal Graph Theoretic Approach to Data Clustering: Theory and
  its Application to Image Segmentation}.
\newblock \emph{IEEE Transactions on Pattern Analysis and Machine
  Intelligence}, 15\penalty0 (11):\penalty0 1101--1113, 1993.
\newblock ISSN 0162-8828.
\newblock \doi{10.1109/34.244673}.

\bibitem[Zeger and Gersho(1994)]{zeger_gersho}
Kenneth Zeger and Allen Gersho.
\newblock {Number of Nearest Neighbors in a Euclidean Code}.
\newblock \emph{IEEE Transactions on Information Theory}, 40\penalty0
  (5):\penalty0 1647--1649, 1994.
\newblock ISSN 0018-9448.
\newblock \doi{10.1109/18.333884}.

\end{thebibliography}
\end{document}